\def\anon{0} 
\def\papertitle{Oracle Separations for the Quantum-Classical
Polynomial Hierarchy} 
\newcommand{\lName}{1}
\newcommand{\donothing}[1]{#1}
\newcommand{\JACM}{\if\lName1\donothing{Journal of the {ACM}}\else{JACM}\fi}
\newcommand{\SICOMP}{\if\lName1\donothing{{SIAM} Journal on Computing}\else{SICOMP}\fi}
\newcommand{\ToC}{\if\lName1\donothing{Theory of Computing}\else{ToC}\fi}
\newcommand{\ToCGS}{\if\lName1\donothing{Theory of Computing Graduate Surveys}\else{ToC}\fi}
\newcommand{\TOCT}{\if\lName1\donothing{{ACM} Transactions on Computation Theory}\else{TOCT}\fi}
\newcommand{\ToIT}{\if\lName1\donothing{{IEEE} Transactions on Information Theory}\else{TOCT}\fi}
\newcommand{\CCjournal}{\if\lName1\donothing{Computational Complexity}\else{CC}\fi}
\newcommand{\CJTCS}{\if\lName1\donothing{Chicago Journal of Theoretical Computer Science}\else{CJTCS}\fi}
\newcommand{\TCS}{\if\lName1\donothing{Theoretical Computer Science}\else{TCS}\fi}
\newcommand{\IPL}{\if\lName1\donothing{Information Processing Letters}\else{IPL}\fi}
\newcommand{\JCSS}{\if\lName1\donothing{Journal of Computer and System Sciences}\else{JCSS}\fi}
\newcommand{\RSA}{\if\lName1\donothing{Random Structures and Algorithms}\else{RSA}\fi}
\newcommand{\JCTA}{\if\lName1\donothing{Journal of Combinatorial Theory, Series A}\else{JCTA}\fi}
\newcommand{\JCTB}{\if\lName1\donothing{Journal of Combinatorial Theory, Series B}\else{JCTB}\fi}
\newcommand{\PJM}{\if\lName1\donothing{Pacific Journal of Mathematics}\else{PJM}\fi}
\newcommand{\QICjournal}{\if\lName1\donothing{Quantum Information and Computation}\else{QIC}\fi}
\newcommand{\IJQI}{\if\lName1\donothing{International Journal of Quantum Information}\else{IJQI}\fi}
\newcommand{\PRA}{\if\lName1\donothing{Physical Review A}\else{PRA}\fi}
\newcommand{\PRL}{\if\lName1\donothing{Physical Review Letters}\else{PRL}\fi}
\newcommand{\VLDB}{\if\lName1\donothing{International Journal on Very Large Data Bases}\else{VLDB}\fi}
\newcommand{\SIDMA}{\if\lName1\donothing{{SIAM} Journal on Discrete Mathematics}\else{SIDMA}\fi}
\def\bibnonempty{0}
\let\oldcite\cite
\renewcommand{\cite}[1]{\global\def\bibnonempty{1}\oldcite{#1}}
\newtheorem{theorem}{Theorem}
\newtheorem{lemma}[theorem]{Lemma}
\newtheorem{proposition}[theorem]{Proposition}
\newtheorem{corollary}[theorem]{Corollary}
\newtheorem{definition}[theorem]{Definition}
\theoremstyle{definition}
\newcommand{\eq}[1]{\hyperref[eq:#1]{(\ref*{eq:#1})}}
\renewcommand{\sec}[1]{\hyperref[sec:#1]{Section~\ref*{sec:#1}}}
\newcommand{\thm}[1]{\hyperref[thm:#1]{Theorem~\ref*{thm:#1}}}
\newcommand{\lem}[1]{\hyperref[lem:#1]{Lemma~\ref*{lem:#1}}}
\newcommand{\defn}[1]{\hyperref[def:#1]{Definition~\ref*{def:#1}}}
\newcommand{\prop}[1]{\hyperref[prop:#1]{Proposition~\ref*{prop:#1}}}
\newcommand{\cor}[1]{\hyperref[cor:#1]{Corollary~\ref*{cor:#1}}}
\newcommand{\fig}[1]{\hyperref[fig:#1]{Figure~\ref*{fig:#1}}}
\newcommand{\tab}[1]{\hyperref[tab:#1]{Table~\ref*{tab:#1}}}
\newcommand{\alg}[1]{\hyperref[alg:#1]{Algorithm~\ref*{alg:#1}}}
\newcommand{\app}[1]{\hyperref[app:#1]{Appendix~\ref*{app:#1}}}
\newcommand{\conj}[1]{\hyperref[conj:#1]{Conjecture~\ref*{conj:#1}}}
\newcommand{\chap}[1]{\hyperref[chap:#1]{Chapter~\ref*{chap:#1}}}
\DeclareMathOperator{\poly}{poly}
\DeclareMathOperator{\polylog}{polylog}
\newcommand{\B}{\{0,1\}}
\newcommand{\sB}{\{\pm 1\}}
\DeclareMathAlphabet{\mathbbold}{U}{bbold}{m}{n}
\DeclareMathOperator*{\E}{\mathbb{E}} 
\DeclareMathOperator{\bR}{\mathbb{R}}
\DeclareMathOperator{\bN}{\mathbb{N}}
\newcommand{\rhoin}{\rho_{\mathsf{init}}}
\DeclareMathOperator{\R}{R}
\DeclareMathOperator{\Q}{Q}
\DeclareMathOperator{\C}{C}
\DeclareMathOperator{\s}{s}
\DeclareMathOperator{\bs}{bs}
\DeclareMathOperator{\fbs}{fbs}
\newcommand{\adeg}{\widetilde{\deg}}
\renewcommand{\P}{\mathsf{P}}
\newcommand{\NP}{\mathsf{NP}}
\newcommand{\PH}{\mathsf{PH}}
\newcommand{\QCMA}{\mathsf{QCMA}}
\newcommand{\QCMAH}{\mathsf{QCMAH}}
\newcommand{\QCPH}{\mathsf{QCPH}}
\newcommand{\QMA}{\mathsf{QMA}}
\newcommand{\QCS}{\mathsf{QC}\Sigma}
\newcommand{\AC}{\mathsf{AC}}
\newcommand{\BQP}{\mathsf{BQP}}
\newcommand{\BPP}{\mathsf{BPP}}
\newcommand{\MA}{\mathsf{MA}}
\newcommand{\PSPACE}{\mathsf{PSPACE}}
\newcommand{\PP}{\mathsf{PP}}
\newcommand{\QCSigma}{\mathsf{QC\Sigma}}
\newcommand{\QSigma}{\mathsf{Q\Sigma}}
\newcommand{\QPi}{\mathsf{Q\Pi}}
\newcommand{\QSigmapure}{\mathsf{pureQ\Sigma}}
\newcommand\QPH{\mathsf{QPH}}
\newcommand\QEPH{\mathsf{QEPH}}
\newcommand\EXP{\mathsf{EXP}}
\newcommand\QPHpure{\mathsf{pureQPH}}
\newcommand{\mpoly}{\mathrm{mpoly}}
\newcommand{\Bpoly}{\BQP_{/\mpoly}}
\newcommand\newmathabbrev[2]{\newcommand{#1}{\ensuremath{#2}\xspace}}
\newcommand\cfont\mathsf
\newmathabbrev\SIP{\cfont{SIPSER}}
\newmathabbrev\QAC{\cfont{QAC}}
\newcommand{\QCPi}{\mathsf{QC\Pi}}
\newcommand{\ayes}{A_{\textup{yes}}} 
\newcommand{\ano}{A_{\textup{no}}} 
\DeclarePairedDelimiter\ket{\lvert}{\rangle}
\newcommand{\set}[1]{{\left\{#1\right\}}}    
\numberwithin{theorem}{section}
\begin{document}


\title{\papertitle}

\if\anon1
\author{Anonymous submission}
\else
\author{
    Avantika Agarwal\\
    \small Institute for Quantum Computing\\
    \small University of Waterloo\\
    \small \texttt{avantika.agarwal@uwaterloo.ca}
    \and
    Shalev Ben{-}David\\
    \small Institute for Quantum Computing\\
    \small University of Waterloo\\
    \small \texttt{shalev.b@uwaterloo.ca}
}
\fi

\date{}
\maketitle

\begin{abstract}
We study the quantum-classical polynomial hierarchy, $\QCPH$,
which is the class of languages solvable by a constant number
of alternating classical quantifiers followed by a quantum
verifier. Our main result is that $\QCPH$ is infinite relative
to a random oracle
(previously, this was not even known relative to any oracle).
We further prove that higher levels of
$\PH$ are not contained in lower levels of $\QCPH$ relative
to a random oracle; this is a strengthening of the
somewhat recent result
that $\PH$ is infinite relative to a random oracle
(Rossman, Servedio, and Tan 2016).

The oracle separation requires lower bounding a certain type
of low-depth alternating circuit with some quantum gates.
To establish this, we give a new switching lemma for quantum
algorithms which may be of independent interest. Our lemma
says that for any $d$, if we apply a random restriction
to a function $f$ with quantum query complexity
$\Q(f)\le n^{1/3}$,
the restricted function becomes exponentially close
(in terms of $d$) to a depth-$d$ decision tree.
Our switching lemma works even in a
``worst-case'' sense, in that only the indices to be restricted
are random; the values they are restricted to are chosen
adversarially. Moreover, the switching lemma also works
for polynomial degree in place of quantum query complexity.
\end{abstract}

{\scriptsize\tableofcontents}
\clearpage

\section{Introduction}

In classical complexity theory, an important complexity class is the
polynomial hierarchy, $\PH$. This is a generalization of $\NP$ to higher
depth: it can be written as the union
$\NP\cup \NP^{\NP}\cup \NP^{\NP^{\NP}}\cup\dots$,
and corresponds to languages that can be
computed using a constant number of alternating quantifiers over
certificates. A problem is in $\PH$ if it can be computed
in polynomial time in the presence of two computationally-unbounded
provers, one of which wants to convince the verifier the input
is a yes-input, and one of which wants to convince the verifier
the input is a no-input, with the provers exchanging
polynomially-sized public messages for a constant number of rounds.
The polynomial hierarchy can therefore be viewed as the class of
problems solvable by an audience member sitting in a debate between
experts, where the debate has the inefficient format 
of alternating between the two speakers only a constant number
of times, even though the total debate time is polynomial in the
input size (polynomially many alternations would result in the
larger class $\mathsf{PSPACE}$).

The polynomial hierarchy can be viewed as a union of different 
``levels'', where the $d$-th level corresponds to debates with
$d$ alternations. The hierarchy is said to \emph{collapse} to
level $d$ if level $d+1$ can solve no more problems than level $d$;
in that case, all higher levels can also be simulated with $d$
rounds of debate. It is widely believed that the polynomial
hierarchy is infinite, meaning it does not collapse to any level.
This is a generalization of the $\mathsf{P}\ne\mathsf{NP}$ conjecture,
which is equivalent to the assertion that $\mathsf{PH}$ does not
collapse to the $0$-th level.

Since proving the polynomial hierarchy is infinite is beyond
current techniques, one may ask instead for oracle separations.
It turns out $\mathsf{PH}$ is infinite even relative to a
\emph{random} choice of oracle (with probability $1$),
though this result is somewhat recent.

\begin{theorem}[\cite{HRST17}]\label{thm:PHInfinite}
$\PH$ is infinite relative to a random oracle.
\end{theorem}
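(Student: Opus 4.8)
The plan is to reduce the claim to an \emph{average-case} lower bound for low-depth $\AC^0$ circuits and then diagonalize over the random oracle. Fix the level $d$ one wishes to separate. Using the oracle $\mathcal{O}$ as a source of $N=2^{n}$ fresh bits at each length $n$, one defines a language $L_{\mathcal{O}}$ whose membership at parameter $n$ is the value of a carefully tuned depth-$(d+1)$ Sipser-type read-once formula $F_{d+1}$ on those $N$ oracle bits (placed at its leaves). By construction $L_{\mathcal{O}}\in\Sigma_{d+1}^{\P,\mathcal{O}}$ for every $\mathcal{O}$: the $d+1$ alternating quantifiers guess a root-to-leaf path through the formula, after which the verifier reads a single oracle bit. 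Conversely, for any $\Sigma_{d}^{\P,\mathcal{O}}$ machine, fixing an input of length $n$ and fixing the oracle at all lengths other than $n$, its acceptance is computed by a depth-$d$ unbounded-fan-in circuit $C$ of size $2^{\polylog(N)}$ over the $N$ oracle variables of length $n$: these are the $d$ quantifier layers, with the polynomial-time verifier absorbed into the innermost quantifier by having it also guess the verifier's query transcript. Hence it suffices to prove the following average-case statement: every depth-$d$ circuit $C$ of size $2^{\polylog(N)}$ agrees with $F_{d+1}$ on at most a $\tfrac12+o(1)$ fraction of uniformly random inputs in $\B^{N}$. Granting this, for each fixed $\Sigma_{d}^{\P}$ machine one observes that along a sufficiently sparse sequence of input lengths the events ``the machine is correct at this length'' are conditionally independent and each has probability at most $\tfrac12+o(1)$, so the probability that the machine correctly decides $L_{\mathcal{O}}$ is $0$; a union bound over the countably many $\Sigma_{d}^{\P}$ machines then gives $L_{\mathcal{O}}\notin\Sigma_{d}^{\P,\mathcal{O}}$ almost surely. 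Since $d$ was arbitrary, $\PH^{\mathcal{O}}$ is infinite with probability $1$.

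The heart of the matter is the average-case circuit lower bound, i.e.\ an average-case depth hierarchy theorem. One chooses the fan-ins of $F_{d+1}$ so that it is essentially balanced under the uniform distribution and, crucially, so that a $p$-random restriction with a suitable $p$ leaves a large, still-balanced sub-formula of $F_{d+1}$ alive with high probability. One then applies a sequence of $d$ random restrictions to $C$ and $F_{d+1}$ simultaneously: after each restriction a switching lemma collapses the bottom DNF/CNF layer of $C$ to shallow decision trees, dropping the depth of $C$ by one, while a separate hands-on analysis tailored to the Sipser formula shows that a balanced sub-formula of $F_{d+1}$ survives. After $d$ restrictions $C$ has become a shallow decision tree while $F_{d+1}$ still requires large decision-tree depth, so on the surviving coordinates the two disagree on at least a $\tfrac12-o(1)$ fraction of inputs; transporting this estimate back through the restrictions yields the stated correlation bound.

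The main obstacle --- and the reason this was not established until relatively recently --- is that the classical H{\aa}stad switching lemma is not strong enough here. Since $C$ has $2^{\polylog(N)}$ gates just below its top, one must switch super-polynomially many DNFs at once, and a naive union bound over all of them fails. What is required is a \emph{multi-switching lemma}: a guarantee that an entire family of DNFs simplifies \emph{simultaneously} to one common shallow decision tree, at a cost only logarithmic in the size of the family. On top of this, the argument must produce genuine average-case hardness under a near-uniform distribution --- the $\tfrac12+o(1)$ correlation bound, which is far stronger than what the oracle statement alone needs --- rather than a mere worst-case depth lower bound; this forces one to track the survival of a balanced Sipser sub-formula through the restrictions rather than merely lower-bounding its sensitivity. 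Assembling the multi-switching lemma, the tuned Sipser formula, and the oracle reduction sketched above is exactly the content of \cite{HRST17}.
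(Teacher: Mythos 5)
Your high-level plan matches the strategy behind this theorem, and the present paper simply cites \cite{HRST17} for it (re-deriving it as a special case of \thm{QCPH} via the same framework). The oracle reduction (Furst--Saxe--Sipser style translation of a $\Sigma_d^{\P,\mathcal{O}}$ machine into a quasipolynomial-size depth-$d$ circuit over the length-$n$ oracle bits), the choice of a tuned Sipser formula as the hard function, the ``conditional independence along a sparse sequence of lengths, then union bound over machines'' diagonalization, and the fact that the average-case correlation bound $\tfrac{1}{2}+o(1)$ is the crux --- all of these are correct and are exactly the HRST17 outline.

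However, there is a substantive gap in how you describe the heart of the argument, and it concerns the single most important technical contribution of \cite{HRST17}. You assert that one applies a sequence of \emph{$p$-random restrictions} to $C$ and $F_{d+1}$ simultaneously, with $p$ tuned so that a balanced Sipser sub-formula survives, and you name the multi-switching lemma as the key missing ingredient that made this result possible. This is not quite right. The obstruction that prevented an average-case depth hierarchy theorem for many years is precisely that these two requirements are in tension for ordinary random restrictions: the restriction distribution that keeps an AND-OR tree structurally intact (alternately biased heavily toward $1$ and $0$ at successive layers) does \emph{not} compose with the uniform input distribution on the surviving variables, so the ``transport the correlation bound back through the restrictions'' step at the end of your second paragraph does not go through with plain restrictions. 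The resolution in \cite{HRST17} is the \emph{random projection} method: groups of surviving variables are identified with a single fresh variable, and this identification is exactly what reconciles the biased restriction needed to preserve the Sipser formula with the requirement that the post-restriction distribution be a product distribution over the new variables. (The present paper states this explicitly: ``Their proof used random projections, which we also employ,'' and the whole machinery of lifts, typical restrictions, and $\mathsf{proj}_\rho$ in \defn{projnotation} onward exists precisely to handle this.) The multi-switching lemma \emph{is} used and \emph{is} needed to control a quasipolynomial number of bottom-level DNFs simultaneously, but it was a pre-existing tool (H{\aa}stad, Impagliazzo--Matthews--Paturi); it is the projection step that was new and that makes the argument close. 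As written, your sketch would fail at the step where you claim that restricted-function correlation under the restricted input distribution equals correlation under uniform on the original inputs, and naming the multi-switching lemma as ``the main obstacle'' misattributes where the real difficulty lay.
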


We study a quantum version of the polynomial hierarchy known
as the quantum-classical hierarchy (introduced in \cite{gharibianQuantumGeneralizationsPolynomial2018}). This is the class of problems
solvable by a quantum audience member of a
constant-round debate (with classical messages) between experts.
This class, denoted $\QCPH$, generalizes $\QCMA$ but not $\QMA$
(since the proofs received are classical rather than quantum).
More formally, $\QCPH$ is the union $\bigcup_{i\in\bN} \QCS_i$,
where $\QCS_i$ is defined as the class of languages
for which there is a quantum verifier $V$ satisfying
\begin{align*}
    x\in\ayes \Rightarrow \exists y_1 \forall y_2 \ldots Q_i y_i \colon \Pr[V(x, y_1, y_2, \ldots, y_i)] \geq 2/3 \\
    x\in\ano \Rightarrow \forall y_1 \exists y_2 \ldots \overline{Q_i} y_i \colon \Pr[V(x, y_1, y_2, \ldots, y_i)] \leq 1/3
\end{align*}
for all input strings $x$, where the $y_j$ represent
polynomially-sized classical strings, $Q_i$ is a quantifier,
and $\overline{Q_i}$ is the opposite quantifier.

The class $\QCPH$ is interesting on its own, but another motivation
for its study is the connection to quantum switching lemmas.
Oracle separations for $\PH$ generally reduce to the problem of
giving lower bounds on the classical circuit class $\AC^0$,
consisting of circuits of constant depth. Quantum versions of
constant-depth circuits are of interest because they help model
quantum devices with many qubits but few layers of gates.
Lower bounds on such circuit classes are often shown using
switching lemmas, which assert that certain types of functions
must greatly simplify under a random restriction of their bits;
these switching lemmas can therefore be useful both for the study
of near-term quantum devices and for oracle separations for
complexity classes such as $\QCPH$.

\subsection{Our results}

Our main result generalizes \thm{PHInfinite} to the quantum setting.

\begin{theorem}\label{thm:QCPH}
The following holds relative to a random oracle with probability $1$.
For any constant $d \geq 1$, level $d+1$ of the polynomial hierarchy,
$\Sigma_{d+1}$,
is not contained in level $d$ of the quantum-classical hierarchy,
$\QCPi_d$. In particular, $\QCPH$ is infinite, and no fixed level
of the $\QCPH$ hierarchy contains all of $\PH$.
\end{theorem}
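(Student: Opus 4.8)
The plan is to mimic the argument of \cite{HRST17} behind \thm{PHInfinite}, replacing its constant-depth circuit lower bound with one that tolerates a bottom layer of bounded-query quantum algorithms---which is exactly what the quantum switching lemma of this paper provides. For each constant $d\ge 1$ the hard object is the Sipser function $\mathsf{Sipser}_{d+1}$ on $N$ variables: the read-once monotone depth-$(d+1)$ alternating $\AND/\OR$ formula of \cite{HRST17}, with top gate $\OR$, top fan-in $\approx N^{\Theta(1/d)}$, all lower fan-ins $\approx\log N$, and fan-ins tuned as there. Given an oracle $O$, let $L^O$ accept $1^n$ iff $\mathsf{Sipser}_{d+1}$, with parameters for $N=2^n$, evaluates to $1$ on the $N$ oracle bits of length $n$. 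Since $\mathsf{Sipser}_{d+1}$ is a depth-$(d+1)$ formula of $\exists\forall\exists\cdots$ type, a certificate naming one satisfying child at each level lets a deterministic polynomial-time machine confirm the value with $d+1$ alternations, so $L^O\in\Sigma_{d+1}^O$ for every oracle. Everything is in showing that $L^O\notin\QCPi_d^O$ with probability $1$ over a random oracle.

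Fix a $\QCPi_d$ machine $M$. On $1^n$ it specifies a protocol $\forall y_1\exists y_2\cdots Q_d y_d$ over $\poly(n)=\polylog(N)$-bit certificates, whose bottom quantum verifier makes $q=\polylog(N)$ queries to the $N$-bit oracle string $z$. Writing $p_y(z)$ for the verifier's acceptance probability on certificate tuple $y$ and string $z$---a polynomial in $z$ of degree at most $2q$---the $\QCPi_d$ acceptance condition asks that the depth-$d$ formula $F(z)=\min_{y_1}\max_{y_2}\cdots p_y(z)$ satisfy $F\ge 2/3$ on the $1$-inputs of $\mathsf{Sipser}_{d+1}$ and $F\le 1/3$ on its $0$-inputs. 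I would then run the random-restriction process of \cite{HRST17} in two phases. Phase one restricts a random subset of the coordinates, to values chosen (as in \cite{HRST17}) so that $\mathsf{Sipser}_{d+1}$ stays ``alive''---its restriction still containing a fresh Sipser formula and hence not approximable by a shallow decision tree. Since in phase one the \emph{set} of restricted coordinates is random while the \emph{values} are adversarial, and since $\deg(p_y)\le 2q\ll N^{1/3}$, the worst-case polynomial-degree form of the quantum switching lemma applies to each $p_y$: with probability $1-2^{-\Omega(t)}$ the restricted $p_y$ is within $2^{-\Omega(t)}$ of a depth-$t$ decision tree. Taking $t$ to be a large fixed power of $\log N$ (still $\ll N^{\Theta(1/d)}$), a union bound over the $2^{\polylog N}$ certificate tuples makes this simultaneous. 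Because thresholding at $1/2$ commutes with $\min$ and $\max$, after phase one the acceptance condition is captured, on all but a $2^{\polylog N}\cdot 2^{-\Omega(t)}=o(1)$ fraction of inputs, by a genuine Boolean depth-$(d+1)$ $\AC^0$ circuit of quasipolynomial size with bottom fan-in $t$, evaluating to $1$ on the $1$-inputs of the restricted $\mathsf{Sipser}_{d+1}$ and to $0$ on its $0$-inputs. Phase two continues the \cite{HRST17} restriction and invokes their multi-switching lemma: $O(d)$ further rounds collapse the narrow bottom DNFs/CNFs and drive the whole circuit down to a decision tree of depth $\polylog N$, while the same structured restrictions leave $\mathsf{Sipser}_{d+1}$ containing an $\OR$ of $\approx N^{\Theta(1/d)}$ live variables---not computed, even approximately, by such a shallow tree. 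This contradiction shows that no $\QCPi_d$ protocol with $\polylog(N)$ bottom queries meets the acceptance condition for $L^O$; quantitatively, for $z$ drawn from the hard distribution of \cite{HRST17} the condition disagrees with $\mathsf{Sipser}_{d+1}(z)$ with probability $\Omega(1)$.

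Upgrading this to the random-oracle claim is routine. Take a fast-growing sequence $n_1\ll n_2\ll\cdots$ so that $M^O(1^{n_i})$ reads only oracle bits of length $\le\poly(n_i)\ll n_{i+1}$, and define $L^O$ at length $n_{i+1}$ so that a uniform oracle induces the hard distribution of \cite{HRST17} on the Sipser inputs; conditioned on the oracle outside length $n_{i+1}$, those bits are independent, so $M^O$ errs on $L^O(1^{n_{i+1}})$ with probability $\Omega(1)$, and these near-independent events force errors at infinitely many $n_i$ almost surely. A union bound over the countably many $\QCPi_d$ machines gives $L^O\notin\QCPi_d^O$ almost surely, while $L^O\in\Sigma_{d+1}^O$ always. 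Hence $\QCPH$ is infinite, since a collapse to level $d$ would force $\Sigma_{d+1}^O\subseteq\QCPi_d^O$ (using $\Sigma_{d+1}\subseteq\QCS_{d+1}\subseteq\QCPi_{d+2}$ and that the collapse makes $\QCPi_{d+2}=\QCPi_d$), contradicting the above; and no fixed level of $\QCPH$ contains all of $\PH$, since $\Sigma_{d+1}^O\not\subseteq\QCPi_d^O$ and, using $\QCS_d\subseteq\QCPi_{d+1}$, also $\Sigma_{d+2}^O\not\subseteq\QCS_d^O$.

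The crux---and where the new switching lemma is essential---is the interface between the two phases. The quantum switching lemma gives only \emph{approximation} by a decision tree, and it acts on the real-valued polynomials $p_y$ rather than Boolean functions, so one must check that rounding at $1/2$ (using that the threshold commutes with the $\min/\max$ structure) yields an honest $\AC^0$ circuit with the $1/3$--$2/3$ gap intact, and that the $2^{-\Omega(t)}$ failure probability comfortably beats both the $2^{\polylog N}$ union bound over certificate tuples and the union bound internal to the multi-switching lemma. It is equally essential that the lemma be \emph{worst-case} in the restricted values: that is exactly what lets phase one be a random sub-restriction of the \cite{HRST17} process---random coordinates, Sipser-preserving values---while still collapsing the quantum layer. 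Choosing $q$, $t$, the restriction densities, the Sipser fan-ins, and the circuit-size budget so that every constraint holds at once, in particular so that the final decision-tree depth stays below the surviving top fan-in of $\mathsf{Sipser}_{d+1}$, is the bulk of the technical work.
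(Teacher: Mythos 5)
Your plan follows the paper's route: unroll the $\QCPi_d$ machine via the Furst--Saxe--Sipser argument (the paper's \prop{qcphcirc}) into a constant-depth circuit with a bottom layer of low-query quantum verifiers, use the new worst-case switching lemma to turn that layer into narrow DNFs after one random step, drive the resulting $\AC^0$ circuit down with the HRST17 machinery while a Sipser function stays hard, and then upgrade to a random-oracle separation by slow diagonalization. You also correctly identify the two crux points: the switching lemma must be worst-case in the fixed values (so that the closeness can be measured against a distribution different from the one that sampled the restriction), and the exponentially small failure probability must beat the $2^{\polylog N}$ union bound over certificate tuples.

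Where you are too casual is exactly where the paper has to work. First, the HRST17 process is built on random \emph{projections}, not bare restrictions: after each step, all surviving variables within a block are identified with one fresh variable. You write ``restriction'' throughout and never invoke a projection, but the HRST17 ingredients you want to reuse --- completing to the uniform distribution (\lem{uniform}, \lem{uniformrst}), the typicality lemmas that keep the Sipser formula alive (\lem{typical1}, \lem{typical2}, \lem{typicalrst}), and the projection switching lemma (\thm{randomprojcnf}) --- are all stated for projections and would not apply verbatim to a restriction-based phase one. Correspondingly, the paper states a \emph{block-random} version of the quantum switching lemma (\thm{projbqpstronger}, \lem{projdisagrdnf}, \cor{projdisagrdnf}) so the quantum gates and the DNFs they are replaced by live on the projected variables. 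Second, you cannot take HRST17's Sipser function and its first projection unchanged: the projection density HRST17 needs to keep the formula alive does not match the density $p$ the quantum switching lemma needs for quantum gates of query complexity $\polylog N$. The paper therefore defines a modified $\SIP'_d$ with new bottom two fan-ins ($w_{d-1}=\polylog N$, $w_{d-2}=qwN^{5/7}$, $q'=1/N^{5/7}$) and a fresh initial projection distribution $\mathcal{R}_{\mathsf{init}}$, and then re-proves both invariants HRST17 needs (completion to uniform and typicality) for this custom first step. You acknowledge ``choosing $q$, $t$, the restriction densities, the Sipser fan-ins \ldots\ is the bulk of the technical work,'' which is fair, but as written your phase one assumes HRST17's first step works off the shelf; it does not, and the fan-ins you quote (``all lower fan-ins $\approx\log N$'') are not even mutually consistent with a total of $N$ inputs and a top fan-in $N^{\Theta(1/d)}$.

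Minor: the paper uses a Sipser function of depth $d+2$ (absorbing its $\polylog$-fan-in bottom AND into the $\Sigma_{d+1}$ base machine, see the proof of \thm{randomoracleseparation}), whereas you use depth $d+1$ and do not absorb. Your count of projection rounds still balances, so this is not an error, just a different but equivalent bookkeeping.
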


Previously to this work, it was not even known whether
there was \emph{any} oracle relative to which $\QCPH$ is infinite,
let alone a random oracle. Previous switching lemmas for quantum
algorithms, such as the one in
\cite{AIK22}, are insufficient to prove such an oracle separation.\footnote{Note that most quantum circuit lower bounds
are incomparable to ours, since they consider different models of quantum circuits. Our switching lemma is stronger than the comparable
switching lemma of \cite{AIK22}.}
We also note that this theorem implies \thm{PHInfinite} as a
special case, since it also implies that $\mathsf{PH}$ is
infinite relative to a random oracle.

The study of $\PH$ relative to a random oracle boils down to the study
of $\AC^0$ circuits against the uniform distribution.
This, in turn, is usually done using random restrictions and the
switching lemma, together with the random projection technique
introduced in \cite{HRST17}. To prove
\thm{QCPH}, we need a new switching lemma for quantum query
complexity.

\begin{theorem}\label{thm:Qswitching}
Let $N\in\bN$ be sufficiently large, and
let $f$ be a possibly partial Boolean function on $N$ bits
with $\Q(f)\le N^{1/3}$. Let $p=N^{-10/11}$ (we can choose $p \leq O(\frac{1}{Q(f)^2N})^{4/7}$), and consider a
random restriction of $f$ which fixes each bit with probability $1-p$.
Then for any $d\le N^{1/3}$, this restriction is approximated
by a decision tree of height $d$ in the following sense.

For an input $x$ and a choice of unrestricted bits $S$, we define a restriction which sets bits in $\overline{S}$ to according to $x$. Then for every input $x$, there is an ensemble of decision trees
$\{D_{x,S}\}_{x\in\B^n,S\subseteq[N]}$, all of height at most $d$,
with $D_{x,S}$ acting on the unrestricted input bits which are strings of length $|S|$, such that
the following holds: if $S\subseteq[N]$ is chosen at random
with each index $i\in[N]$ included in $S$ independently
with probability $p$, then
\[\forall x,y\in\B^n\;\;
\Pr_S[D_{x,S}(y|_S)\ne f_{x_{\overline{S}}}(y|_S)]< e^{-d^{1/5}}.\]
Here $f_{x_{\overline{S}}}$ denotes the restriction of $f$ to the
partial assignment which fixes the bits of $x$ outside of $S$.%
\footnote{If $f_{x_{\overline{S}}}$ is undefined on input $y|_S$, we count it as equality holding in the probability, meaning that
on inputs outside the domain the decision tree is allowed to
output anything.}
\end{theorem}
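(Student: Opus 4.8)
We sketch a proof of the quantum query complexity version; the polynomial-degree version follows by an analogous argument indicated at the end. Fix a quantum algorithm $\mathcal{A}$ making $q:=\Q(f)\le N^{1/3}$ queries and deciding $f$ with error $\le 1/3$ on $\Dom(f)$. For an input $z\in\B^N$, write $w_i^z:=\sum_{t=1}^{q}\norm{\Pi_i\ket{\psi^z_t}}^2$ for the query weight that $\mathcal{A}$ places on coordinate $i$ when run on $z$, where $\ket{\psi^z_t}$ is the state after $t$ queries and $\Pi_i$ projects the query register onto $\ket{i}$; then $\sum_i w_i^z=q$ for every $z$. The only analytic input will be the standard hybrid argument (Bennett--Bernstein--Brassard--Vazirani): if $z,z'$ agree outside a set $B$, then $\bigl|\Pr[\mathcal{A}(z)\text{ accepts}]-\Pr[\mathcal{A}(z')\text{ accepts}]\bigr|\le 4\sqrt{q\sum_{i\in B}w_i^{z'}}$, using the weights on whichever endpoint is convenient.

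Fix a threshold $\tau=N^{-1/2}$ (any $\tau$ with $pq\ll\tau\ll 1/(pqN)$ will do). The ensemble $\{D_{x,S}\}$ consists of canonical decision trees with updated reference inputs. Given $x$ and $S$, the tree $D_{x,S}$ maintains a partial assignment $\pi$ to coordinates of $S$, starting empty, and repeats: form the reference input $z^\pi$ agreeing with $\pi$ on the queried coordinates and with $x$ everywhere else; if every coordinate $i\in S\setminus\pi$ has $w_i^{z^\pi}<\tau$, or if $\abs{\pi}=d$, halt; otherwise query the coordinate $i\in S\setminus\pi$ maximizing $w_i^{z^\pi}$, read its value, and add it to $\pi$. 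At a leaf with reference $z^\pi$, output $1$ if $\Pr[\mathcal{A}(z^\pi)\text{ accepts}]>1/2$ and $0$ otherwise. By construction $D_{x,S}$ has height at most $d$ and queries only coordinates of $S$. Call $S$ \emph{good for $x$} if $\abs{S}\le 2pN$ and every branch of $D_{x,S}$ halts via the light condition (not because $\abs{\pi}=d$). If $S$ is good for $x$ then $D_{x,S}(y|_S)=f_{x_{\overline S}}(y|_S)$ for every $y$: running on $y|_S$ reaches a leaf with reference $z^\pi$, the true restricted input $z:=z(x,y,S)$ agrees with $z^\pi$ outside $S$ and on $\pi$, so they differ only on some $B\subseteq S\setminus\pi$ whose coordinates are all light at $z^\pi$, whence $\sum_{i\in B}w_i^{z^\pi}\le\abs{S}\tau\le 2pN\tau$ and $\bigl|\Pr[\mathcal{A}(z^\pi)]-\Pr[\mathcal{A}(z)]\bigr|\le 4\sqrt{2q\,pN\tau}=N^{-\Omega(1)}<1/6$; since $\mathcal{A}$ decides $f$ with error $\le 1/3$, thresholding at $1/2$ returns $f(z)$ whenever $z\in\Dom(f)$, and otherwise the footnote lets us off. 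So it suffices to show $\Pr_S[S\text{ not good for }x]<e^{-d^{1/5}}$ for each fixed $x$.

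The event $\abs{S}>2pN$ has probability $e^{-\Omega(pN)}=e^{-\Omega(N^{1/11})}$, far below $e^{-d^{1/5}}$ since $d\le N^{1/3}$. The crux — and the step I expect to be the main obstacle — is bounding the probability that some branch of $D_{x,S}$ has not halted after $d$ steps. A naive union bound over the possible "reference trees" loses a factor $2^{\Theta(d^2)}$ and is too weak; instead one adapts H{\aa}stad's switching-lemma encoding argument to this weighted quantum setting. Along a long branch each queried coordinate is the \emph{deterministic} argmax given the already-read values and the current reference, so the branch is determined once its read values are fixed; its $>d$ queried coordinates are forced to lie in the random set $S$ (an event of probability $p$ per coordinate), while describing the branch costs only $O(1)$ bits per step together with $\log(q/\tau)$ bits to name the queried coordinate among the at most $q/\tau$ coordinates of weight $\ge\tau$ at the current reference. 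A Kraft/compression argument then bounds this failure probability by $\bigl(O(pq/\tau)\bigr)^{d}=\bigl(N^{-\Omega(1)}\bigr)^{d}\le e^{-d^{1/5}}$; the gap between $d^{1/5}$ and $d$ comfortably absorbs the $2^{O(d)}$ bookkeeping factors and the fact that the weights take real values in $[\tau,q]$ rather than being $0/1$. Adding the two failure probabilities gives $\Pr_S[S\text{ not good for }x]<e^{-d^{1/5}}$.

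The polynomial-degree version is proved in the same way, taking $\tilde f$ a bounded polynomial of degree at most $N^{1/3}$ approximating $f$ within $1/3$, using the per-point gradient bound $\sum_i(\partial_i\tilde f(z))^2=O(\deg\tilde f)$ (a Bernstein--Markov-type inequality for bounded low-degree polynomials on the cube) in place of $\sum_i w_i^z=q$, and having the canonical decision tree query the coordinate of largest $\lvert\partial_i\tilde f(z^\pi)\rvert$ at each step. The additional care needed here is that the gradient bound is an $\ell_2$-type statement whereas the conclusion must be worst case over the restriction values; handling this is exactly what the reference-updating in the canonical decision tree buys us, and is the feature that previous quantum switching lemmas lacked.
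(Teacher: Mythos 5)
Your proof of the quantum query version is correct in its essentials, and it takes a genuinely different route from the paper's. The paper does not argue via a canonical decision tree. Instead it (i) replaces the quantum query magnitudes with fractional certificates $c_{x,i}$ coming from fractional block sensitivity, which Section~3 relates to $\adeg$ so that the same machinery also yields \thm{DegSwitching}; (ii) samples $S$ in stages --- first deciding membership of $S$ for the heavy coordinates of $x$, then for the new heavy coordinates of the hybrid obtained by flipping those of them in $S$ where $x\ne y$, and so on --- and shows each stage has constant probability of producing no new unfixed heavy coordinate, so the process stops after $O(k)$ stages except with probability $e^{-\Omega(k)}$ (\lem{restlowdeg}, \thm{restlowdeg}); and (iii) builds a depth-$k^4$ tree by repeatedly querying $0$-certificates of length $\le k^2$ (\cor{dectreelowdeg}). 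You instead build a depth-$d$ tree directly by querying one heavy coordinate at a time against an updated reference, and bound the failure probability by a union bound over long branch prefixes. Your route is cleaner for the $\Q(f)$ version and, unlike the paper's chain of lemmas, handles small $d$ without any extra case analysis; the paper's more indirect route is what buys the generalization to approximate degree.

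One correction to your ``crux'' step: the Håstad-style encoding argument is not needed, and the $2^{\Theta(d^2)}$ loss you anticipated never appears. A branch prefix of length $d+1$ is specified by a sequence of (coordinate, value) pairs where the coordinate at step $t$ must lie in the heavy set $H(z^{\pi_t})$ of the current reference, a set of size $\le q/\tau$ that is determined by the values read so far and not by $S$. Thus there are at most $(2q/\tau)^{d+1}$ candidate branch prefixes, each forces its $d+1$ distinct coordinates into $S$ (probability $p^{d+1}$), and the plain union bound gives $(2pq/\tau)^{d+1}=(2N^{-5/66})^{d+1}$, which is already far below $e^{-d^{1/5}}$ for all $1\le d\le N^{1/3}$ once $N$ is large. (The loss you feared would arise from union-bounding over whole trees rather than over root-to-depth-$(d+1)$ paths, which is not what is required here.)

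Finally, your closing sketch for the $\adeg$ version (which is \thm{DegSwitching}, not the statement at hand) does have a real gap: the pointwise bound $\sum_i(\partial_i\tilde f(z))^2=O(\deg^2)$ does not yield a hybrid inequality $|\tilde f(z)-\tilde f(z')|\lesssim \sum_{i\in B}w_i^{z}$, because flipping the bits of $B$ one at a time evaluates the partial derivatives at intermediate hybrids, not at the fixed reference $z^\pi$ --- and the reference-updating does not repair this. This is precisely what the paper's fractional certificates supply: weights $c_{x,i}$ with the global property $\sum_{i:x_i\ne y_i}c_{x,i}\ge|f(x)-f(y)|$ for \emph{every} $y$ and total mass $\sum_i c_{x,i}\le \tfrac{\pi^2}{4}\deg(f)^2$ (\thm{fbsdeg}), and it is the reason the paper routes the entire argument through fractional block sensitivity rather than through query magnitudes.
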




This is a type of switching lemma for quantum query complexity,
though its statement can be confusing. We make a few clarifying
comments. First, note that with the parameter $p=N^{-10/11}$,
a function $f$ with quantum query complexity at most $N^{1/3}$ becomes
constant with high probability under a random restriction.
However, the probability of the function becoming constant
is merely $1-1/\poly(N)$, which is not small enough.
The point of the theorem is to approximate the function $f$
to exponentially small error
(in the height of the approximating decision tree).

The random restriction in \thm{Qswitching} is \emph{worst-case} in the
sense that while the positions to be fixed are chosen randomly,
the bits to which those positions are fixed are specified by
a string $x$, which can be chosen arbitrarily. Moreover,
the resulting restricted function $f|_{x_{\overline{S}}}$ must
be approximated by a decision tree even on worst-case choices of
input $y|_S$ (except that the choice of the worst-case $y$
cannot depend on the random choice of $S$). From such a worst-case
statement, we can easily derive a more familiar average-case
switching lemma.

\begin{corollary}\label{cor:UniformSwitching}
For sufficiently large $N$, let $f$ be such that $\Q(f)\le N^{1/3}$,
let $p=N^{-10/11}$, and let $d\le N^{1/3}$. If $f_\rho$ is a random
restriction in which each bit remains unfixed with probability $p$
(and is fixed randomly to $0$ or $1$ otherwise), then
with probability at least $1-e^{-d^{1/10}}$, there is a decision
tree of height at most $d$ which computes $f_\rho$ on a
$1-e^{-d^{1/10}}$ fraction of its inputs.
\end{corollary}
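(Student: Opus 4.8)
The plan is to recognize that a uniformly random restriction of the kind described in \cor{UniformSwitching} is precisely the pair $(S,x)$ already produced in \thm{Qswitching}, and then to trade the ``worst case over the input $y$'' guarantee of that theorem for an ``average case over $y$'' guarantee by a Fubini-plus-Markov argument.

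First I would set up the identification: drawing the random restriction $f_\rho$ amounts to independently putting each index $i\in[N]$ into the unfixed set $S$ with probability $p$ and then drawing a uniformly random $x\in\B^N$ whose bits on $\overline{S}$ supply the fixed values (the bits of $x$ on $S$ being irrelevant). Under this identification $f_\rho=f_{x_{\overline{S}}}$, a (partial) function on $|S|$ bits, and the distribution of $S$ is exactly the one required by \thm{Qswitching}. Let $\{D_{x,S}\}$ be the ensemble of decision trees of height at most $d$ given by that theorem, so that for every fixed $x$ and every fixed $y\in\B^N$ one has $\Pr_S[D_{x,S}(y|_S)\ne f_{x_{\overline{S}}}(y|_S)]<e^{-d^{1/5}}$, where, as in \thm{Qswitching}, an input on which $f_{x_{\overline{S}}}$ is undefined is counted as agreement.

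Next I would average this bound over a uniformly random $y\in\B^N$; this is legitimate because it holds for each fixed $y$, and interchanging the expectations gives $\E_\rho\big[\Pr_y[D_{x,S}(y|_S)\ne f_\rho(y|_S)]\big]<e^{-d^{1/5}}$, where $\rho=(S,x)$ and, for a given $\rho$, the induced distribution on $y|_S$ is uniform over $\B^{|S|}$. Writing $g(\rho)$ for the inner probability and applying Markov's inequality to the nonnegative random variable $g$, we get $\Pr_\rho[g(\rho)\ge e^{-d^{1/5}/2}]<e^{-d^{1/5}/2}$. Hence with probability at least $1-e^{-d^{1/5}/2}$ over $\rho$ the decision tree $D_{x,S}$, which has height at most $d$, agrees with $f_\rho$ on at least a $1-e^{-d^{1/5}/2}$ fraction of inputs of length $|S|$ — exactly the conclusion, once we note $e^{-d^{1/5}/2}\le e^{-d^{1/10}}$.

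Finally, this last inequality needs $d^{1/5}/2\ge d^{1/10}$, which holds as soon as $d\ge 2^{10}$, so the argument above settles all $d$ in the range $2^{10}\le d\le N^{1/3}$. For the remaining constant-sized range $1\le d<2^{10}$ the corollary is essentially vacuous: using the fact recalled in the discussion that a function with $\Q(f)\le N^{1/3}$ becomes constant under $\rho$ with probability $1-1/\poly(N)$, in that event the height-$0$ tree computes $f_\rho$ exactly, and for $N$ large enough $1/\poly(N)\le e^{-2}\le e^{-d^{1/10}}$, so the claim holds. I do not anticipate any genuine difficulty here; the only points requiring care are faithfully carrying along the undefined-input convention inherited from \thm{Qswitching} and the purely cosmetic loss from $d^{1/5}$ to $d^{1/10}$ that the Markov step absorbs.
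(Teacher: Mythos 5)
Your proof is correct and is exactly the Fubini-plus-Markov derivation the authors have in mind when they remark that the average-case statement is ``easily derived'' from \thm{Qswitching}; the paper never spells the proof out explicitly. Two small points are worth keeping in your write-up: the range $d<2^{10}$ genuinely requires the constant-function fallback (for $d=1$, say, $e^{-d^{1/5}/2}>e^{-d^{1/10}}$, so Markov alone fails), and the tree $D_{x,S}$ formally depends on $x_S$ while $\rho$ does not, which is harmless---either fix $x_S$ arbitrarily (the bound in \thm{Qswitching} holds for all $x$) or argue existentially over $x_S$.
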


\thm{Qswitching} can be viewed as a version of \cor{UniformSwitching}
which works even for non-uniform distributions. We also strengthen \thm{Qswitching}
so that it works for approximate degree
instead of just quantum query complexity.

\begin{theorem}\label{thm:DegSwitching}
\thm{Qswitching} still holds if the condition on $f$ is
replaced by $\adeg(f)\le N^{1/3}$ instead of $\Q(f)\le N^{1/3}$.
\end{theorem}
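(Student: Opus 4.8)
The plan is to observe that, although \thm{Qswitching} is phrased in terms of quantum query complexity, its proof should interact with the input function $f$ only through a single low-degree real polynomial that is bounded on the cube and approximates $f$ on its domain -- and that the hypothesis $\adeg(f)\le N^{1/3}$ hands over exactly such a polynomial. Recall the polynomial method: a quantum query algorithm making $T$ queries has acceptance probability equal to a multilinear polynomial of degree at most $2T$ taking values in $[0,1]$ on all of $\B^N$. Thus $\Q(f)\le N^{1/3}$ is used in \thm{Qswitching} only to produce a polynomial $P$ of degree $O(N^{1/3})$, bounded in $[0,1]$, with $|P(y)-f(y)|\le 1/3$ for every $y\in\Dom(f)$; the random-restriction analysis, the construction of the ensemble $\{D_{x,S}\}$, and the final error bound $e^{-d^{1/5}}$ all see $f$ only through such a $P$. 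Note in particular that the exponentially small error comes from the randomness of $S$, not from first driving the polynomial's approximation error down, so there is no amplification step (which would cost a factor of about $d^{1/5}$ in the degree and overrun the $N^{1/3}$ budget).

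One cannot simply invoke \thm{Qswitching} as a black box for \thm{DegSwitching}, because a bound on $\adeg(f)$ does not imply a comparable bound on $\Q(f)$; instead, one re-runs the proof. Starting from a witnessing polynomial of degree at most $N^{1/3}$ supplied directly by $\adeg(f)\le N^{1/3}$ and substituting it for the acceptance-probability polynomial throughout, every degree bound of the form $2\Q(f)\le 2N^{1/3}$ becomes $\adeg(f)\le N^{1/3}$ -- a smaller degree, which only helps. The slack in the exponents ($1/3$ against $10/11$, together with ``$N$ sufficiently large'') absorbs any change in constants, so the conclusion holds with the same parameters $p=N^{-10/11}$ and error $e^{-d^{1/5}}$. \cor{UniformSwitching} transfers automatically, being a formal consequence of the theorem.

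A minor bookkeeping point concerns boundedness of the approximating polynomial off $\Dom(f)$. Under the usual convention for the approximate degree of a partial function -- the one that also requires $|p(x)|\le 1$ for all $x\in\B^N$, and under which $\adeg(f)\le 2\Q(f)$ holds -- the witness is bounded everywhere, matching the quantum case exactly, and there is nothing to do. If one instead works with a convention controlling $p$ only on $\Dom(f)$, one first composes the witness with a fixed univariate polynomial of constant degree that maps $[-1/3,1/3]$ near $0$, maps $[2/3,4/3]$ near $1$, and stays in $[0,1]$ on $[-1/3,4/3]$; this costs only a constant factor in degree and, by the footnote in \thm{Qswitching} allowing the decision trees to output anything outside $\Dom(f)$, off-domain behaviour is irrelevant anyway.

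The main -- and essentially the only -- obstacle is to confirm that the proof of \thm{Qswitching} really does touch $f$ only through a bounded low-degree approximating polynomial and never through finer features of the quantum algorithm (its unitaries, query pattern, intermediate states). Given how strong the statement is -- it tolerates $N^{1/3}$ quantum queries, far more than the decision-tree height $d$ -- the polynomial method is almost certainly the route taken, so this should be immediate from the structure of the argument; the cleanest presentation is likely to isolate a switching lemma for bounded low-degree polynomials and recover both \thm{Qswitching} and \thm{DegSwitching} by plugging in the appropriate polynomial. Once that is checked, \thm{DegSwitching} needs no new ideas.
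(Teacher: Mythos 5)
Your high-level guess about the presentation is right -- the paper does isolate a switching lemma for polynomials (\thm{restlowdeg}, stated in terms of fractional block sensitivity) and derives both \thm{Qswitching} and \thm{DegSwitching} from it. But the central claim of your proposal -- that the proof of \thm{Qswitching} ``sees $f$ only through a bounded low-degree approximating polynomial,'' so one can just substitute a different polynomial of the same degree and be done -- is not how the argument works, and it hides the only genuinely new ingredient.

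The adaptive hybrid argument in the quantum case is driven by the \emph{query magnitudes} $m_i(x)$ of the algorithm: these per-input, per-position weights tell you which coordinates the algorithm ``looks at'' on input $x$, and the iterative restriction process repeatedly peels off the heavy ones. Query magnitudes are a property of the quantum algorithm's intermediate states, not of its acceptance-probability polynomial; a bound $\deg(P)\le 2T$ does not, by itself, hand you any analogous per-input weight assignment. To make the argument run for a polynomial $P$, the paper replaces query magnitudes with \emph{fractional certificate weights} $c_{x,i}$ (the LP dual of fractional block sensitivity), and the step that makes this legal is \thm{fbsdeg}: a bounded degree-$D$ real polynomial admits weights $c_{x,i}$ with $\sum_i c_{x,i}\le (\pi^2/4)D^2$ and $\sum_{i:x_i\ne y_i} c_{x,i}\ge|P(x)-P(y)|$. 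That theorem is a new result adapting \cite{ABK21}, via a Chebyshev-composition trick and the real-valued sensitivity-degree bound of \cite{FHKL16}; it is not a ``minor bookkeeping point,'' it is the content. Your proposal never mentions query magnitudes or what replaces them, so the ``re-run the proof, substituting the polynomial throughout'' step has no way to proceed at exactly the place where the hybrid argument needs a weight scheme. In short: identifying that the proof should factor through polynomials is correct, but the missing idea is that the quantity driving the argument is fractional certificate complexity, and that low degree controls it -- which requires \thm{fbsdeg}.
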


This can be viewed as establishing a random-restriction version of the Aaronson-Ambainis
conjecture, which asserts that low degree polynomials can be
approximated by shallow decision trees against the
uniform distribution. Our version works only after a random
restriction is applied to the polynomial, but it works with extremely
strong parameters\footnote{Note that Aaronson-Ambainis conjecture is about bounded real-valued functions rather than Boolean functions; our switching lemma also works in that setting.%
}.
Another (incomparable) version of the
Aaronson-Ambainis conjecture for random restrictions was given
in \cite{Bha24}.

Although we don't explicitly show it,
\thm{DegSwitching} can be strengthened further, so that
it works with a measure known in the literature as
(the square root of)
``critical fractional block sensitivity'', which lower bounds
approximate degree \cite{ABK21}. This measure also
lower bounds the positive-weight quantum adversary bound,
so our switching lemma also works for that measure.

Finally, we give an application of our switching lemmas to
yet another type of oracle separation: we show that the ``$\QCMA$
hierarcy,'' that is,
\[\QCMA\cup \QCMA^{\QCMA}\cup\QCMA^{\QCMA^{\QCMA}}\cup\dots\]
is also infinite relative to a random oracle.
Explicitly, defining $\QCMAH_1=\QCMA$
and $\QCMAH_{d+1}=\QCMA^{\QCMAH_d}$ for $d\ge 1$, we have
the following result.

\begin{theorem}
The following holds relative to a random oracle with probability
$1$. For any constant $d\ge 1$, $\QCMAH_{d+1}$ is not contained
in $\QCMAH_d$. Moreover, no fixed level $\QCMAH_d$ contains
all of $\PH$.
\end{theorem}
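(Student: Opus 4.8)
The plan is to reduce everything to a single lower bound: relative to a random oracle, no $\QCMAH_d$ machine computes the depth-$(d+1)$ Sipser function, encoded into the oracle so that the classical $\mathsf{AC}^0$ lower bound transfers via the random-projection process, exactly as in the proof of \thm{QCPH}. Granting this, the theorem is immediate. The Sipser function lies in $\Sigma_{d+1}$ relative to every oracle, and $\Sigma_{d+1}\subseteq\QCMAH_{d+1}$ because $\NP\subseteq\QCMA$ relativizes, so $\Sigma_{d+1}=\NP^{\Sigma_d}\subseteq\QCMA^{\QCMAH_d}=\QCMAH_{d+1}$ by induction on $d$; hence $\QCMAH_{d+1}\not\subseteq\QCMAH_d$, and since the same function is in $\PH$, also $\PH\not\subseteq\QCMAH_d$. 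A Borel--Cantelli argument over input lengths, followed by a union bound over the constants $d$, then upgrades the per-length statement to probability $1$ over the oracle and to all levels simultaneously.

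To prove the lower bound I would re-run the argument behind \thm{QCPH}, whose engine is the HRST random-projection technique together with the worst-case switching lemma \thm{Qswitching}; the only new ingredient is unfolding a $\QCMAH_d$ computation relative to an oracle $R$. Writing $\QCMAH_d=\QCMA^{\QCMAH_{d-1}}$, the outermost layer existentially guesses a polynomial-length witness and runs a $\BQP$ machine with oracle access both to $R$ and to the language $L_{d-1}=\QCMAH_{d-1}^R$; iterating $d$ times produces a depth-$d$ tree of alternating $\exists/\forall$ quantifiers whose leaves are $\BQP$ computations making at most $N^{1/3}$ quantum queries to $R$. The subtlety, absent in \thm{QCPH}, is that these $\BQP$ sub-machines query the lower oracle $L_{d-i}$ \emph{in superposition}, so one cannot simply guess the polynomially many query--answer pairs the way the classical identity $\NP^{\NP}=\Sigma_2$ does. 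I would handle this inside the projection loop rather than as a standalone containment: after one HRST projection, \thm{Qswitching}, in its worst-case form (which is exactly what composes with projections, since the projected-to values need not be random), collapses each bottom $\BQP$ sub-machine to a decision tree of polylogarithmic height; a decision tree queries $L_{d-i}$ only classically and polynomially often, so the guess-the-answers trick applies and that layer flattens into a shallow alternating circuit. Thus one projection peels exactly one level off the structure, turning a $\QCMAH_d$ object on $N$ variables into a $\QCMAH_{d-1}$-type object on fewer variables, just as a projection peels a layer off a depth-$d$ $\mathsf{AC}^0$ circuit in \cite{HRST17}.

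The remainder is bookkeeping copied from \thm{QCPH}: iterate the projection $d$ times while tracking the accumulated error, which stays $o(1)$ since each use of \thm{Qswitching} costs only $e^{-\Omega(\mathrm{polylog}\,N)}$; after $d$ projections any $\QCMAH_d$ structure has trivialized, whereas the depth-$(d+1)$ Sipser function has not, which is the contradiction. I expect the genuinely new difficulty to be the flattening step: showing that a small-height decision tree querying a $\QCMAH_{d-1}$ oracle can indeed be simulated by a bounded-depth alternating circuit with quantum-query leaves, while keeping the depth blow-up within lower-order terms --- otherwise the $d$-fold induction loses a level and collapses --- and correctly handling the interaction between the adversarial restriction values produced by the projection and the adaptivity of the decision tree. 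Everything else should be a careful re-run of the machinery already developed for \thm{QCPH}.
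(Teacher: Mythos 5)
Your high-level strategy — reduce to an HRST-style circuit lower bound, reuse \thm{Qswitching} inside a random-projection loop, and then apply a diagonalization — matches the paper's. But there is a substantial error in the accounting of circuit depth which breaks the argument.

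You describe a $\QCMAH_d$ computation as unfolding into ``a depth-$d$ tree of alternating $\exists/\forall$ quantifiers whose leaves are $\BQP$ computations.'' This is the wrong picture in two ways. First, $\QCMAH_d$ is a tower of existential quantifiers only (there is no $\forall$ anywhere), so it does not resemble $\QCPH$. Second, and more importantly, a $\BQP$ computation appears at \emph{every} level, not only at the leaves: each $\QCMA$ in the tower contributes an OR gate (over the classical witness) followed by a quantum verifier that queries the oracle \emph{and} the next-level sub-oracle in superposition. Unfolding the machine therefore yields a circuit of depth $2d$, with OR gates and quantum-query gates \emph{alternating} at every level; this is exactly what the paper's analogue of Furst--Saxe--Sipser (\prop{qcmahcirc}) proves. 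Consequently the paper's hard function $\SIP''_d$ is an AND--OR tree of depth $2d+2$, and the separation obtained is between $\Sigma_{2d+1}$ and $\QCMAH_d$. Your proposal instead targets the depth-$(d+1)$ Sipser function, which is a much shallower function; against a depth-$2d$ circuit the projection argument would exhaust the Sipser structure after $d-1$ projections while the circuit still has half its layers unswitched, so the lower bound would not close. (Note also that $\Sigma_{d+1}\not\subseteq\QCMAH_d$ is a \emph{stronger} statement than what the paper proves, precisely because $\Sigma_{d+1}\subseteq\Sigma_{2d+1}$; there is no reason to believe your approach buys this extra strength, and in fact it cannot via this route.) As a downstream consequence, your derivation of ``$\QCMAH_{d+1}\not\subseteq\QCMAH_d$'' also needs rewriting: from $\Sigma_{2d+1}\not\subseteq\QCMAH_d$ one cannot immediately place the hard function in $\QCMAH_{d+1}$; one must instead use a collapse argument (if $\QCMAH_{d+1}\subseteq\QCMAH_d$ then $\QCMAH_k\subseteq\QCMAH_d$ for all $k$, contradicting $\Sigma_{2d+1}\subseteq\QCMAH_{2d+1}$).

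Your proposed ``peel one level per projection via a guess-the-answers flattening'' is also not what the paper does and is not needed. The paper circuit-ifies the entire $\QCMAH_d$ machine up front into a generalized circuit in which quantum-query gates sit at every even level (\prop{qcmahcirc}), and then applies a sequence of $2d-1$ random projections that strictly alternate between the quantum projection switching lemma (\cor{projdisagrdnf}, built on \thm{projbqpstronger}) and the classical DNF projection switching lemma of \cite{HRST17} (\thm{randomprojcnf}); the parameters ($N_i$, $q_i$, $\lambda_i$, $f_i$) are tuned so that each projection of the first kind kills the current bottom layer of quantum gates, each projection of the second kind kills the OR layer above it, while $\SIP''_d$ stays typical throughout (\lem{typicalcond1}, \lem{typicalcond2}). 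There is no flattening step and no simulation of a small decision tree querying a $\QCMAH_{d-1}$ oracle by a shallow alternating circuit; the quantum gates at intermediate levels are handled directly by the quantum projection lemma once all lower levels have been switched. The ``interaction between adversarial restriction values and the adaptivity of the decision tree'' difficulty you flag would indeed be delicate if one tried the flattening route, which is a further argument for using the paper's cleaner one-shot circuit conversion.
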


Proving lower bounds on $\QCMAH$ relative to an oracle amounts
to proving lower bounds on constant-depth
circuits which have alternating
layers of polynomial-fanin AND gates, and of ``gates''
consisting of quantum query algorithms which make few quantum
queries. This demonstrates that our techniques can be useful
for proving lower bounds on shallow quantum circuit classes.

\subsection{Our techniques}

\paragraph{Random oracle separations.}
To prove our oracle separation for $\QCPH$
relative to a random oracle,
it suffices to show that depth $d$ $\AC^0$ circuits with
``quantum query complexity gates'' at the bottom layer cannot
compute some function in computable in depth $d+2$ classical $\AC^0$,
against the uniform distribution.
The quantum query complexity
gates means that at the bottom of the $d$
alternating layers of AND and OR gates lie Boolean functions
which can each be computed in $\polylog N$ quantum queries.
(For the $\QCMAH$ separation,
we need to allow quantum query complexity
gates in the middle of the circuit as well, not just the bottom.)
This circuit separation implies the random oracle separation
through a standard technique sometimes called ``slow diagonalization''
(see \app{diagonalization}).

To construct a function which is computable in depth $d+2$
classically but not in depth $d$ quantumly (against the uniform
distribution), we use the construction of \cite{HRST17},
which separated classical $\AC^0$ circuits of different depths.
Their proof used random projections, which we also employ.
We must modify their construction to account for the extra layer
of quantum gates, which we will need to ``strip away'' in the
analysis using a random projection and our quantum switching lemma.

In order to establish a depth-hierarchy theorem for circuits, the restrictions used need to ensure that an AND-OR tree retains structure (as in \cite{DBLP:conf/stoc/Hastad86} and \cite{HRST17}) while a smaller-depth circuit simplifies. Thus the sequence of restrictions/projections used needs to alternate between heavily biased towards $1$ and heavily biased towards $0$ (so that not all AND/OR gates are set to constant). In the quantum switching lemma of \cite{AIK22}, the quantum query algorithm becomes close to a DNF with respect to the uniform distribution, when the restriction is also sampled uniformly. We will however need to sample a projection from an underlying distribution different from the one with respect to which we compare the closeness of the resulting quantum query algorithm and DNF (because of the alternating choice of distributions described earlier). This mismatch between the two
distributions requires us to use our ``worst-case'' type of quantum
switching lemma.

Our separation for the $\QCMAH$ hierarchy works similarly,
but requires carefully changing the parameters to ensure the
structure is maintained even in the presence of intermediate
quantum query gates. See \sec{QCMAH} for details.

\paragraph{Quantum switching lemma.}
The proof of our quantum switching lemma relies on an
\emph{adaptive} application of the quantum hybrid method, which
may be of independent interest. Given a quantum algorithm $Q$
acting on a string $x$, a standard hybrid argument
of \cite{BBBV97} says there will only be a small set of positions
in the string $x$ that the algorithm $Q$ ``looks at'';
the output of $Q$ can only be sensitive to a change in a bit of $x$
at one of those few positions. Call those the heavy positions of $x$.
Now, if some of those heavy positions of $x$ are indeed changed,
then not only can the output of $Q$ change, but even the set of
heavy positions of the input can change.

This poses a problem for us: we would like to restrict the function
to the bits of $x$, except at a few random positions. If some
of those random positions are heavy (with respect to $x$), then
the quantum algorithm can still depend on them in a nontrivial
manner. We could try to mimic this by a classical algorithm
which queries those few non-fixed heavy bits, but the problem
is that this is not sufficient to fix the output of $Q$:
the output of the algorithm $Q$ may now depend on new heavy bits.
(It is not clear if a classical decision tree can find these
new heavy bits, since a quantum algorithm may try to query
the rest of the unfixed bits in superposition.)

We get around this problem by applying the hybrid method
iteratively, in an adaptive manner. Beginning with a string $x$,
we find its heavy bits, and randomly choose a few of them to leave
unfixed; we replace those bits with values from a different string
$y$. This gives us a hybrid string $x^{(1)}$ which mostly contains
bits of $x$, but where a few heavy positions were replaced by bits
of $y$. We then iterate this process: we find the heavy bits of
$x^{(1)}$ which are not heavy bits of $x$, sample a few of those
positions at random, and replace them with more bits from $y$,
resulting in $x^{(2)}$. We continue this way and terminate
when there are no new heavy bits.

In each round, there is a constant
(or better) probability of having no new heavy bits which are
unfixed, since the number of heavy bits is small and the probability
of leaving a bit unfixed is also small. This means the number
of rounds cannot be too large except with exponentially small
probability. Once the process terminates, the set of all heavy
bits encountered along the way cannot be too large
(except with exponentially small probability), and can be
used to compute the randomly-restricted function on most inputs.

\paragraph{Polynomial lower bound.}
Since our quantum switching lemma relies fundamentally on the
hybrid method, which talks about
``where the quantum algorithm queried the string'', it may seem
surprising that we can generalize our result to a switching
lemma for polynomials as well. To do this, we rely on a property
shown in \cite{ABK21} (expanding on earlier work by \cite{Tal13})
which says that approximate degree is lower bounded by fractional
block sensitivity. Fractional block sensitivity, in turn, is
equal to fractional certificate complexity, a measure which
assigns to each bit of each input a non-negative weight:
a way of saying ``how much did the polynomial look at position $i$
when given string $x$ as input''.

In fact, we will need a stronger version of the result of
\cite{ABK21}; we adapt their method to prove this. Our result
is the following theorem, which may be of independent
interest.

\begin{theorem}\label{thm:fbsdeg}
Let $f\colon\B^n\to[0,1]$ be a real-valued function which
can be expressed as a polynomial of degree at most $d$.
Then there is an assignment of weights
$\{c_{x,i}\}_{x\in\B^n,i\in[n]}$ to inputs $x$ and positions
$i\in[n]$ such that for all $x,y\in\B^n$, we have
\[\sum_{i:x_i\ne y_i} c_{x,i}\ge |f(x)-f(y)|,\qquad
\sum_{i=1}^n c_{x,i}\le \frac{\pi^2}{4} d^2.\]
\end{theorem}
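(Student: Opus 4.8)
The plan is to follow the approach of \cite{ABK21} relating approximate degree to fractional block sensitivity, but to extract explicit weights rather than just a bound on the fractional block sensitivity measure, and to track the constant carefully to get $\frac{\pi^2}{4}d^2$. The starting point is the classical fact that a degree-$d$ univariate real polynomial $p\colon[-1,1]\to[-1,1]$ satisfies the Markov brothers' inequality $\max_{t\in[-1,1]}|p'(t)|\le d^2$, and more generally that the Chebyshev polynomial $T_d$ is the extremal case. For the multivariate setting, the idea is: given $x,y\in\B^n$, interpolate between them along the line segment $x + t(y-x)$ for $t\in[0,1]$ (thinking of $\B^n$ as embedded in $\bR^n$, or using $\pm 1$ coordinates), restrict $f$ to this line to obtain a univariate degree-$\le d$ polynomial $g(t)=f(x+t(y-x))$ with $g(0)=f(x)$, $g(1)=f(y)$, and $g([0,1])\subseteq[0,1]$ by hypothesis. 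Then $|f(x)-f(y)|=|g(1)-g(0)|\le \int_0^1 |g'(t)|\,dt$, and we need to bound this in terms of a sum of per-coordinate weights $c_{x,i}$ over the coordinates $i$ where $x_i\ne y_i$.

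The key step is to define the weights directly from the polynomial. Writing $g'(t)=\sum_{i:x_i\ne y_i}(y_i-x_i)\,\partial_i f(x+t(y-x))$, one wants to assign to each pair $(x,i)$ a weight $c_{x,i}$ that dominates, uniformly over the relevant $y$ and $t$, the contribution of coordinate $i$. A natural candidate is to set $c_{x,i}$ to be (a suitable rescaling of) the maximum over all lines through $x$ in direction $\pm e_i$-ish of the relevant derivative integral — concretely, following \cite{ABK21}, one considers for each $i$ the polynomial obtained by substituting the line and takes $c_{x,i}\propto \sup |{\partial_i f}|$ along admissible segments, and then invokes a univariate Markov-type inequality to bound $\sum_i c_{x,i}$ by $O(d^2)$. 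To get the sharp constant $\pi^2/4$, I expect one needs the version of Markov's inequality on an interval of length $2$ (or the Bernstein inequality $|p'(t)|\le d/\sqrt{1-t^2}$ on $(-1,1)$, whose integral over $[-1,1]$ is $\pi d$, or a related averaged bound), combined with the factor-of-two loss from passing between the two endpoints; the $\pi^2/4$ strongly suggests an application of $\int_{-1}^1 \frac{d\,dt}{\sqrt{1-t^2}} = \pi d$ used twice (once in each of two directions) or an $\ell_2$-style averaging argument where $\pi^2/4$ arises as $(\pi/2)^2$. I would reconstruct the exact chain by writing $f$ in the Chebyshev-type basis adapted to each coordinate and bounding the sum of discrete derivatives.

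The main obstacle, I expect, is making the per-coordinate weight assignment $c_{x,i}$ work \emph{simultaneously} for all target points $y$: the decomposition of $g'(t)$ depends on $y$ through which coordinates differ and through the point $x+t(y-x)$ at which $\partial_i f$ is evaluated, so a single weight $c_{x,i}$ must dominate a whole family of partial-derivative values. The resolution in \cite{ABK21} is that the multilinearity (or low degree) of $f$ lets one bound all these evaluations by a single extremal univariate quantity; I would need to verify that this still goes through with the quantitative constant $\pi^2/4$ rather than just up to a universal constant, which is the ``stronger version'' the authors allude to. A secondary point to be careful about is the normalization between the Boolean cube $\B^n$ and the interval $[-1,1]$: the degree of $f$ as a multilinear polynomial on $\B^n$ equals its degree as a polynomial on $\{\pm1\}^n$, and restricting to a line between two cube vertices gives a univariate polynomial whose degree is at most $d$ but which may be evaluated only at the endpoints $t\in\{0,1\}$ of a unit interval — extending the domain to apply Markov/Bernstein requires knowing $f$ (or rather its line restriction) is bounded on a larger interval, which does \emph{not} follow from $f\colon\B^n\to[0,1]$ alone. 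I would instead work with the line restriction only where it is controlled and use an inequality (such as the one underlying the classical proof that block sensitivity $\le$ something times degree squared) that only needs boundedness at the cube points, i.e., a discrete Markov inequality; reconciling that with the clean constant $\pi^2/4$ is the crux, and I would expect the proof to go through a ``smoothing'' step that replaces $f$ by a genuinely bounded polynomial on a real box before applying the analytic inequality.
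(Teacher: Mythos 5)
Your proposal takes a genuinely different route from the paper, but it does not reach the finish line; there is a concrete gap at exactly the point you flag as ``the main obstacle.''

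The paper's proof never constructs the weights $c_{x,i}$ analytically. Instead it observes that the \emph{existence} of the weight assignment in \thm{fbsdeg} is precisely the statement that the fractional certificate complexity of $f$ at every input is at most $\tfrac{\pi^2}{4}d^2$; by strong LP duality (the duality between fractional certificate complexity and fractional block sensitivity, stated explicitly in the paper's preliminaries), this is equivalent to $\fbs(f)\le\tfrac{\pi^2}{4}d^2$. The paper then bounds $\fbs(f)$ by a separate mechanism: fix an optimal fractional weight scheme, rationalize it, compose $f$ with an $N$-bit Chebyshev gadget $T_m\circ h$ (with $m\approx \pi\sqrt{N}/2$ chosen so that a single bit flip moves $h$ from $1$ to $\cos(\pi/m)$ and hence $T_m\circ h$ from $+1$ to $-1$), and thereby turn fractional block sensitivity of $f$ into integer block sensitivity of a degree-$m\cdot\deg(f)$ function $f'$. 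The cited result $\s(g)\le\deg(g)^2$ for bounded real polynomials (\thm{sdeg}, and its block-sensitivity corollary) then gives $N\cdot\fbs(f)\lesssim m^2\deg(f)^2$; dividing by $N$ and letting $N\to\infty$ produces the constant $\pi^2/4=(\pi/2)^2$. So $\pi^2/4$ comes from the degree of the Chebyshev gadget, not from a Bernstein integral.

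The gap in your proposal: writing $|f(x)-f(y)|\le\int_0^1|g'(t)|\,dt$ with $g'(t)=\sum_{i:x_i\ne y_i}(y_i-x_i)\,\partial_i f(x+t(y-x))$ expresses the difference as a $y$-dependent combination of partial derivatives \emph{evaluated along the segment from $x$ to $y$}. To extract a single weight $c_{x,i}$ that works for \emph{all} $y$ simultaneously, you propose $c_{x,i}\propto\sup$ of $|\partial_i f|$ over admissible lines. But the relevant control you have, Markov/Bernstein on a univariate restriction, bounds $\sum_{i}|\partial_i f|$ \emph{at a single point along a single line} (or its integral along one line); it does not bound $\sum_i \sup_{y,t}|\partial_i f(x+t(y-x))|$, which is a sum of maxima over different lines and can be much larger than any single-line Markov bound. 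In other words, the dual object you need (the certificate weights) cannot be obtained coordinate-by-coordinate from the primal line-restriction bounds without solving essentially the same LP that the paper appeals to. Your own sentence ``I would need to verify that this still goes through'' is the unproved step, and I do not see how to make it go through directly. A secondary, fixable issue: you worry that $f$ may not be bounded off the cube, but since $x_i^2=x_i$ on $\B$ you may WLOG take $f$ multilinear, and the multilinear extension of a $[0,1]$-valued function is $[0,1]$-valued on $[0,1]^n$ (it is a convex combination of the cube values), so line restrictions between cube points are bounded; the real problem is the one above, not boundedness.

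If you want to salvage your approach, the cleanest fix is to stop trying to build $c_{x,i}$ by hand and instead prove the dual bound $\fbs(f,x)\le\tfrac{\pi^2}{4}d^2$, then invoke LP duality to get existence of the weights; at that point you will find yourself doing essentially what the paper does.
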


The proof is similar to the one in \cite{ABK21}; it uses
strong duality of linear programming to convert fractional
certificates to fractional block sensitivity, and uses composition
of $f$ with a version of promise-OR to convert fractional block
sensitivity to regular block sensitivity; the latter can be
turned into sensitivity using a projection, and results in
approximation theory can be used to relate sensitivity to polynomial
degree.

\subsection{Previous work on quantum polynomial hierarchies}
The study of quantum variants of polynomial hierarchy was initiated in \cite{gharibianQuantumGeneralizationsPolynomial2018}. In this section, we survey the results which are already known about these quantum variants of $\PH$.

 We start by comparing $\QCPH$ to an alternative definition of quantum-classical $\PH$, considered in \cite{j.lockhartQuantumStateIsomorphism2017}, which we will call $\QCPH'$. Define $\QCSigma'_1 = \exists \cdot \BQP$ and $\QCSigma'_i = \exists \cdot \QCPi'_{i-1}$. While $\QCSigma'_1$ looks similar to $\QCMA$, it is not known whether $\QCSigma'_1 = \QCMA$ due to the $\exists \cdot \BPP$ versus $\MA$ phenomenon (see \cite{gharibianQuantumGeneralizationsPolynomial2018} for a detailed discussion). This phenomenon refers to the following problem: in $\exists \cdot \BPP$, for every choice of proof $y$, the $\BPP$ verifier must either accept or reject with high probability (for example, it can not accept with probability 1/2). This is not the case with $\MA$, where (in the yes case) the $\BPP$ verifier is only supposed to accept with high probability for at least one choice of proof $y$, but can behave however it likes on other choices of $y$. For this reason, it is not clear whether $\QCPH = \QCPH'$, and in this work (as in \cite{gharibianQuantumGeneralizationsPolynomial2018}) we consider $\QCPH$ to be the natural quantum-classical analogue of $\PH$, because it generalizes $\QCMA$.

The most basic property of $\PH$ is the collapse theorem, which says that if $\P = \NP$, then $\PH = \P$. The collapse property is also known to be true for $\QCPH$:
\begin{proposition}[Collapse theorem for $\QCPH$ \cite{agkr_qph}, \cite{FGN23}]\label{prop:qcphcollapse}
    If $\QCSigma_i = \QCPi_i$, then $\QCPH = \QCSigma_i$.
\end{proposition}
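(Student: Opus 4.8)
The standard blueprint is the classical Karp--Lipton-style induction: assuming $\QCSigma_i = \QCPi_i$, I would show by induction on $j \ge i$ that $\QCSigma_j = \QCSigma_i$, which immediately gives $\QCPH = \QCSigma_i$. The base case $j=i$ is the hypothesis. For the inductive step, I want to show $\QCSigma_{j+1} \subseteq \QCSigma_i$ (the reverse inclusion is trivial since the hierarchy is nondecreasing). The plan is: take a language $L \in \QCSigma_{j+1}$, witnessed by a quantum verifier $V$ with quantifier block $\exists y_1 \forall y_2 \cdots Q_{j+1} y_{j+1}$. Group the last $j$ quantifiers together: after fixing $y_1$, the remaining predicate ``$\forall y_2 \exists y_3 \cdots$'' over the quantum verifier is (a padded version of) a $\QCPi_j$ computation on input $(x,y_1)$. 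By the inductive hypothesis $\QCSigma_j = \QCSigma_i$, and since $i \le j$ the hierarchy being ``level-$i$-collapsed'' means $\QCPi_j = \QCSigma_j = \QCSigma_i$ as well (one should note $\QCSigma_i=\QCPi_i$ forces $\QCSigma_j=\QCPi_j$ for all $j\ge i$, which itself needs a small induction). So the inner $\QCPi_j$ predicate can be rewritten as a $\QCSigma_i$ predicate: $\exists z_1 \forall z_2 \cdots Q_i z_i$ over a quantum verifier $V'$ acting on $(x,y_1,z_1,\dots,z_i)$. Prepending the outer $\exists y_1$ and merging it with the leading $\exists z_1$ collapses two adjacent existential quantifiers into one, yielding an $\exists \forall \cdots Q_i$ block of length $i$, i.e.\ a $\QCSigma_i$ computation. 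Hence $L \in \QCSigma_i$.

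There are two points that require genuine care rather than routine bookkeeping, and I expect the \emph{completeness/soundness amplification across the quantifier substitution} to be the main obstacle. First, when we replace the inner $\QCPi_j$ predicate by an equivalent $\QCSigma_i$ predicate, the equivalence is only ``on the nose'' at the level of languages (acceptance probability $\ge 2/3$ versus $\le 1/3$), but in the $\QCSigma_{j+1}$ definition the inner quantifier block sits \emph{inside} an outer quantifier, so we need the inner computation to behave like a clean Boolean predicate, not merely have a bounded-error gap on its own promise. This is exactly the ``$\exists\cdot\BPP$ vs.\ $\MA$'' subtlety flagged earlier in the excerpt. The standard fix is to first amplify the gap of the inner quantum verifier (via parallel repetition and majority vote, which is valid for quantum verifiers receiving classical proofs, since the proofs can simply be read multiple times) so that the error is, say, $2^{-\poly}$; then the inner predicate is correct with overwhelming probability simultaneously over all relevant proof strings, so it really does compute a fixed Boolean function of $(x,y_1)$ up to negligible error, and we can safely substitute it. Second, the merging of two adjacent $\exists$ quantifiers into one is syntactically a matter of concatenating the two proof strings, but one must check that the polynomial length bounds are preserved and that the verifier $V'$ obtained by ``inlining'' the inner verifier is still a legitimate quantum polynomial-time verifier --- this is routine given the amplification step.

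So the key steps in order are: (i) from $\QCSigma_i = \QCPi_i$, deduce $\QCSigma_j = \QCPi_j$ for all $j \ge i$ by a short induction; (ii) set up the outer induction on $j$ with hypothesis $\QCSigma_j = \QCSigma_i$; (iii) in the inductive step, write an arbitrary $L \in \QCSigma_{j+1}$ as $\exists y_1$ over a $\QCPi_j$ predicate; (iv) amplify the inner verifier's success/failure probabilities to $1 - 2^{-\poly}$ so the inner predicate is essentially Boolean; (v) apply $\QCPi_j = \QCSigma_i$ to rewrite the inner predicate with an $\exists \forall \cdots Q_i$ block; (vi) absorb the outer $\exists y_1$ into the leading $\exists$, obtaining a length-$i$ quantifier block and hence $L \in \QCSigma_i$; (vii) conclude $\QCPH = \QCSigma_i$. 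The main obstacle, as noted, is making step (iv)--(v) rigorous so that substituting a bounded-error quantum subroutine inside a quantifier does not blow up the error --- the amplification must be strong enough to union-bound over all the proof strings that the outer quantifier and the substituted inner block can range over.
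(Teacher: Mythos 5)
The paper does not prove this proposition itself; it cites \cite{agkr_qph} and \cite{FGN23}, so I will assess your plan on its own. The skeleton of your argument --- induct on $j \ge i$, peel off the outermost quantifier of a $\QCSigma_{j+1}$ predicate, apply the inductive collapse $\QCPi_j = \QCSigma_i$ to the inner block, and merge the two adjacent existential quantifiers --- is correct and is the standard route. But the step you single out as ``the main obstacle,'' step (iv), is based on a false claim. Parallel repetition with a majority vote only pushes acceptance probabilities that are already at least $c$ toward $1$, or already at most $s$ toward $0$; for pairs $(x,y_1)$ \emph{outside} the inner promise, the verifier may accept with probability near $1/2$, and no amount of repetition turns that into $0$ or $1$. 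That is precisely the content of the $\exists\cdot\BPP$-versus-$\MA$ distinction you cite, so amplification is not and cannot be the mechanism that lets you ``safely substitute'' the inner predicate. If you tried to carry out (iv) as stated --- conclude that the amplified inner verifier computes a fixed Boolean function of $(x,y_1)$ --- you would be relying on something false.

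The reason the substitution nevertheless goes through is simpler and is the part your plan glosses over. Write $B_{\text{yes}}(x,y_1)$ for ``$\forall y_2\,\exists y_3\cdots\Pr[V]\ge c$'' and $B_{\text{no}}(x,y_1)$ for ``$\exists y_2\,\forall y_3\cdots\Pr[V]\le s$''; then $(B_{\text{yes}},B_{\text{no}})$ is a bona fide $\QCPi_j$ promise problem, and by the inductive hypothesis it has a $\QCSigma_i$ verifier $W$ with some gap $(c',s')$ satisfying $c'-s'\ge 1/\poly(n)$. The key observation is that $W$ is never invoked outside its promise: if $x\in\ano$ then $\QCSigma_{j+1}$ soundness gives $\forall y_1\,\exists y_2\cdots\Pr\le s$, so \emph{every} pair $(x,y_1)$ lies in $B_{\text{no}}$ and hence $\forall(y_1z_1)\,\exists z_2\cdots\Pr[W]\le s'$; if $x\in\ayes$ then completeness only needs one witnessing $y_1$, which by definition lies in $B_{\text{yes}}$, giving $\exists(y_1z_1)\,\forall z_2\cdots\Pr[W]\ge c'$. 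The arbitrary out-of-promise behavior of $W$ never matters, the two leading existentials concatenate, and you land directly in $\QCSigma_i(c',s')$. No amplification is needed to license this step; amplification of $\QCSigma_i$ verifiers with classical proofs is legitimate (as you note) and may be used afterward to normalize the gap, but it is not what resolves the promise issue you were worried about.
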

\noindent It is now natural to ask how the collapse property of $\PH$ and $\QCPH$ are related, i.e., whether collapse of one implies the collapse of other. It is easy to see that relative to a $\PSPACE$ oracle, $\P = \NP = \BQP = \QCMA = \PSPACE$. It is known that the collapse property for $\PH$ and $\QCPH$ is not related in a black-box manner:
\begin{proposition}[Corollary 3.7 of \cite{FR99}]
    There exists an oracle $O$ relative to which $\P^O= \BQP^O$ and $\PH^O$ is infinite.
\end{proposition}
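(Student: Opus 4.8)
The plan, following Fortnow and Rogers, is to build $O$ on two interleaved, disjoint tracks of oracle positions: a \emph{hardness track}, carrying a relativized depth-hierarchy construction, and a \emph{transcript track}, which records the outcomes of bounded-error quantum query computations so that a classical machine can recover them by table lookup. Fix an enumeration $M_1, M_2, \dots$ of polynomial-time quantum oracle machines, with $M_i$ running in time $n^{c_i}$. On the hardness track at stage $s$ we plant, at a fresh input length $n_s$, the gadget language $L_s$ defined by $x \in L_s \iff \exists y_1\, \forall y_2 \cdots Q y_{s+1}\colon O(x, y_1, \dots, y_{s+1}) = 1$, with the $y_j$ ranging over $\B^{n_s}$ and $Q$ the quantifier making this a $\Sigma_{s+1}$ expression, so $L_s \in \Sigma^O_{s+1}$ by construction. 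On the transcript track we record, for every $i$ and every input $x$, the majority outcome of $M_i^O(x)$ at a canonical position of length $|x|^{d_i}$, where $d_i > c_i$ is a constant depending only on $i$. Since a length-$t$ computation depends only on oracle positions of length at most $t$, processing lengths in increasing order makes this assignment consistent and non-circular: by the time the transcript position of length $|x|^{d_i}$ is filled, $M_i^O(x)$ is already determined, and because $d_i > c_i$ the computation $M_i^O(x)$ never queries its own transcript bit.

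Two things then have to be checked. First, $\P^O = \BQP^O$: given a $\BQP^O$ language decided by $M_i$, a deterministic oracle machine that may depend on $i$ computes, on input $x$, the canonical transcript position of length $|x|^{d_i}$, queries it, and outputs the recorded bit; together with $\P^O \subseteq \BQP^O$ this gives equality. Second, $\PH^O$ is infinite, for which it suffices to show $L_s \notin \Sigma^O_s$ for every $s$, since that already witnesses $\Sigma^O_s \subsetneq \Sigma^O_{s+1}$. Here is where the transcript track creates a subtlety. A putative $\Sigma^O_s$ algorithm for $L_s$ on inputs of length $n_s$ unfolds, in the usual way, into a depth-$s$ alternating circuit whose inputs are oracle positions of length $\poly(n_s)$ --- but these now include not only ``raw'' hardness-track bits but also transcript-track bits. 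Crucially, any transcript bit the algorithm can reach records a computation confined to positions of length $\poly(n_s)$ (otherwise its transcript would be out of reach), so, viewed as a Boolean function of the $N := 2^{\poly(n_s)}$ relevant oracle positions, each such bit has quantum query complexity at most $\poly(n_s) \le N^{1/3}$. To prove the depth-hierarchy lower bound one runs the HRST sequence of random restrictions and projections, but at the step that would ordinarily simplify the bottom layer one instead invokes the quantum switching lemma (\thm{Qswitching}, or \thm{DegSwitching} if one prefers to argue through polynomial degree): after the restriction, every transcript-track leaf collapses to a shallow decision tree except with probability $e^{-d^{1/5}}$, so the whole object becomes a genuine low-depth $\AC^0$ circuit over the surviving hardness-track bits, and H{\aa}stad's argument then shows it cannot compute the restricted gadget. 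The union bound over the at most $N$ transcript leaves costs $N e^{-d^{1/5}}$, which is negligible once the tree height $d$ is a large enough polynomial in $n_s$.

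The main obstacle is precisely this interaction between the two tracks. The transcript track must sit close enough to the surface that a polynomial-time classical machine can reach it --- forcing the recording length $|x|^{d_i}$ to be only polynomial in $|x|$ --- yet for exactly that reason a polynomial-time $\Sigma^O_s$ machine working on the length-$n_s$ gadget \emph{can} also reach transcript bits that depend on that gadget, so one cannot simply declare those bits irrelevant; they have to be neutralized inside the circuit lower bound, and that is what the quantum switching lemma accomplishes. (Fortnow and Rogers, writing before such a switching lemma was available, argued instead through the bounded-error, bounded-degree structure of $\BQP$ acceptance polynomials --- in effect a polynomial-method version of the same step, which \thm{DegSwitching} also subsumes.) The remaining work is bookkeeping: spacing the $n_s$ far enough apart to diagonalize against the $s$-th $\Sigma$-machine; confirming that the transcript track and the hardness gadgets at lengths other than $n_s$ act as fixed, non-adversarial inputs to the circuit under the HRST restrictions; and checking that the error from the switching lemma composes with the error from the HRST projections, with a height $d$ simultaneously large enough to beat the $N e^{-d^{1/5}}$ union bound and small enough for the ensuing $\AC^0$ depth reduction.
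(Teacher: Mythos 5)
The paper does not prove this proposition at all: it appears in the ``Previous work'' survey section as a result quoted with a citation to Fortnow and Rogers, and the authors supply no argument for it. So there is no ``paper's own proof'' to compare your attempt against.

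Taken on its own terms, your two-track construction is a plausible outline for \emph{re}-proving the statement using the machinery developed in \emph{this} paper, but you should be aware that this is a substantially heavier and genuinely different route than the one Fortnow and Rogers actually take. Their Corollary~3.7 predates both the HRST random-oracle $\PH$ separation (2016) and any quantum switching lemma, so neither tool could have been part of their argument; your parenthetical attributing to them ``a polynomial-method version of the same step'' misremembers their proof. What they do is much shorter: they establish $\P^G = \BQP^G$ relative to a generic oracle $G$ (essentially by the transcript/look-up idea you use, but with no hardness-track interference to neutralize since genericity takes care of the $\PH$ side for free, given that some oracle making $\PH$ infinite was already known via Yao/H\aa stad), and then invoke the standard fact that generic oracles inherit any such separation that is realizable by some oracle. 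In particular they never need to run a quantum computation through a depth-hierarchy lower bound.

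As a new proof your sketch is not obviously wrong, but it is far from watertight as written. Two points you wave at that would require real work: (1) the paper's circuit lower bound (\thm{randomlowerbound}) is proved for the specific gadget $\SIP'_d$ with carefully tuned fan-ins and a matching tailored random projection $\mathcal{R}_{\mathsf{init}}$; your generic Sipser-style $\Sigma_{s+1}$ formula does not plug into that statement as a black box, so you would need to either plant $\SIP'_{s+2}$ itself on the hardness track or re-derive the projection analysis for your formula. (2) A transcript leaf is not a single $T$-query quantum algorithm over the raw hardness bits --- it is a bounded-depth recursive composition of such algorithms querying earlier transcript bits; you would need to unfold this recursion and bound the effective quantum query complexity of the composite (which does work out to $\poly(n_s)$, but the argument needs to be made, and it then becomes a computation over the alphabet of transcript lengths rather than the one-shot picture you describe). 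In effect you are quietly proving something closer to the paper's $\QCMAH$ separation, where nested quantum gates have to be stripped layer by layer, than a single application of \thm{Qswitching}.
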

\begin{proposition}[Theorem 32 of \cite{AIK22}]
    There exists an oracle $O$ relative to which $\P^O = \NP^O \neq \BQP^O = \QCMA^O$.
\end{proposition}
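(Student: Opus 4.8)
The oracle $O$ must accomplish three things at once: force $\P^O=\NP^O$ (which, since $\P=\NP\Rightarrow\PH=\P$ relativizes, also gives $\PH^O=\P^O$); contain a language in $\BQP^O\setminus\P^O$; and make classical witnesses powerless for quantum verifiers, i.e.\ $\QCMA^O\subseteq\BQP^O$ (the reverse inclusion always holds). I would build $O$ from two interacting components. The first is a \emph{forrelation gadget}: at suitable input lengths, plant an independent instance of the Forrelation distinguishing problem of Raz and Tal, so that ``which distribution did the length-$n$ instance come from?'' is solved by the one-query quantum forrelation algorithm (hence lies in $\BQP^O$), while the Raz--Tal $\AC^0$ correlation bound shows it is not computed by low-depth quasipolynomial-size circuits, hence not by $\PH$, relative to the gadget. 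The second component is a \emph{self-referential $\NP$-oracle} $N$: for each clocked polynomial-time nondeterministic machine $M$ and input $x$, one bit recording whether $M^O$ accepts $x$ is placed at a canonical string whose length exceeds the running time of $M$ on $x$, so that $M$ cannot query its own answer bit and the construction is well-founded by induction on length. With $N$ present, every $\NP^O$ language is decided in $\P^O$ by a single query, giving $\P^O=\NP^O$.

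The central difficulty is that these two components fight each other. If one encodes answers for \emph{all} polynomial-time nondeterministic machines at nearby lengths, then an $N$-query can be recursively unfolded along a chain of further $N$-queries whose length is bounded only by the polynomial runtime, so that $\P^O=\PH^O$ picks up polynomially many alternations and effectively becomes $\PSPACE$ relative to the gadget; that destroys the lower bound and collapses everything. The remedy is to stratify the $\NP$-answer encodings across \emph{extremely} widely separated length scales (say, growing like a tower function in the stage of the construction), so that on an input of length $n$ the number of times an $N$-query can be unfolded before it bottoms out in gadget bits is only a very slowly growing function of $n$ (something like $\log^{*}n$). One then checks that this bounds the effective number of alternations available to $\P^O=\PH^O$ on the inputs relevant to each forrelation instance well enough that a Raz--Tal-type lower bound still applies there (their argument tolerates a mildly super-constant depth), while still collapsing each \emph{fixed} level of the hierarchy. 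Making the decoder for an $\NP^O$ language, the placement of answer bits, and this depth bound all coexist requires handling machines and inputs at carefully chosen stages; this bookkeeping is the technical heart of the argument.

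Given such an $O$, the separation $\BQP^O\neq\P^O$ is immediate: the forrelation language lies in $\BQP^O$ but, by the previous paragraph, not in $\P^O=\PH^O$. For $\QCMA^O\subseteq\BQP^O$, one cannot simply plant $\QCMA^O$-answers into $O$ --- that would push $\BQP^O\subseteq\QCMA^O$ into $\P^O$ and contradict the separation --- so the inclusion must come from showing classical witnesses are useless for this particular $O$. Given a $\QCMA^O$ verifier and a candidate proof, the content of the proof decomposes into (i) pointers to $\NP^O$-style certificates, which a $\BQP^O$ algorithm recovers for itself by querying $N$, and (ii) information about the forrelation gadget, which is worthless since no fixed classical string distinguishes a forrelated instance from a uniform one with non-negligible advantage. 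Formalizing this means arguing that for any $\QCMA^O$ verifier the proof may be taken oblivious to the gadget up to negligible error, after which the gadget-free remainder is simulated in $\BQP^O$ using $N$; this yields $\QCMA^O=\BQP^O$.

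I expect the main obstacle to be the reconciliation in the second paragraph: the obvious ways to force $\P=\NP$ relative to an oracle give the classical side so much power (up to $\PSPACE$ relative to whatever the oracle is built over) that every $\BQP$-versus-$\P$ separation dies, so the construction must supply \emph{exactly} enough oracle power to collapse each level of $\PH$ and no more, through the stratified length-scale accounting sketched above --- and one must verify that a forrelation-style lower bound genuinely survives the resulting slowly-growing alternation depth. A secondary nuisance is making Forrelation, natively a promise problem, into an honest total language without spoiling either the $\BQP$ upper bound or the $\PH$ lower bound, and carrying the gadget's ``witness-uselessness'' through its interaction with $N$ in the $\QCMA=\BQP$ step.
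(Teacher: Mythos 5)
Your sketch correctly identifies the ingredients you need and correctly diagnoses a real tension — but the way you resolve it for the $\QCMA^O=\BQP^O$ half is not the way \cite{AIK22} does it, and I believe the route you propose is genuinely broken.

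\paragraph{What is right.} The overall plan (a quantum-advantage gadget planted at sparse length scales, plus a completion component that forces $\P^O=\NP^O$ by encoding $\NP^O$ answers far enough out that the recursion is well-founded, with the depth of the unfolding controlled so that the resulting alternation depth is mildly superconstant and still caught by the Raz--Tal $\AC^0$ bound) is a reasonable rendering of how these oracle collapses are built, and the tension you point out in paragraph two is the real one.

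\paragraph{Where it breaks.} Your treatment of $\QCMA^O\subseteq\BQP^O$ is the gap. You observe that one ``cannot simply plant $\QCMA^O$-answers into $O$'' because classically readable answer bits would give $\P^O\supseteq\QCMA^O\supseteq\BQP^O$ and collapse everything, and you then try to avoid planting anything by arguing that classical witnesses are inherently useless over a Forrelation oracle. That second step is not a proof: ``the content of the proof decomposes into $\NP$-style pointers plus gadget information'' is not a well-defined decomposition of an arbitrary bit string, and ``no fixed classical string distinguishes a forrelated instance from a uniform one'' does not imply that no $\QCMA^O$ verifier can profit from a witness. It is entirely open whether $\QCMA=\BQP$ relative to a random oracle, and the Forrelation distribution is not so much more structured than random that a witness-uselessness argument is obviously available; indeed a verifier could ask for a succinct pointer to structure in $O$ that is easy to check but for which there is no known $\BQP$ search procedure. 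Nothing in your sketch rules this out.

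\paragraph{The missing idea.} The correct resolution of the tension is to \emph{plant} the $\QCMA^O$-answers after all, but not as single classically readable bits: encode the answer to ``is $x\in L(V)$?'' as a block of $O$ that is a quantum-advantage instance (for example, a forrelated block if the answer is $1$ and a uniform block if the answer is $0$), placed at a length beyond $V$'s running time so the construction is well-founded. A $\BQP^O$ machine recovers the answer with one quantum query to the block, giving $\QCMA^O\subseteq\BQP^O$; a $\P^O$ (hence, via your $\NP$-completion, $\PH^O$-strength) machine cannot decode the block, precisely by the Raz--Tal bound, so $\P^O=\BQP^O$ is \emph{not} forced. This is exactly the asymmetry your sketch was looking for and did not find: the answers can be planted in a form readable by quantum but not by low-depth classical access. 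Your observation that classically readable planting collapses everything is correct and is why the encoding must take this form. With that component in place, the rest of your sketch (Forrelation gadget for the $\BQP\neq\P$ witness language, sparse $\NP$-answer encoding for the $\P=\NP$ collapse, a length hierarchy keeping the pieces from interfering) assembles roughly as in \cite{AIK22}. You should consult that paper for the precise quantum-advantage gadget and the bookkeeping of lengths, but the key idea you are missing is the quantumly-decodable (and classically hard) answer encoding, not a witness-uselessness argument.
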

\begin{proposition}[Corollary 50 of \cite{AIK22}]
    There exists an oracle $O$ relative to which $\P^O = \NP^O$ but $\BQP^O \neq \QCMA^O$.
\end{proposition}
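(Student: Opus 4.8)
The plan is to construct $O$ from two interleaved components: a ``collapsing'' part $A$ that forces $\P^O = \NP^O$, and a ``needle'' part $g$ witnessing $\BQP^O \ne \QCMA^O$. For each length $n$ in an infinite diagonalization set, $g$ encodes $2^n$ independent sub-instances over $\B^m$ with $2^m$ superpolynomial in $n$; in the ``no'' configuration every sub-instance is a uniformly random function, while in the ``yes'' configuration exactly one sub-instance, at a secret location $z_n^* \in \B^n$, is a Forrelated instance (a pair of functions that a one-query quantum algorithm distinguishes from uniform with constant bias, but which requires $\tOmega(\sqrt{2^m})$ classical queries to so distinguish). The separating language is $L = \{x : g_{|x|}\text{ contains a Forrelated sub-instance}\}$, which depends only on $|x|$. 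Then $L \in \QCMA^O$ almost by definition: the witness for $x$ is $z_{|x|}^*$, and the amplified Forrelation test on sub-instance $z_{|x|}^*$ accepts with probability at least $2/3$ in the yes configuration, whereas in the no configuration every sub-instance is uniform, so the test accepts every candidate witness with probability at most $1/3$.

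The collapsing part is built in the classical Baker--Gill--Solovay style: $A$ stores, at a location padded out to length roughly $2^{|x|}$, the bit ``$\NP$-machine $M$ accepts $x$ within its time bound relative to $O$'', so that a $\P^O$ machine decodes any $\NP^O$ language with a single query, and the padding makes the dependence well-founded (a $\poly(n)$-time computation on a length-$n$ input cannot reach the decoding region for inputs longer than $O(\log n)$). What makes this compatible with the needle part is a \emph{classical-invisibility} claim: by the classical query lower bound for Forrelation, no $\poly(n)$-time classical verifier -- even one handed the correct location $z_n^*$ -- can tell a Forrelated sub-instance from a uniform one, so for every $\NP$ machine $M$ the set of inputs $M^O$ accepts is unaffected by planting or moving needles. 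Hence $A$ can (and must) be defined independently of the needle placement, preserving its decoding property; as a byproduct $L \notin \NP^O$, which is consistent since $\QCMA^O$ need not be contained in $\NP^O$ even when $\P^O = \NP^O$.

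For $L \notin \BQP^O$ I would diagonalize against all $\BQP$ machines. Fix $\mathcal{A}_i$, making $T = \poly(n)$ quantum queries to $O = (A,g)$. Queries to $A$ carry no information about the needle, since $A$ is needle-independent, and the hybrid argument of \cite{BBBV97} bounds the total query weight $\mathcal{A}_i$ places on the positions of $g$ by $T$; hence the expected weight on a uniformly random sub-instance is at most $T/2^n$, and the change in $\mathcal{A}_i$'s acceptance probability caused by planting a Forrelated needle at a uniformly random location is at most $2\sqrt{T\cdot(T/2^n)} = \poly(n)\cdot 2^{-n/2}$ in expectation. So at a fresh large length $n_i$ there is a location at which the needle fools $\mathcal{A}_i$; planting it there is consistent with all earlier commitments, and in the limit $L \notin \BQP^O$.

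The main obstacle is exactly the coupling between the two components: because $A$ is a function of all of $O$ (including $g$), one must rule out that an $\NP^O$ or $\BQP^O$ algorithm extracts the needle's location indirectly through $A$. The classical-invisibility claim is what prevents this, but verifying it requires a careful simultaneous induction -- build $A$ level by level, plant needles, and check at each level that both the decoding property of $A$ and needle-invisibility are maintained, with invisibility at a given level resting on needle-independence of $A$ at all lower levels. Once this scaffolding is in place, the remaining ingredients -- the classical and quantum query lower bounds for Forrelation and for search, and the diagonalization bookkeeping -- are routine.
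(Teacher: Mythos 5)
Your high-level architecture---a needle-in-haystack over Forrelation sub-instances to witness $\QCMA \not\subseteq \BQP$, interleaved with a Baker--Gill--Solovay-style collapsing oracle padded so that the dependence is well-founded---is the right shape and broadly matches how \cite{AIK22} organize this kind of separation. The $\QCMA$ upper bound (witness the location, run an amplified quantum Forrelation test) and the $\BQP$ lower bound (BBBV weight averaging over $2^n$ sub-instances, then diagonalize over fresh lengths) are both fine as sketched.

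The genuine gap is in the classical-invisibility claim, which is the load-bearing wall of the whole construction. You justify it by ``the classical query lower bound for Forrelation,'' i.e., that no $\poly(n)$-query deterministic or randomized algorithm can distinguish a Forrelated sub-instance from a uniform one. But what you actually need is that for every $\NP^O$ machine $M$ and every input $x$, the bit ``$M^O$ accepts $x$'' is invariant under planting or moving the needle. An $\NP^O$ machine is an OR over $2^{\poly(n)}$ nondeterministic branches, each making $\poly(n)$ queries---equivalently, a DNF of exponential width and polynomial term length applied to the oracle string. Randomized query lower bounds say nothing about such DNFs: a DNF can compute functions (e.g.\ $\mathsf{OR}_N$) whose randomized query complexity is $\Omega(N)$. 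To rule out that some $\NP^O$ predicate depends on the needle, you need the much stronger fact that Forrelation (as a distribution versus uniform) is pseudorandom against $\AC^0$, hence against $\PH$ and in particular $\NP$---this is the Raz--Tal theorem, which \cite{AIK22} indeed rely on. Without it, nothing prevents $A$ from encoding information about the needle (since $A$ records the outputs of $\NP^O$ machines, and those outputs could depend on the needle), which would let a $\BQP^O$ algorithm recover the needle by querying $A$ and collapse your lower bound.

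Two secondary issues, both downstream of the same gap. First, even with $\AC^0$-hardness of Forrelation in hand, what you need to hide is the needle-in-haystack instance---an OR over $2^n$ locations each carrying a planted Forrelation---not a single Forrelation instance; one has to verify that $\AC^0$-hardness survives this outer OR, which is not automatic and is part of what your ``simultaneous induction'' would have to do but currently doesn't address. Second, ``Forrelated'' must be a worst-case predicate on a concrete function pair (say, Forrelation value $\ge 3/5$ versus $\le 1/100$) so that the $\QCMA$ verifier is sound on every fixed oracle; a union bound is then needed to argue that a uniform random $g$ has all $2^n$ sub-instances in the ``low Forrelation'' regime except with negligible probability. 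These are fixable, but the reliance on a classical query lower bound in place of $\AC^0$/$\PH$-hardness is a genuine missing idea, not a detail.
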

\begin{proposition}[Theorem 29 of \cite{AIK22}]
    There exists an oracle $O$ relative to which $\P^O \neq \NP^O$ (and $\PH^O$ is infinite) but $\P^O \neq \BQP^O = \QCMA^O = \QMA^O$.
\end{proposition}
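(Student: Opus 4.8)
\smallskip
\noindent\emph{Proof proposal.}\quad The plan, following \cite{AIK22}, is to build $O$ out of two parts living at disjoint scales: a uniformly random ``base'' part $\mathcal{R}$, which by \thm{PHInfinite} already forces $\PH^{\mathcal{R}}$ to be infinite, together with a family of structured ``answer blocks'' whose job is to make proofs useless to a quantum verifier. The engine behind the answer blocks is the Forrelation-based $\BQP$-versus-$\PH$ separation of Raz and Tal: for every $N$ there is an explicit ``forrelated'' distribution on $N$-bit strings that a $\BQP$ machine tells apart from uniform in $\polylog N$ time, yet that is statistically indistinguishable from uniform by constant-depth circuits of size up to $2^{N^{\Omega(1)}}$, hence by every fixed level of $\PH$ once $N$ is large relative to the circuit size.

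Rather than attack $\QMA$ directly, I would use the relativizing inclusions $\BQP\subseteq\QCMA\subseteq\QMA\subseteq\PP$ (a classical proof is a computational-basis quantum proof, and measuring the proof register first converts a $\QCMA$ verifier into a $\QMA$ verifier with the same completeness and soundness) and arrange $\BQP^O\supseteq\PP^O$. Enumerate the polynomial-time machines $M_1,M_2,\dots$; for each $M_i$ and input $x$, reserve an answer block $B_{i,x}$ placed at an oracle length large enough that $M_i$ on length-$|x|$ inputs touches only strictly smaller lengths, and keep all answer-block scales disjoint from the sparse set of scales used by the construction of \thm{PHInfinite}. Filling the oracle in order of increasing length -- the usual staging argument, which avoids circularity -- by the time we set $B_{i,x}$ the $\PP$-acceptance bit $b_{i,x}$ of $M_i$ on $x$ relative to $O$ is already determined; fill $B_{i,x}$ with a fresh forrelated sample if $b_{i,x}=1$ and with uniform bits if $b_{i,x}=0$. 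A $\BQP^O$ machine simulating a $\PP^O$ language locates the relevant block, runs the Forrelation algorithm with amplification, and outputs the recovered bit; hence $\PP^O\subseteq\BQP^O$, so $\BQP^O=\PP^O$ and the chain collapses to $\BQP^O=\QCMA^O=\QMA^O$.

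The separation $\P^O\neq\BQP^O$ -- in fact $\BQP^O\not\subseteq\PH^O$, which also yields $\P^O\neq\NP^O$ -- comes from the same blocks: the language ``is the answer block indexed by $x$ forrelated?'' is in $\BQP^O$ by the Forrelation algorithm, but by Raz--Tal no constant-depth $2^{\poly(|x|)}$-size circuit, hence no fixed level of $\PH^O$, can decide it (at the matched scale such a circuit cannot even distinguish a forrelated block from a uniform one), and $O$ contains blocks of both kinds.

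The substantive part, and where I expect the main obstacle, is showing $\PH^O$ remains infinite: a polynomial-time $\Sigma_d^O$ machine must fail to compute the depth-$(d+1)$ witness of \thm{PHInfinite} even with the answer blocks available. Morally this should hold because the blocks are forrelated, hence pseudorandom against constant depth, so a bounded $\Sigma_d$ computation ought to behave as if they were replaced by uniform noise, reducing to the plain-random-oracle case covered by \thm{PHInfinite}. Making this precise is delicate, because the contents of a block $B_{i,x}$ can correlate with exactly the base-oracle scale on which the witness depends (since $M_i$ may query that scale), and a large circuit can fully decode a block placed at too small a scale. The resolution is to choose the answer-block scales large enough, and far enough from the witness scales, that any block a $\Sigma_d$ machine can actually decode carries a $\PP$-bit independent of the witness scale, while any block it can only sample a few bits of looks uniform to it by the forrelation pseudorandomness; the HRST depth-hierarchy argument -- which randomizes a single base-oracle scale and freezes everything else -- then runs with the answer blocks as harmless frozen noise. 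Getting this scale bookkeeping and the accompanying union bounds exactly right is the crux.
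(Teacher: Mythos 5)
The paper does not prove this statement; it is quoted as Theorem~29 of \cite{AIK22} with no argument given, so there is no internal proof to compare against. That said, your blind reconstruction captures the shape of the \cite{AIK22} construction: length-staged answer blocks whose contents are drawn forrelated or uniform, so that an encoded bit can be read by $\BQP$ yet the block is pseudorandom against bounded-depth circuits; the relativizing chain $\BQP\subseteq\QCMA\subseteq\QMA\subseteq\PP$ so that encoding $\PP$-acceptance bits collapses the whole chain to $\BQP^O$; and a pseudorandomness argument to keep $\PH^O$ infinite. You are also right to identify the last step as the real crux, and that the delicacy is precisely that some blocks reachable by a $\Sigma_d^O$ machine carry bits correlated with the random-base scale it is trying to predict.

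Two concrete flaws beyond the acknowledged gap. First, the inference ``$\BQP^O\not\subseteq\PH^O$, which also yields $\P^O\neq\NP^O$'' is not valid: $\BQP^O\not\subseteq\PH^O$ is entirely compatible with $\P^O=\NP^O$. In your construction $\P^O\neq\NP^O$ must come from $\PH^O$ being infinite, which you need anyway; and once you have that, $\P^O\neq\BQP^O$ follows for free from $\P^O\neq\NP^O\subseteq\PP^O\subseteq\BQP^O$, so the separate Raz--Tal argument for $\BQP^O\not\subseteq\PH^O$ is not needed for the proposition. Second, that Raz--Tal sketch is also incomplete on its own terms: the forrelated-versus-uniform choice for $B_{i,x}$ is a \emph{deterministic} function of the oracle at scales below $\ell_{i,x}$, all of which a $\PH$ machine may query, so ruling out reading the bit off the block itself is not enough --- one must also pick a $b_{i,x}$ (e.g.\ a parity of a slab of the random base) that is itself hard for bounded-depth circuits. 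Finally, a side effect worth keeping in mind: $\PP^O\subseteq\BQP^O$ together with the relativizing Toda theorem forces $\PH^O\subseteq\P^{\PP^O}\subseteq\BQP^O$, so the infinite hierarchy you need must live entirely inside $\BQP^O$; this is consistent, but it is a strictly stronger commitment than the proposition requires and sharpens what the hybrid/union-bound argument over the reachable answer blocks must actually deliver.
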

\noindent Further, it is also interesting to ask how different levels of $\PH$ are related to those of $\QCPH$, for example, can $\NP$ solve some problem which $\BQP$ can not (and vice versa)? The first of these questions was resolved by \cite{BBBV97} and the second by \cite{razOracleSeparationBQP2019}.
\begin{proposition}[Theorem 3.5 of \cite{BBBV97}]
    With probability 1, a random oracle $O$ satisfies $\NP^O \not \subset \BQP^O$.
\end{proposition}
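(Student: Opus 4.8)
The plan is to reprove this classical separation by exhibiting a language that is trivially in $\NP^O$ but whose decidability by any polynomial-time quantum oracle machine would contradict the optimality of unstructured quantum search (the hybrid argument of \cite{BBBV97}), and then to boost the single-machine statement to a probability-$1$ statement by a Borel--Cantelli argument over the countably many quantum oracle machines.

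First I would choose the test language so that it is ``balanced'' under a uniformly random oracle; a naive $\OR$ over the $2^n$ oracle bits of length $n$ does not suffice, since under a uniform oracle it equals $1$ at all large lengths almost surely and is therefore decidable. Instead, call $x\in\B^n$ a \emph{witness} for $1^n$ if $O(x\,\mathrm{bin}(i))=1$ for every $i\in[n]$ (with $\mathrm{bin}(i)\in\B^{\lceil\log n\rceil}$), and put $1^n\in L_O$ iff some $x\in\B^n$ is a witness. The $2^n$ ``blocks'' consult disjoint portions of the oracle, so under a uniform $O$ the witness events are independent with probability $2^{-n}$ each; hence $\Pr[1^n\in L_O]\to 1-1/e$, bounded away from $0$ and $1$, while $L_O\in\NP^O$ via the obvious $n$-query verifier.

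Next I would show that every polynomial-time quantum oracle machine is wrong on a constant fraction of oracles at each large length. Fix such a machine $M$ with query bound $q(n)=\poly(n)$, and condition on all oracle bits except those in the region relevant to length $n$. Compare the ``empty'' configuration in which no block is a witness (so $1^n\notin L_O$) with the $N=2^n$ configurations in which a single uniformly chosen block becomes the unique witness (so $1^n\in L_O$). By the hybrid argument of \cite{BBBV97}, averaged over the planted location the final state of $M$ moves by only $O(q/\sqrt N)=o(1)$, so for a $1-o(1)$ fraction of planted locations $M$'s acceptance probability barely changes from the empty case; consequently $M$ errs on a constant fraction of the length-$n$ configurations (either the empty-like ones, which have constant probability, or the planted ones with a ``bad'' location). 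Denote this constant-probability error event $B_n$.

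Finally, taking a subsequence of lengths $n_1<n_2<\cdots$ spaced far enough apart that the oracle regions used to analyze $B_{n_k}$ are pairwise disjoint, the events $B_{n_k}$ are independent and each has probability at least some fixed $c>0$, so $M$ is correct at all large lengths with probability $\prod_k(1-c)=0$; a union bound over the countably many quantum oracle machines then shows that with probability $1$ no machine decides $L_O$, i.e. $L_O\in\NP^O\setminus\BQP^O$. The single nontrivial ingredient here is the quantum query lower bound behind $B_n$, namely the BBBV hybrid bound / optimality of Grover search; the rest — picking a balanced language and arranging independence across input lengths — is routine bookkeeping.
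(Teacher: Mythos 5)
The paper does not prove this proposition; it merely cites it as Theorem~3.5 of~\cite{BBBV97}, since it is a known prior result listed for context. Your proposal is a correct reconstruction of the BBBV argument itself: the balanced ``witness-block'' language $L_O$, the hybrid-method bound showing any $\poly(n)$-query quantum machine errs with constant probability on the length-$n$ region, and the Borel--Cantelli/diagonalization step over a sparse sequence of lengths and countably many machines are exactly the ingredients of that original proof (and they also parallel the diagonalization template the present paper uses in its own Appendix~\ref{app:diagonalization}). So the approach is the same as the cited source rather than a genuinely different route.
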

\begin{proposition}[Corollary 1.5 of \cite{razOracleSeparationBQP2019}]
    There exists an oracle $O$ relative to which $\BQP^O \not \subset \PH^O$.
\end{proposition}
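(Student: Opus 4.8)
This statement is the Raz--Tal oracle separation~\cite{razOracleSeparationBQP2019}, and the plan is to follow their proof: isolate a single property of the oracle that a polynomial-time quantum machine can decide but that no constant-round polynomial-hierarchy machine can, and then diagonalize. I would first set up the hard problem. For $N=2^n$, view a designated block of $2N$ oracle bits as a pair $(x,y)\in\sB^N\times\sB^N$, and let $H$ denote the $N\times N$ normalized Hadamard matrix (entries $\pm 1/\sqrt N$); call $\Phi(x,y)=\langle x,Hy\rangle$ the forrelation of the pair, normalized so that $\Phi\in[-1,1]$. I would compare two distributions on the block: the uniform distribution $\mathcal U$, and the forrelation distribution $\mathcal D$ obtained by drawing a jointly Gaussian pair $(\tilde x,\tilde y)$ with $\E[\tilde x_i\tilde x_j]=\E[\tilde y_i\tilde y_j]=\delta_{ij}$ and $\E[\tilde x_i\tilde y_j]=\varepsilon H_{ij}$ (a valid covariance for $|\varepsilon|\le 1$), truncating each coordinate to $[-1,1]$, and rounding to $\sB$; here $\varepsilon=1/\polylog N$. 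Under $\mathcal D$ one has $\E[\Phi]\approx\varepsilon$, while under $\mathcal U$ the forrelation concentrates at $0$ with fluctuations $O(1/\sqrt N)\ll\varepsilon$, so distinguishing $\mathcal D$ from $\mathcal U$ amounts to estimating $\Phi$.

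For the quantum upper bound I would use Aaronson's one-query forrelation algorithm: prepare the uniform superposition $\tfrac1{\sqrt N}\sum_i\ket i$, apply the phases of $x$, apply $H$, apply the phases of $y$, apply $H$ again, and measure; the probability of returning to $\ket{0^n}$ equals $\Phi(x,y)^2$. Running this $\poly(n)=\polylog(N)$ times (note $\varepsilon^{-2}=\polylog N$) estimates $\Phi$ to accuracy $\varepsilon/10$ with high probability, so a fixed $\BQP^O$ machine decides the promise problem ``$\Phi\ge\varepsilon/2$'' versus ``$\Phi\le\varepsilon/4$'' on the block for each input length. On the classical side, a $\PH^O$ procedure with a constant number $k$ of alternations and time bound $\poly(n)$ becomes, once all oracle bits outside one block are frozen into constants, a depth-$(k{+}1)$ alternating circuit of size $2^{\poly(n)}=2^{\polylog N}$ on the $2N$ block bits; hence membership of the forrelation problem in $\PH^O$ would produce a quasipolynomial-size constant-depth $\AC^0$ circuit distinguishing $\mathcal D$ from $\mathcal U$ with advantage $\Omega(1)$.

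The crux, which I expect to be the main obstacle, is the Raz--Tal $\AC^0$ lower bound: no quasipolynomial-size constant-depth circuit distinguishes $\mathcal D$ from $\mathcal U$ with advantage better than $N^{-\Omega(1)}$. I would prove it by writing the advantage as $\E_{\mathcal D}[C]-\E_{\mathcal U}[C]$, interpolating along the Gaussian family $\mathcal D_t$ for $0\le t\le\varepsilon$, bounding the $t$-derivatives by the low-level Fourier mass of $C$ weighted by powers of $t$ and by the Hadamard pairing between the $x$- and $y$-blocks, and then invoking the level-$k$ Fourier-growth bound for $\AC^0$, $L_{1,k}(C)=\sum_{|S|=k}|\hat C(S)|\le (O(\log s))^{(d-1)k}=\polylog(N)^{O(k)}$ for size $s=2^{\polylog N}$ and depth $d$; the resulting series in $k$ converges because $\varepsilon\cdot\polylog(N)^{O(1)}\ll 1$, giving total advantage $o(1)$. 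This step is delicate precisely because it requires both the Gaussian-interpolation estimate (including control of the truncation error) and a tight Fourier-growth bound, rather than just the classical $\AC^0$ lower-bound machinery. Finally, I would assemble the oracle by slow diagonalization (as in \app{diagonalization}): enumerate all polynomial-time machines with a constant number of alternations, and for the $i$-th one reserve a fresh huge input length $n_i$ with its own oracle block; since by the lower bound that machine cannot distinguish, there is a promise-satisfying instance (drawn from $\mathcal D$ or $\mathcal U$) to implant in the block on which the machine's answer at $1^{n_i}$ is wrong, and we commit to it, leaving all other bits free. The $\BQP$ algorithm above still decides the forrelation language across all such blocks, so that language witnesses $\BQP^O\not\subset\PH^O$.
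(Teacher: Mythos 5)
The paper does not prove this proposition; it appears in the survey subsection on prior work, stated verbatim as a citation to Corollary~1.5 of Raz and Tal, with no argument given or claimed. There is therefore no ``paper's own proof'' to compare against, and the right assessment is of your sketch on its own terms.

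Your sketch is a faithful high-level reconstruction of the Raz--Tal argument: the forrelation promise, the Gaussian-generated distribution $\mathcal D$ with parameter $\varepsilon = 1/\polylog N$ followed by truncation and rounding to $\sB$, the $O(\varepsilon^{-2})$-repetition $\BQP$ upper bound, and the $\AC^0$ lower bound by Gaussian interpolation combined with Tal's Fourier-growth bound $L_{1,k}(C)\le (O(\log s))^{(d-1)k}$, followed by slow diagonalization. This is the correct route, and indeed the only known one. The caveat is that the crux you identify --- the interpolation/Fourier-growth step --- compresses the bulk of the Raz--Tal paper into a few lines; controlling the truncation and rounding errors so that the multilinear-extension estimate transfers from the Gaussian to the Boolean distribution is substantial work, and a sketch cannot be said to discharge it. Two small imprecisions worth fixing: as written, $\Phi(x,y)=\langle x,Hy\rangle$ with $H$ the normalized Hadamard lies in $[-N,N]$, not $[-1,1]$; you want the extra factor $1/N$ (then $|\Phi|\le 1$ by Cauchy--Schwarz and orthogonality of $H$). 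And the circuit you describe --- phase query to $x$, then $H$, then phase query to $y$, then $H$ --- makes two oracle queries, not one; the one-query statement requires querying the concatenated string $(x,y)$ with a control register. Neither affects the conclusion, since two queries repeated $\polylog N$ times is still a $\BQP$ algorithm.
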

\noindent \cite{agkr_qph} also show a quantum-classical variant of the Karp-Lipton theorem for $\QCMA$ which says that $\QCMA$ can not be decided by small non-uniform quantum circuits, unless $\QCPH$ collapses (a weaker version of this was shown by \cite{gharibianQuantumGeneralizationsPolynomial2018}):
\begin{proposition}[Quantum-Classical Karp-Lipton Theorem \cite{agkr_qph}]
    If $\QCMA \in \Bpoly$ then $\QCSigma_2 = \QCPi_2$, where $\Bpoly$ is the class of problems decided by polynomial-sized families of non-uniform quantum circuits.
\end{proposition}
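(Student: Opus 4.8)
This is the $\QCPH$-analogue of the Karp--Lipton theorem, and the plan is to port the classical proof. It is enough to show $\QCSigma_2 = \QCPi_2$: since $\QCPi_2$ is the complement class of $\QCSigma_2$ (swap the yes/no cases and replace the verifier $V$ by $1-V$), this is equivalent to $\QCPi_2 \subseteq \QCSigma_2$, and once we have it \prop{qcphcollapse} immediately upgrades it to the full collapse $\QCPH = \QCSigma_2$. Recall how classical Karp--Lipton goes: given $L \in \Pi_2$, write $x \in L \iff \forall y\,\exists z\, R(x,y,z)$; the inner $\exists z$ is an $\NP$ predicate, hence (under $\NP \subseteq \P/\poly$) has a polynomial-size circuit family; and, crucially, the downward self-reducibility of $\mathsf{SAT}$ turns that \emph{decision} circuit into a circuit $C$ that actually \emph{outputs} a witness $z$. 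Then $x \in L \iff \exists C\,\forall y\, R\bigl(x,y,C(\varphi_{x,y})\bigr)$ is a $\Sigma_2$ sentence: in the yes case the honest witness-finding circuit works for every $y$, and in the no case no $C$ can cheat because $R$ is re-checked directly on whatever $C$ produces.

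To mimic this, let $L \in \QCPi_2$ with quantum verifier $V$, so $x \in \ayes \iff \forall y_1\,\exists y_2\colon \Pr[V(x,y_1,y_2)] \geq 2/3$ and $x \in \ano \iff \exists y_1\,\forall y_2\colon \Pr[V(x,y_1,y_2)] \leq 1/3$. The inner part is a $\QCMA$ promise problem $A$ on inputs $(x,y_1)$ (proof $= y_2$, verifier $= V$), so the hypothesis $\QCMA \in \Bpoly$ supplies a polynomial-size quantum circuit family deciding $A$. I would then run the following $\QCSigma_2$ protocol for $L$: the $\exists$-prover sends a description of a quantum circuit $S$ that is supposed to \emph{search} for a witness; the $\forall$-prover replies with a challenge string $y_1$; and the verifier runs $S$ on $(x,y_1)$, measures a candidate $\tilde y_2$, and then runs an amplified copy of $V$ on $(x,y_1,\tilde y_2)$, accepting iff it accepts. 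Soundness is then essentially automatic: if $x \in \ano$ the $\forall$-prover simply plays the bad $y_1$, for which \emph{every} $y_2$ has $\Pr[V(x,y_1,y_2)] \leq 1/3$, so whatever $\tilde y_2$ even a cheating $S$ emits, the re-check rejects with high probability. Hence the whole burden falls on completeness: when $x \in \ayes$, the honest $S$ must, for every $y_1$, output with high probability a $\tilde y_2$ that genuinely satisfies $\Pr[V(x,y_1,\tilde y_2)] \geq 2/3$.

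This reduces the theorem to a search-to-decision reduction for $\QCMA$ in the completeness direction --- turning the hypothesized decision circuit family for $A$ into a polynomial-size quantum circuit family $S$ that, on yes-instances, outputs a genuine witness with high probability --- and I expect this to be the main obstacle. The naive plan is the bit-by-bit self-reduction: maintain a prefix $\pi$ of $y_2$, and use the circuit for the $\QCMA$ problem ``does some $y_2 \succeq \pi$ make $V$ accept with high probability?'' to decide how to extend $\pi$. But, unlike $\mathsf{SAT}$, the problem $A$ carries a promise, and the intermediate queries need not satisfy the gap condition --- a prefix can admit completions on which $V$ accepts with probability near $1/2$ --- so the supplied circuit may answer arbitrarily and steer the search onto a witness-free branch, and no amplification of $V$ fixes this (amplification cannot push an intermediate acceptance probability to an extreme). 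Overcoming this is the crux, and I would approach it by working with a carefully chosen $\QCMA$-complete problem and a self-reduction arranged so that only promise-satisfying queries are ever asked along the honest path, combined with a Valiant--Vazirani-style randomization to isolate witnesses, the verifier's direct re-check of $\tilde y_2$ (which is what makes soundness robust), and, if needed, repetition of $S$ with fresh randomness together with probability boosting. Once $S$ is in hand, the protocol above yields $\QCPi_2 \subseteq \QCSigma_2$, hence $\QCSigma_2 = \QCPi_2$, and \prop{qcphcollapse} gives $\QCPH = \QCSigma_2$; the real content is thus in making the classical argument survive these promise-problem subtleties, which is why this appears as a known result of \cite{agkr_qph} (with a weaker collapse level already in \cite{gharibianQuantumGeneralizationsPolynomial2018}).
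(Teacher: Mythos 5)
The paper does not prove this statement; it is quoted as a known result of \cite{agkr_qph}, so there is no in-paper proof to compare against. Judged on its own, your write-up correctly sets up the Karp--Lipton template: reduce $\QCPi_2 \subseteq \QCSigma_2$ to building a witness-producing circuit $S$ for the inner $\QCMA$ predicate, have the $\exists$-prover send $S$, have the $\forall$-prover send $y_1$, and let the verifier re-check $V(x,y_1,S(x,y_1))$. You also correctly observe that soundness is handled by the re-check and that the entire weight falls on completeness, and correctly identify the obstruction: the na\"ive bit-by-bit self-reduction asks $\QCMA$ queries that may violate the promise, at which point the hypothesized circuit can answer adversarially and steer the search off a witness.

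That obstruction is precisely what is \emph{not} resolved here, and this is where the proposal stops being a proof. The sentence ``I would approach it by working with a carefully chosen $\QCMA$-complete problem and a self-reduction arranged so that only promise-satisfying queries are ever asked along the honest path, combined with a Valiant--Vazirani-style randomization \dots'' is a wish list, not an argument. None of those tools is shown to actually produce a search circuit whose honest run stays inside the promise: Valiant--Vazirani isolates $\NP$ witnesses but does not control the acceptance probability of a bounded-error quantum verifier on intermediate prefixes; amplification of $V$, as you yourself note, cannot push a ``middle'' acceptance probability to either extreme; and ``repetition with fresh randomness plus boosting'' only helps if you already have some per-trial success probability bounded away from $0$, which is exactly what you have not established. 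The re-check mechanism, which you lean on, guarantees soundness only --- it gives you nothing in the yes case if $S$ systematically wanders to a prefix where all completions have acceptance probability near $1/2$. So the central lemma of the theorem --- a search-to-decision reduction for $\QCMA$ under the $\QCMA \in \Bpoly$ hypothesis that is robust to promise violations --- is asserted rather than proved. Until that lemma is supplied (which is exactly the technical content \cite{agkr_qph} add over the weaker collapse in \cite{gharibianQuantumGeneralizationsPolynomial2018}), this is a proof outline with a hole at its core, not a proof.
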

\noindent In terms of classical upper bounds, the best known upper bound on $\PH$ is given by Toda's theorem \cite{T91}, which says that $\PH \subseteq \P^{\PP}$. For $\QCPH$, the best classical upper bound is given by \cite{gharibianQuantumGeneralizationsPolynomial2018}:
\begin{proposition}[Quantum-Classical Toda's Theorem \cite{gharibianQuantumGeneralizationsPolynomial2018}]
    $\QCPH \subseteq \P^{\PP^{\PP}}$
\end{proposition}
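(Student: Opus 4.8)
The plan is to sandwich $\QCPH$ between $\PH^{\PP}$ and then invoke a relativized form of Toda's theorem. Recall $\QCPH=\bigcup_{i\in\bN}\QCS_i$, where membership in $\QCS_i$ is witnessed by a polynomial-time quantum verifier $V$ preceded by $i$ alternating classical quantifiers over polynomial-length strings. The first step I would take is to replace the bounded-error quantum verifier by an exact threshold decidable in $\PP$: by the standard fact that $\BQP\subseteq\PP$ (via the $\mathrm{GapP}$ characterization of quantum acceptance probabilities), the language $L=\{(x,y_1,\dots,y_i):\Pr[V(x,y_1,\dots,y_i)\text{ accepts}]\ge 1/2\}$ lies in $\PP$.

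The one point that needs care is that the $2/3$--$1/3$ promise on $V$ only holds along the ``correct'' sequence of quantifier choices, so on other branches $\Pr[V]$ may land in the gap, where the predicate ``$\Pr[V]\ge 1/2$'' need not agree with ``$\Pr[V]\ge 2/3$''. One checks this does not matter: on a yes-instance $x\in\ayes$, the witnessing existential moves force every leaf actually reached to satisfy $\Pr[V]\ge 2/3\ge 1/2$, so the $\Sigma_i$-quantified formula built from $L$ evaluates to true; on a no-instance $x\in\ano$, every leaf reached satisfies $\Pr[V]\le 1/3<1/2$, so the formula evaluates to false. Hence the promise problem $\QCS_i$ is contained in $\Sigma_i^{\PP}$, and therefore $\QCPH\subseteq\PH^{\PP}$.

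Finally I would appeal to Toda's theorem $\PH\subseteq\P^{\PP}$ in relativized form. Its proof ($\PH\subseteq\BPP^{\oplus\P}\subseteq\P^{\#\P}=\P^{\PP}$) is relativizing, giving $\PH^{\mathcal O}\subseteq\P^{\PP^{\mathcal O}}$ for every oracle $\mathcal O$ (here $\P^{\#\P^{\mathcal O}}=\P^{\PP^{\mathcal O}}$ by binary search with a $\PP$ oracle). Taking $\mathcal O=\PP$ gives $\PH^{\PP}\subseteq\P^{\PP^{\PP}}$, and combining with the previous paragraph yields $\QCPH\subseteq\P^{\PP^{\PP}}$. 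I expect the only genuine subtlety to be the promise-handling step above; the remainder is bookkeeping about relativized counting classes, and the argument in fact delivers the sharper level-wise bound $\QCS_i\subseteq\Sigma_i^{\PP}$ before one even passes to Toda's theorem.
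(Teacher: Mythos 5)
Your proof is correct; the paper states this proposition without giving a proof (citing Gharibian et al.), and your argument matches the expected route. The essential step---replacing the quantum verifier by the syntactic $\PP$ language obtained from the exact threshold at $1/2$, which splits the completeness/soundness gap and is therefore decided correctly by the same quantifier strategies---is precisely what circumvents the $\exists\cdot\BPP$ versus $\MA$ promise issue that the paper's surrounding discussion flags as the obstruction to the stronger $\P^{\PP}$ bound (a $\BQP$ oracle, being a promise class, cannot be queried reliably off the promise, whereas $\PP$ is total); the remainder is the relativizing form of Toda's theorem, which you apply correctly to get $\PH^{\PP}\subseteq\P^{\PP^{\PP}}$.
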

\noindent It is not known whether this upper bound can be improved to $\P^{\PP}$. For $\QCPH'$ described previously, however, a $\P^{\PP}$ upper bound can be shown (as noted in \cite{gharibianQuantumGeneralizationsPolynomial2018}): it is easy to see that $\QCSigma_i' \subseteq \Sigma_i^{\BQP}$ and therefore $\QCPH' \subseteq \P^{\PP^{\BQP}}$ because Toda's theorem relativizes. So using the fact that $\BQP$ is low for $\PP$ shown by \cite{FR99}, $\QCPH' \subseteq \P^{\PP}$. A similar argument does not work for $\QCPH$ because it is not known whether $\QCMA \subseteq \NP^{\BQP}$, because if the $\NP$ machine guesses a $\QCMA$ certificate on which the verifier accepts with probability (say) $1/2$, it can no longer query the behaviour of this verifier to the $\BQP$ oracle (which can now return an arbitrary 0/1 response). Finally, note that a much stronger upper bound can be shown for $\QCMA$ (infact, $\QMA$):
\begin{proposition}[Kitaev and Watrous (unpublished), Theorem 3.4 of \cite{MW05}]
    $\BQP \subseteq \QCMA \subseteq \QMA \subseteq \PP$
\end{proposition}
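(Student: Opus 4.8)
The plan is to prove the three inclusions $\BQP\subseteq\QCMA$, $\QCMA\subseteq\QMA$ and $\QMA\subseteq\PP$ separately; only the last requires real work.

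For $\BQP\subseteq\QCMA$, I would observe that a $\BQP$ decider is already a $\QCMA$ verifier that ignores its classical proof: on a yes-instance the (say, empty) proof is accepted with probability $\ge 2/3$, and on a no-instance every proof is accepted with probability $\le 1/3$. For $\QCMA\subseteq\QMA$, the only subtlety is that a cheating $\QMA$ prover could send a superposition over proof strings rather than a single classical string; I would handle this by having the $\QMA$ verifier first measure its proof register in the computational basis and then run the given $\QCMA$ verifier. Dephasing the register turns any quantum proof $\rho$ into a classical distribution $\{p_y\}$ over strings, so the acceptance probability becomes the convex combination $\sum_y p_y\Pr[\text{$\QCMA$ verifier accepts proof }y]$, which is maximized at a single $y$; hence an optimal quantum proof does no better than an optimal classical one, and completeness and soundness transfer verbatim.

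For $\QMA\subseteq\PP$ I would run the Kitaev--Watrous argument. Fix a $\QMA$ verifier $V$ using an $m$-qubit proof register, and define the positive semidefinite operator $M=(\langle 0|\otimes\id)\,V^\dagger\,\Pi_{\mathrm{acc}}\,V\,(\ket 0\otimes\id)$ on the $2^m$-dimensional proof space, where $\ket 0$ initializes the workspace and $\Pi_{\mathrm{acc}}$ projects onto accepting outputs. Since the acceptance probability on proof $\rho$ equals $\Tr(M\rho)$, the optimal acceptance probability over all proofs is $\lambda_{\max}(M)$, so yes-instances have $\lambda_{\max}(M)\ge 2/3$ while no-instances have $\lambda_{\max}(M)\le 1/3$. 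I would amplify this spectral gap by passing to $\Tr(M^k)=\sum_i\lambda_i^k$ with $k=2m$: a yes-instance gives $\Tr(M^k)\ge(2/3)^k$, a no-instance gives $\Tr(M^k)\le 2^m(1/3)^k$, and the former exceeds the latter by at least a factor $2^m$, so some fixed threshold strictly between $2^m(1/3)^k$ and $(2/3)^k$ separates the two cases. Finally I would put the comparison of $\Tr(M^k)$ with this threshold into $\PP$: taking a universal gate set all of whose amplitudes are rationals with a fixed common denominator (so the verifier circuit may be assumed to use only such gates), each circuit amplitude $\langle z|V|0,y\rangle$, scaled by an appropriate power of that denominator, is an integer, and in fact a GapP function of the circuit description (a signed count of computational paths). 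Hence every entry of $M$, suitably scaled, is GapP, and so is the corresponding scaling of $\Tr(M^k)$, since GapP is closed under the (exponential) sums and the polynomially many products involved; and comparing an integer-valued GapP function against a fixed threshold (multiplying through by a further constant if needed so both sides are integers) is exactly a $\PP$ predicate.

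The step I expect to be the real obstacle is the last one: bookkeeping the scaling factors honestly so that $\Tr(M^k)$ is genuinely an integer-valued GapP function, and checking that the gap amplification through powers of $M$ remains large enough (a factor more than $2^m$) to survive that scaling and produce a true sign change in the final GapP function. The other two inclusions, together with the eigenvalue characterization of the optimal acceptance probability, are essentially definitional.
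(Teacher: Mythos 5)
The paper does not prove this proposition; it only cites it (Kitaev--Watrous, unpublished, and Theorem~3.4 of Marriott--Watrous), so there is no in-paper proof to line up against. On its own terms, your argument is correct. The two easy inclusions are exactly right: a $\BQP$ decider is a $\QCMA$ verifier that ignores its witness, and for $\QCMA\subseteq\QMA$ the dephasing-by-measurement step is precisely the right way to neutralize superposed proofs, since acceptance probability becomes an affine function of the induced distribution over classical strings and is therefore maximized at a point mass. For $\QMA\subseteq\PP$, the power-sum amplification $\Tr(M^k)$ with $k=2m$ does produce the claimed factor-$2^m$ separation ($(2/3)^{2m}=2^m\cdot 2^m(1/3)^{2m}$), and the GapP bookkeeping goes through because GapP is closed under exponential-size sums and polynomially many products, the imaginary parts cancel as $M$ is Hermitian, and a fixed rational-amplitude gate set makes all scalings powers of a single integer denominator.

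Where you differ from the cited source: Marriott--Watrous prove $\QMA\subseteq\PP$ by first establishing \emph{strong (in-place) error reduction}, driving completeness and soundness to $1-2^{-r}$ and $2^{-r}$ with the same $m$-qubit witness, after which $\Tr(M)$ itself already separates yes- from no-instances and one can read off a $\PP$ predicate without taking powers. Your route skips the amplification lemma entirely and instead amplifies the spectral gap by passing to $\Tr(M^k)$; this is more elementary (no phase-estimation-style witness-preserving amplification needed) at the cost of a slightly longer GapP composition. Both are standard and correct; what strong error reduction buys MW is reusability (it is a structural theorem about $\QMA$ in its own right), while what your power-trace approach buys is self-containment.

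One small point worth tightening if you write this out in full: after scaling by $D^{2Tk}$, the quantity $\Tr(M^k)$ is a genuine integer-valued GapP function, but the two thresholds $(2/3)^{2m}$ and $2^m(1/3)^{2m}$ are not integers, so you should multiply both sides by a further $3^{2m}$ (and by the denominator power) before phrasing the comparison as the sign of a GapP function. You flagged this as the delicate step, and indeed it is exactly the scaling you need; the point is just that there is enough multiplicative headroom (a factor $2^m$) that a single rational threshold with polynomial-bit-length numerator and denominator fits strictly between the two cases after clearing denominators.
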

\noindent Now we will discuss how $\QCPH$ is related to quantum variants of $\PH$. \cite{gharibianQuantumGeneralizationsPolynomial2018} defined $\QPH$, where in the $i^{th}$ level $\QSigma_i$, an efficient quantum verifier takes $i$ alternatively quantified mixed state quantum proofs as input:
\begin{align*}
    x\in\ayes \Rightarrow \exists \rho_1 \forall \rho_2 \ldots Q_i \rho_i \colon \Pr[V(x, \rho_1, \rho_2, \ldots, \rho_i)] \geq 2/3 \\
    x\in\ano \Rightarrow \forall \rho_1 \exists \rho_2 \ldots \overline{Q_i} \rho_i \colon \Pr[V(x, \rho_1, \rho_2, \ldots, \rho_i)] \leq 1/3
\end{align*}
Note that the proofs $\rho_k$ are not allowed to be entangled with each other in the definition of $\QPH$ above. This definition clearly generalizes $\QMA$ since $\QSigma_1 = \QMA$. In addition, $\QMA(2) \subseteq \QSigma_3$. However, while mixed state and pure state proofs are known to be equivalent in power for $\QMA$, it is not known to be the case for $\QPH$. \cite{agkr_qph} define $\QPHpure$ with pure state proofs instead of mixed state proofs, which are not allowed to be entangled with each other:
\begin{align*}
    x\in\ayes \Rightarrow \exists \ket{\psi_1} \forall \ket{\psi_2} \ldots Q_i \ket{\psi_i} \colon \Pr[V(x, \ket{\psi_1}, \ket{\psi_2}, \ldots, \ket{\psi_i})] \geq 2/3 \\
    x\in\ano \Rightarrow \forall \ket{\psi_1} \exists \ket{\psi_2} \ldots \overline{Q_i} \ket{\psi_i} \colon \Pr[V(x, \ket{\psi_1}, \ket{\psi_2}, \ldots, \ket{\psi_i})] \leq 1/3
\end{align*}
Once again, $\QSigmapure_1 = \QMA$ and $\QMA(2) \subseteq \QSigmapure_3$. It is easy to see that $\QSigma_i \subseteq \QSigmapure_i$, by doubling the size of each proof for $\QSigma_i$ and sending a purification of the mixed state proof, and therefore $\QPH \subseteq \QPHpure$. The other direction is not known. It is also known that $\QCPH$ is contained in both $\QPH$ and $\QPHpure$ (though it is not known whether $\QCSigma_i \subseteq \QSigma_i$).
\begin{proposition}[Theorem 5.1 of \cite{GrewalY24}]
    $(\forall~i \geq 1)~~~\QCSigma_i \subseteq \QSigma_{2i}$, and therefore $\QCPH \subseteq \QPH$.
\end{proposition}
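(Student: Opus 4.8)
The plan is to simulate a $\QCSigma_i$ game by a $\QSigma_{2i}$ game in which each of the $i$ classical quantifiers is implemented by two quantum-proof quantifiers: a ``real'' quantifier, at which the honest party sends the computational-basis state $\ket{y_j}$ it would send in the classical game, and a ``verification'' quantifier, controlled by the opposing party, whose job is to police the claim that the real proof really is (close to) a computational-basis state. The $2i$ quantifiers must be laid out so that the whole sequence is strictly alternating with leading quantifier $\exists$ — so that it is literally a $\QSigma_{2i}$ game — with each verification quantifier controlled by the party opposing the real proof it checks and the $j$-th real proof supplied before the $(j{+}1)$-st; fitting all of this into exactly $2i$ levels (rather than $2i+1$, which the most naive layout seems to need) is already a nontrivial piece of bookkeeping. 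Granting such a layout and a verification sub-protocol with the properties below, a game-tree argument shows that the $\QSigma_{2i}$ game has the same value as the $\QCSigma_i$ game, up to a constant loss in the completeness--soundness gap that is restored by the standard error-reduction for these classes; this gives $\QCSigma_i \subseteq \QSigma_{2i}$, and hence $\QCPH = \bigcup_i \QCSigma_i \subseteq \bigcup_i \QSigma_{2i} \subseteq \QPH$.

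Why the verification quantifiers are needed: a quantum $\exists$- or $\forall$-proof may be a superposition over many basis values, and although this never helps the party who supplies it (that party can always do at least as well with a single basis state), it destroys the adaptivity that the classical hierarchy relies on — the player who moves at level $k{+}1$ can respond to the value $y_k$ only if the level-$k$ real proof is genuinely a basis state, so a cheater could send a superposition purely to blind the next player. After each real proof the opposing party therefore gets a ``challenge'' move, and the verifier's rule must be engineered so that (i) an honest basis-state real proof is never penalized and a spurious challenge is harmless, and (ii) a real proof carrying non-negligible weight on two or more distinct basis values lets the opposing party force the verifier to rule in its favour. Given (i)--(ii), once every real proof is effectively a basis state, each party can read $y_1,\dots,y_{k-1}$ off the earlier real-proof registers and play exactly its optimal classical move at level $k$, so completeness and soundness of the classical game transfer; all the extra checks the verifier performs are polynomial-time quantum.

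The step I expect to be the main obstacle is designing the verifier's rule in the verification sub-protocol. A detected inconsistency between a real proof and the challenge (or a re-declaration) posed against it can a priori be the fault either of the party who supplied the real proof (who sent a non-classical state) or of the challenging party (who lied), and these two diagnoses demand opposite verdicts, so the verifier cannot simply accept — or simply reject — on every detected inconsistency. Arranging things so that honest play on \emph{both} sides is always safe while genuine non-classicality is always exploitable by the injured party is the technical heart of the argument, and it is exactly what forces one extra quantifier per classical level. One also has to check that the no-entanglement restriction on $\QSigma$ proofs does not obstruct the multi-copy or testing steps used inside the verification sub-protocol, and (the bookkeeping point above) that the interleaved $2i$ quantifiers can be given the correct types and order within the alternating $\exists\forall\exists\cdots$ template of $\QSigma_{2i}$.
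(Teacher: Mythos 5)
Note first that the paper does not prove this proposition itself: it is stated as a cited result (Theorem~5.1 of \cite{GrewalY24}), so there is no in-paper proof for your attempt to be measured against. Judging your proposal on its own terms, you correctly diagnose why the naive inclusion $\QCSigma_i\subseteq\QSigma_i$ fails --- a prover who sends a mixed or superposed proof robs the opposing party of the adaptivity it would have had in the classical game, and this can benefit a dishonest prover on either side --- and the high-level architecture of pairing each classical quantifier with a ``real proof'' quantifier and an opposing-party ``challenge'' quantifier is a sensible way to try to restore that adaptivity. But the proposal does not actually carry out the proof, and you say so yourself: both the verifier's adjudication rule when a declared challenge value mismatches the measured real proof, and the bookkeeping of fitting the interleaved real/challenge quantifiers into exactly $2i$ alternating $\exists\forall\cdots$ slots with the right controlling parties, are flagged as unresolved.

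That is not a routine detail to leave for later; it is the entire technical content of the theorem, and the obvious candidate rules all fail. If the verifier \emph{accepts} whenever the measured real proof disagrees with the challenger's declared value, then a dishonest $\exists$-prover in the soundness case can send a near-uniform mixture over many basis strings: the $\forall$-challenger can agree with at most one of them, so the mismatch (hence spurious acceptance) probability tends to $1$, breaking soundness. If the verifier instead \emph{rejects} on mismatch, then a dishonest $\forall$-challenger facing an honest basis-state $\exists$-proof simply declares a deliberately wrong value, forcing a mismatch and a rejection, breaking completeness. A coin-flip penalty on mismatch interpolates badly between the two: whichever party is currently trying to move the acceptance probability toward $1/2$ will prefer to trigger it. So the construction you need must be strictly cleverer than ``penalize someone on mismatch,'' and without it --- and without a completeness/soundness argument showing it preserves the game value up to an amplifiable gap, while respecting the no-entanglement constraint on $\QSigma$ proofs --- the proposal is an outline of a plan rather than a proof. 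In short, the gap you identify as ``the main obstacle'' really is one, and it is not filled.
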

\begin{proposition}[Lemma 16 of \cite{agkr_qph}]
    $(\forall~i \geq 1)~~~\QCSigma_i \subseteq \QSigmapure_i$, and therefore $\QCPH \subseteq \QPHpure$.
\end{proposition}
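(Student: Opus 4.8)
The plan is to convert the classical proofs directly into pure-state quantum proofs. Given a $\QCSigma_i$ verifier $V$ for a language $L$, I would first amplify it (using error reduction for these classes) so that its completeness and soundness probabilities become $1-2^{-\poly(n)}$ and $2^{-\poly(n)}$, and then define a candidate $\QSigmapure_i$ verifier $V'$ which, on receiving pure states $\ket{\psi_1},\dots,\ket{\psi_i}$, measures each of them in the computational basis to obtain classical strings $y_1,\dots,y_i$ and outputs $V(x,y_1,\dots,y_i)$. The easy direction is completeness at low levels: when $x\in\ayes$, the $\exists$-prover sends the winning classical witness $\hat y_1$ as the basis state $\ket{\hat y_1}$, and since the classical strategy forces $\Pr[V(x,\hat y_1,y_2)]\ge 2/3$ for \emph{every} $y_2$, this survives averaging over whatever distribution the $\forall$-prover's (possibly genuinely quantum) proof induces upon measurement; peeling the quantifiers off from the inside and using that (i) the expectation of $\Pr[V]$ over a $\forall$-prover's measurement distribution is at least its minimum over classical responses and (ii) a computational basis state realizes an optimal classical $\exists$-move, one recovers the $\ge 2/3$ bound.

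The subtle point — and the one I expect to be the main obstacle — is that measuring a prover's proof turns that prover's move into a random variable which \emph{later} provers do not get to see before they must commit their own proofs. At even levels this already breaks soundness of the naive $V'$: there are honest $\QCSigma_2$ verifiers for which some $x\in\ano$ nonetheless has a pure $\exists$-proof $\ket{\psi_1}$ — a uniform superposition over two ``anti-correlated'' witnesses — against which the $\forall$-prover has no good single classical response, because measuring $\ket{\psi_1}$ effectively makes the two provers move \emph{simultaneously} rather than sequentially, and the value of a simultaneous-move game can strictly exceed that of the sequential one. The same phenomenon hurts completeness at levels $i\ge 3$, where the honest $\exists$-prover at an odd step after the first cannot read off the classical values the verifier will measure from the adversarial $\forall$-prover's proofs. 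So the reduction must do something to make the proofs behave like classical strings rather than just measuring them.

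The fix I would aim for is to have each prover's proof carry not a single classical string but a succinct (polynomial-size) superposition encoding a \emph{strategy} — a response as a function of the earlier measured moves — so that after the verifier has measured the earlier proofs it can recover a good response to those particular values, together with a consistency check coupling a prover's ``move'' as seen by the next prover to the value the verifier actually extracts. The hard part will be designing this check so that it is not itself exploitable: a single copy of an unknown state cannot be certified to be close to a computational basis state, and ``cat-state'' proofs slip through the obvious repetition-code checks, so the consistency check has to be interleaved with the amplified error bound in such a way that neither prover gains by sending a non-classical proof (in particular, so that the verifier's behaviour on inconsistent proofs favours neither the $\exists$- nor the $\forall$-prover). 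Once the level-by-level statement $\QCSigma_i\subseteq\QSigmapure_i$ is in hand, $\QCPH\subseteq\QPHpure$ follows immediately by taking the union over $i$.
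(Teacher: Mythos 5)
Your initial observation and your diagnosis of where it breaks are both correct. The naive verifier that measures each $\ket{\psi_j}$ in the computational basis and runs $V$ does work at level $1$ and for completeness at level $2$, but fails soundness at level $2$: taking $V(x,y_1,y_2)$ to accept iff $y_1=y_2$ (with single-bit proofs) gives a valid $\QCSigma_2$ no-instance, yet the cheating $\exists$-prover can send $(\ket 0 + \ket 1)/\sqrt 2$ and force acceptance probability exactly $1/2$ against every $\ket{\psi_2}$. Error amplification does not help here since $V$ already outputs $0$ or $1$ deterministically. Your observation that completeness at levels $i\ge 3$ fails for the mirror-image reason (the step-$3$ $\exists$-prover must commit to a distribution over $y_3$ that cannot correlate with the classical outcome sampled from the adversarial $\ket{\psi_2}$) is also right.

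The gap is that you do not actually resolve the obstacle you identify — you gesture at a fix and concede you do not know how to make it work. In particular, the proposed ``succinct (polynomial-size) superposition encoding a strategy'' is not obviously well-defined: a prover's response strategy is a function from polynomially many earlier bits to a polynomial-length string, which has exponential description size, and an unbounded prover's optimal strategy need not have a small circuit, so there is no canonical polynomial-size object for the pure state to encode. Nor do you resolve the dilemma you raise about the consistency check: rejecting on inconsistency lets a cheating $\forall$-prover kill completeness, accepting on inconsistency lets a cheating $\exists$-prover kill soundness, and a fair coin on inconsistency drags completeness down to $1/2$. Since the verifier cannot tell from a single copy which prover deviated, some additional structural idea is needed, and that idea is exactly what a proof of Lemma~16 of \cite{agkr_qph} has to supply. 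As written, the proposal correctly locates the crux of the problem but stops short of a proof; I cannot count it as establishing the containment.
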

\noindent It is known that $\QSigma_2 = \mathsf{QRG}(1)$ (observed by Sanketh Menda, Harumichi Nishimura and John Watrous), where $\mathsf{QRG}(1)$ is the class of one-round zero-sum quantum games. This implies that $\QSigma_2 = \QPi_2 = \mathsf{QRG}(1) \subseteq \PSPACE$. Note however, that no collapse theorem is known for $\QPH$ or $\QPHpure$. Finally, \cite{GrewalY24} define an entangled version of $\QPH$ (which they call $\QEPH$), where the mixed state quantum proofs are allowed to be entangled with each other. They show that $\QEPH$ (even with polynomially many alternating proofs) collapses to its second level, and it is equal to $\mathsf{QRG}(1)$. Summarizing the relationship between fully quantum variants of $\PH$,
\begin{proposition}[Lemma 4.1 of \cite{GrewalY24}, Theorem 5.3 of \cite{agkr_qph}]
    $\QEPH = \mathsf{QRG}(1) = \QSigma_2 \subseteq \QPH \subseteq \QPHpure \subseteq \EXP^{\PP}$
\end{proposition}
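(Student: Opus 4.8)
The chain breaks into five links: (i) $\QEPH=\mathsf{QRG}(1)$; (ii) $\mathsf{QRG}(1)=\QSigma_2$; (iii) $\QSigma_2\subseteq\QPH$; (iv) $\QPH\subseteq\QPHpure$; (v) $\QPHpure\subseteq\EXP^{\PP}$. Link (iii) is immediate since $\QSigma_2$ is by definition the second level of $\QPH$, so $\QSigma_2\subseteq\bigcup_i\QSigma_i=\QPH$. Links (i) and (ii) follow the arguments of \cite{GrewalY24} and \cite{agkr_qph} (with the minimax observation attributed there to Menda--Nishimura--Watrous), and links (iv) and (v) I would prove directly as sketched below.

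\textbf{Links (i) and (ii): collapse and refereed games.} The plan for (i) is to show $\QEPH$ collapses to its second level $\QSigma_2^{E}$ and then identify that level with $\mathsf{QRG}(1)$. The collapse has two ingredients. First, a quantum minimax theorem: the verifier's acceptance probability is multilinear in the density operators supplied as proofs, so Sion's theorem applied over the compact convex set of density operators shows that a block $\exists\forall$ (equivalently $\forall\exists$) computes a well-defined saddle value; allowing the two adjacent proofs to be entangled, as in the $\QEPH$ model, merely enlarges the convex feasible sets while keeping the value a saddle point, whence $\QSigma_2^{E}=\QPi_2^{E}$. Second, because $\QEPH$ permits the proofs to be mutually entangled, two \emph{consecutive} proofs quantified the same way can be combined into a single (larger, possibly entangled) proof; iterating, together with $\QSigma_2^{E}=\QPi_2^{E}$, collapses $\QEPH$ down to $\QSigma_2^{E}$. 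Finally $\QSigma_2^{E}$ is exactly the value of a one-turn two-player zero-sum quantum game — the existential proof is player one's message, the universal proof player two's, and the verifier the referee — so $\QSigma_2^{E}=\mathsf{QRG}(1)$; the same minimax argument, now with no entanglement, gives $\QSigma_2=\mathsf{QRG}(1)$, which is link (ii). The soundness/completeness gap is inherited because the quantified statements translate verbatim into bounds on the game value $w$.

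\textbf{Link (iv): purification.} Given a $\QSigma_i$ verifier $V$ taking $n$-qubit mixed proofs $\rho_1,\dots,\rho_i$, define a $\QSigmapure_i$ verifier $V'$ taking $2n$-qubit pure proofs $\ket{\psi_1},\dots,\ket{\psi_i}$ which discards the last $n$ qubits of each $\ket{\psi_j}$, obtains the reduced state $\tilde\rho_j$, and runs $V(\tilde\rho_1,\dots,\tilde\rho_i)$. Then $\Pr[V'(\ket{\psi_1},\dots,\ket{\psi_i})]=\Pr[V(\tilde\rho_1,\dots,\tilde\rho_i)]$, and the map $\ket{\psi_j}\mapsto\tilde\rho_j$ is onto the $n$-qubit density operators. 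Walking through the alternating quantifiers: in a yes-instance the honest existential provers send purifications of the mixed states guaranteed by $V$'s completeness; in a no-instance, against any adversarial pure proof $\ket{\psi_j}$ its reduced state $\tilde\rho_j$ is some mixed state, so $V$'s soundness supplies the honest provers' next purified proof. The pure proofs produced this way are standalone states, so the non-entanglement condition of $\QPHpure$ is met and completeness/soundness transfer exactly; hence $\QSigma_i\subseteq\QSigmapure_i$ for all $i$.

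\textbf{Link (v): discretization and a scaled-up Toda's theorem, and the main obstacle.} Fix $i$ and let $D=2^{\poly(n)}$ be the dimension of each proof register. The acceptance probability $\Pr[V(\ket{\psi_1},\dots,\ket{\psi_i})]$ is Lipschitz in each argument (changing one proof by Euclidean distance $\delta$, up to global phase, changes it by $O(\delta)$, since $0\preceq M\preceq I$), so a hybrid argument over the $i$ proofs shows that replacing every $\ket{\psi_j}$ by a $\delta$-close vector from an explicit $\delta$-net $\mathcal N$ of the unit sphere of $\mathbb C^{D}$ perturbs the game value $w=\max_{\ket{\psi_1}}\min_{\ket{\psi_2}}\cdots\Pr[V]$ by $O(i\delta)$. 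Choosing $\delta$ a small constant over $i$, the discretized value $\tilde w=\max_{\mathcal N}\min_{\mathcal N}\cdots\Pr[V]$ still separates the yes and no cases, and ``$\tilde w\ge 1/2$'' is a $\Sigma_i$ statement whose quantifiers range over $2^{\poly(n)}$-bit descriptions of net elements and whose verifier, given explicit amplitude vectors, simulates the polynomial-size circuit $V$ on the explicitly written $2^{\poly(n)}$-dimensional product state in $\mathsf{DTIME}(2^{\poly(n)})$. Thus $\QSigmapure_i$ lies in the $i$-th level of the exponential-time hierarchy, which sits inside $\EXP^{\PP}$: pad the input to length $2^{\poly(n)}$, note the padded problem is in $\Sigma_i^{\P}\subseteq\PH$, and apply Toda's relativizing theorem $\PH\subseteq\P^{\PP}$, running the resulting $\P^{\PP}$ procedure on the padded input in $\EXP$ time with $\PP$-oracle queries of exponential length. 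The main obstacle is link (i): making the saddle-point argument go through with entanglement between the proofs and verifying that consecutive same-type entangled proofs can be merged without changing the value — this is the core content of \cite{GrewalY24}, and it is exactly where treating the proofs as independent breaks down. Links (iv) and (v) are comparatively routine, though (v) leans on the standard but not-entirely-obvious fact that Toda's theorem scales to place the exponential-time hierarchy inside $\EXP^{\PP}$.
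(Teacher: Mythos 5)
The paper does not give its own proof of this proposition: it is stated in the related-work section purely as a summary of results it cites from \cite{GrewalY24} (Lemma 4.1) and \cite{agkr_qph} (Theorem 5.3), with only a one-line remark about the purification step and an attribution of the minimax observation to Menda--Nishimura--Watrous. There is therefore no ``paper proof'' to compare against line by line, only a question of whether your reconstruction is sound.

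Your decomposition into five links and the arguments you give for them are correct. Link (iii) is definitional; link (iv) is exactly the paper's purification remark, and your care about the marginal map being surjective onto density operators is the right point to check; link (v) via an explicit $\delta$-net over the product of proof spheres, the Lipschitz/hybrid bound $|w-\tilde w|=O(i\delta)$, and then padding plus Toda's relativizing $\PH\subseteq\P^{\PP}$ to land in $\EXP^{\PP}$ is a standard and valid route. For links (i) and (ii) you correctly identify that the real content is the Sion-type minimax together with the GrewalY24 argument that entanglement between proofs lets one merge same-type proofs and collapse polynomially many alternations to two; your sketch of this is a little glib (in particular, ``enlarges the convex feasible sets while keeping the value a saddle point'' glosses over the fact that with entangled adversarial proofs the bilinear structure used by Sion is no longer immediate), but you explicitly flag this as the core content deferred to \cite{GrewalY24}, which is appropriate for a cited proposition. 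No gap that undermines the chain.
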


\section{Preliminaries}

\begin{proposition}[Chernoff Bound]\label{prop:chernoff}
    Let $\mathbf{X_1}, \cdots, \mathbf{X_n}$ be $n$ independent random variables between $0$ and $1$. If $\mathbf{X} = \sum_{i=1}^n \mathbf{X_i}$ and $\mu = \E[\mathbf{X}]$, then for any $\delta > 0$
    \begin{align*}
        \Pr[\mathbf{X} \geq (1+\delta)\mu] &\leq e^{-\frac{\delta^2}{2+\delta}\cdot\mu} \\
        \Pr[\mathbf{X} \leq (1-\delta)\mu] &\leq e^{-\frac{\delta^2}{2}\cdot\mu}
    \end{align*}
\end{proposition}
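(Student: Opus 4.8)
The plan is to use the standard exponential-moment (Chernoff) method: bound each tail by Markov's inequality applied to an exponential of $\mathbf{X}$, factor the moment generating function using independence, and then optimize the free parameter. For the upper tail, fix $t>0$ and write $\Pr[\mathbf{X}\ge(1+\delta)\mu]=\Pr[e^{t\mathbf{X}}\ge e^{t(1+\delta)\mu}]\le e^{-t(1+\delta)\mu}\,\E[e^{t\mathbf{X}}]$. By independence, $\E[e^{t\mathbf{X}}]=\prod_{i=1}^n\E[e^{t\mathbf{X_i}}]$, and since each $\mathbf{X_i}\in[0,1]$, convexity of $x\mapsto e^{tx}$ gives the pointwise bound $e^{t\mathbf{X_i}}\le 1+(e^t-1)\mathbf{X_i}$, hence $\E[e^{t\mathbf{X_i}}]\le 1+(e^t-1)\E[\mathbf{X_i}]\le\exp\!\big((e^t-1)\E[\mathbf{X_i}]\big)$ using $1+z\le e^z$. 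Multiplying over $i$ yields $\E[e^{t\mathbf{X}}]\le\exp\!\big((e^t-1)\mu\big)$, so $\Pr[\mathbf{X}\ge(1+\delta)\mu]\le\exp\!\big((e^t-1)\mu-t(1+\delta)\mu\big)$, and the choice $t=\ln(1+\delta)$ gives the classical form $\big(e^{\delta}/(1+\delta)^{1+\delta}\big)^{\mu}$.

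It then remains to verify the scalar inequality $e^{\delta}/(1+\delta)^{1+\delta}\le e^{-\delta^2/(2+\delta)}$ for all $\delta\ge 0$, which (after taking logs and simplifying $\delta+\delta^2/(2+\delta)=\tfrac{2\delta(1+\delta)}{2+\delta}$) is equivalent to $\ln(1+\delta)\ge\tfrac{2\delta}{2+\delta}$. I would prove this by setting $g(\delta)=\ln(1+\delta)-\tfrac{2\delta}{2+\delta}$, noting $g(0)=0$ and computing $g'(\delta)=\tfrac{\delta^2}{(1+\delta)(2+\delta)^2}\ge 0$, so that $g\ge 0$ on $[0,\infty)$.

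For the lower tail I would run the symmetric argument with a negative exponent: for $t>0$, the same convexity-plus-independence steps give $\Pr[\mathbf{X}\le(1-\delta)\mu]\le e^{t(1-\delta)\mu}\,\E[e^{-t\mathbf{X}}]\le\exp\!\big((e^{-t}-1)\mu+t(1-\delta)\mu\big)$. When $\delta\in(0,1)$ the choice $t=-\ln(1-\delta)$ yields $\big(e^{-\delta}/(1-\delta)^{1-\delta}\big)^{\mu}$, and the inequality $(1-\delta)\ln(1-\delta)\ge-\delta+\tfrac{\delta^2}{2}$ — again proved by differentiation, since the derivative of the difference equals $-\ln(1-\delta)-\delta\ge 0$ — converts this into $e^{-\delta^2\mu/2}$. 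When $\delta\ge 1$ the stated bound is immediate: for $\delta>1$ the event $\{\mathbf{X}\le(1-\delta)\mu\}$ is empty since $\mathbf{X}\ge 0$, and for $\delta=1$ letting $t\to\infty$ in the same chain gives $\Pr[\mathbf{X}=0]\le e^{-\mu}\le e^{-\mu/2}$.

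I do not expect any genuine obstacle here, as this is a textbook estimate; the only parts needing a little care are the two elementary logarithmic inequalities above, which I would dispatch with the one-line monotonicity arguments, and, for the lower tail, ensuring the bound is only invoked in the regime where the left-hand side is non-vacuous.
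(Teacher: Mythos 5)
Your proof is correct and is the standard exponential-moment (Chernoff/Bernstein) derivation; the paper simply states this proposition as a well-known fact without supplying its own proof, so there is nothing to compare against. All of your steps check out: the convexity bound $e^{t\mathbf{X_i}}\le 1+(e^t-1)\mathbf{X_i}$ for $\mathbf{X_i}\in[0,1]$, the choice $t=\ln(1+\delta)$ (resp.\ $t=-\ln(1-\delta)$), and the two scalar inequalities $\ln(1+\delta)\ge\tfrac{2\delta}{2+\delta}$ and $(1-\delta)\ln(1-\delta)\ge-\delta+\tfrac{\delta^2}{2}$ (each verified correctly by differentiation at $\delta=0$), as well as your handling of the vacuous and boundary cases $\delta\ge1$ for the lower tail.
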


\begin{definition}[Restriction]
    Given a function $f:\{0,1\}^n \rightarrow \R$, a restriction $\rho$ is any string in $\{0,1,\ast\}^n$. Let $F = \{i \in [n]|\rho(i) = \ast\}$, then the restricted function $f_\rho: \{0,1\}^F \rightarrow \R$ is defined as $f_\rho(x) = f(y)$ where
    \begin{align*}
        y_i = \begin{cases}
            x_i & \text{if $i \in F$} \\
            \rho(i) & \text{otherwise}
        \end{cases}
    \end{align*}
    $\rho$ is a random restriction, if it is sampled from some probability distribution $D$ on $\{0,1,\ast\}^n$.
\end{definition}

\begin{definition}[Projection]
    Given a function $f:\{0,1\}^{n \times l} \rightarrow \R$, and a restriction $\rho$ in $\{0,1,\ast\}^{n \times l}$, the projected function $\mathsf{proj}_{\rho}f: \{0,1\}^n \rightarrow \R$ is defined as $\mathsf{proj}_{\rho}f(x) = f(y)$ where
    \begin{align*}
        y_{i,j} = \begin{cases}
            x_i & \text{if $\rho(i,j) = \ast$} \\
            \rho(i,j) & \text{otherwise}
        \end{cases}
    \end{align*}
    So the projection operator maps all unrestricted variables in a given block of size $l$, to the same new variable. If $\rho$ is a random restriction, then $\mathsf{proj}_{\rho}$ is a random projection.
\end{definition}

\subsection{Quantum-Classical Polynomial Hierarchy}
\begin{definition}[$\QCSigma_i$]\label{def:QCSigmam}
  Let $A=(\ayes,\ano)$ be a promise problem. We say that $A$ is in $\QCSigma_i(c, s)$ for poly-time computable functions $c, s: \mathbb{N} \mapsto [0, 1]$ if there exists a poly-bounded function $p:\mathbb{N}\mapsto\mathbb{N}$ and a poly-time uniform family of quantum circuits $\{V_n\}_{n \in \mathbb{N}}$ such that for every $n$-bit input $x$, $V_n$ takes in classical proofs ${y_1}\in \set{0,1}^{p(n)}, \ldots, {y_i}\in \set{0,1}^{p(n)}$ and outputs a single qubit, {such that:}
  \begin{itemize}
    \item Completeness: $x\in \ayes$ $\Rightarrow$ $\exists y_1 \forall y_2 \ldots Q_i y_i$ s.t. $\Pr[V_n \text{ accepts } (y_1, \ldots, y_i)] \geq c$.
    \item Soundness: $x\in \ano$ $\Rightarrow$ $\forall y_1 \exists y_2 \ldots \overline{Q}_i y_i$ s.t. $\Pr[V_n \text{ accepts } (y_1, \ldots, y_i)] \leq s$.
  \end{itemize}
  Here, $Q_i$ equals $\exists$ when $i$ is odd and equals $\forall$ otherwise and $\overline{Q}_i$ is the complementary quantifier to $Q_i$. Define
  \begin{align*}
  \QCSigma_i := \bigcup_{{ c - s \in \Omega(1/\poly(n))}} \QCSigma_i(c, s).
  \end{align*}
\end{definition}

\noindent Note that the first level of this hierarchy corresponds to $\QCMA$. The complement of the $i^{\mathrm{th}}$ level of the hierarchy, $\QCSigma_i$, is the class $\QCPi_i$ defined next.

\begin{definition}[$\QCPi_i$]\label{def:QCPim}
  Let $A=(\ayes,\ano)$ be a promise problem. We say that $A \in \QCPi_i(c, s)$ for poly-time computable functions $c, s: \mathbb{N} \mapsto [0, 1]$ if there exists a polynomially bounded function $p:\mathbb{N}\mapsto\mathbb{N}$ and a poly-time uniform family of quantum circuits $\{V_n\}_{n \in \mathbb{N}}$ such that for every $n$-bit input $x$, $V_n$ takes in classical proofs ${y_1}\in \set{0,1}^{p(n)}, \ldots, {y_i}\in \set{0,1}^{p(n)}$ and outputs a single qubit, {such that:}
  \begin{itemize}
    \item Completeness: $x\in \ayes$ $\Rightarrow$ $\forall y_1 \exists y_2 \ldots Q_i y_i$ s.t. $\Pr[V_n \text{ accepts } (y_1, \ldots, y_i)] \geq c$.
    \item Soundness: $x\in \ano$ $\Rightarrow$ $\exists y_1 \forall y_2 \ldots \overline{Q}_i y_i$ s.t. s.t. $\Pr[V_n \text{ accepts } (y_1, \ldots, y_i)] \leq s$.
  \end{itemize}
  Here, $Q_i$ equals $\forall$ when $i$ is odd and equals $\exists$ otherwise, and $\overline{Q}_i$ is the complementary quantifier to $Q_i$. Define
  \begin{align*}
  \QCPi_i := \bigcup_{{c - s \in \Omega(1/\poly(n))}} \QCPi_i(c, s).
  \end{align*}
\end{definition}

\noindent Now the corresponding quantum-classical polynomial hierarchy is defined as follows.

\begin{definition}[Quantum-Classical Polynomial Hierarchy ($\QCPH$)]\label{def:QCPH}
  \begin{align*}
        \QCPH = \bigcup_{i \in \mathbb{N}} \; \QCSigma_i = \bigcup_{i \in \mathbb{N}} \; \QCPi_i.
    \end{align*}
\end{definition}

\begin{definition}[$\QAC^0_i(s)$ circuits]
    A circuit family $\{C_n\}_{n=1}^\infty$ is in $\QAC^0_i(s)$ for some function $s: \bN \rightarrow \bN$ if it satisfies the following:
    \begin{enumerate}
        \item $C_n$ has $i$ alternating layers of $\vee$ and $\wedge$ gates, followed by a layer of quantum query circuits of size $\polylog(s(n))~\forall n \in \bN$
        \item size($C_n$) $\leq s(n)~\forall n \in \bN$ where size of the circuit is the number of $\vee$, $\wedge$ gates plus the number of quantum query circuits in the bottom layer
    \end{enumerate}
\end{definition}
Note that in the above definition, if the quantum query circuits are not computing total functions, then we treat each gate in the circuit as computing a partial function, whose output is evaluated in the following manner. For an input $x$ to the circuit, a $\vee$ gate outputs 1 if at least one of its inputs is 1 regardless of the arbitrary 0/1 responses of the input partial functions whose promise is violated. It outputs 0 if all of its inputs are 0 regardless of the arbitrary 0/1 responses of the input partial functions whose promise is violated. Otherwise, the output of the $\vee$ gate is undefined. The evaluation of the output of $\wedge$ gate is done in a similar manner. This method of evaluation gives the same output as the one if we evaluate the output of the circuit in a manner consistent with \defn{QCSigmam} (or \defn{QCPim}). That is, a $\vee$ gate corresponds to an $\exists$ quantifier and a $\wedge$ gate corresponds to a $\forall$ quantifier. We show this in \lem{circuitconsistent}.

\begin{lemma}\label{lem:circuitconsistent}
    The method of circuit evaluation discussed above gives the same output as the one if we evaluate the output of the circuit in a manner consistent with \defn{QCSigmam} (or \defn{QCPim}).
\end{lemma}
\begin{proof}
    (Sketch) For $\QCMA$, since there is only one OR gate followed by a layer of quantum query algorithms in the circuit, the two notions both assign a value to this top OR gate. For the second level $\QCSigma_2$, the circuit consists of a top OR gate, which takes inputs from a layer of AND gates, which take inputs from a layer of quantum query algorithms. In this case, if we want to assign a value to the top gate of the circuit according to \defn{QCSigmam}, in the “yes” case, we assign a 1 if there is a 1 input to the OR gate (existential quantifier), which in turn means that there is an AND gate (universal quantifier) set to 1, which happens if this AND gate receives all inputs as 1, regardless of the arbitrary responses chosen for the quantum query algorithms (partial functions) whose promise is violated. Looking at this evaluation in a bottom-up manner, this AND gate is receiving all inputs as 1, regardless of the arbitrary partial function responses, which means the output of this AND gate is well-defined and set to 1. For the top layer OR gate it means that it is getting a well-defined 1 input from this specific AND gate, which means it will evaluate to 1. We can also assign 0/1/undefined values to the other AND gates in the middle layer, and the output of the OR gate remains 1 since we have already shown one AND gate which has well-defined output 1. A similar argument can be made for the “no” case, and for the higher levels of the hierarchy. \\
    (Formal Argument) We use induction on the depth of the circuit $\QAC^0_d$ to establish the claim.
    \begin{itemize}
        \item Base Case $(d=1)$: For $\QCMA$, since there is only one $\vee$ gate followed by a layer of quantum query algorithms in the circuit, the two notions both assign a value to this top $\vee$ gate, depending on whether there is a quantum algorithm which outputs $1$. Similar argument holds for $\QCPi_1$.
        \item Induction Hypothesis: Suppose the claim is true for $\QAC^0_{d-1}$.
        \item Induction Step: We think of a $\QAC^0_d$ circuit as a $\vee$ gate, taking inputs from multiple $\QAC^0_{d-1}$ circuits with a top $\wedge$ gate (all of which get a common input $x$). For this input $x$, from the induction hypothesis, the responses of the circuits are consistent with $\QCPi_{d-1}$ predicates. In the yes case, there exists a proof $y_1$ such that the resulting predicate on fixing $y_1$ is a yes instance of a $\QCPi_{d-1}$ predicate. In this case, the corresponding $\QAC^0_{d-1}$ circuit has a well-defined output $1$ from the induction hypothesis. Note that the other $\QAC^0_{d-1}$ circuits which are input to the top $\vee$ gate might not have well-defined outputs, but the output of the $\vee$ gate will be $1$ because it is always getting a $1$ input (on the fixed $y_1$ branch). In the no case, for any proof $y_1$, the resulting predicate on fixing $y_1$ is a no instance of a $\QCPi_{d-1}$ predicate. In this case, all the corresponding $\QAC^0_{d-1}$ circuits have a well-defined output $0$ from the induction hypothesis. Therefore, the output of the $\vee$ gate will be $0$ because it is always getting $0$ inputs on all its branches.
    \end{itemize}
    Therefore the claim follows.
\end{proof}

\subsection{Query complexity}
In query complexity, an algorithm (classical or quantum) is given black-box access to a string $x \in \{0,1\}^N$, and is supposed to compute some boolean function $f(x)$ by querying $x$ as few times as possible. Interested readers can refer to \cite{BdW02} for a detailed introduction to query complexity. Let $N = 2^n$. In the classical query model, an algorithm queries $i \in \{0,1\}^n$ and receives $x_i$, which we will denote by $O_x(i) = x_i$. In the quantum query model, an algorithm queries $\sum_{i \in \{0,1\}^n} \alpha_i \ket{i}$ and receives $\sum_{i \in \{0,1\}^n} \alpha_i (-1)^{x_i} \ket{i}$, which we will denote by $O_x(\sum_{i \in \{0,1\}^n} \alpha_i \ket{i}) = \sum_{i \in \{0,1\}^n} \alpha_i (-1)^{x_i} \ket{i}$.
\begin{definition}[Decision Tree]
    A classical decision tree $T$ computing a (partial) function $f:\{0,1\}^N \rightarrow\{0,1,\perp\}$ (where $N = 2^n$) with $D$ queries is a classical query algorithm defined as follows:
    \begin{itemize}
        \item If $D = 0$, then $T(x)$ is equal to $0$ or $1$.
        \item Else, $T$ queries some $i \in \{0,1\}^n$ and $T(x) = T_{x_i}(x)$ where $T_{x_i}$ is a decision tree of depth $D-1$.
        \item For every $x\in f^{-1}(\{0,1\})$, $T(x) = f(x)$.
    \end{itemize}
\end{definition}

\begin{definition}[Quantum Query Algorithm]
    A quantum query algorithm $Q$ computing a (partial) function $f:\{0,1\}^N \rightarrow\{0,1,\perp\}$ (where $N = 2^n$) with $T$ queries is defined as follows:
    \begin{align*}
        Q(x) = U_{T+1}O_xU_{T}\ldots O_x U_1\ket{0^n}\ket{0^w}
    \end{align*}
    where $\ket{0^w}$ is the workspace, $O_x$ is the query oracle defined earlier and $U_i$ are unitaries. This is followed by a measurement of the first qubit, such that for every $x \in \{0,1\}^N$ if $f(x) = 1$ then $Q(x)$ outputs $1$ with probability at least $2/3$ and if $f(x) = 0$ then $Q(x)$ outputs $1$ with probability at most $1/3$.
\end{definition}

\begin{definition}[Query Magnitudes \cite{BBBV97}]
    Given a quantum query algorithm $Q$ computing a (partial) function $f:\{0,1\}^N \rightarrow\{0,1,\perp\}$ (where $N = 2^n$) with $T$ queries, define query magnitude at time step $t$, denoted $m_i^t(x)$, for $t \in [T]$ and $i \in \{0,1\}^n$ as follows:
    \begin{align*}
        Q(x) &= U_{T+1}O_xU_{T}\ldots O_x U_1\ket{0^n}\ket{0^w} \\
        m_i^t(x) &= \Pr[\text{Measuring query register of } U_{t+1}O_xU_{t}\ldots O_x U_1\ket{0^n}\ket{0^w} \text{ gives }i] \\
        m_i(x) &= \sum_{t=1}^T m_i^t(x)
    \end{align*}
    We call $m_i(x)$ as the query magnitude of $Q$ for bit $i$ on input $x$. Since $\sum_{i \in \{0,1\}^n} m_i^t(x) = 1$ for all $t \in [T]$, we know that $\sum_{i \in \{0,1\}^n} m_i(x) \leq T$ (it need not be exactly equal since the quantum algorithm might make less than $T$ queries on some inputs $x$).
\end{definition}

\begin{definition}[Partial Assignment]
    A partial assignment $p$ of length $N$ is any string from the set $\{0,1,\ast\}^N$. A string $x \in \{0,1,\ast\}^N$ is consistent with $p$ if $p_i = x_i$ whenever $p_i \neq \ast$.
\end{definition}

\begin{definition}[Certificate]
    Given a (partial) function $f:\{0,1\}^N \rightarrow\{0,1,\perp\}$, a partial assignment $p \in \{0,1,\ast\}^N$ is a $b$-certificate for $f$ if $f(x) = f(y) = b$ for all $x, y$ in the domain of $f$ which are consistent with $p$. The length of this certificate $p$ is the number of bits in $p$ not equal to $\ast$.
\end{definition}

\begin{definition}[Certificate Complexity]
    Given a (partial) function $f:\{0,1\}^N \rightarrow\{0,1,\perp\}$, the certificate complexity of $f$ at input $x \in \mathsf{dom}(f)$ is defined as $\C(f,x) = \min_p \mathsf{len}(p)$ where the minimization is over all partial assignments $p$ consistent with $x$, which are also $f(x)$-certificates for $f$. Then the certificate complexity $\C(f) = \max_{x \in \mathsf{dom}(f)} \C(f,x)$.
\end{definition}

\begin{definition}[Sensitivity]
    Given a (partial) function $f:\{0,1\}^N \rightarrow\{0,1,\perp\}$, the sensitivity of $f$ at input $x \in \mathsf{dom}(f)$ is defined as $\s(f,x) = |\{i \in \{0,1\}^n : x^{\oplus i} \in \mathsf{dom}(f), f(x) \neq f(x^{\oplus i})\}|$, where $x^{\oplus i}$ is the string $x$ with the $i^{th}$ bit flipped. Then the sensitivity $\s(f) = \max_{x \in \mathsf{dom}(f)} \s(f,x)$.
\end{definition}

\begin{definition}[Block Sensitivity]
    Given a (partial) function $f:\{0,1\}^N \rightarrow\{0,1,\perp\}$, a block $B \subseteq [N]$ is a sensitive block for $f$ at input $x \in \mathsf{dom}(f)$ if $x^B \in \mathsf{dom}(f)$ and $f(x) \neq f(x^B)\}$, where $x^B$ is the string $x$ with the bits in $B$ flipped. If $B$ is a sensitive block, we say $s(f,x,B) = 1$ and $0$ otherwise. Then the block sensitivity of $f$ at input $x \in \mathsf{dom}(f)$, denoted $\bs(f,x)$ is the maximum number of disjoint sensitive blocks of $f$ at input $x$. Then the block sensitivity $\bs(f) = \max_{x \in \mathsf{dom}(f)} \bs(f,x)$.
\end{definition}

\begin{definition}[Fractional Certificate Complexity]
    Given a (partial) function $f:\{0,1\}^N \rightarrow\{0,1,\perp\}$, the set $c_x = \{c_{x,i} \geq 0: i \in \{0,1\}^n\}$ is a fractional certificate for $f$ at input $x \in \mathsf{dom}(f)$ if for every $y \in \mathsf{dom}(f)$, $\sum_{i: x_i \neq y_i} c_{x,i} \geq |f(x) - f(y)|$. The fractional certificate complexity of $f$ at input $x \in \mathsf{dom}(f)$, denoted $\mathsf{FC}(f,x) = \min_{c_x} \sum_i c_{x,i}$ where the minimization is done over all fractional certificates $c_x$ for $f$ at input $x$. Then the fractional certificate complexity $\mathsf{FC}(f) = \max_{x \in \mathsf{dom}(f)} \mathsf{FC}(f,x)$.
\end{definition}

\begin{definition}[Fractional Block Sensitivity]
    Given a (partial) function $f:\{0,1\}^N \rightarrow\{0,1,\perp\}$, the set $w_x = \{w_{x,B} \geq 0: B \subseteq [N]\}$ is a fractional block for $f$ at input $x \in \mathsf{dom}(f)$ if for every $i \in \{0,1\}^n$ we have that $\sum_{B: i \in B} w_{x,B}.s(f,x,B) \leq 1$. The fractional block sensitivity of $f$ at input $x \in \mathsf{dom}(f)$, denoted $\mathsf{fbs}(f,x) = \max_{w_x} \sum_B w_{x,B}.s(f,x,B)$ where the maximization is done over all fractional blocks $w_x$ for $f$ at input $x$. Then the fractional block sensitivity $\mathsf{fbs}(f) = \max_{x \in \mathsf{dom}(f)} \mathsf{fbs}(f,x)$.
\end{definition}

\begin{definition}[Approximate Degree]
    Given a (partial) function $f:\{0,1\}^N \rightarrow\{0,1,\perp\}$, the approximate degree of $f$, denoted $\widetilde{\mathsf{deg}}(f)$ is the minimum degree of a multi-linear polynomial $p$ which approximate $f$ to error within additive error $1/3$ for all input $x \in \mathsf{dom}(f)$ and $p(x) \in [0,1]$ for all $x \in \{0,1\}^N$.
\end{definition}

\begin{lemma}[\cite{BBC+01}]\label{lem:querytodeg}
    If a (partial) function $f:\{0,1\}^N \rightarrow\{0,1,\perp\}$ is computable by a quantum algorithm $Q$ making $T$ queries, we have that there is a multilinear polynomial $p$ of degree at most $2T$ such that $p(x)$ is equal to the probability that $Q$ outputs 1 on input $x$. Therefore $p$ approximates $f$ to error within additive error $1/3$ for all input $x \in \mathsf{dom}(f)$ and $p(x) \in [0,1]$ for all $x \in \{0,1\}^N$ and $\widetilde{\mathsf{deg}}(f) \leq 2T$.
\end{lemma}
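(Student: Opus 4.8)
The plan is to follow the standard amplitude-as-polynomial argument of \cite{BBC+01}. Write the algorithm as $Q(x)=U_{T+1}O_xU_T\cdots O_xU_1\ket{0^n}$ and track the amplitudes of the intermediate states as polynomials in the input bits $x_1,\dots,x_N$. Before the first query, the state $U_1\ket{0^n}$ has amplitudes that are constants, i.e.\ degree-$0$ polynomials in $x$. The key point is that the phase oracle $O_x$ acts on a basis state $\ket{i}$ (tensored with an arbitrary workspace register) by multiplying its amplitude by $(-1)^{x_i}=1-2x_i$, a degree-$1$ polynomial in $x$, while each fixed unitary $U_t$ replaces every amplitude by a fixed linear combination of the old amplitudes and hence does not increase the degree. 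By induction on the number of queries made so far, every amplitude of the state after $t$ queries is a complex-coefficient polynomial in $x$ of degree at most $t$; taking $t=T$, all amplitudes of $Q(x)$ have degree at most $T$.

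Next I would write the acceptance probability as $p(x)=\sum_{z\in S}\lvert\alpha_z(x)\rvert^2$, where $\alpha_z(x)$ is the degree-$\le T$ amplitude of basis state $\ket{z}$ in $Q(x)$ and $S$ is the set of basis states whose first qubit is $1$. Splitting each $\alpha_z(x)$ into its real and imaginary parts, which are real polynomials of degree at most $T$, we get $\lvert\alpha_z(x)\rvert^2$ as a sum of two squares of such polynomials, hence a real polynomial of degree at most $2T$; summing over $z\in S$ leaves the degree at most $2T$. Thus $p$ is a real polynomial of degree at most $2T$ that equals, exactly, the probability that $Q$ outputs $1$ on every $x\in\B^N$.

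To obtain a multilinear polynomial, I would use $x_i^2=x_i$ for $x_i\in\{0,1\}$: repeatedly substituting $x_i$ for $x_i^2$ in the monomials of $p$ neither changes the value of $p$ on $\B^N$ nor raises the degree, yielding a multilinear polynomial of degree at most $2T$ agreeing with the acceptance probability everywhere on $\B^N$. Finally, $p(x)\in[0,1]$ for all $x$ since it is a probability, and since $Q$ computes $f$ we have $p(x)\ge 2/3$ when $f(x)=1$ and $p(x)\le 1/3$ when $f(x)=0$ for $x\in\mathsf{dom}(f)$; hence $\lvert p(x)-f(x)\rvert\le 1/3$ on $\mathsf{dom}(f)$, which is exactly what is needed to conclude $\adeg(f)\le 2T$. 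There is no real obstacle here --- the only place to be slightly careful is the bookkeeping with complex amplitudes (passing to real and imaginary parts before squaring) and the observation that the unitaries preserve polynomial degree because each new amplitude is a fixed linear combination of the old ones.
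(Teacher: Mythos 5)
Your proof is correct and is precisely the standard amplitude-tracking argument of Beals--Buhrman--Cleve--Mosca--de~Wolf that the paper cites for this lemma (the paper itself gives no proof, only the reference to \cite{BBC+01}). Your bookkeeping --- degree grows by one per phase-oracle application, unitaries are degree-preserving linear maps, squared magnitudes of degree-$T$ complex amplitudes give real polynomials of degree $2T$, and multilinearization via $x_i^2=x_i$ --- is exactly the cited argument.
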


\subsection{\texorpdfstring{$\QCMA$}{QCMA} Hierarchy}
In this section, we define the $\QCMA$ hierarchy (called $\QCMAH$). It is a hierarchy of classes, where the $i^{th}$ level, called $\QCMAH_i$ is defined as follows:
\begin{align*}
    \QCMAH_1 &:= \QCMA \\
    \QCMAH_i &:= \QCMA^{\QCMAH_{i-1}} \\
    \QCMAH &:= \cup_{i=1}^{\infty} \QCMAH_i
\end{align*}
Note that $\QCMA$ is a class of promise problems, thus when querying a $\QCMAH_i$ oracle, any $\QCMA$ algorithm $V$ is making queries to a promise oracle. Note that the $\QCMA$ \emph{algorithm} $V$ is different from the $\QCMA$ \emph{verifier} (which takes a single proof as input and outputs 0/1), here $V$ instead refers to the OR of all these 0/1 outputs of the verifier for every possible proof. If $V$ makes a query on an input which violates the promise of the $\QCMAH_i$ oracle, then the oracle can return an arbitrary 0/1 response. Therefore we need to be careful about when the behaviour of $V$ is well-defined. In this paper, we follow the definition of well-defined behaviour proposed in \cite{AIK22} (note however that there can be many other reasonable definitions of well-defined behaviour, though we do not describe them here). The behaviour of $V$ is well-defined, if the output of $V$ remains the same regardless of the arbitrary 0/1 responses of the $\QCMAH_i$ oracle on inputs violating the promise. In particular for a yes input, the $\QCMA$ verifier might accept only one proof for some choice of oracle responses outside the promise, and it might accept (say) half of the proofs for some other choice of oracle responses, but the behaviour remains well-defined as long as it accepts at least one proof for every choice of oracle responses. \\
In the rest of the paper, we will denote as $\QCMAH_i$ verifier $V$ as an $i$-tuple of $\QCMA$ verifiers $V = \langle V_1, \ldots, V_i \rangle$ where $V_1$ corresponds to the base $\QCMA$ verifier and the remaining $i-1$ verifiers form the $(i-1)$-tuple for the verifier corresponding to the $\QCMAH_{i-1}$ oracle.

\section{Polynomials and fractional block sensitivity}

In this section, we establish a relationship between fractional
block sensitivity and degree for real-valued functions. We will need
this relationship to establish a switching lemma for approximate
degree (the tools of this section are not necessary for the quantum
switching lemma, which may instead rely on only the hybrid argument
of \cite{BBBV97}, but we opt to prove the stronger switching lemma
for polynomials).

\subsection{Definitions for real-valued functions}

When studying the degree of bounded real-valued functions, most of
the literature uses the convention that the functions have signature
$\sB^n\to[-1,1]$, known as the $\pm 1$ basis. Switching between
$\B$ and $\sB$ does not affect most measures: we can interpret
$-1$ as $1$ and $+1$ as $0$, and we can plug in $1-2x_i$ in
each variable to convert a $\B$ variable to a $\sB$ variable,
or plug in $(1-x_i)/2$ to go in the reverse direction. This
does not affect the degree. We will use the $\sB$ basis in this
section.

\paragraph{Sensitivity.}
The sensitivity of a real-valued function can be defined as follows.
For a block $B\subseteq[n]$, define
\[\s(f,x,B)\coloneqq \frac{|f(x)-f(x^B)|}{2},\]
where the notation $x^B$ refers to the string $x$
with the block $B\subseteq[n]$ of bits flipped.
Note that we divide by $2$ because $f$ can take values from $[-1,1]$
instead of $[0,1]$.
This is the sensitivity of a specific block
$B$ to a specific input $x\in\sB^n$, with respect to the function $f$;
it is a value between $0$ and $1$. We next define the sensitivity
of $f$ at $x$ to be
\[\s(f,x)\coloneqq \sum_{i=1}^n \s(f,x,\{i\}),\]
which is the total sensitivity of all the bits of $x$. We then
define
\[\s(f)\coloneqq \max_{x\in\sB^n} \s(f, x).\]
For a Boolean function, this definition matches the usual
definition of sensitivity $\s(f)$.

\paragraph{Block sensitivity.}
We define block sensitivity analogously:
$\bs(f,x)$ will be the maximum possible value of
$\sum_{j=1}^k \s(f,x,B_j)$ over choices of disjoint blocks
$B_j\subseteq[n]$, and $\bs(f)$ will be the maximum possible value
of $\bs(f,x)$ over choices of inputs $x\in\sB^n$.

\paragraph{Fractional block sensitivity.}
We define $\fbs(f,x)$ to be the maximum possible value of the sum
$\sum_{B\subseteq[n]} w_B\s(f,x,B)$, subject to the constraints
$w_B\ge 0$ for all $B\subseteq[n]$ and $\sum_{B\ni i} w_B\le 1$
for all $i\in[n]$. (The weights $w_B$ represent fractions of a block,
and the total weight of all blocks containing a bit $i$ must
be at most $1$, making the blocks ``fractionally disjoint''.)
As usual, we define $\fbs(f)\coloneqq \max_{x\in\sB^n}\fbs(f,x)$.

\paragraph{Fractional certificate complexity.}
The definition of $\fbs(f,x)$ is the optimal value of
a linear maximization program in terms of the real variables $w_B$.
The dual of this linear program can is as follows.
The variables are $c_i\ge 0$, with the constraint
$\sum_{i\in B} c_i\ge \s(f,x,B)$ for all $B\subseteq[n]$.
The objective is to minimize $\sum_{i\in[n]} c_i$.
This can be interpreted as finding a fractional certificate,
with $c_i$ representing the fraction with which bit $i$ is used
in the certificate, such that any block $B$ which, when flipped,
changes the value of the function by $\s(f,x,B)$ must be
``detected'' by the certificate in the sense that the certificate
puts total weight at least $\s(f,x,B)$ on the bits of $B$.

Fractional certificate complexity is equal to fractional block
sensitivity, so we will not give it new notation. However,
we denoting by $c_{x,i}$ the fractional certificate for $x$,
we get that for all $x,y$,
\[\sum_{i:x_i\ne y_i} c_{x,i}\ge \frac{|f(x)-f(y)|}{2},
\qquad \sum_{i=1}^n c_{x,i} \le \fbs(f).\]
The factor of $2$ comes from the use of $[-1,1]$ outputs for $f$;
if $f$ takes outputs in $[0,1]$ instead (for example,
if we consider the function $f(x)=(1-g(x))/2$ where
$g$ takes values in $[-1,1]$ we are comparing the differences
$|f(x)-f(y)|$ to the value of $\fbs(g)$),
we do not need to divide by $2$.

\subsection{Relationships to degree}

Our results will rely on the following theorem by \cite{FHKL16},
which states that sensitivity lower bounds the squared
degree even for a real-valued function. This theorem follows from
results in approximation theory (though it is possible to get
a version which is weaker by a constant factor using a proof
analogous to the one for the discrete case).

\begin{theorem}[\cite{FHKL16}]\label{thm:sdeg}
For any $f\colon\sB^n\to[-1,1]$, we have $\s(f)\le \deg(f)^2$.
\end{theorem}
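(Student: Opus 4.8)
The plan is to reduce the statement to the classical Markov brothers' inequality: a univariate real polynomial $p$ of degree at most $d$ with $\|p\|_{[-1,1]}\le 1$ satisfies $|p'(t)|\le d^2$ for all $t\in[-1,1]$ (with the bound attained at $t=1$ by the Chebyshev polynomial, which is why the clean constant $1$ should come out). First I would pass to the unique multilinear representation of $f$. Since $f$ takes values in $[-1,1]$ on $\sB^n$, the standard averaging identity $f(x)=\E[f(X)]$ for $X$ a random vertex with independent coordinates and means $x_i$ shows the multilinear extension still satisfies $|f(x)|\le 1$ on the solid cube $[-1,1]^n$. For multilinear $f$ the discrete derivative equals the partial derivative, so $\s(f,x,\{i\})=\tfrac12|f(x)-f(x^{\{i\}})|=|\partial_i f(x)|$ and $\s(f,x)=\|\nabla f(x)\|_1$. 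Moreover $x\mapsto\|\nabla f(x)\|_1$ is convex in each coordinate separately (each $|\partial_i f|$ is the absolute value of an affine function of every other coordinate and is independent of $x_i$), so its maximum over $[-1,1]^n$ is attained at a vertex; hence $\s(f)=\max_{v\in\sB^n}\|\nabla f(v)\|_1$.

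The heart of the argument is choosing the right line on which to apply Markov. Call a vertex $v$ a \emph{local extremum} if no single-coordinate flip increases $f$ (a local max) or none decreases it (a local min); equivalently $v_i\partial_i f(v)\ge0$ for all $i$, or $v_i\partial_i f(v)\le0$ for all $i$. The key structural claim is that some vertex $v^\star$ simultaneously maximizes $\|\nabla f(\cdot)\|_1$ and is a local extremum. Granting this, say $v^\star$ is a local max (the other case is identical up to sign). Then $\s(f)=\|\nabla f(v^\star)\|_1=\sum_i v^\star_i\partial_i f(v^\star)=\langle\nabla f(v^\star),v^\star\rangle$. Restrict $f$ to the diagonal through $v^\star$: let $p(t)=f(tv^\star)$ for $t\in[-1,1]$. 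Since $tv^\star\in[-1,1]^n$ we have $\|p\|_{[-1,1]}\le1$, and $p$ has degree at most $\deg(f)$, because substituting $tv^\star$ into the multilinear form replaces $\prod_{i\in S}x_i$ by $(\prod_{i\in S}v^\star_i)\,t^{|S|}$. By the chain rule $p'(1)=\sum_i v^\star_i\partial_i f(v^\star)=\s(f)$, so Markov's inequality gives $\s(f)=|p'(1)|\le\deg(f)^2$.

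The main obstacle is the structural claim. For a Boolean function it is automatic: at a maximum-sensitivity vertex every sensitive coordinate, when flipped, toggles the value to the opposite Boolean value, so all sensitive neighbours share a value and $v^\star$ is both a local max and a local min on those coordinates. For real-valued $f$ it can fail at a particular maximum-sensitivity vertex (e.g.\ $f=\tfrac12(x_1-x_2)$ at $v=(1,1)$), but one expects it to hold at \emph{some} maximum-sensitivity vertex. I would prove this by an exchange/extremal argument: take a pair $(v,\epsilon)\in\sB^n\times\sB^n$ that maximizes $\langle\nabla f(v),\epsilon\rangle=\s(f)$ and, among such pairs, maximizes $|\{i:v_i=\epsilon_i\}|$, then show $v=\epsilon$ by using that $\epsilon_i=\operatorname{sign}\partial_i f(v)$ at the maximum and analyzing the effect of a single coordinate flip on each $\partial_i f$. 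I would verify this carefully; the two-variable case is a short computation — the unique maximum-sensitivity vertex is $(\operatorname{sign}(a_2b),\operatorname{sign}(a_1b))$ for $f=a_1x_1+a_2x_2+bx_1x_2$, and one checks directly that both $v_i\partial_i f(v)$ have sign $\operatorname{sign}(a_1a_2b)$ — which already exhibits the mechanism.

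Finally, if one only needs the bound up to a constant factor (which suffices for the switching-lemma applications), the structural claim can be bypassed. At any maximum-sensitivity vertex $v^\star$ split the coordinates into $B^+=\{i:v^\star_i\partial_i f(v^\star)>0\}$ and $B^-=\{i:v^\star_i\partial_i f(v^\star)<0\}$; one of them, say $B^+$, carries at least half of $\s(f)=\sum_i|\partial_i f(v^\star)|$. Restrict $f$ by fixing every coordinate outside $B^+$ to its value in $v^\star$; the restriction $\tilde f$ has degree at most $\deg(f)$, is still bounded by $1$, and at the point $\tilde v=v^\star|_{B^+}$ satisfies $\tilde v_i\partial_i\tilde f(\tilde v)>0$ for all $i\in B^+$, i.e.\ $\tilde v$ is a local max of $\tilde f$. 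The diagonal-line argument applied to $\tilde f$ at $\tilde v$ then yields $\tfrac12\s(f)\le\s(\tilde f,\tilde v)\le\deg(\tilde f)^2\le\deg(f)^2$, so $\s(f)\le 2\deg(f)^2$ — matching the remark that a constant-factor-weaker version follows from the discrete-case proof template.
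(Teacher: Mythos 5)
The paper does not actually prove this theorem: it cites \cite{FHKL16} and remarks that it ``follows from results in approximation theory,'' adding parenthetically that a proof ``analogous to the one for the discrete case'' yields only a constant-factor-weaker bound. Your fallback argument is precisely that weaker route, and it is correct: passing to the multilinear extension, identifying $\s(f,x)$ with $\|\nabla f(x)\|_1$ at vertices, splitting into $B^+$ and $B^-$, restricting to the dominant half, and then applying Markov to $p(t)=\tilde f(t\tilde v)$ does give $\s(f)\le 2\deg(f)^2$. That matches the paper's parenthetical remark, not the stated theorem.

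The gap is in the main route to the tight constant $1$. The entire argument hinges on the ``structural claim'' that some vertex simultaneously maximizes $\|\nabla f(\cdot)\|_1$ and is a local extremum (all $v_i\partial_i f(v)$ of one sign). You acknowledge you have not proved this, only sketched a putative exchange argument (``I would verify this carefully''). The exchange step is genuinely problematic: flipping $v_j$ changes $\partial_i f$ for every $i\ne j$ by $-2v_j\partial_j\partial_i f(v)$, and there is no sign control on these cross-terms, so it is not clear that the objective $\sum_i\epsilon_i\partial_i f(v)$ is preserved while the agreement count increases. It is also not clear this claim is even true for general bounded multilinear $f$; it is automatic for $\{\pm1\}$-valued $f$ (where all sensitive flips toggle the value identically) but that mechanism does not transfer to real-valued $f$, which is exactly why the discrete-case proof template loses a constant. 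So as written, the proposal establishes $\s(f)\le 2\deg(f)^2$ but not the stated $\s(f)\le\deg(f)^2$; completing the argument would require either proving the structural claim or taking a genuinely different (approximation-theoretic) route, as the cited source does.
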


With this theorem in hand, we show that block sensitivity
also lower bounds the squared degree.

\begin{corollary}\label{cor:bsdeg}
For any $f\colon\sB^n\to[-1,1]$, we have $\bs(f)\le \deg(f)^2$.
\end{corollary}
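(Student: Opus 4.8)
The plan is to reduce block sensitivity to (ordinary) sensitivity by collapsing each sensitive block to a single variable, and then invoke \thm{sdeg}. Fix an input $x\in\sB^n$ achieving $\bs(f)=\bs(f,x)$, together with pairwise disjoint blocks $B_1,\dots,B_k\subseteq[n]$ maximizing $\sum_{j=1}^k \s(f,x,B_j)$; such a maximizer exists since there are only finitely many families of disjoint subsets of $[n]$. Define $g\colon\sB^k\to[-1,1]$ by substituting into (the multilinear polynomial for) $f$: freeze each coordinate $i$ lying in no block to the constant $x_i$, and replace each coordinate $i\in B_j$ by the monomial $x_i z_j$, where $z_1,\dots,z_k$ are fresh $\pm 1$ variables. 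In the $\pm 1$ basis a bit flip is a negation, so this substitution is arranged so that $z_j=1$ leaves block $B_j$ in its $x$-configuration while $z_j=-1$ flips every bit of $B_j$; hence $g(1^k)=f(x)$ and $g$ at $1^k$ with the $j$-th coordinate flipped equals $f(x^{B_j})$.

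Next I would record two facts. First, $g$ takes values in $[-1,1]$, being a restriction of $f$. Second, $\deg(g)\le\deg(f)$: we have substituted into a multilinear polynomial terms of degree at most $1$ in the new variables (constants $x_i$, or linear monomials $x_i z_j$), and collapses such as $z_j^2=1$ only decrease the degree. Then $\s(g,1^k,\{j\})=\tfrac12|g(1^k)-g((1^k)^{\oplus j})|=\tfrac12|f(x)-f(x^{B_j})|=\s(f,x,B_j)$, so
\[
\bs(f)=\sum_{j=1}^k \s(f,x,B_j)=\s(g,1^k)\le \s(g)\le \deg(g)^2\le \deg(f)^2,
\]
where the penultimate inequality is \thm{sdeg} applied to $g$. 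This gives the claim.

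The only delicate point — really bookkeeping rather than a genuine obstacle — is getting the orientation of the block-collapsing substitution right in the $\pm 1$ basis, so that a single flip of $z_j$ realizes exactly the block flip $x\mapsto x^{B_j}$, and confirming that identifying several coordinates of a block with one variable never raises the degree. Everything else is immediate from the definitions of $\s$, $\bs$, and $\deg$ for real-valued functions given above, together with \thm{sdeg}.
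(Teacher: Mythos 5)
Your proof is correct and follows essentially the same strategy as the paper: collapse each sensitive block to a single new $\pm 1$ variable via the substitution $X_i\mapsto x_i z_{j(i)}$, observe that this preserves boundedness and does not increase degree, and then apply \thm{sdeg} to the resulting function at the input $1^k$. The only cosmetic difference is that the paper assumes the blocks form a partition of $[n]$ (so every coordinate gets a $z_j$), whereas you freeze the leftover coordinates explicitly; both handlings are fine and lead to the same bound.
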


\begin{proof}
This follows easily from \thm{sdeg}. Fix a function $f$, and let
$x\in\sB^n$ be such that $\bs(f)=\bs^{(1)}(f,x)$. Let
$B_1\sqcup\dots \sqcup B_k$ be a partition of $[n]$ into blocks which
maximizes the total sensitivity, so that
$\bs(f)=\sum_{j=1}^k \s^{(1)}(f,x,B_j)$.
Define a new function $f'\colon\sB^k\to[-1,1]$ as follows.
For any $y\in\sB^k$, let $B_y=\bigsqcup_{j:y_j=-1} B_j$, and define
$f'(y)=f(x^{B_y})$. It is easy to check that the sensitivity
of the input $1^k$ to $f'$ is the same as the block sensitivity of
the input $x$ to $f$. Therefore, we have
$\bs(f)\le \s(f')\le \deg(f')^2$. Now, we can represent $f'$ as a
polynomial by taking the polynomial for $f$ and plugging in
$(x_1y_{j(1)},x_2 y_{j(2)},\dots, x_ny_{j(n)})$, where $j(i)$ is
the unique $j\in[k]$ such that $i\in B_j$. Treating $x_i$ as
constants and $y_j$ as the variables, this polynomial for $f'$ has
degree at most $\deg(f)$, so we conclude $\bs(f)\le \deg(f)^2$,
as desired.
\end{proof}

Finally, we show that fractional block sensitivity lower bounds
the squared degree.

\begin{theorem}\label{thm:fbstodeg}
For any $f\colon\sB^n\to[-1,1]$, we have
$\fbs(f)\le \frac{\pi^2}{4}\deg(f)^2$.
\end{theorem}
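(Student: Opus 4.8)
The plan is to prove Theorem~\ref{thm:fbstodeg} by reducing fractional block sensitivity to ordinary block sensitivity via a composition trick, and then invoking Corollary~\ref{cor:bsdeg}. The key idea is the standard one used in \cite{ABK21}: fractional block sensitivity has an LP formulation, and we can ``round'' the fractional weights by composing $f$ with a gadget that amplifies a fractional contribution into an integral one. Concretely, fix $x$ achieving $\fbs(f)=\fbs(f,x)$, and let $\{w_B\}$ be optimal weights, so $\sum_B w_B\s(f,x,B)=\fbs(f)$ with $\sum_{B\ni i} w_B\le 1$ for every $i$. By standard LP arguments we may take the $w_B$ to be rational with a common denominator $M$, so $w_B = m_B/M$ with $m_B\in\bN$. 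Now build a new function $g$ on a larger variable set: replace each original variable $x_i$ by a block of $M$ fresh variables, and have $g$ feed into $f$ (at coordinate $i$) the value obtained by a suitable ``promise-OR''-type aggregation of those $M$ copies --- the point being that a sensitive block $B$ of $f$ at $x$ can be realized as a \emph{disjoint} block of $g$ by flipping, within each $i\in B$, a fresh batch of $m_B$ of the $M$ copies (disjointness across different $B$'s is exactly guaranteed by the constraint $\sum_{B\ni i} m_B\le M$). This makes $\bs(g)\ge \sum_B m_B\cdot \s(f,x,B) = M\cdot\fbs(f)$ up to the gadget's behaviour, and $\deg(g)\le \deg(f)\cdot(\text{degree of the gadget})$.

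**The main obstacle** is choosing the aggregation gadget so that it is simultaneously (i) low degree --- ideally degree $O(1)$ or with a controllable dependence --- and (ii) faithful, in the sense that flipping a batch of $m_B$ copies actually changes the aggregated value fed to $f$ by the full amount, so that sensitivity is preserved rather than attenuated. A plain OR of $M$ bits has degree close to $M$, which would ruin the bound. This is precisely why the excerpt mentions composing ``with a version of promise-OR'': on the relevant slice of inputs (where the batches are controlled), a promise-OR behaves like OR but can be represented by a \emph{linear} polynomial, contributing only a factor to the degree that can be absorbed. I would set up the gadget as a \emph{real-valued} function on the $M$ copies that equals (a scaled shift of) the promise-OR on the inputs we care about, check it is exactly degree $1$ there (linear in the copies), and verify that under this gadget a single flipped batch of $m_B$ copies toggles the $f$-input fully from one value to another. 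Then $\deg(g)\le \deg(f)$ and $\bs(g)\ge M\,\fbs(f)$, so Corollary~\ref{cor:bsdeg} gives $M\,\fbs(f)\le\bs(g)\le\deg(g)^2\le\deg(f)^2$; but this is too weak by the factor $M$, so the promise-OR trick must instead be arranged so that \emph{no} blow-up factor $M$ appears --- i.e. the gadget encodes the rational weight $w_B=m_B/M$ directly as a \emph{sensitivity} contribution of magnitude $\s(f,x,B)$ in a \emph{single} $g$-variable per $i$, not $M$ of them.

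**Revised plan.** The cleaner route, and the one I would actually carry out, is: use LP duality to pass from fractional block sensitivity to fractional certificate complexity (already done in the excerpt: we have $c_{x,i}\ge 0$ with $\sum_{i\in B}c_{x,i}\ge \s(f,x,B)$ and $\sum_i c_{x,i}\le \fbs(f)$). Now compose $f$ with a promise-OR gadget whose \emph{fan-in at coordinate $i$ is tuned to $c_{x,i}$}: intuitively, replacing $x_i$ by $\PromiseOR$ of roughly $1/c_{x,i}$ copies means that a sensitive block of $f$ involving $i$ only needs to flip \emph{one} of those copies to flip the $f$-input, so the sensitivity contribution per flipped variable is spread according to the $c_{x,i}$. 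This converts the fractional certificate $\{c_{x,i}\}$ into an \emph{integral} certificate for the composed function, hence (dually) block sensitivity becomes regular block sensitivity \emph{without} the factor-$M$ loss, at the cost of a constant-factor loss hidden in how well $1/c_{x,i}$ rounds to an integer and in the promise-OR's approximation-theoretic degree. The promise-OR on $k$ variables has approximate degree $\Theta(\sqrt k)$ in general, but here we only need it on a promise (at most one $1$, or all $0$), where its exact degree is $O(1)$; more carefully, one shows the relevant polynomial composition multiplies degrees by at most a factor that, combined with Corollary~\ref{cor:bsdeg}, yields exactly the constant $\pi^2/4$. That constant is the fingerprint of \thm{sdeg} (sensitivity vs.\ squared degree, via Markov's inequality for polynomials, whose sharp constant is $\pi^2/4$ after the promise-OR-to-sensitivity reduction loses a further factor), so I would track constants through: $\fbs(f)\xrightarrow{\text{dual}}\fbs$-certificate$\xrightarrow{\text{promise-OR compose}}\bs(g)$ with $g$ of degree $\le c\cdot\deg(f)$, then $\bs(g)\le\deg(g)^2$ from Corollary~\ref{cor:bsdeg}, and optimize $c$ to land at $\pi^2/4$. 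I expect the bookkeeping of exactly how the promise-OR composition interacts with the degree and with the sensitivity-to-block-sensitivity projection (the step already used in the proof of Corollary~\ref{cor:bsdeg}) to be where all the care is needed; the conceptual steps are routine given \thm{sdeg} and the LP duality already in place.
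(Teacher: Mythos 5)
Your high-level plan — fix an optimal weight scheme, round the weights to rationals, blow up each coordinate into a block of copies, turn the fractional blocks of $f$ into disjoint integral blocks of a composed function $g$, then invoke \cor{bsdeg} — is exactly the strategy the paper follows. However, both of your attempts founder on the gadget, which is the only place where anything interesting happens.

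In the first attempt you want a gadget on $M$ copies of a variable that is (i) \emph{linear} (degree $1$), (ii) bounded with values in $[-1,1]$, and (iii) such that flipping a single copy in $1^M$ swings the output from one endpoint of the interval to the other (so that sensitivity is preserved in full). Such a gadget cannot exist: any linear form $\alpha+\beta\sum_j y_j$ with $h(1^M)=1$ and $h((1^M)^j)=-1$ forces $\beta=1$, $\alpha=1-M$, and then $h((-1)^M)=1-2M$, which is wildly outside $[-1,1]$. Boundedness and ``full swing on one flip'' are in tension, and resolving that tension costs degree $\Theta(\sqrt{M})$ by Markov/Chebyshev. Your observation that ``$M\fbs(f)\le\deg(f)^2$ is too weak'' is a symptom of this: the chain only makes sense once you account for the gadget degree, and with the correct degree $m\approx\frac{\pi}{2}\sqrt{M}$ the factor of $M$ on the left cancels against $m^2$ on the right, landing exactly at $\fbs(f)\le\frac{\pi^2}{4}\deg(f)^2$. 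Concretely, the paper takes $h(y)=\alpha+\frac{1-\alpha}{N}\sum_j y_j$ with $\alpha$ chosen so that $h(1^N)=1$ and $h((1^N)^j)=\cos(\pi/m)$, and then composes with the degree-$m$ Chebyshev polynomial $T_m$, which maps $[-1,1]\to[-1,1]$, $1\mapsto 1$, $\cos(\pi/m)\mapsto -1$. This $T_m\circ h$ is the ``version of promise-OR'' you are looking for: a genuinely total, bounded polynomial, of degree $m$, that amplifies a one-bit flip into a full $1\to-1$ swing.

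Your revised plan also mislocates the source of the constant $\pi^2/4$. You attribute it to \thm{sdeg} (sensitivity versus degree), but that theorem has constant $1$ (that is, $\s(f)\le\deg(f)^2$), and the same is true of \cor{bsdeg}. The $\pi^2/4$ is entirely the square of the leading constant in the Chebyshev gadget degree $m\approx\frac{\pi}{2}\sqrt{N}$, not an artifact of the sensitivity bound. Moreover, the revised plan's appeal to ``promise-OR has degree $O(1)$ on the promise'' cannot be used here: you need the composed function to be a bounded polynomial on \emph{all} of $\sB^{nN}$ in order to apply \cor{bsdeg}, so the gadget must be total and bounded, not merely correct on the promise slice. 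The tuned-fan-in idea on the primal/dual certificate side is not how the paper proceeds (it works directly with the rounded weights $w_B$ and allocates $Nw_B$ blocks of copies per block $B$, using the constraint $\sum_{B\ni i}Nw_B\le N$ to get disjointness); your dual route could plausibly be made to work but would need the same Chebyshev amplification step, and as written it does not close.
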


\begin{proof}
Our proof follows the strategy of \cite{ABK21}, which showed a
similar result for Boolean functions; they converted fractional
block sensitivity to non-fractional block sensitivity by composing
the function with large copies of the OR function.

We proceed as follows. Let $x$ maximize $\fbs(f,x)$
and let $\{w_B\}_B$ be an optimal weight scheme, so that
$\fbs(f)=\sum_B w_B\s^{(1)}(f,x,B)$.
Pick $N\in\bN$ to be a large integer (we will take the limit
as $N\to\infty$ at the end of the proof).
Round down the weights $w_B$ so that they
are rationals with denominator $N$; the rounding down ensures
the weight scheme stays feasible. The total weight is now some
$T\le \fbs(f)$, but as $N$ goes to infinity, $T$ can be made
arbitrarily close to $\fbs(f)$. Therefore, it suffices to show
$T\le\pi^2\deg(f)^2$, where $T$ is the objective value of a weight
scheme in which all weights are rational numbers with denominator $N$.

Let $m$ be the integer between $\pi\sqrt{N}/2$ and $\pi\sqrt{N}/2+1$.
Consider the function $h\colon\sB^N\to\bR$ defined by
\[h(y)=\alpha+\frac{1-\alpha}{N}\sum_{j=1}^N y_j,\]
where $\alpha=1-(N/2)(1-\cos(\pi/m))$.
Note that
$\cos x\ge 1-x^2/2$, so $\alpha\ge 1-(N/2)(\pi^2/2m^2)\ge 0$.
Hence $\alpha> 0$, and we also have $\alpha< 1$. From this it follows
that the range of $h$ is within $[-1,1]$, that is, $h$ is bounded.
Also, $h(1^N)=1$ and $h((1^N)^j)=\cos(\pi/m)$, where $(1^N)^j$ denotes
the string $1^N$ with index $j$ flipped to $-1$.

Let $T_m$ be the Chebyshev polynomial of degree $m$. This is a
single-variate real polynomial mapping $[-1,1]$ to $[-1,1]$,
satisfying $T_m(1)=1$ and $T_m(\cos(\pi/m))=-1$. Consider the
polynomial $T_m\circ h$; this is a bounded polynomial of degree $m$
in $N$ variables, and it satisfies $T_m\circ h(1^N)=1$ and
$T_m\circ h((1^N)^j)=-1$.

For $z\in\sB^{nN}$, we write $z=z_1z_2\dots z_n$ where $z_i\in\sB^N$.
Define the function $f'\colon\sB^{nN}\to[-1,1]$ by
\[f'(z)=f(x_1T_m(h(z_1)),x_2T_m(h(z_2)),\dots,x_n T_m(h(z_n))),\]
where on non-Boolean inputs, the function $f$ is defined by its
unique polynomial (which is in turn defined by the behavior of $f$
on $\sB^n$).
Recall that $x\in\sB^n$ is the input to $f$ which maximized $\fbs$,
which is a constant; the only variables are $z$.
Note that $\deg(f')\le m\deg(f)$, because we are plugging in
polynomials of degree $m$ into a polynomial of degree $\deg(f)$.

Observe that $f'(1^{nN})=f(x)$.
Now, with each block $B\subseteq[n]$, we associate $Nw_B$ blocks
of $f'$ (recall that $Nw_B$ is a non-negative integer since we
rounded the weights $w_B$). Moreover, we will ensure all the blocks
of $f'$ are disjoint. Given a block $B\subseteq[n]$ of $f$,
we can get one block $B'$ of $f'$ by picking a variable inside
the $N$-bit string $z_i$ for each $i\in B$ (that is, $B'$ will
contain $|B|$ variables in total). Observe that
$f'((1^{nN})^{B'})=f(x^B)$, since flipping a single bit of $z_i$
flips $T_m(h(z_i))$. Therefore, flipping such a block $B'$ will
behave the same with respect to $f'$ (at input $1^{nN}$)
as flipping $B$ behaved with respect to $f$ (at input $x$).

As mentioned, we pick $Nw_B$ blocks $B'$ for each block $B$ of $f$.
To ensure all these blocks are disjoint (even for different values
of $B$), we pick different bits inside the $N$-bit strings $z_i$
each time such a bit is required. Since the weights $w_B$
are feasible, we know that $\sum_{B\ni i} w_B\le 1$, so
$\sum_{B\ni i} Nw_B\le N$; therefore, the $N$ bits in $z_i$ suffice
to ensure that each block gets a unique bit inside each $z_i$.

We have now defined a set of disjoint blocks for $f'$.
The total sensitivity of all these blocks with respect to the input
$1^{nN}$ is exactly $\sum_B Nw_B\s^{(1)}(f,x,B)=NT$, where $T$
was the objective value of our set of weights ($T$ is arbitrarily
close to $\fbs(f)$). We therefore have
$\bs(f')\ge NT$. Using \cor{bsdeg}, we get
\[NT\le \bs(f')\le \deg(f')^2\le m^2\deg(f)^2
\le (\pi\sqrt{N}/2+1)^2\deg(f)^2.\]
Dividing both sides by $N$, we get
\[T\le (\pi^2/4+O(1/\sqrt{N}))\deg(f)^2.\]
Since $T$ can be made arbitrarily close to $\fbs(f)$ as $N$ gets
arbitrarily large, we must have $\fbs(f)\le (\pi^2/4)\deg(f)^2$,
as desired.
\end{proof}

\thm{fbsdeg} follows.

\section{Random restrictions for polynomials}
In this section, we show that functions which have low fractional block sensitivity (and therefore, low approximate degree and low quantum query complexity) become \emph{close} to a small-depth decision tree with high probability, after applying a sufficiently strong random restriction. This resolves an open problem in \cite{AIK22}, asking whether functions with low approximate degree become close to a DNF on applying a random restriction. We first define a notion of $f(x)$-certificates for a function $F:\{0,1\}^N \rightarrow \{0,1,\perp\}$, which is approximated to error 1/3 by a real-valued function $f$.

\begin{definition}[$f(x)$-certificate]\label{def:certificate}
    Let $F:\{0,1\}^N \rightarrow \{0,1,\perp\}$ be a partial function which is approximated to error 1/3 by a real-valued function $f$. Then a pair $(K, x)$ for $K \subseteq [N]$ is a 1-certificate for $F$ on input $x \in \{0,1\}^N$ if for all $y \in \{0,1\}^N$ such that $y_i = x_i$ for all $i \in K$, $f(y) > 1/2$. In particular, the bits of $x$ in $K$ certify that $F(x) \neq 0$. Similarly, a pair $(K, x)$ for $K \subseteq [N]$ is a 0-certificate for $f$ on input $x \in \{0,1\}^N$ if for all $y \in \{0,1\}^N$ such that $y_i = x_i$ for all $i \in K$, $f(y) \leq 1/2$.
\end{definition}

Next, we state a random restriction result shown by \cite{AIK22} for quantum query algorithms (with minor changes in parameters), which they use to show that efficient QMA-query algorithms become \emph{close} to a DNF in expectation over the choice of a suitably strong random restriction.

\begin{theorem}[Theorem 60 of \cite{AIK22}]\label{thm:randrestbqp}
    Let $f:\{0,1\}^N \rightarrow \{0,1,\perp\}$ be a partial function computable by a quantum query algorithm $Q$ making $T$ queries to the input. Pick a set $S \subseteq [N]$ where each index $i \in [N]$ is put in $S$ independently with probability $p$. Choose an arbitrary $k \in \mathbb{N}$ and $x \in \{0,1\}^N$. Define $\tau = \frac{2p^{3/4}T}{k}$ and $K = \{i \in S: m_i(x) > \tau\}$, where $m_i(x)$ is the query magnitude of $Q$ on index $i$ when the input is $x$. Then with probability at least $1-2e^{-k/6}$ over choice of $S$, $|K| \leq k$, $|S| \leq 2pN$ and for all $y \in \{0,1\}^N$ with $\{i \in [N]: x_i \neq y_i\} \subseteq S \setminus K$, we have:
    \begin{align*}
        |\Pr[Q(x) = 1] - \Pr[Q(y) = 1]| \leq 16Tp^{7/8}\sqrt{N/k}.
    \end{align*}
\end{theorem}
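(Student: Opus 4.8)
\emph{The plan is to reduce the statement to the BBBV hybrid argument together with two Chernoff bounds.} The hybrid argument of \cite{BBBV97}, phrased through the query magnitudes, says that if $Q$ makes $T$ queries and $y$ agrees with $x$ outside the set $E=\{i:x_i\ne y_i\}$, then the two final states $\ket{\psi_x},\ket{\psi_y}$ (just before the measurement) satisfy $\norm{\ket{\psi_x}-\ket{\psi_y}}\le 2\sum_{t=1}^{T}\sqrt{\sum_{i\in E}m_i^t(x)}$, because at the $t$-th query step the two runs can diverge only on the portion of the state supported on indices in $E$, which has norm $\sqrt{\sum_{i\in E}m_i^t(x)}$, and these per-step divergences telescope through the subsequent unitaries. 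Cauchy--Schwarz over the $T$ time steps, together with $m_i(x)=\sum_t m_i^t(x)$, upgrades this to $\norm{\ket{\psi_x}-\ket{\psi_y}}\le 2\sqrt{T\sum_{i\in E}m_i(x)}$; and since acceptance is the outcome of measuring a fixed projector, the output total-variation distance is at most the state distance, so $|\Pr[Q(x)=1]-\Pr[Q(y)=1]|\le 2\sqrt{T\sum_{i\in E}m_i(x)}$. Hence the whole theorem reduces to a purely combinatorial claim: with probability at least $1-2e^{-k/6}$ over the random set $S$, one has $|K|\le k$, $|S|\le 2pN$, and $\sum_{i\in E}m_i(x)\le 4p^{7/4}NT/k$ for \emph{every} $y$ with $E\subseteq S\setminus K$. (The last inequality is exactly what makes $2\sqrt{T\sum_{i\in E}m_i(x)}\le 16Tp^{7/8}\sqrt{N/k}$ after substituting $\tau=2p^{3/4}T/k$.)

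Next I would establish that combinatorial claim by counting and Chernoff bounds (\prop{chernoff}). Call $i$ \emph{heavy} if $m_i(x)>\tau$ and \emph{light} otherwise. Since $\sum_i m_i(x)\le T$, there are fewer than $T/\tau=k/(2p^{3/4})$ heavy indices; as $K$ consists of the heavy indices that land in $S$, its mean is at most $p^{1/4}k/2\le k/2$, and a Chernoff bound gives $\Pr[|K|>k]\le e^{-k/6}$ (with $(1+\delta)\mu=k$ one gets $\delta\ge 1$, so the exponent is at least $\tfrac13(k-\mu)\ge k/6$). Similarly $\E|S|=pN$, so $\Pr[|S|>2pN]\le e^{-pN/3}\le e^{-k/6}$ in the natural parameter regime (the only place a mild condition such as $k\le 2pN$ enters). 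Union-bounding, with probability at least $1-2e^{-k/6}$ both $|K|\le k$ and $|S|\le 2pN$ hold, and on this event every $y$ with $E=\{i:x_i\ne y_i\}\subseteq S\setminus K$ has all of $E$ light, whence
\[\sum_{i\in E}m_i(x)\ \le\ \sum_{i\in S\setminus K}m_i(x)\ \le\ \tau\,|S\setminus K|\ \le\ \tau\,|S|\ \le\ 2pN\tau\ =\ 4p^{7/4}NT/k.\]
Feeding this into the hybrid estimate above completes the proof.

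The genuinely delicate step is the first one: the hybrid argument must be carried out in terms of the \emph{cumulative} magnitudes $m_i(x)=\sum_t m_i^t(x)$, since only these are guaranteed to sum to at most $T$ and thus to interact correctly with the threshold $\tau$; the Cauchy--Schwarz move that turns $\sum_t\sqrt{\sum_{i\in E}m_i^t(x)}$ into $\sqrt{T\sum_{i\in E}m_i(x)}$ is precisely what produces the factor $T$ (rather than $\sqrt T$) in the final bound. Everything after that is bookkeeping --- tracking powers of $p$ through $\tau=2p^{3/4}T/k$ and verifying that each of the two Chernoff deviations is at most $e^{-k/6}$ --- and the constants are comfortably loose: the computation above actually yields a leading constant around $4$ rather than $16$, and (since this is Theorem~59 of \cite{AIK22} restated with slightly different constants) I would simply confirm that the stated $16$ and $1-2e^{-k/6}$ absorb all the slack.
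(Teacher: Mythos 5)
Your proposal is correct, and it follows the natural decomposition the paper itself uses for the analogous degree-based statement. The paper does not reprove Theorem~\ref{thm:randrestbqp} (it is imported from \cite{AIK22}), but its own Lemma~\ref{lem:restlowdeg} is a direct analogue whose proof has exactly your structure: a Chernoff bound showing $\Pr[|K|>k]\le e^{-k/6}$ because the number of heavy indices is at most $T/\tau=k/(2p^{3/4})$ so $\E|K|\le k p^{1/4}/2\le k/2$, a Chernoff bound for $\Pr[|S|>2pN]\le e^{-k/6}$, and then the pointwise bound $\sum_{i\in E} c_{x,i}\le\tau|S|\le 2pN\tau$ on the light indices. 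Your proof swaps the fractional-certificate weights $c_{x,i}$ for the query magnitudes $m_i(x)$ and replaces the final inequality $|\tilde f(x)-\tilde f(y)|\le\sum_{i\in E}c_{x,i}$ with the BBBV hybrid argument $|\Pr[Q(x)=1]-\Pr[Q(y)=1]|\le 2\sqrt{T\sum_{i\in E}m_i(x)}$; everything else is identical. A few small remarks: (1) your hybrid step is set up correctly --- replacing $O_y$ with $O_x$ from the front makes the state before the swapped query the pure-$x$ run, so the cumulative magnitudes $m_i(x)$ are the right quantities, and the phase oracle contributes the factor $2$ per step; (2) the acceptance-probability step ``TV distance $\le$ Euclidean state distance'' is correct (via trace distance $\sqrt{1-|\langle\psi_x|\psi_y\rangle|^2}\le\norm{\ket{\psi_x}-\ket{\psi_y}}$), and it is why you land on a leading constant of $4$ rather than the stated $16$; since $4\le 16$ this only means the stated bound has slack, which does not affect correctness; (3) you are right that the second Chernoff bound quietly assumes $pN\ge k/2$ --- the paper's Lemma~\ref{lem:restlowdeg} proof makes the same unannotated jump, and the assumption is harmless in the parameter regime ($p=N^{-10/11}$, $k=\polylog N$) where the theorem is applied. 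Your flagging it explicitly is actually more careful than the paper's text.
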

Note further in \thm{randrestbqp} above that $|K| \geq 1$ with probability at most $\frac{k}{2}p^{1/4}$ by a union bound. In particular, if $k = \polylog(N)$ and $p$ is sufficiently small (say $1/N^{3/4}$), then with reasonably high probability over the choice of $S$ (say $1/N^{2/3}$), none of the indices $i$ such that $m_i(x) > \tau$ are included in $S$. Therefore, the restricted function becomes a constant function in this case.

We will need to reprove \thm{randrestbqp} for functions with low fractional block sensitivity. The proof is analogous to that of \cite{AIK22}, with query magnitudes of a quantum algorithm replaced by the weight assignment for fractional block sensitivity.

\begin{lemma}[Analogue of Theorem 60 of \cite{AIK22}]\label{lem:restlowdeg}
    Let $f:\{0,1\}^N \rightarrow \{0,1,\perp\}$ be a partial function which is approximated to error 1/3 by a real-valued function $\Tilde{f}$ of fractional block-sensitivity $F$. Pick a set $S \subseteq [N]$ where each index $i \in [N]$ is put in $S$ independently with probability $p$. Choose an arbitrary $k \in \mathbb{N}$ and $x \in \{0,1\}^N$. Define $\tau = \frac{2p^{3/4}F}{k}$ and $K = \{i \in S: c_{x,i} > \tau\}$, $c_{x,i}$ is the fractional certificate of $\Tilde{f}$ on index $i$ when the input is $x$. Then with probability at least $1-2e^{-k/6}$ over choice of $S$, $|K| \leq k$, $|S| \leq 2pN$ and for all $y \in \{0,1\}^N$ with $\{i \in [N]: x_i \neq y_i\} \subseteq S \setminus K$, we have:
    \begin{align*}
        |\Tilde{f}(x) - \Tilde{f}(y)| \leq \frac{4 p^{7/4} F N}{k}
    \end{align*}
\end{lemma}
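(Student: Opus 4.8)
The plan is to mirror the proof of Theorem 59 of \cite{AIK22}, replacing the quantum query magnitudes $m_i(x)$ of a query algorithm everywhere by the fractional certificate weights $c_{x,i}$ coming from \thm{fbsdeg} (equivalently, from the fact that $\fbs(\tilde f)=\mathsf{FC}(\tilde f)$ via LP duality), and tracking how the parameters change. Fix the input $x$ and let $\{c_{x,i}\}_{i\in[N]}$ be an optimal fractional certificate for $\tilde f$ at $x$, so that $\sum_i c_{x,i}\le F$ and $\sum_{i:x_i\ne y_i} c_{x,i}\ge |\tilde f(x)-\tilde f(y)|$ for every $y$. Define $\tau=2p^{3/4}F/k$ and the "heavy set" $K=\{i\in S: c_{x,i}>\tau\}$. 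The first step is to control the size of $S$ and of $K$: since $\E|S|=pN$, a Chernoff bound (\prop{chernoff}) gives $|S|\le 2pN$ except with probability $e^{-\Omega(pN)}$; and since there are at most $F/\tau = k/(2p^{3/4})$ indices with $c_{x,i}>\tau$, each included in $S$ independently with probability $p$, we have $\E|K|\le p\cdot k/(2p^{3/4}) = kp^{1/4}/2 \le k/2$, so another Chernoff bound gives $|K|\le k$ except with probability $e^{-\Omega(k)}$. Combining, with probability at least $1-2e^{-k/6}$ (after adjusting constants) we have both $|S|\le 2pN$ and $|K|\le k$.

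The second step is the core sensitivity estimate. Condition on an outcome of $S$ with $|S|\le 2pN$ and $|K|\le k$, and let $y$ be any input with $\{i: x_i\ne y_i\}\subseteq S\setminus K$. Then every bit on which $x$ and $y$ disagree lies in $S$ but not in $K$, hence has weight $c_{x,i}\le\tau$; moreover there are at most $|S\setminus K|\le |S|\le 2pN$ such bits. Therefore
\[
|\tilde f(x)-\tilde f(y)| \le \sum_{i:x_i\ne y_i} c_{x,i} \le |S\setminus K|\cdot\tau \le 2pN\cdot\frac{2p^{3/4}F}{k} = \frac{4p^{7/4}FN}{k}.
\]
Finally, to pass from $\tilde f$ to the partial Boolean function $f$ it approximates: assume (as one may, by the parameter regime in which this lemma is used, and as in \cite{AIK22}) that $4p^{7/4}FN/k$ is small enough that $|\tilde f(x)-\tilde f(y)|<1/3$; then since $f$ is the $1/3$-approximant of $\tilde f$ and both $f(x),f(y)\in\{0,1\}$ whenever defined, $|\tilde f(x)-\tilde f(y)|<1/3$ forces $f(x)=f(y)$ on the domain of $f$ (if either is outside the domain we count equality as holding, exactly as in the footnote to \thm{Qswitching}). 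Hence $|f(x)-f(y)|=0\le 4p^{7/4}FN/k$, which is the claimed bound; but one should actually just state the conclusion as the bound on $|\tilde f(x)-\tilde f(y)|$ transferred to $|f(x)-f(y)|$ under the approximation, matching how the lemma is invoked downstream.

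The main obstacle — really the only non-routine point — is the transition from the weighted sum $\sum_{i:x_i\ne y_i} c_{x,i}$ back to the difference of the \emph{function values}: in \cite{AIK22} this is where the BBBV hybrid argument does work, showing that changing bits of small query magnitude barely moves $\Pr[Q=1]$. Here that step is free, because it is baked into the defining property of the fractional certificate from \thm{fbsdeg} (which in turn is where the approximation-theory and LP-duality work was done). So the remaining care is purely bookkeeping: checking that the Chernoff tail for $|K|$ uses $\E|K|\le k/2$ rather than $\le k$ (which is why we need the extra $p^{1/4}$ slack in $\tau$), and verifying the final probability bound $1-2e^{-k/6}$ after folding the two Chernoff bounds together under the assumption that $pN$ is not much smaller than $k$ in the intended parameter regime.
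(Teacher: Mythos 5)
Your proposal is correct and follows essentially the same route as the paper's proof: bound $\E[|K|]\le k/2$ (via $\sum_i c_{x,i}\le F$ and the threshold $\tau$), apply Chernoff to both $|K|$ and $|S|$, and then use the fractional-certificate inequality $\sum_{i:x_i\ne y_i}c_{x,i}\ge|\tilde f(x)-\tilde f(y)|$ together with $c_{x,i}\le\tau$ on $S\setminus K$ and $|S|\le 2pN$. Your observations — that the BBBV hybrid step is replaced outright by the defining property of the fractional certificate, that the bound is properly about $\tilde f$ rather than $f$, and that the Chernoff tail for $|S|$ only gives $e^{-k/6}$ when $pN\gtrsim k$ (true in the regime where the lemma is invoked) — are all accurate and in fact make explicit a couple of points the paper's write-up leaves implicit.
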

\begin{proof}
    Define $H = \{i \in [N]: c_{x,i} > \tau\}$, so that $K = H \cap S$. Note that since $\sum_{i=1}^n c_{x,i} \leq F$, so $|H| \leq \frac{k}{2p^{3/4}} \leq \frac{k}{2p}$. Since $K \subseteq H$, we know that $|K| \leq |H| \leq \frac{k}{2p}$. So, $\E[|K|] \leq p\frac{k}{2p} = \frac{k}{2}$. Therefore on applying a Chernoff bound (\prop{chernoff}),
    \begin{align*}
        \Pr[|K| \geq k] \leq e^{-k/6}
    \end{align*}
    In addition, since every index $i \in [N]$ is put in $S$ independently with probability $p$, so $\E[|S|] \leq pN$. Again by a Chernoff bound (\prop{chernoff}),
    \begin{align*}
        \Pr[|S| \geq 2pN] \leq e^{-k/6}
    \end{align*}
    Finally, assuming $|S| \leq 2pN$, for every $y \in \{0,1\}^N$ with $\{i \in [N]: x_i \neq y_i\} \subseteq S \setminus K$, we have:
    \begin{align*}
        |\Tilde{f}(x) - \Tilde{f}(y)| &\leq \sum_{i:S\setminus K} c_{x,i} \\
        &\leq \tau |S| \\
        &\leq \frac{4 p^{7/4} F N}{k}
    \end{align*}
    Therefore by a union bound, with probability at least $1-2e^{-k/6}$ over choice of $S$, the statement holds.
\end{proof}

Similar to the remark we made for \thm{randrestbqp} above, we have for \lem{restlowdeg} that $|K| \geq 1$ with probability at most $\frac{k}{2}p^{1/4}$ by a union bound. In particular, if $k = \polylog(N)$ and $p$ is sufficiently small (say $1/N^{3/4}$), then with reasonably high probability over the choice of $S$ (say $1/N^{2/3}$), none of the indices $i$ such that $c_{x,i} > \tau$ are included in $S$. Therefore, the restricted function becomes a constant function in this case.

\begin{theorem}\label{thm:restlowdeg}
    Let $f:\{0,1\}^N \rightarrow \{0,1,\perp\}$ be a partial function which is approximated to error 1/3 by a real-valued function $\Tilde{f}$ of fractional block sensitivity $F$. Pick a set $S \subseteq [N]$ where each index $i \in [N]$ is put in $S$ independently with probability $p \leq (\frac{k}{48FN})^{4/7}$, where $k \in \mathbb{N}$. Then
    \begin{align*}
        \forall x, y \in \{0,1\}^N \Pr_S[g(y_{|S}) = f_{\rho}(y_{|S})] \geq 1-(2+\frac{k}{6})e^{-k/6}
    \end{align*}
    where $y_{|S}$ is the string $y$ restricted to indices in $S$, $\rho$ is a $p$-random restriction defined from $x$ and $S$ as follows:
    \begin{align*}
        \rho(i) = \begin{cases}
            * & \text{if $i \in S$} \\
            x_i & \text{otherwise}
        \end{cases}
    \end{align*}
    and $g$ is a width-$k^2$ DNF dependent on $\rho$ (defined explicitly in the proof). Note that if $f_{\rho}(y_{|S}) = \perp$, then $g$ is allowed to output 0 or 1 arbitrarily.
\end{theorem}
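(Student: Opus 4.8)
The plan is to construct the width-$k^2$ DNF $g$ (in fact a shallow decision tree) by an \emph{adaptive} iteration of the ``locate the heavy bits'' step underlying \lem{restlowdeg}. Fix $x,y\in\{0,1\}^N$; all probabilities are over $S$, and let $z\in\{0,1\}^N$ denote the string agreeing with $y$ on $S$ and with $x$ off $S$, so that $f_\rho(y_{|S})=f(z)$. Write $\tau=\tfrac{2p^{3/4}F}{k}$ and $M=F/\tau=\tfrac{k}{2p^{3/4}}$, and let $c_{w,\cdot}$ be the fractional certificate of $\tilde f$ at $w$ used in \lem{restlowdeg}, so $\sum_i c_{w,i}\le F$ and $\sum_{i:w_i\ne w'_i}c_{w,i}\ge|\tilde f(w)-\tilde f(w')|$ for all $w'$. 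Given $\rho=(x,S)$, the decision tree on $y_{|S}$ maintains a hybrid $x^{(t)}\in\{0,1\}^N$ and a queried set $J_t\subseteq S$, with $x^{(0)}=x$, $J_0=\emptyset$, and the invariant that $x^{(t)}$ agrees with $y$ on $J_t$ and with $x$ elsewhere. At round $t$ it computes $K_t=\{i\in S:c_{x^{(t)},i}>\tau\}\setminus J_t$ (this depends only on $x$, $S$, and the already-queried bits of $y$, which determine $x^{(t)}$); if $K_t=\emptyset$ it halts and outputs $1$ iff $\tilde f(x^{(t)})>1/2$, and otherwise it queries $y$ on all of $K_t$, sets $J_{t+1}=J_t\cup K_t$, updates $x^{(t+1)}$ accordingly, and proceeds. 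To bound the depth I add cutoffs: the tree aborts and outputs $0$ if $|S|>2pN$, or upon reaching round $\lceil k/6\rceil$, or if some $|K_t|>k$. Since $K_t\subseteq S$, the tree only reads $y_{|S}$.

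Every non-aborting branch makes at most $k$ queries in each of fewer than $k/6$ rounds, so the tree has depth below $k^2/6\le k^2$, and taking the OR over its accepting paths gives a width-$k^2$ DNF $g$ depending only on $\rho$. For correctness, suppose $|S|\le 2pN$ and the run halts at round $t^*$ without aborting. Then $K_{t^*}=\emptyset$, so $\{i\in S:c_{x^{(t^*)},i}>\tau\}\subseteq J_{t^*}$, whereas $x^{(t^*)}$ agrees with $z$ on $J_{t^*}$ and off $S$; hence $x^{(t^*)}$ and $z$ differ only at positions $i\in S\setminus J_{t^*}$, each having $c_{x^{(t^*)},i}\le\tau$. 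The fractional-certificate inequality then yields
\[|\tilde f(x^{(t^*)})-\tilde f(z)|\le\sum_{i:x^{(t^*)}_i\ne z_i}c_{x^{(t^*)},i}\le\tau|S|\le 2\tau pN=\frac{4p^{7/4}FN}{k}\le\frac{1}{12},\]
using $p\le(\tfrac{k}{48FN})^{4/7}$ in the last step. Since $\tilde f$ approximates $f$ to error $1/3$, this $1/12$ slack is enough to force $[\tilde f(x^{(t^*)})>1/2]=f(z)$ whenever $f(z)\ne\perp$, i.e.\ $g(y_{|S})=f_\rho(y_{|S})$.

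It remains to bound the abort probability. By \prop{chernoff}, $\Pr[|S|>2pN]\le e^{-k/6}$, exactly as in \lem{restlowdeg}. For the round-based cutoffs, the crucial observation is that the set of \emph{new} heavy positions revealed at round $t$, $\mathrm{Fresh}_t:=\{i:c_{x^{(t)},i}>\tau\}\setminus\bigcup_{s<t}\{i:c_{x^{(s)},i}>\tau\}$, has size at most $M=k/(2p^{3/4})$ (because $\sum_i c_{x^{(t)},i}\le F$), the $\mathrm{Fresh}_t$ are pairwise disjoint, $K_t=S\cap\mathrm{Fresh}_t$, and the history up to round $t$ reveals the $S$-membership only of positions outside $\mathrm{Fresh}_t$. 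Thus, conditioned on the history, $|K_t|$ is stochastically dominated by $\mathrm{Bin}(M,p)$ with mean at most $Mp=\tfrac12 kp^{1/4}\le k/2$, so $\Pr[|K_t|>k]\le e^{-k/6}$ by \prop{chernoff}, and a union bound over the at most $k/6$ rounds gives $\Pr[\exists t:|K_t|>k]\le\tfrac{k}{6}e^{-k/6}$; likewise $\Pr[K_t\ne\emptyset\mid\text{history}]\le 1-(1-p)^M\le Mp=\tfrac12 kp^{1/4}$, which is at most $1/e$ in the parameter regime of the theorem (this follows from $p\le(\tfrac{k}{48FN})^{4/7}$ whenever $k^8=O(FN)$, which covers every regime in which the theorem is applied), so the process survives $k/6$ rounds with probability at most $(\tfrac12 kp^{1/4})^{k/6}\le e^{-k/6}$. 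Adding the three abort probabilities gives the claimed bound $(2+\tfrac{k}{6})e^{-k/6}$.

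I expect the main difficulty to be the adaptivity. One round of \lem{restlowdeg} does not suffice, since replacing heavy bits of $x$ by bits of $y$ can create \emph{new} heavy bits on which the polynomial (or quantum algorithm) can still depend. Iterating is safe only because (i) correctness of the final output needs just the single event $|S|\le 2pN$ — not the full good event of \lem{restlowdeg} at the \emph{random} hybrid $x^{(t^*)}$, which we could never afford to union-bound over all $2^N$ inputs — and (ii) the successive ``fresh'' heavy sets are disjoint with independent $S$-membership, which is what makes a long run exponentially unlikely. Getting the cutoff at $k/6$ rounds to fit together with the per-round $|K_t|\le k$ bound and the final $1/12$ error slack is the delicate bookkeeping; the same argument with $c_{x,i}$ replaced by the query magnitudes $m_i(x)$ of a quantum algorithm gives the quantum switching lemma.
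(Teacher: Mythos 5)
Your proposal is correct and follows essentially the same iterative hybrid argument as the paper's proof: both repeatedly identify heavy coordinates via the fractional-certificate weights, replace a random subset of them with bits of $y$, and bound the number of rounds and per-round sizes using Chernoff plus a union bound over the same three bad events. The only cosmetic difference is that you build the width-$k^2$ DNF from the accepting paths of an explicitly constructed adaptive decision tree, while the paper defines $g$ directly as a disjunction of small $1$-certificates and shows the iterated hybrid produces such a certificate with high probability.
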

\begin{proof}
    Choose arbitrary $x, y \in \{0,1\}^N$, and choose $S$ as described in the theorem statement to define $\rho$. Suppose $f$ is approximated to error 1/3 by a real-valued function $\Tilde{f}$ of fractional block sensitivity $F$. Let $C$ be the set of all 1-certificates for $\Tilde{f}$ after applying the restriction $\rho$. Then the DNF $g$ is defined as follows (this is the same DNF that \cite{AIK22} show closeness to):
    \begin{align*}
        g(y) = \bigvee_{\substack{(K_{x'},x') \in C \\ |K_{x'}| \leq k^2}}~\bigwedge_{i \in K_{x'}} y_i = x'_i
    \end{align*}
    Note that if $f_{\rho}(y_{|S}) = 0$, then $g$ outputs 0 because it will not find a 1-certificate in $y_{|S}$. In addition, we do not care about the output of $g$ when $f_{\rho}(y_{|S}) = \perp$. So now we only worry about the case when $f_{\rho}(y_{|S}) = 1$. Define $h(x) = \{i \in [N]: c_{x,i} > \tau\}$, where $\tau = \frac{2p^{3/4}F}{k}$ and $c_{x,i}$ is the fractional weight of $\Tilde{f}$ on index $i$ for input $x$. We now think of sampling $S$ in stages (instead of all at once), and we start by sampling $S_0$ for indices in $h(x)$. Note that the set $S$ is still sampled by putting each bit $i$ in $S$ independently with probability $p$, we only adopt this viewpoint of stages to analyze the sampling. In particular, since each bit is included in $S$ independently, we can analyze the random restriction by looking at a subset of bits at a time and seeing whether they were included in $S$ or not, without affecting the inclusion/exclusion of other bits, which we can analyze in the next step. Recall, we start by sampling $S_0$ for indices in $h(x)$. Define $A_0 = \emptyset$, $A_1 = \{i \in S_0: x_i \neq y_i\}$ and $x^{A_i}$ as the string $x$ with bits in $A_i$ flipped. Then $S$ is sampled further in stage $j$ as follows: $S_j$ is obtained by sampling indices in $h(x^{A_j}) \setminus \cup_{l \leq j-1}h(x^{A_{l}})$, and $A_{j+1} = \{i \in \cup_{l \leq j} S_l: x_i \neq y_i\}$. Note that $\forall~j~|h(x^{A_j})| \leq \frac{F}{\tau}$, and therefore by a union bound, $|S_j| \geq 1$ with probability at most $p\frac{F}{\tau} = \frac{k}{2}p^{1/4} < \frac{1}{e}$ over choice of $S_j$ (we assume that $k$ is sufficiently small, say $k < N^{1/8}$). If $|S_j| = 0$, then we stop the stage-wise analysis and make a decision for all the input bits we haven't analyzed yet to get the full set $S$. Therefore, the probability that we reach stage $\frac{k}{6}$ of the sampling to sample $S_{\frac{k}{6}}$ is at most $e^{-k/6}$, since the sampling in each stage is performed independently of the previous stages. Note that the decision of whether or not to put any given bit in $S$ is independent of the other bits, by stages of the sampling we only change the order in which we analyze this decision. Further, in each stage of sampling, by \lem{restlowdeg}, $|S_j| > k$ with probability at most $e^{-k/6}$, thus the probability that any of $S_j$ has size more than $k$ for $j < k/6$ is at most $\frac{k}{6}e^{-k/6}$ by union bound. Therefore, with probability at least $1-(\frac{k}{6}+1)e^{-k/6}$, $A_{k/6} = A_{k/6-1}$ and $|\cup_{l \leq k/6-1} S_l| \leq \frac{k}{6}k$ (since $|S_j| \leq k$ for all $j \leq k/6-1$). Now we sample the remaining indices of $S$, and by \lem{restlowdeg}, $|S| > 2pN$ with probability at most $e^{-k/6}$. So now we assume that $A_{k/6} = A_{k/6-1}$, $|\cup_{l \leq k/6-1} S_l| \leq \frac{k}{6}k$ and $|S| \leq 2pN$, which happens with probability at least $1-(\frac{k}{6}+2)e^{-k/6}$. For convenience, we now set $j = k/6-1$. Define $y' \in \{0,1\}^N$ as follows:
        \begin{align*}
            y'_i = \begin{cases}
                y_i & \text{if $i \in S$} \\
                x_i & \text{otherwise}
            \end{cases}
        \end{align*}
    Therefore $y'_{|S} = y_{|S}$ and $f(y') = f_{\rho}(y_{|S})$. Let $K = h(x^{A_j}) \cap S$. Note that $K \subseteq \cup_{l \leq j} S_l$, because $S_j$ was sampled from bits in $h(x^{A_j}) \setminus \cup_{l \leq j-1}h(x^{A_{l}})$ and the bits of $h(x^{A_j})$ in $\cup_{l \leq j-1}h(x^{A_{l}})$ were sampled in $\cup_{l \leq j-1} S_l$. So $|K| \leq \frac{k}{6}k$ because we assume $|\cup_{l \leq j} S_l| \leq \frac{k}{6}k$. From \lem{restlowdeg}, $|\Tilde{f}(x^{A_j}) - \Tilde{f}(z)| \leq 1/12$ for all $z$ which differ from $x^{A_j}$ only on indices in $S \setminus K$. In particular, $y'$ differs from $x^{A_j}$ only on bits in $S \setminus K$, because $x^{A_j}$ agrees with $y$ on all bits in $\cup_{l \leq j} S_l$ and $K$ is a subset of these bits. Therefore $|\Tilde{f}(x^{A_j}) - \Tilde{f}(y')| \leq 1/12$. Therefore, if $f(y') = 1$, then $\Tilde{f}(z) > 1/2$ for all $z$ which differ from $x^{A_j}$ only on bits in $S \setminus K$. Thus, $(K, x^{A_j})$ is a 1-certificate of size at most $\frac{k^2}{6}$ for $\Tilde{f}$ when bits outside of $S$ are fixed to those of $x$. Finally, $y_{|S}$ is consistent with this certificate, thus $g(y_{|S}) = 1$. Since our assumption holds true with probability at least $1-(\frac{k}{6}+2)e^{-k/6}$, therefore, for every $x, y \in \{0,1\}^N$, the statement in the theorem holds true.    
\end{proof}

\noindent The argument above from \thm{restlowdeg} actually shows that a given 1-input of $f$ is consistent with a small 1-certificate (of size $k^2$) with high probability over the choice of unrestricted bits. A symmetric argument can be made for 0-certificates as well. Thus the theorem can be restated as follows:

\begin{corollary}\label{cor:smallcert}
    Let $f:\{0,1\}^N \rightarrow \{0,1,\perp\}$ be a partial function which is approximated to error 1/3 by a real-valued function $\Tilde{f}$ of fractional block sensitivity $F$. Pick a set $S \subseteq [N]$ where each index $i \in [N]$ is put in $S$ independently with probability $p \leq (\frac{k}{48FN})^{4/7}$, where $k \in \mathbb{N}$. Then
    \begin{align*}
        \forall x, y \in \{0,1\}^N \Pr_S[C_{y_{|S}}(f_{\rho}) \leq k^2] \geq 1-(2+\frac{k}{6})e^{-k/6}
    \end{align*}
    where $y_{|S}$ is the string $y$ restricted to indices in $S$, $\rho$ is a $p$-random restriction defined from $x$ and $S$ as follows:
    \begin{align*}
        \rho(i) = \begin{cases}
            * & \text{if $i \in S$} \\
            x_i & \text{otherwise}
        \end{cases}
    \end{align*}
\end{corollary}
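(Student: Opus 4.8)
The statement is essentially a reformulation of \thm{restlowdeg}, and the plan is to read the certificate directly off the proof of that theorem rather than re-run the argument from scratch. Recall that in the proof of \thm{restlowdeg} one fixes $x,y\in\{0,1\}^N$, samples $S$ in stages, and (writing $j=k/6-1$) forms the set $K=h(x^{A_j})\cap S$. That argument shows that with probability at least $1-(2+\tfrac{k}{6})e^{-k/6}$ over the choice of $S$ the staged sampling ``stabilizes'' --- meaning $A_{k/6}=A_{k/6-1}$, $|\bigcup_{l\le k/6-1}S_l|\le (k/6)k$, and $|S|\le 2pN$ --- and that on this event one has $|K|\le k^2/6$ together with $|\tilde f(x^{A_j})-\tilde f(z)|\le 1/12$ (by \lem{restlowdeg}) for every $z$ agreeing with $x^{A_j}$ outside $S\setminus K$. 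Everything needed is already contained in this event, so there is nothing new to sample.

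On this good event I would verify that $(K,x^{A_j})$, restricted to the free coordinates of $\rho$, is a certificate for $f_\rho$ that is consistent with the input $y|_S$ and has length $|K|\le k^2$. Since $K\subseteq S$ this is an assignment to the variables of $f_\rho$, and $y|_S$ agrees with $x^{A_j}$ on $K$ (this is exactly the string $y'$ of the proof, which differs from $x^{A_j}$ only on $S\setminus K$). It remains to see the value is pinned down. If $f_\rho(y|_S)=f(y')=1$, the proof of \thm{restlowdeg} already shows $(K,x^{A_j})$ is a $1$-certificate. If $f_\rho(y|_S)=0$, the same chain of inequalities runs with the threshold reversed: using that $\tilde f$ is a $1/3$-approximation of $f$, $\tilde f(y')\le 1/3$ forces $\tilde f(x^{A_j})\le 1/3+1/12$ and hence $\tilde f(z)\le 1/2$ for every $z$ agreeing with $x^{A_j}$ off $S\setminus K$, so $(K,x^{A_j})$ is a $0$-certificate. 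If $f_\rho(y|_S)=\perp$ there is nothing to prove. In every case $\C(f_\rho,y|_S)\le k^2$ on an event of probability at least $1-(2+\tfrac{k}{6})e^{-k/6}$, which is the assertion of \cor{smallcert}.

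The only real content is the bookkeeping in the second paragraph: recognizing the object already built inside the proof of \thm{restlowdeg} as a genuine certificate of $f_\rho$, and supplying the $0$-input case, which the DNF formulation of \thm{restlowdeg} did not need to spell out (there a $0$-input is handled implicitly, since the DNF simply fails to match any term). There is no new probabilistic input --- the bad event and its bound $(2+\tfrac{k}{6})e^{-k/6}$ are inherited verbatim from \thm{restlowdeg} --- so I do not anticipate any obstacle beyond making the translation between ``$y|_S$ has a small $f_\rho(y|_S)$-certificate'' and ``$\C(f_\rho,y|_S)\le k^2$'' precise, together with fixing the convention that the statement is vacuous when $y|_S$ lies outside $\mathrm{dom}(f_\rho)$.
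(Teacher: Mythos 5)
Your proof is correct and matches the paper's own treatment: the paper derives this corollary by observing that the proof of Theorem~\ref{thm:restlowdeg} already exhibits the small $1$-certificate $(K,x^{A_j})$ on the good event, and appeals to a ``symmetric argument'' for $0$-certificates, which is exactly the reversal of inequalities ($\tilde f(y')\le 1/3 \Rightarrow \tilde f(x^{A_j})\le 5/12 \Rightarrow \tilde f(z)\le 1/2$) you spell out. Your bookkeeping of the bad-event probability and the size bound $|K|\le k^2/6\le k^2$ is likewise what the paper intends to inherit unchanged.
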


\noindent In particular, since we now have small certificates (with high probability) for the restricted function, we can construct a decision tree of small-depth which is correct with high probability over the choice of random restriction and a uniformly random input.

\begin{corollary}\label{cor:dectreelowdeg}
    Let $f:\{0,1\}^N \rightarrow \{0,1,\perp\}$ be a partial function which is approximated to error 1/3 by a real-valued function $\Tilde{f}$ of fractional block sensitivity $F$. Pick a set $S \subseteq [N]$ where each index $i \in [N]$ is put in $S$ independently with probability $p \leq (\frac{k}{48FN})^{4/7}$, where $k \in \mathbb{N}$. Then
    \begin{align*}
        \forall x, y \in \{0,1\}^N \Pr_S[g(y_{|S}) = f_{\rho}(y_{|S})] \geq 1-(2+\frac{k}{6})e^{-k/6}
    \end{align*}
    where $y_{|S}$ is the string $y$ restricted to indices in $S$, $\rho$ is a $p$-random restriction defined from $x$ and $S$ as follows:
    \begin{align*}
        \rho(i) = \begin{cases}
            * & \text{if $i \in S$} \\
            x_i & \text{otherwise}
        \end{cases}
    \end{align*}
    and $g$ is a decision tree of depth-$k^4$ (described explicitly in the proof), dependent on $\rho$. Note that if $f_{\rho}(y_{|S}) = \perp$, then $g$ is allowed to output 0 or 1 arbitrarily.
\end{corollary}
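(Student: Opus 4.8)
The plan is to turn the ``heavy-bits'' process from the proof of \thm{restlowdeg} into an explicit decision tree, so that the small certificate whose existence is asserted by \cor{smallcert} is actually \emph{found} by a shallow adaptive query procedure rather than merely shown to exist. Recall $h(x)=\{i\in[N]:c_{x,i}>\tau\}$ with $\tau=2p^{3/4}F/k$, so $|h(x)|\le F/\tau$ for every $x$ since $\sum_i c_{x,i}\le F$. Given a restriction $\rho$ (equivalently, a set $S$ together with off-$S$ values $x$), I would let $g$ be the decision tree that runs in stages: it keeps a flip set $A$, initially empty; in each stage it queries every not-yet-queried index of $h(x^{A})\cap S$ and then updates $A$ to the set of all queried indices $i$ on which the input disagrees with $x$; it halts once $h(x^{A})\cap S$ has no unqueried index, or once $k^{4}$ queries have been made. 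When it halts naturally with flip set $A$, it outputs $1$ if $\Tilde{f}(x^{A})>1/2$ and $0$ otherwise (this value is a fixed real number once $\rho$ and the query answers are fixed). Thus $g$ is a genuine decision tree determined by $\rho$ alone, of depth at most $k^{4}$.

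The first real step is a coupling: running $g$ on input $y_{|S}$ reproduces, stage for stage, the staged sampling of $S$ from the proof of \thm{restlowdeg}. By induction on the stage, the indices queried in stage $j$ are exactly $S_j$ and the flip set after stage $j$ is exactly $A_{j+1}$, since in both descriptions the new relevant indices are $h(x^{A_j})$ minus those already seen, intersected with $S$, and the flips are the coordinates where $y$ and $x$ disagree. Consequently the ``good event'' already analyzed there — the process stabilizes by stage $k/6$, every $|S_l|\le k$ for $l<k/6$, and $|S|\le 2pN$ — holds with probability at least $1-(2+\tfrac{k}{6})e^{-k/6}$ over $S$, for each fixed $x,y$.

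On the good event $g$ makes at most $(k/6)\cdot k=k^{2}/6\le k^{4}$ queries, so the cap never fires and $g$ halts naturally with some flip set $A=A_j$. Let $y'$ agree with $y$ on $S$ and with $x$ off $S$, so $f(y')=f_\rho(y_{|S})$. Exactly as in \thm{restlowdeg}, $y'$ agrees with $x^{A_j}$ on $K:=h(x^{A_j})\cap S$ (every index of $K$ has been queried, and is flipped in $x^{A_j}$ precisely when $y$ and $x$ disagree there), hence $y'$ differs from $x^{A_j}$ only on $S\setminus K$; then \lem{restlowdeg} (in the form its proof gives) yields $|\Tilde{f}(x^{A_j})-\Tilde{f}(y')|\le 4p^{7/4}FN/k\le 1/12$ by the choice of $p$. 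Since $\Tilde{f}$ approximates $f$ to error $1/3$, whenever $f_\rho(y_{|S})\in\{0,1\}$ the value $\Tilde{f}(y')$ lies outside $(1/3,2/3)$, so $\Tilde{f}(x^{A_j})$ lies on the same side of $1/2$ and $g(y_{|S})=f_\rho(y_{|S})$; when $f_\rho(y_{|S})=\perp$ the asserted equality holds by convention. This gives the claimed bound for each fixed $x,y$.

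I expect the main subtlety to be exactly this last bookkeeping: $g$ must depend only on $\rho$, while the process in \thm{restlowdeg} is described relative to a fixed input $y$. The resolution is that a decision tree discovers the flips $A_j$ by querying, so its branching encodes all inputs at once, and the coupling above certifies that the branch followed on $y_{|S}$ is precisely the one the earlier analysis controls. The stated depth $k^{4}$ is very generous (the tree makes only $O(k^{2})$ queries), but phrasing it this way avoids chasing constants.
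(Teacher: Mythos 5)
Your proof is correct, but it takes a genuinely different route from the paper's. The paper's proof of \cor{dectreelowdeg} treats \cor{smallcert} as a black box: it uses only the conclusion that $C_{y_{|S}}(f_\rho)\le k^2$ with high probability over $S$, and then builds the decision tree by the textbook $\D\le C_0\cdot C_1$ construction, i.e.\ for $k^2$ rounds it picks a consistent $0$-certificate of width $\le k^2$ and queries all of its coordinates, stopping when a certificate is found or all small $1$-certificates have been refuted; the depth $k^4$ then arises as $k^2\times k^2$. Your tree, by contrast, bypasses \cor{smallcert} entirely and directly realizes the staged heavy-coordinate process inside the proof of \thm{restlowdeg} as an adaptive query procedure: at each stage it queries the newly heavy coordinates of $h(x^{A})\cap S$, discovers the flip set $A$, and when the process stabilizes it outputs $\mathbbold{1}[\Tilde f(x^A)>1/2]$. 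The coupling you describe (the queried sets are exactly $S_0,S_1,\dots$ and the discovered flip sets are exactly $A_1,A_2,\dots$) is correct, and your verification that $y'$ and $x^{A_j}$ agree on $K=h(x^{A_j})\cap S$ after stabilization is the right way to transfer the $|\Tilde f(x^{A_j})-\Tilde f(y')|\le 1/12$ bound of \lem{restlowdeg}. What the two approaches buy: the paper's is more modular (any other route to small certificate complexity would immediately give the decision tree), while yours is more self-contained and gives a sharper depth bound, $O(k^2)$ on the good event rather than $k^4$ -- your cap of $k^4$ is, as you note, just slack. One minor point worth being explicit about: your tree depends on $\Tilde f$ in addition to $\rho$ (since it must evaluate $h(\cdot)$ and $\Tilde f(x^A)$), which is fine since the ensemble in \thm{Qswitching} is allowed to depend on $f$, but it is worth stating.
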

\begin{proof}
    We start by noting that every 0-certificate of the restricted function must contradict every 1-certificate of the restricted function at atleast one index. Therefore the decision tree $g$ works in the standard manner: pick a 0-certificate of length at most $k^2$ and query all the corresponding input bits. If the algorithm finds the certificate then stop and output 0, otherwise the algorithm learns one bit of every 1-certificate. Keep doing this for $k^2$ rounds (every time picking a 0-certificate of length $k^2$ which is consistent with the bits queried so far), till either it finds a 0 or 1-certificate (in which case it stops and outputs the correct bit) or every 1-certificate of length at most $k^2$ has been contradicted (in which case it outputs 0). In case the algorithm runs out of 0-certificates of length at most $k^2$ to try in less than $k^2$ rounds, then output 1. Suppose $f_{\rho}(y_{|S}) = \perp$, then we do not care about the output of the algorithm. If $f_{\rho}(y_{|S}) = 0$, then either every 1-certificate of length $k^2$ gets contradicted, so the algorithm outputs 0 and it is correct, or the algorithm runs out of small 0-certificates to try before contradicting all small 1-certificates. In the latter case the algorithm is wrong, but this can only happen when $C_{y_{|S}}(f_{\rho}) > k^2$, which happens with probability at most $(2+\frac{k}{6})e^{-k/6}$ from \cor{smallcert}. So now we consider the case when $f_{\rho}(y_{|S}) = 1$. In this case, if after $k^2$ rounds the algorithm finds a 1-certificate or if it runs out of small 0-certificates to try then it will output 1, so it will only be wrong if every 1-certificate of length at most $k^2$ has been contradicted. But this happens only when $C_{y_{|S}}(f_{\rho}) > k^2$, which happens with probability at most $(2+\frac{k}{6})e^{-k/6}$ from \cor{smallcert}.
\end{proof}

As a corollary of \thm{restlowdeg}, we can also conclude that functions which have low quantum query complexity also become close to a decision tree of small depth, after applying a random restriction.

\begin{theorem}\label{thm:restbqpstronger}
    \thm{restlowdeg} also holds for partial functions $f:\{0,1\}^N \rightarrow \{0,1,\perp\}$ of quantum query complexity $T$, when we set $p \leq (\frac{k}{48\pi^2T^2N})^{4/7}$ to obtain a width-$k^2$ DNF.
\end{theorem}
\begin{proof}
    Let $Q$ be the quantum query algorithm computing $f$ by making $T$ queries to the input. Let $\Tilde{f}$ be the polynomial of degree $d \leq 2T$ corresponding to $Q$ (from \lem{querytodeg}). Let $F$ be the fractional block sensitivity of $\Tilde{f}$. Then we know from \thm{fbstodeg}
    \begin{align*}
        F &\leq \frac{\pi^2}{4}d^2 \leq \pi^2T^2 \\
        p &\leq (\frac{k}{48\pi^2T^2N})^{4/7} \leq (\frac{k}{48FN})^{4/7}
    \end{align*}
    Therefore using \thm{restlowdeg}, the claim follows.
\end{proof}
\begin{theorem}\label{thm:dectreebqp}
    \cor{dectreelowdeg} also holds for partial functions $f:\{0,1\}^N \rightarrow \{0,1,\perp\}$ of quantum query complexity $T$, when we set $p \leq (\frac{k}{48\pi^2T^2N})^{4/7}$ to obtain a depth-$k^4$ decision tree.
\end{theorem}
\begin{proof}
    Follows similarly to \thm{restbqpstronger} and \cor{dectreelowdeg}.
\end{proof}

\paragraph{Remarks.} We now comment on why our proof for the switching lemma from \thm{restbqpstronger} does not immediately extend to $\QMA$-query algorithms, unlike Theorem 65 of \cite{AIK22}. Note that we need to iteratively analyse the random restriction, where in each step we flip some bits of the input $x$ according to the string $y$. As such, a priori it is unknown at what input $y'$ we will eventually run the quantum algorithm. Therefore, it is unclear how to fix a quantum proof $\ket{\psi}$ (which depends on $y'$) to provide to the $\QMA$-query algorithm. However, the output and query magnitudes of the algorithm will depend on the choice of this quantum proof. In \cite{AIK22} on the other hand, they only consider one input $x$, which is used both to set the restricted bits and the input to the unrestricted bits, so they can fix a quantum proof $\ket{\psi_x}$ according to the input $x$.

\section{Random projections for quantum query algorithms}
We now consider the effect of random projections on quantum query algorithms, and we start by showing the following theorem on block random restrictions (where we restrict an entire block of bits at a time, instead of restricting individual bits) for quantum query algorithms. This proof is essentially the same as \thm{restlowdeg}.

\begin{theorem}\label{thm:projbqpstronger}
    Let $f:\{0,1\}^{N \times l} \rightarrow \{0,1,\perp\}$ be a partial function computable by a quantum query algorithm making $T$ queries to the input. Pick a set $S \subseteq [N]$ where each index $i \in [N]$ is put in $S$ independently with probability $p = (\frac{k}{192^2T^2N})^{4/7}$, where $k \in \mathbb{N}$. Then
    \begin{align*}
        \forall x, y \in \{0,1\}^{N \times l} \Pr_S[g(y_{|S}) = f_{\rho}(y_{|S})] \geq 1-(2+\frac{k}{6})e^{-k/6}
    \end{align*}
    where $y_{|S}$ is the string $y$ restricted to blocks in $S$, $\rho$ is a $p$-block-random restriction defined from $x$ and $S$ as follows:
    \begin{align*}
        \rho(i,j) = \begin{cases}
            * & \text{if $i \in S$} \\
            x_{ij} & \text{otherwise}
        \end{cases}
    \end{align*}
    and $g$ is a width-$lk^2$ DNF dependent on $\rho$ (defined explicitly in the proof). Note that if $f_{\rho}(y_{|S}) = \perp$, then $g$ is allowed to output 0 or 1 arbitrarily.
\end{theorem}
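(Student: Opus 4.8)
The plan is to follow the proof of \thm{restbqpstronger} essentially verbatim, with individual bits replaced by blocks of $l$ bits throughout. First I would pass from the quantum query algorithm computing $f$ to its acceptance-probability polynomial $\tilde f$ of degree $d\le 2T$ (via \lem{querytodeg}), whose fractional block sensitivity satisfies $F=\fbs(\tilde f)\le \tfrac{\pi^2}{4}d^2\le \pi^2T^2$ by \thm{fbstodeg}. This is exactly the reduction used to derive \thm{restbqpstronger} from \thm{restlowdeg}, and the constant $192^2$ in the hypothesis is chosen with slack so that $p\le(\tfrac{k}{192^2T^2N})^{4/7}\le(\tfrac{k}{48FN})^{4/7}$, matching the hypothesis of the block analogue of \thm{restlowdeg} that we now set up.

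The one structural change is to aggregate the fractional certificate weights blockwise: for the $\fbs$-maximizing input $x$ with optimal fractional certificate $\{c_{x,(i,j)}\}_{i\in[N],j\in[l]}$, define $C_{x,i}=\sum_{j\in[l]}c_{x,(i,j)}$, so that $\sum_{i\in[N]}C_{x,i}\le F$ and, for any $y$ differing from $x$ on a set $S'$ of blocks, $|\tilde f(x)-\tilde f(y)|\le\sum_{i\in S'}C_{x,i}$. Declaring block $i$ \emph{heavy} for an input $w$ when $C_{w,i}>\tau$ with $\tau=\tfrac{2p^{3/4}F}{k}$, the block version of \lem{restlowdeg} goes through with the identical two Chernoff bounds: there are at most $F/\tau=\tfrac{k}{2p^{3/4}}$ heavy blocks, so at most $k$ of them land in $S$ except with probability $e^{-k/6}$, and $|S|\le 2pN$ except with probability $e^{-k/6}$; and for $y'$ differing from a hybrid $w$ only on non-heavy unrestricted blocks one gets $|\tilde f(w)-\tilde f(y')|\le\tau|S|\le\tfrac{4p^{7/4}FN}{k}\le\tfrac1{12}$.

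Then I would run the staged-sampling argument of \thm{restlowdeg} unchanged, now over blocks: sample the heavy blocks of $x$ first, flip the sampled blocks on which $x$ and $y$ disagree to form the hybrid $x^{A_1}$, recompute heavy blocks of the hybrid, sample the new ones, and iterate. Each stage yields a nonempty $S_j$ with probability at most $p\cdot F/\tau=\tfrac{k}{2}p^{1/4}<1/e$, so the process stabilizes within $k/6$ stages except with probability $e^{-k/6}$, and each $S_j$ has size at most $k$ except with probability $e^{-k/6}$ (a union bound over the $\le k/6$ stages costs $\tfrac{k}{6}e^{-k/6}$). On the good event the accumulated blocks $K=h(x^{A_j})\cap S$ satisfy $|K|\le\tfrac{k}{6}\cdot k=\tfrac{k^2}{6}$, the string $y'$ (equal to $y$ on $S$ and to $x$ off $S$) differs from $x^{A_j}$ only on $S\setminus K$, and hence $(K,x^{A_j})$, viewed as fixing the $\le l|K|\le lk^2$ input bits inside those blocks, is a $1$-certificate for $\tilde f$ after the block restriction whenever $f_\rho(y_{|S})=1$. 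Taking $g$ to be the width-$lk^2$ DNF that ORs over all such small block-certificates — and observing, as on the $0$-side of \thm{restlowdeg}, that $g$ correctly outputs $0$ when $f_\rho(y_{|S})=0$ since no small $1$-certificate is then consistent with $y_{|S}$ — collects a total failure probability of at most $(2+\tfrac{k}{6})e^{-k/6}$, as claimed.

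I do not expect a real obstacle here: the only point that needs (minor) care is checking that a block flagged heavy for the current hybrid and lying in $S$ must already be among the sampled blocks, so that at the fixed point $y'$ genuinely disagrees with $x^{A_j}$ only on non-heavy unrestricted blocks; since both heaviness and sampling are defined at the level of whole blocks, this is the same bookkeeping as in \thm{restlowdeg}, and the block structure only enlarges the DNF width from $k^2$ to $lk^2$ while leaving every probability estimate intact. The corresponding depth-$k^4$ decision-tree version then follows exactly as \thm{dectreebqp} follows from \cor{dectreelowdeg}.
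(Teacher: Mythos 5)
Your proof is correct, but it takes a genuinely different route from the paper's. The paper's proof of \thm{projbqpstronger} works directly with the BBBV query magnitudes of the quantum algorithm $Q$: it defines block query magnitudes $q_i(x)=\sum_{j\in[l]}m_{ij}(x)$, sets the heaviness threshold $\tau=\tfrac{2p^{3/4}T}{k}$ in terms of the query count $T$ (using $\sum_i q_i(x)\le T$), and invokes \thm{randrestbqp} (Theorem~59 of \cite{AIK22}, a hybrid-method statement) both for the Chernoff bound $|S_j|\le k$ at each stage and for the final deviation bound $|\Pr[Q(x^{A_j})=1]-\Pr[Q(z)=1]|\le 1/12$. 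You instead pass to the acceptance polynomial $\tilde f$ via \lem{querytodeg}, bound $F=\fbs(\tilde f)\le\pi^2T^2$ via \thm{fbstodeg}, aggregate fractional certificate weights blockwise, set $\tau=\tfrac{2p^{3/4}F}{k}$, and invoke a block analogue of \lem{restlowdeg}; the constant check $48F\le 48\pi^2T^2<192^2T^2$ confirms the stated $p$ still satisfies the needed inequality, so the numbers come out the same. Your route is a bit longer since it requires the polynomial machinery of Section~3, but it proves a strictly more general statement (any function of approximate degree $\le 2T$, not just quantum query algorithms) and better mirrors how the paper itself derives the non-block version \thm{restbqpstronger} from \thm{restlowdeg}; in a sense you have unified the two cases, whereas the paper switches to the hybrid-method route for the projection setting. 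Both approaches share the identical adaptive staged-sampling skeleton, so the structure and the failure-probability accounting are the same; your additional bookkeeping point about blocks sampled after they become heavy for the current hybrid is exactly the point the paper also relies on, and it is handled the same way.
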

\begin{proof}
    Choose arbitrary $x, y \in \{0,1\}^{N \times l}$, and choose $S$ as described in the theorem statement to define $\rho$. We will think of input bits as being divided into $N$ blocks, with each block having $l$ bits, and thus index the bits using $(i,j)$ where $i \in [N], j \in [l]$. Let $f$ be computable by the quantum query algorithm $Q$ and let $C$ be the set of all 1-certificates for the behaviour of $Q$ after applying the restriction $\rho$ (we can let $\Tilde{f}$ be the polynomial corresponding to $Q$ and look at the certificates for $\Tilde{f}$). Then the DNF $g$ is defined as follows (this is the same DNF that \cite{AIK22} show closeness to):
    \begin{align*}
        g(y) = \bigvee_{\substack{(K_{x'},x') \in C \\ |K_{x'}| \leq lk^2}}~\bigwedge_{i \in K_{x'}} y_i = x'_i
    \end{align*}
    We now define block query magnitudes for the quantum algorithm $Q$ on an input $x$ as $q_i(x) = \sum_{j=1}^l m_{ij}(x)$ for all $i \in [N]$, where $m_{ij}(x)$ is the query magnitude of the quantum algorithm on the $j^{th}$ bit in the $i^{th}$ block. We now think of sampling $S$ in a stage-wise manner similar to the proof of \thm{restlowdeg}, using $q_i(x)$ instead of $c_{x,ij}$ (which were the fractional certificates for a real-valued approximation of $f$). Following the same argument, define $h(x) = \{i \in [N]: q_i(x) > \tau\}$, where $\tau = \frac{2p^{3/4}T}{k}$ and $q_i(x)$ is the block query magnitude of $Q$ on index $i$ for input $x$. We now think of sampling $S$ in stages, and we start by sampling $S_0$ for indices in $h(x)$. Note that the set $S$ is still sampled by putting each index $i$ in $S$ independently with probability $p$, we only adopt this viewpoint of stages to analyze the sampling. In particular, since each index is included in $S$ independently, we can analyze the random restriction by looking at a subset of indices at a time and seeing whether they were included in $S$ or not, without affecting the inclusion/exclusion of other indices, which we can analyze in the next step. Recall, we start by sampling $S_0$ for indices in $h(x)$. Define $A_0 = \emptyset$, $A_1 = \{(i, j) \in S_0 \times [l]: x_{ij} \neq y_{ij}\}$ and $x^{A_i}$ as the string $x$ with bits in $A_i$ flipped. Then $S$ is sampled further in stage $j$ as follows: $S_j$ is obtained by sampling indices in $h(x^{A_j}) \setminus \cup_{l \leq j-1}h(x^{A_{l}})$, and $A_{j+1} = \{(i, j) \in (\cup_{m \leq j} S_m) \times [l]: x_{ij} \neq y_{ij}\}$. Note that $\forall~j~|h(x^{A_j})| \leq \frac{T}{\tau}$, and therefore by a union bound, $|S_j| \geq 1$ with probability at most $p\frac{T}{\tau} = \frac{k}{2}p^{1/4} < \frac{1}{e}$ over choice of $S_j$ (we assume that $k$ is sufficiently small, say $k < N^{1/8}$). If $|S_j| = 0$, then we stop the stage-wise analysis and make a decision for all the indices we haven't analyzed yet to get the full set $S$. Therefore, the probability that we reach stage $\frac{k}{6}$ of the sampling to sample $S_{\frac{k}{6}}$ is at most $e^{-k/6}$, since the sampling in each stage is performed independently of the previous stages. Note that the decision of whether or not to put any given index in $S$ is independent of the other indices, by stages of the sampling we only change the order in which we analyze this decision. Further, in each stage of sampling, by \thm{randrestbqp}, $|S_j| > k$ with probability at most $e^{-k/6}$, thus the probability that any of $S_j$ has size more than $k$ for $j < k/6$ is at most $\frac{k}{6}e^{-k/6}$ by union bound. Therefore, with probability at least $1-(\frac{k}{6}+1)e^{-k/6}$, $A_{k/6} = A_{k/6-1}$ and $|\cup_{m \leq k/6-1} S_m| \leq \frac{k}{6}k$ (since $|S_j| \leq k$ for all $j \leq k/6-1$). Now we sample the remaining indices of $S$, and by \thm{randrestbqp}, $|S| > 2pN$ with probability at most $e^{-k/6}$. So now we assume that $A_{k/6} = A_{k/6-1}$, $|\cup_{m \leq k/6-1} S_m| \leq \frac{k}{6}k$ and $|S| \leq 2pN$, which happens with probability at least $1-(\frac{k}{6}+2)e^{-k/6}$. For convenience, we now set $j = k/6-1$. Define $y' \in \{0,1\}^{N \times l}$ as follows:
        \begin{align*}
            y'_{ij} = \begin{cases}
                y_{ij} & \text{if $i \in S$} \\
                x_{ij} & \text{otherwise}
            \end{cases}
        \end{align*}
    Therefore $y'_{|S} = y_{|S}$ and $f(y') = f_{\rho}(y_{|S})$. Let $K = (h(x^{A_j}) \cap S) \times [l]$. Note that $K \subseteq (\cup_{l \leq j} S_l) \times [l]$, because $S_j$ was sampled from bits in $h(x^{A_j}) \setminus \cup_{l \leq j-1}h(x^{A_{l}})$ and the bits of $h(x^{A_j})$ in $\cup_{l \leq j-1}h(x^{A_{l}})$ were sampled in $\cup_{l \leq j-1} S_l$. So $|K| \leq l\frac{k}{6}k$ because we assume $|\cup_{l \leq j} S_l| \leq \frac{k}{6}k$. From \thm{randrestbqp}, $|\Pr[Q(x^{A_j}) = 1] - \Pr[Q(z) = 1]| \leq 1/12$ for all $z$ which differ from $x^{A_j}$ only on indices in $(S \times [l]) \setminus K$. In particular, $y'$ differs from $x^{A_j}$ only on bits in $(S \times [l]) \setminus K$, because $x^{A_j}$ agrees with $y$ on all bits in $(\cup_{l \leq j} S_l) \times [l]$ and $K$ is a subset of these bits. Therefore $|\Pr[Q(x^{A_j}) = 1] - \Pr[Q(y') = 1]| \leq 1/12$. Therefore, if $f(y') = 1$, then $\Pr[Q(z)] > 1/2$ for all $z$ which differ from $x^{A_j}$ only on bits in $(S \times [l]) \setminus K$. Thus, $(K, x^{A_j})$ is a 1-certificate of size at most $l\frac{k^2}{6}$ for behaviour of $Q$ when bits outside of $S$ are fixed to those of $x$. Finally, $y_{|S}$ is consistent with this certificate, thus $g(y_{|S}) = 1$. Since our assumption holds true with probability at least $1-(\frac{k}{6}+2)e^{-k/6}$, therefore, for every $x, y \in \{0,1\}^{N \times l}$, the statement in the theorem holds true.
\end{proof}

We now state an average case version of \thm{projbqpstronger}, with respect to two (potentially distinct) block-product distributions from which the strings $x$ and $y$ are sampled. An analogous proof for random restrictions where both the distributions are uniform will recover \cor{UniformSwitching}. Note below that $\{*_{1/2},1_{1/2}\}^l\setminus\{1\}^l$ denotes the product distribution where each bit is set independently to $\ast$ or $1$ with probability 1/2, conditioned on not setting every bit to $1$.

\begin{lemma}\label{lem:projdisagrdnf}
    Let $D_1 = \otimes_{i=1}^N D_{1i}$ and $D_2 = \otimes_{i=1}^N D_{2i}$ be two block-product distributions on $\{0,1\}^{N \times l}$. Let $\rho$ be a $p$-block-random restriction with underlying distribution $D_1$, where each block is sampled from $\{*_{1/2},1_{1/2}\}^l\setminus\{1\}^l$ with probability $p$ and from the corresponding block of $D_1$ otherwise. Let $D$ be the distribution induced by $D_2$ on the indices set to $*$ by $\rho$. Let $f:\{0,1\}^{N \times l} \rightarrow \{0,1,\perp\}$ be a partial function computable by a quantum query algorithm making $T$ queries to the input. Set $p = (\frac{k}{192^2T^2N})^{4/7}$, then there exists a DNF $g$ of width-$lk^2$ such that
    \begin{align*}
        \E_{\rho}[\Pr_{z \sim D}[f_{\rho}(z) \neq g(z)]] \leq (\frac{k}{6}+2)e^{-k/6}
    \end{align*}
    where $g$ is allowed to answer 0 or 1 arbitrarily if $f_{\rho}(z) = \perp$.
\end{lemma}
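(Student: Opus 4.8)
The plan is to deduce this from the worst-case block switching lemma \thm{projbqpstronger} by a coupling argument; no new estimate about quantum query magnitudes is needed. The key observation is that the restriction $\rho$ in the statement is a \emph{refinement} of the cruder block restriction used in \thm{projbqpstronger}: when a block is made ``active'' (probability $p$, independently), it is not turned entirely free, but instead has a random nonempty subset of its $l$ coordinates set to $*$ while the remaining coordinates are set to $1$. So I would view $\rho$ as ``the coarse block restriction $\sigma$ of \thm{projbqpstronger}, with some coordinates of the active blocks additionally pinned to $1$.''

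Concretely, I would re-express the randomness defining $(\rho,z)$ as four independent ingredients: a full string $\hat x\sim D_1$; a full string $\hat z\sim D_2$; for each block $i\in[N]$ an independent ``$*$-pattern'' $R_i\subseteq[l]$ distributed as the set of $*$-positions of a draw from $\{*_{1/2},1_{1/2}\}^l\setminus\{1\}^l$; and a set $S\subseteq[N]$ including each block independently with probability $p$. Then $\rho$ fixes each block $i\notin S$ to $\hat x_i$, fixes coordinate $(i,j)$ with $i\in S$ and $j\notin R_i$ to $1$, and leaves free exactly the coordinates $(i,j)$ with $i\in S$ and $j\in R_i$; moreover the distribution $D$ induced on these free coordinates is just the marginal of $\hat z$ there, so we may take $z=\hat z$ restricted to them. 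The crucial step is to introduce the auxiliary string $u\in\{0,1\}^{N\times l}$ with $u_{ij}=\hat z_{ij}$ if $j\in R_i$ and $u_{ij}=1$ otherwise: this $u$ depends on $(R,\hat z)$ but \emph{not} on $S$ or $\hat x$, and one checks directly that $f_\rho(z)$ equals $f$ with the blocks outside $S$ fixed to $\hat x$ and the blocks of $S$ set to $u_{|S}$; that is, $f_\rho(z)=f_\sigma(u_{|S})$, where $\sigma$ is precisely the coarse block restriction built from $S$ and $x:=\hat x$ that appears in \thm{projbqpstronger}.

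Next I would invoke \thm{projbqpstronger} with $x:=\hat x$ and $y:=u$, both of which are now fixed strings: it produces a width-$lk^2$ DNF $g_\sigma$, depending only on $\sigma$, with $\Pr_S[g_\sigma(u_{|S})=f_\sigma(u_{|S})]\ge 1-(2+k/6)e^{-k/6}$ --- this is exactly where the worst-case (``$\forall x,y$'') form of \thm{projbqpstronger} is used, and why it was essential that $u$ does not depend on $S$. I would then take $g$ to be $g_\sigma$ with the constant $1$ substituted for every variable $y_{ij}$ with $i\in S$ and $j\notin R_i$; substituting constants cannot increase the width, so $g$ is still a width-$lk^2$ DNF, and as a function of the genuinely free coordinates it satisfies $g(z)=g_\sigma(u_{|S})$. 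Hence $\{f_\rho(z)\ne g(z)\}$ is literally the event $\{f_\sigma(u_{|S})\ne g_\sigma(u_{|S})\}$ (with $\perp$-inputs counted as agreement on both sides, as in \thm{projbqpstronger}), and averaging while conditioning first on $(\hat x,R,\hat z)$ so that only $S$ is random gives
\[
\E_\rho\big[\Pr_{z\sim D}[f_\rho(z)\ne g(z)]\big]
=\E_{\hat x,R,\hat z}\big[\Pr_S[f_\sigma(u_{|S})\ne g_\sigma(u_{|S})]\big]
\le\big(\tfrac{k}{6}+2\big)e^{-k/6},
\]
which is the claimed bound. Here one uses that $D_1,D_2$ are block-product, so conditioning on $S$ and $R$ leaves the distributions of $\hat x$ (restricted to the blocks outside $S$) and of $z$ unchanged, and that $p=(k/192^2T^2N)^{4/7}$ is exactly the value required by \thm{projbqpstronger}.

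The hard part is not any quantitative estimate --- all of that already lives in \thm{projbqpstronger} --- but organizing the probability space so that its ``$\forall x,y$, probability over $S$'' form applies: one must recognize that the extra, $S$-independent bits of the active blocks (those pinned to $1$) can be folded into a single worst-case input $u$ for the coarse restriction, and that the DNF handed back by \thm{projbqpstronger} then specializes, by a harmless constant substitution, to a DNF of the same width in the remaining free variables.
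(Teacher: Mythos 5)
Your proposal is correct and is essentially the same argument as in the paper. The paper samples $x\sim D_1$, a $*$-pattern $y\sim(\{*_{1/2},1_{1/2}\}^l\setminus\{1\}^l)^{\otimes N}$, and $S$, then forms the string $r$ by $r_{ij}=z_{ij}$ when $y_{ij}=*$ and $r_{ij}=1$ otherwise, and applies \thm{projbqpstronger} worst-case on the pair $(x,r)$; this $r$ is exactly your $u$, your $R$ is the $*$-pattern $y$, and your substituted DNF $g$ is the paper's $g$. The only difference is that you make the coupling, the $S$-independence of $u$, and the role of the worst-case quantifiers more explicit, which is a clearer exposition of the same idea.
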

\begin{proof}
    We think of sampling $\rho$ as follows: Sample $x \sim D_1$, $y \sim (\{*_{1/2},1_{1/2}\}^l\setminus\{1\}^l)^{\otimes N}$ (call this distribution $D_3$) and $S \subseteq [N]$ by putting each index $i \in [N]$ in $S$ with probability $p$, and define $\rho$ from $x, y$ and $S$
    \begin{align*}
        \rho(i,j) = \begin{cases}
            y_{ij} & \text{if $i \in S$} \\
            x_{ij} & \text{otherwise}
        \end{cases}
    \end{align*}
    Let $h$ be the DNF of width $lk^2$ defined in \thm{projbqpstronger} corresponding to $x$ and $S$ and $g$ be the DNF obtained after fixing bits of $h$ that are additionally fixed by $\rho$ due to $y$. Then we need to upper bound the following quantity:
    \begin{align*}
        \E_{\rho}[\Pr_{z \sim D}[f_{\rho}(z) \neq g(z)]] &=  \Pr_{\rho}[\Pr_{z \sim D}[f_{\rho}(z) \neq g(z)]] \\
        &= \Pr_{x \sim D_1, y \sim D_3, S, z \sim D}[f_{\rho}(z) \neq g(z)] \\
        &= \Pr_{x \sim D_1, y \sim D_3, z \sim D_2, S}[f_{\rho}(z_{|S,y}) \neq g(z_{|S,y})] \\
        &\leq (\frac{k}{6}+2)e^{-k/6}
    \end{align*}
    where $z_{|S,y}$ refers to the string obtained by restricting $z$ to the indices set to $*$ by $\rho$. Let $r$ be the string obtained by setting $r_{ij} = z_{ij}$ when $y_{ij} = *$ and 1 otherwise. Then the last line follows by applying \thm{projbqpstronger} on strings $x$ and $r$ (note that $z_{|S,y} = r_{|S,y}$).
\end{proof}

In particular, on applying $\mathsf{proj}_{\rhoin}$ to a quantum query algorithm, the resulting function is close to a DNF of width $k^2$.
\begin{corollary}\label{cor:projdisagrdnf}
    Let $D_1 = \otimes_{i=1}^N D_{1i}$ be a block-product distribution on $\{0,1\}^{N \times l}$ and $D_2 = \otimes_{i=1}^N D_{2i}$ be a product distribution on $\{0,1\}^N$. Let $\rho$ be a $p$-block-random restriction with underlying distribution $D_1$, where each block is sampled from $\{*_{1/2},1_{1/2}\}^l\setminus\{1\}^l$ with probability $p$ and from the corresponding block of $D_1$ otherwise. Let $D$ be the distribution induced by $D_2$ on the indices set to $*$ by $\mathsf{proj}_{\rho}$. Let $f:\{0,1\}^{N \times l} \rightarrow \{0,1,\perp\}$ be a partial function computable by a quantum query algorithm making $T$ queries to the input. Set $p = (\frac{k}{192^2T^2N})^{4/7}$, then there exists a DNF $g$ of width-$k^2$ such that
    \begin{align*}
        \E_{\rho}[\Pr_{z \sim D}[\mathsf{proj}_{\rho}(f)(z) \neq g(z)]] \leq (\frac{k}{6}+2)e^{-k/6}
    \end{align*}
    where $g$ is allowed to answer 0 or 1 arbitrarily if $\mathsf{proj}_{\rho}(f)(z) = \perp$.
\end{corollary}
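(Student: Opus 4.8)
The plan is to mirror the proof of \lem{projdisagrdnf}, adding one new ingredient: a random \emph{projection} identifies all unrestricted bits inside a common block, and this is exactly what makes the DNF width drop from $lk^2$ to $k^2$. First I would sample $\rho$ as in the proof of \lem{projdisagrdnf} -- draw $x\sim D_1$, the pattern $y\sim D_3:=(\{*_{1/2},1_{1/2}\}^l\setminus\{1\}^l)^{\otimes N}$, and $S\subseteq[N]$ by including each block independently with probability $p$, setting $\rho(i,j)=y_{ij}$ for $i\in S$ and $\rho(i,j)=x_{ij}$ otherwise. For $z\in\{0,1\}^N$ let $\hat{z}\in\{0,1\}^{N\times l}$ be the ``spread out'' string that agrees with $x$ on inactive blocks $i\notin S$ and, for $i\in S$, has $\hat{z}_{ij}=z_i$ when $y_{ij}=*$ and $\hat{z}_{ij}=1$ when $y_{ij}=1$. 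The key identity is $\mathsf{proj}_\rho(f)(z)=f_{\rho'}(\hat{z}_{|S})$, where $\rho'$ is the block restriction that makes every block of $S$ entirely $*$ and agrees with $x$ outside $S$: substituting $z_i$ into the $*$-positions of block $i$ of $\rho$ feeds $f$ precisely the block $\hat{z}_i$.

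Next I would invoke \thm{projbqpstronger} on the pair $(x,\hat{z})$, with the same $S$ and the same value of $p$ (the two parameter choices coincide), obtaining, for every fixed $x,y,z$ and over the random $S$,
\[\Pr_S\big[\Phi(\hat{z}_{|S})=f_{\rho'}(\hat{z}_{|S})\big]\ \ge\ 1-\big(\tfrac{k}{6}+2\big)e^{-k/6},\]
where $\Phi$ is the width-$lk^2$ DNF from that theorem and equality is read generously on $\perp$-inputs. The point to verify is that $\Phi$ may be taken to be the disjunction only over \emph{block-aligned} $1$-certificates -- those whose index set has the form $B\times[l]$ with $B\subseteq S$ and $|B|\le k^2/6$. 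This holds because in the completeness part of the proof of \thm{projbqpstronger} the exhibited certificate already has exactly this form, with the bound $|B|\le k^2/6$ proved there, while the soundness direction -- a $0$-input of $f_{\rho'}$ is consistent with no genuine $1$-certificate -- is valid for any sub-collection of genuine $1$-certificates; so this restriction of $\Phi$ changes none of the displayed bounds.

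Then I would collapse $\Phi$ through the projection. In a block-aligned term $\bigwedge_{i\in B,\,j\in[l]}(\hat{z}_{ij}=x'_{ij})$ of $\Phi$, a position $j$ with $y_{ij}=1$ contributes the $z$-independent constant $[\,x'_{ij}=1\,]$, while a position $j$ with $y_{ij}=*$ contributes the literal $z_i=x'_{ij}$; since each block of $S$ has at least one $*$-position, every $i\in B$ genuinely constrains $z_i$. If some constant is false, or if for some $i$ the required values of $z_i$ conflict, the term is identically false on all strings of the form $\hat{z}_{|S}$ and may be dropped; otherwise it becomes a conjunction $\bigwedge_{i\in B}(z_i=b_i)$ over the block variables, of width $|B|\le k^2/6\le k^2$. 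Let $g=g_\rho$ be the resulting width-$k^2$ DNF (it depends on $\rho$, as does the DNF of \lem{projdisagrdnf}), so that $g_\rho(z)=\Phi(\hat{z}_{|S})$ for every $z$. Finally, taking expectations over $x\sim D_1$, $y\sim D_3$, $z\sim D$ in the displayed inequality -- only $z_{|S}\sim D$ is relevant, since $\hat{z}_{|S}$ depends on $z$ only through its coordinates in $S$ -- and substituting $g_\rho(z)=\Phi(\hat{z}_{|S})$ and $\mathsf{proj}_\rho(f)(z)=f_{\rho'}(\hat{z}_{|S})$ gives
\[\E_\rho\big[\Pr_{z\sim D}[\mathsf{proj}_\rho(f)(z)\ne g_\rho(z)]\big]\ \le\ \big(\tfrac{k}{6}+2\big)e^{-k/6},\]
which is the claim; the $\mathsf{proj}_\rho(f)(z)=\perp$ case is covered since \thm{projbqpstronger} already permits $\Phi$ (hence $g_\rho$) to answer arbitrarily there.

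I expect the only real obstacle to be the bookkeeping in the middle step -- confirming that the DNF of \thm{projbqpstronger} can be taken block-aligned with block-support at most $k^2/6$ rather than an arbitrary set of $lk^2$ bits -- since that is exactly the feature that lets the width collapse to $k^2$ once the projection identifies the bits inside each block. Everything else is a transcription of the proof of \lem{projdisagrdnf}.
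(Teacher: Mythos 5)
Your proposal is correct and follows essentially the same route as the paper: the paper's proof of this corollary is a one-liner that takes the width-$lk^2$ DNF $h$ from \lem{projdisagrdnf} (itself built from \thm{projbqpstronger}) and ``projects'' it, asserting without further comment that the result has width $k^2$. That assertion only holds because the DNF may be taken over block-aligned certificates (terms of the form $B\times[l]$ with $|B|\le k^2/6$), which is exactly the point you isolate and justify: the certificate exhibited in the completeness direction of \thm{projbqpstronger} is $K=(h(x^{A_j})\cap S)\times[l]$ with $|h(x^{A_j})\cap S|\le k^2/6$, and discarding the non-block-aligned terms cannot hurt soundness since the DNF ranges over genuine $1$-certificates. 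So your write-up is a more careful expansion of the paper's sketch rather than a different argument.

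One small technical slip worth flagging: you define $\hat z$ to agree with $x$ on blocks $i\notin S$, which makes $\hat z$ depend on $S$; but \thm{projbqpstronger} is a ``for all $x,y$, with probability over $S$'' statement, so the second string must be fixed before $S$ is drawn. This is harmless -- only $\hat z$ restricted to $S$-blocks is ever used, and you can (as the paper does with its string $r$ in the proof of \lem{projdisagrdnf}) take the $S$-independent extension $\hat z_{ij}=z_i$ when $y_{ij}=*$ and $\hat z_{ij}=1$ when $y_{ij}=1$ for \emph{all} $i$, which coincides with your $\hat z$ on $S$-blocks -- but as written the application of the theorem is formally out of order.
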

\begin{proof}
    Consider the DNF $h$ of width-$lk^2$ obtained from \lem{projdisagrdnf}, and define $g$ to be the DNF of width-$k^2$ obtained as a projection of $h$, i.e., $g = \mathsf{proj}(h)$, so every variable in $h$ from a given block is mapped to the same variable in $g$. Then the claim follows from \lem{projdisagrdnf}.
\end{proof}

\subsection{Random projections for \texorpdfstring{$\AC^0$}{AC0} circuits}
In our proof we rely heavily on the notation as well as results from \cite{HRST17}, therefore in this section we mention all the notation and results we use. Note that we reference the \href{https://arxiv.org/abs/1504.03398}{arXiv} version of the paper in the rest of the discussion.

\begin{definition}[$\SIP_d$ functions \cite{HRST17}]
    The $\SIP_d$ function is a depth $d$ read-once monotone formula with $n$ variables, with alternating layers of AND and OR gates. The bottom layer (adjacent to input variables) consists of AND gates, so the root is an OR gate if $d$ is even and AND gate otherwise. All the gates at a particular depth have the same fan-in, which is denoted by $w_i$ for gates of depth $i$. So the bottom fan-in is $w_{d-1}$ and the top fan-in is $w_0$. The parameters $t_k$ correspond to the bias of the distribution for the random projection for depth $k$, see \defn{subsequentproj}.
\begin{align*}
    w_{d-1} &:= m \\
    w &:= \lfloor m2^m/\log e \rfloor \\ \displaybreak
    q &:= \sqrt{p} = 2^{-m/2} = \Theta \left(\sqrt{\frac{\log w}{w}}\right) \\
    w_i &:= w \text{ for } 1\leq k \leq d-2 \\
    w_0 &:= \min_{i \in \mathbb{N}}\{(1-t_1)^{qi} \leq \frac{1}{2}\} = 2^m\ln(2)\cdot (1 \pm o_m(1)) \\
    \lambda &:= \frac{(\log w)^{3/2}}{w^{5/4}} \\
    t_{d-1} &:= \frac{p-\lambda}{q} = \Theta \left(\sqrt{\frac{\log w}{w}}\right) \\
    t_{k-1} &:= \frac{(1-t_k)^{qw}-\lambda}{q} = \Theta \left(\sqrt{\frac{\log w}{w}}\right) \text{ for } 2\leq k \leq d-1 \\
    n &= \frac{1\pm o_m(1)}{\log e} \cdot \left(\frac{m2^m}{\log e}\right)^{d-1}
\end{align*}
\end{definition}

Next we describe the addressing scheme used for the gates and input variables of $\SIP_d$ (which we will also use for $\SIP'_d$). Let $A_0 = \{\mathsf{output}\}$, and for $1 \leq k \leq d$, let $A_{k} = A_{k-1} \times [w_{k-1}]$. An element of $A_k$ specifies the address of a gate at depth $k$; so $A_{d} = \{{\mathsf{output}}\} \times [w_0] \times \cdots \times [w_{d-1}]$ is the set of addresses of the input variables. For some string $\tau \in \{0,1, \ast\}^{A_k} = \{0,1, \ast\}^{A_{k-1} \times [w_{k-1}]}$, use $\tau_a$ for $a \in A_{k-1}$ to denote the $a^{th}$ block of $\tau$ of length $w_{k-1}$. 

The symbol $\{0_p, 1_{1-p}\}^n$ denotes the random bit string of length $n$ where each bit is sampled independently, and is $0$ with probability $p$ and $1$ with probability $1-p$. The symbol $\{0_p, 1_{1-p}\}^n \setminus \{x\}$ denotes the product distribution conditioned on not outputting the string $x$. For our setting, $x$ will be either $0^n$ or $1^n$. The notation $x = a \pm b$ is shorthand for $x \in [a-b, a+b]$.

We modify the first random projection from that of \cite{HRST17}, to replace the bottom layer of quantum circuits in $\QAC^0$ circuits by DNFs. Our subsequent random projections are the same as that of \cite{HRST17}. Since the random projections are defined adaptively based on the outcome of the previous random projection, they define the notion of lift of a random restriction, which tells us the value taken by each gate in the bottom layer of the circuit to which the corresponding random projection has been applied, and this is used to decide how to sample the next random restriction.

\begin{definition}[Lift of a restriction, Definition 7 from \cite{HRST17}]
    Let $2\le k \le d$ and $\tau \in \{0,1,\ast\}^{A_{k}}$. Assume that the gates of $\SIP_d$ (or $\SIP'_d$) at depth $k-1$ are $\wedge$ gates (otherwise the roles of 0 and 1 below are reversed). The lift of $\tau$ is the string $\hat{\tau}\in \{0,1,\ast\}^{A_{k-1}}$ defined as follows:  for each $a \in A_{k-1}$, the coordinate $\hat{\tau}_a$ of $\hat{\tau}$ is
\[
\hat{\tau}_a =
\begin{cases}
0 & \text{if~} \tau_{a,i}=0 \text{~for any~}i \in [w_{k-1}]\\
1 & \text{if~} \tau_a = \{1\}^{w_{k-1}}\\
\ast & \text{if~} \tau_a \in \{\ast,1\}^{w_{k-1}} \setminus \{1^{w_{k-1}}\}.
\end{cases}
\]
\end{definition}

To show that the $\SIP_d$ function retains structure after applying a random projection, they define the notion of a typical restriction. Roughly, on applying a random projection corresponding to a typical restriction to $\SIP_d$, the depth of the formula is reduced by 1, and the bottom layer of $\wedge$ gates of $\SIP_d$ takes on values in $\{0,1,\ast\}$ such that the bottom fan-in of the projected formula is approximately $qw$. In addition, the fan-in of the layers above remains roughly the same. More generally, after applying a series of random projections to $\SIP_d$, all of which correspond to typical restrictions, applying the next random projection corresponding to a typical restriction ensures that the resulting formula has depth reduced by 1, bottom fan-in approximately $qw$ and fan-ins of all other layers remain roughly the same. \cite{HRST17} show that each of their random restrictions is typical with high probability, assuming that the previous restriction is typical. We will show that the first random restriction which we redefine, is also typical with high probability, and therefore all the subsequent random restrictions remain typical using the results of \cite{HRST17}. Note that in the discussion above and the following definition, we talk about projections corresponding to restrictions which are typical, \cite{HRST17} however, talk about projections corresponding to restrictions for which the lift is typical. So though \defn{typical} is identical to theirs, it is stated slightly differently here.

\begin{definition}[Typical random restriction, Definition 14 of \cite{HRST17}]\label{def:typical}
    Let $\tau \in \{0,1,\ast\}^{A_k}$ where $3 \leq k \leq d$. The restriction $\tau$ is typical if
    \begin{enumerate}
        \item (Bottom fan-in after projection $\simeq qw$) For all $a \in A_{k-2}$
        \begin{align*}
            |\hat{\tau}_a^{-1}(\ast)| = qw \pm w^{\beta(k-1,d)}~~~where~~~\beta(k,d) := \frac{1}{3} + \frac{d-k-1}{12d}
        \end{align*}
        \item (Preserves rest of the structure) For all $a \in A_{k-3}$
        \begin{align*}
             w_{k-3} - w^{4/5} \leq |(\hat{\hat{\tau}}_a)^{-1}(\ast)| \leq w_{k-3}
        \end{align*}
    \end{enumerate}
\end{definition}
These conditions also imply that all the gates in $A_{k-3}$ remain undetermined. This is because suppose $\tau$ is applied to a layer of $\wedge$ gates. Then by condition 1, the only values that $\vee$ gates in $A_{k-2}$ can get are $\ast$ or $1$ (so $\wedge$ gates in $A_{k-3}$ have inputs from $\ast$ and $1$). By condition 2, $\wedge$ gates in $A_{k-3}$ have at least one $\ast$ input, and since none of their inputs is $0$, they remain undetermined.

Finally, \cite{HRST17} show that if an OR function (or its restriction) is close to unbiased under some input distribution where each bit of the input is independent and identically distributed, then it has a small correlation with CNFs of small width under this distribution.

\begin{lemma}[Proposition 11.1 of \cite{HRST17}]\label{lem:corcnfor}
    Let $C: \{0,1\}^{n} \rightarrow \{0,1\}$ be a CNF of width-$r$ and $\tau \in \{0,1,\ast\}^{n}$. Let $\mathsf{OR}$ be the $\vee$ function on n bits and $\mathbf{Y} \leftarrow \{0_{1-p}, 1_{p}\}^{n}$ for $p \in [0,1]$, then
    \begin{align*}
        \Pr_{\mathbf{Y}}[(\mathsf{OR}_\tau (\mathbf{Y}) \neq C(\mathbf{Y})] \geq \mathsf{bias}(\mathsf{OR}_\tau, \mathbf{Y}) - rp
    \end{align*}
\end{lemma}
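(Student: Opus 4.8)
The plan is to strip off the trivial cases, reduce to the situation where the restriction $\tau$ leaves every coordinate free, and then exploit the observation that any clause of $C$ falsified by the all-zeros input must consist only of positive literals.

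First I would dispose of the degenerate cases. If $\tau$ fixes some coordinate to $1$ then $\mathsf{OR}_\tau\equiv 1$, and if $\tau$ contains no $\ast$ then $\mathsf{OR}_\tau\equiv 0$; in either case $\mathsf{bias}(\mathsf{OR}_\tau,\mathbf Y)=0$ and the inequality holds because its left-hand side is nonnegative. So assume $\tau\in\{0,\ast\}^n$ with $S:=\tau^{-1}(\ast)\neq\emptyset$. Now condition on the coordinates of $\mathbf Y$ outside $S$: fixing $\mathbf Y|_{[n]\setminus S}$ to any string $\sigma$ leaves $\mathsf{OR}_\tau$ equal to $\bigvee_{i\in S}Y_i$ and turns $C$ into a CNF $C^\sigma$ on the $S$-variables whose width is still at most $r$ (restricting a clause can only shrink it). Since $\mathsf{bias}(\mathsf{OR}_\tau,\mathbf Y)=\min\bigl((1-p)^{|S|},\,1-(1-p)^{|S|}\bigr)$ is unchanged, averaging the desired inequality over $\sigma$ shows it suffices to prove the statement when $S=[n]$, i.e. for $g:=\mathsf{OR}$ on all $n$ bits against an arbitrary width-$r$ CNF $C$.

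In this reduced setting set $\beta:=(1-p)^n=\Pr[g(\mathbf Y)=0]$, so $\mathsf{bias}(g,\mathbf Y)=\min(\beta,1-\beta)$. If $C(0^n)=1$ then $g$ and $C$ disagree at the all-zeros string, so $\Pr_{\mathbf Y}[g\neq C]\ge\Pr[\mathbf Y=0^n]=\beta\ge\mathsf{bias}(g,\mathbf Y)$ and we are done. If $C(0^n)=0$, fix a clause of $C$ falsified by $0^n$; a negated literal is satisfied by the value $0$, so this clause contains only positive literals and hence equals $\bigvee_{i\in T}y_i$ for some $T$ with $t:=|T|\le r$. For every $Y$ with $Y|_T=0^t$ we have $C(Y)=0$, while $g(Y)=1$ whenever $Y|_{[n]\setminus T}\neq 0^{n-t}$; therefore
\[
\Pr_{\mathbf Y}[g\neq C]\;\ge\;\Pr\!\left[Y|_T=0^t \text{ and } Y|_{[n]\setminus T}\neq 0^{n-t}\right]\;=\;(1-p)^t\left(1-(1-p)^{n-t}\right)\;=\;(1-p)^t-\beta.
\]
By Bernoulli's inequality $(1-p)^t\ge(1-p)^r\ge 1-rp$ since $t\le r$, so $\Pr_{\mathbf Y}[g\neq C]\ge(1-rp)-\beta=\Pr[g(\mathbf Y)=1]-rp\ge\mathsf{bias}(g,\mathbf Y)-rp$; averaging back over $\sigma$ completes the argument.

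I do not expect a genuine obstacle: the proof is short. The only points needing a little care are the reduction to the all-free case (checking that the clause width does not grow and the bias is preserved) and the boundary subcase $t=n\le r$, where the displayed estimate degenerates to $0$ but the claim is trivial anyway because then $\mathsf{bias}(g,\mathbf Y)\le 1-(1-p)^n\le np\le rp$.
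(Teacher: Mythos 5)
The paper cites this as Proposition 11.1 of \cite{HRST17} and does not reproduce a proof, so there is nothing in the paper itself to compare against; your argument is correct and is the standard proof. Conditioning on the non-$\ast$ coordinates preserves the bias and can only shrink clause widths, and after that the case split on $C(0^n)$ works exactly as you say (the only-positive-literals observation gives $(1-p)^r \ge 1-rp$); your exact computation of $\Pr[Y|_T=0^t \wedge Y|_{[n]\setminus T}\neq 0^{n-t}]$ is an equivalent replacement for the usual $\Pr[A\cap B]\ge \Pr[A]+\Pr[B]-1$ step.
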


In order to get an average case depth-hierarchy theorem for $\AC^0$ circuits, \cite{HRST17} show three properties for their sequence of random projections:
\begin{enumerate}
    \item The sequence of random projections completes to the uniform distribution.
    \item The target function $\SIP_d$ remains "hard" to compute after applying the sequence of random projections.
    \item The approximating circuit $C$ trying to compute $\SIP_d$ "simplifies" greatly after applying the sequence of random projections.
\end{enumerate}
We first re-establish Property 1 after modifying the initial random projection to be applied. Then we establish Property 2 for our modified $\SIP'_d$ function, by showing that the effect of the modified initial random projection on $\SIP'_d$ is roughly the same as the effect of the initial random projection of \cite{HRST17} on $\SIP_d$. Since our subsequent random projections are the same as theirs, we can conclude that $\SIP'_d$ retains structure under this sequence of random projections. Finally, we use our random projection result for quantum query algorithm to show that the bottom layer of a $\QAC^0$ circuit simplifies after applying the initial random projection. Then we can use the results of \cite{HRST17} to conclude that the resulting $\AC^0$ circuit also simplifies after applying the subsequent random projections.

The other results we use from \cite{HRST17} have been stated and used in the following subsection, to maintain continuity in the proof.

\subsection{Random projections for \texorpdfstring{$\QAC^0$}{QAC0} circuits}

We start by showing an analogue of \cite{furstParityCircuitsPolynomialtime1984} for $\QCPH$, to use a circuit lower bound for $\QAC^0$ to get oracle separation result for $\QCPH$. Note that \cite{AIK22} show a similar analogue for the $\QMA$-hierarchy.

\begin{proposition}[Analogue of \cite{furstParityCircuitsPolynomialtime1984}]\label{prop:qcphcirc}
    Let $L \subseteq \{0\}^{*}$ be some language decided by a $\QCSigma_i$ verifier $V$ with oracle access to $O$, and which has size at most $p(n)$ for inputs of length $n$. Then for every $n \in \bN$, there is a circuit $C$ of size at most $2^{\poly (n)}$ and depth $i+1$, where each layer upto depth $i$ is a layer of AND or OR gates, and depth $i+1$ contains $p(n)$ sized quantum circuits with query complexity at most $p(n)$, such that $\forall~x \in \{0,1\}^n$
    \begin{align*}
        V^O(x) = C(O_{[\leq p(n)]})
    \end{align*}
    where $O_{[\leq p(n)]}$ denotes the concatenation of bits of $O$ on all strings of length at most $p(n)$.
\end{proposition}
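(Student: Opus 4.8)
The plan is to mimic the classical argument of \cite{furstParityCircuitsPolynomialtime1984}, exactly as \cite{AIK22} did for the $\QMA$-hierarchy: convert the $i$ alternating classical quantifiers of the $\QCSigma_i$ verifier into $i$ alternating layers of unbounded-fan-in $\vee$/$\wedge$ gates, and hang one quantum query circuit off each bottom wire, namely the verifier with the corresponding proof tuple hard-wired. Fixing $n$ and writing $x=0^n$, the first observation is that since $V$ has size at most $p(n)$ it reads at most $p(n)$ bits of its proofs and makes at most $p(n)$ oracle queries, each to a string of length at most $p(n)$; hence $V^O(x)$ depends only on $O_{[\le p(n)]}$, a string of length $\sum_{\ell\le p(n)}2^{\ell}<2^{p(n)+1}=2^{\poly(n)}$, which I would take as the input to $C$. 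I would also pad every proof to length exactly $p(n)$.

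Next I would describe the circuit. Its root is a $\vee$ gate if $Q_1=\exists$ and a $\wedge$ gate otherwise, with fan-in $2^{p(n)}$, one wire per $y_1\in\{0,1\}^{p(n)}$; below each wire sits a gate of the opposite type for $Q_2$, again of fan-in $2^{p(n)}$, and so on for $i$ alternating layers, so that the wires leaving the bottom AND/OR layer are indexed by proof tuples $\vec y=(y_1,\dots,y_i)$. At the wire for $\vec y$ I would attach the quantum query circuit $Q_{\vec y}$ that runs $V_n$ on $(x,y_1,\dots,y_i)$, routing $V_n$'s at most $p(n)$ oracle queries into $O_{[\le p(n)]}$; by the $\QCSigma_i$ definition this is a bounded-error quantum query algorithm, of size $\le p(n)=\polylog(2^{\poly(n)})$ and query complexity $\le p(n)$, computing the partial Boolean function ``$\Pr[V_n^O(x,\vec y)\text{ accepts}]\ge 2/3$ versus $\le 1/3$'', which is exactly what a $\QAC^0$ bottom gate is allowed to be. The depth is then $i$ (the AND/OR layers) plus one (the quantum layer), and the size is $i\cdot 2^{ip(n)}\cdot p(n)=2^{\poly(n)}$ because $i$ is a constant.

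For correctness I would argue, exactly as in the classical case, by downward induction over the $i$ alternating layers, separately for yes- and no-instances, treating each bottom gate as its ideal partial Boolean function. If $x\in\ayes$ relative to $O$, completeness supplies a quantifier schedule — a witness $y_1$ for the leading $\exists$, then for every $y_2$ a witness $y_3$, and so on — along which every leaf reached has $\Pr[V_n^O\text{ accepts}]\ge 2/3$, so the corresponding gates output $1$; the induction then shows the sub-circuit below the witnessed $Q_1$-child evaluates to $1$, hence so does the root, and the values on non-witnessed branches cannot spoil this, since extra $1$'s do nothing at a layer of $\vee$ gates while a layer of $\wedge$ gates is already forced by its witnessed child. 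The no-instance case is symmetric via soundness (for every $y_1$ there is a $y_2$ whose sub-circuit is forced to $0$, so the top $\vee$ over $y_1$ is $0$), giving $V^O(x)=C(O_{[\le p(n)]})$.

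The one genuine subtlety — and the step I expect to require the most care — is that the $\QCSigma_i$ promise only controls the verifier's acceptance probability along the branches it explicitly quantifies over, so on off-promise proof tuples $Q_{\vec y}$ may accept with probability near $1/2$ and behave unpredictably. This is precisely the situation in the classical Furst--Saxe--Sipser proof, and the resolution is identical: the alternating AND/OR skeleton is robust, so a spurious value on a non-witnessed branch can only help at a layer of $\vee$ gates and is neutralized by the witnessed $0$ at a layer of $\wedge$ gates, and therefore never flips the final output. Everything else — the gate-size bookkeeping, the depth count, and the $2^{\poly(n)}$ size bound for constant $i$ — is routine.
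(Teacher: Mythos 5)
Your construction is the same as the paper's --- $i$ alternating unbounded-fan-in OR/AND layers indexed by proof strings, followed by a layer of quantum verifier circuits with $x$ and the root-to-leaf proof tuple hardcoded --- with matching size, depth, and query-complexity bookkeeping, so the proof is correct. One small inaccuracy worth flagging: the off-promise subtlety you raise is not ``precisely the situation'' in classical Furst--Saxe--Sipser, since there the $\Sigma_i$ verifier is deterministic and total so no partial-function issue arises at the leaves; that said, your resolution via the witnessing schedule is sound and the argument goes through.
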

\begin{proof}
    This proof is the same as that of \cite{furstParityCircuitsPolynomialtime1984}. The $j^{th}$ layer of the circuit $C$ ($j \leq i$) contains OR gates if $Q_j$ is an existential quantifier and AND gates otherwise. If the verifier $V$ takes $i$ proofs each of length $q(n)$, then each gate upto layer $i$ has fan-in $2^{q(n)}$, with one wire each for each possible choice of proof. For layer $i+1$, each gate is the quantum verifier $V$ with the input $x$ and the proofs corresponding to the root-to-gate path in $C$ hardcoded in it. This gate then makes quantum queries to the oracle $O$, and since this circuit has size at most $p(n)$, it can not make queries of length greater than $p(n)$ to $O$. Thus we can consider the circuit $C$ as taking inputs all the bits of $O_{[p(n)]}$ and outputting the decision of the $\QCSigma_i$ verifier. Since $q(n)$ is a polynomial, and this circuit has constant number of layers, the size of the circuit is at most $2^{\poly (n)}$.
\end{proof}

We now describe the modification of $\SIP_d$ function (used in \cite{HRST17}) that we show the $\QAC^0$ lower bound for. The modified function which we use will be called $\SIP'_d$ and is exactly the same as $\SIP_d$ defined previously, except for the bottom two fan-ins. The number of variables for $\SIP'_d$ will be denoted $N$.
\begin{align*}
    w &:= \lfloor m2^m/\log e \rfloor \\
    w_{d-2} &:= qwN^{5/7} \\
    q' &:= 1/N^{5/7} \\
    x &:= \frac{1}{w_{d-2}w^{1/4}} \\
    p_1 &:= x + q'(\frac{p-\lambda}{q}) \\ 
    w_{d-1} &:= - \log_2(p_1) = \polylog(N) \\
    N &= \frac{q}{q'}.\frac{n}{m}\log_2(\frac{1}{p_1})
\end{align*}

Note that $N$ and $n$ are polynomially related. We also redefine the first random projection that is applied to $\SIP'_d$, in order to ensure that $\SIP'_d$ retains structure, while the quantum query algorithms become simple enough to be replaced by DNFs.
\begin{definition}[Restriction for initial random projection]\label{def:initdist}
    The underlying random restriction $\rhoin$ on $\{0,1,*\}^{A_{d-1} \times [w_{d-1}]}$ is defined as follows: independently for each $a \in A_{d-1}$
\begin{align*}
    \rho(a) \leftarrow \begin{cases}
            \{1\}^{w_{d-1}} & \text{with probability $x$} \\
            \{*_{1/2},1_{1/2}\}^{w_{d-1}}\setminus\{1^{w_{d-1}}\} & \text{with probability $q'$} \\
            \{0_{1/2},1_{1/2}\}^{w_{d-1}}\setminus\{1^{w_{d-1}}\} & \text{with probability $1-x-q'$}
        \end{cases}
\end{align*}
We will call this distribution on random restrictions as $\mathcal{R}_{\mathsf{init}}$.
\end{definition}

We first show that the projection defined using $\rhoin \leftarrow \mathcal{R}_{\mathsf{init}}$ on composition with the subsequent random projections of \cite{HRST17} completes to the uniform distribution on $\{0,1\}^N$.

\begin{lemma}[Analogue of Lemma 8.2 of \cite{HRST17}]\label{lem:uniform}
    Let $\rho \leftarrow \mathcal{R}_{\mathsf{init}}$ and ${\mathbf{Y}} \leftarrow \{0_{1-t_{d-1}}, 1_{t_{d-1}}\}^{A_{d-1}}$, and let the string $\mathbf{X} \in \{0,1\}^N$ be defined as follows:
    \begin{align*}
        \mathbf{X}_{a,i} = \begin{cases}
            \mathbf{Y}_a & \text{if $\rho_{a,i} = *$} \\
            \rho_{a,i} & \text{otherwise}
        \end{cases}
    \end{align*}
    Then each bit of $\mathbf{X}$ is independent and distributed uniformly at random.
\end{lemma}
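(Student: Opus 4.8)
The plan is to reduce the claim to a single-block computation and then verify one definitional identity. First I would note that the blocks $\{\mathbf{X}_{a,\cdot}\}_{a \in A_{d-1}}$ are mutually independent: by definition of $\mathcal{R}_{\mathsf{init}}$ the restrictions $\rho(a)$ are drawn independently across $a$, the bits $\mathbf{Y}_a$ are independent across $a$, and each block $\mathbf{X}_{a,\cdot}$ is a deterministic function of $(\rho(a),\mathbf{Y}_a)$ alone. Hence it suffices to prove that for each fixed $a$ the block $(\mathbf{X}_{a,1},\dots,\mathbf{X}_{a,w_{d-1}})$ is distributed uniformly on $\{0,1\}^{w_{d-1}}$; this immediately yields that every bit of $\mathbf{X}$ is independent and uniform.

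To compute the block distribution I would fix a target $z \in \{0,1\}^{w_{d-1}}$ and condition on which of the three branches of $\mathcal{R}_{\mathsf{init}}$ the restriction $\rho(a)$ took, writing $w := w_{d-1}$ and $t := t_{d-1}$ for brevity. In the all-ones branch (probability $x$) the block equals $1^w$ with certainty. In the $\{*_{1/2},1_{1/2}\}^w\setminus\{1^w\}$ branch (probability $q'$): a coordinate $i$ with $z_i = 0$ forces $\rho_{a,i} = *$ and $\mathbf{Y}_a = 0$, while a coordinate with $z_i = 1$ forces either $\rho_{a,i} = 1$ or else $\rho_{a,i} = *$ with $\mathbf{Y}_a = 1$; a short case split on whether $z = 1^w$ shows this branch contributes $t$ when $z = 1^w$ and $(1-t)\cdot\tfrac{1}{2^w-1}$ when $z \ne 1^w$ (the $2^w-1$ comes from the conditioning, and one must check that the forced restriction string is itself $\ne 1^w$, which holds precisely because $z$ has a zero coordinate). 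In the $\{0_{1/2},1_{1/2}\}^w\setminus\{1^w\}$ branch (probability $1-x-q'$) there are no stars, so the block is a uniform string conditioned on being $\ne 1^w$, contributing $0$ for $z = 1^w$ and $\tfrac{1}{2^w-1}$ otherwise.

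Adding the three contributions gives $\Pr[\mathbf{X}_{a,\cdot} = 1^w] = x + q't$ and, for every $z \ne 1^w$, $\Pr[\mathbf{X}_{a,\cdot} = z] = \tfrac{q'(1-t) + (1-x-q')}{2^w-1} = \tfrac{1-(x+q't)}{2^w-1}$. So everything reduces to the single identity $x + q't_{d-1} = 2^{-w_{d-1}}$: granting it, the first probability is exactly $2^{-w}$ and the second is $\tfrac{1-2^{-w}}{2^w-1} = 2^{-w}$ as well, so the block is uniform. This identity is built into the parameters of $\SIP'_d$: by definition $p_1 = x + q'\cdot\tfrac{p-\lambda}{q} = x + q't_{d-1}$ (using $t_{d-1} = \tfrac{p-\lambda}{q}$), and $w_{d-1} = -\log_2 p_1$ gives $p_1 = 2^{-w_{d-1}}$. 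I expect the only real care needed to be the bookkeeping in the middle branch — keeping the ``$\ne 1^{w_{d-1}}$'' conditioning consistent and separating the $z = 1^{w_{d-1}}$ case from the rest; everything else is elementary arithmetic. This mirrors the proof of Lemma 8.2 of \cite{HRST17}, with the $\{*_{1/2},1_{1/2}\}$ branch and the bit $\mathbf{Y}_a$ playing the role of their corresponding construction.
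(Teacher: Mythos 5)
Your proof is correct and takes essentially the same approach as the paper: reduce to a single block by independence across $a$, condition on the three branches of $\mathcal{R}_{\mathsf{init}}$ (being careful that stars in a block all receive the single bit $\mathbf{Y}_a$), and verify the definitional identity $x + q' t_{d-1} = p_1 = 2^{-w_{d-1}}$. Your write-up is slightly more explicit about the forced-restriction argument in the middle branch (and about why the forced string stays in $\{*,1\}^{w_{d-1}}\setminus\{1^{w_{d-1}}\}$), but the decomposition and the key arithmetic identity are identical to the paper's.
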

\begin{proof}
    The proof is similar to \cite{HRST17}. We show that $\mathbf{X}_a$ is distributed according to $\{0_{1/2},1_{1/2}\}^{w_{d-1}}$ for a fixed $a \in A_{d-1}$. Note that $\mathbf{X}_a = 1^{w_{d-1}}$ only in cases 1 and 2 of $\mathcal{R}_{\mathsf{init}}$.
    \begin{align*}
        \Pr[\mathbf{X}_a = 1^{w_{d-1}}] = x + q't_{d-1} = p_1 = 2^{\log(p_1)}
    \end{align*}
    For any other string $\mathsf{Z} \in \{0,1\}^{w_{d-1}}\setminus \{1\}^{w_{d-1}}$, $\mathbf{X}_a = 1^{w_{d-1}}$ only in cases 2 and 3 of $\mathcal{R}_{\mathsf{init}}$
    \begin{align*}
        \Pr[\mathbf{X}_a = \mathsf{Z}] &= (1-x-q')\frac{2^{\log(p_1)}}{1-2^{\log(p_1)}} + q'(1-t_{d-1})\frac{2^{\log(p_1)}}{1-2^{\log(p_1)}}\\
        &= \frac{2^{\log(p_1)}}{1-2^{\log(p_1)}} ((1-x-q') + q'(1-t_{d-1})) \\
        &= 2^{\log(p_1)}
    \end{align*}
    Now since $\rho$ and $\mathbf{Y}$ are sampled independently for each $a \in A_{d-1}$, $\mathbf{X}_a$ are independent for each $a \in A_{d-1}$, so $\mathbf{X}$ is distributed according to the uniform distribution $\{0_{1/2},1_{1/2}\}^N$.
\end{proof}

For completeness, we now describe the subsequent $(d-2)$ random projections applied to the $\SIP'_d$ function, which are the same as those of \cite{HRST17}.

\begin{definition}[Subsequent random projections, Definition 9 of \cite{HRST17}]\label{def:subsequentproj}
    Let $2 \leq k \leq d-1$. We assume that the gates of $\SIP'_d$ at depth $k-1$ are $\wedge$ gates (otherwise the roles of 0 and 1 below are reversed). Given a string $\tau \in \{0,1,\ast\}^{A_k}$, let $S_a = \{i \in [w_{k-1}]: \tau_{a,i} = \ast\}$ for $a \in A_{k-1}$. The distribution $\mathcal{R}(\tau)$ on random restrictions is defined as follows:
    \begin{itemize}
        \item For all $(a, i) \in A_{k-1} \times [w_{k-1}]$, if $\tau_{a,i} \neq \ast$, then $\rho(a,i) = \tau_{a,i}$. In particular, if $\hat{\tau}(a) = 1$ then $\rho(a,i) = \tau_{a,i} = 1$ for all $i \in [w_{k-1}]$.
        \item If $\hat{\tau}(a) = 0$ or if $|S_a| \neq qw \pm w^{\beta(k,d)}$ where $\beta(k,d)$ is the same as \defn{typical}, then each bit of $\rho$ in $S_a$ is set independently to $0$ with probability $t_k$ and $1$ with probability $1-t_k$.
        \item Otherwise,
        \begin{align*}
    \rho(S_a) \leftarrow \begin{cases}
            \{1\}^{S_a} & \text{with probability $\lambda$} \\
            \{*_{t_k},1_{1-t_k}\}^{S_a}\setminus\{1^{S_a}\} & \text{with probability $q_a$} \\
            \{0_{t_k},1_{1-t_k}\}^{S_a}\setminus\{1^{S_a}\} & \text{with probability $1-\lambda-q_a$}
        \end{cases}
\end{align*}
    where
    \begin{align*}
        q_a := \frac{(1-t_k)^{|S_a|}-\lambda}{t_{k-1}}
    \end{align*}
    \end{itemize}
\end{definition}

\cite{HRST17} show (in the proof of Proposition 8.4) that the subsequent ($d$-2) random projections after $\rhoin$ (along with a suitably chosen distribution for the unrestricted variables) complete to the distribution $\{0_{1-t_{d-1}}, 1_{t_{d-1}}\}^{(\widehat{\rhoin})^{-1}(*)}$, and therefore we can conclude that the overall random projection completes to the uniform distribution, using \lem{uniform}.

\begin{lemma}[Proposition 8.4 of \cite{HRST17}]\label{lem:uniformrst}
    Let $\tau \in \{0,1,\ast\}^{A_d}$. Sample $\rho^{(k)} \leftarrow \mathcal{R}(\widehat{\rho^{(k+1)}})$ for $2 \leq k \leq d-1$ where we denote $\rho^{(d)} = \tau$. Let $T_k = \{i \in [A_k]: \widehat{\rho^{(k+1)}}(i) = \ast\}$ for $1 \leq k \leq d-1$ and sample $\mathbf{Y}^{(1)} \leftarrow \{0_{1-t_1}, 1_{t_1}\}^{T_1}$ (assuming $d$ is even, otherwise 0 and 1 in $\mathbf{Y}^{(1)}$ are flipped). Define for $2 \leq k \leq d-1$, the string $\mathbf{Y}^{(k)} \in \{0,1\}^{T_k}$:
    \begin{align*}
        \mathbf{Y}^{(k)}_{a,i} = \begin{cases}
            \mathbf{Y}^{(k-1)}_a & \text{if $\rho^{(k)}(a,i) = \ast$} \\
            \rho^{(k)}(a,i) & \text{otherwise}
        \end{cases}~~~~~~~~~\forall (a,i) \in T_k
    \end{align*}
    Then $\mathbf{Y}^{(d-1)}$ is distributed according to $\{0_{1-t_{d-1}},1_{t_{d-1}}\}^{T_{d-1}}$.
\end{lemma}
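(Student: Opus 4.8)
The plan is to prove the lemma by induction on $k$ from $1$ up to $d-1$, establishing at each stage that, conditioned on the restrictions $\rho^{(d)},\dots,\rho^{(k+1)}$ (equivalently, on $\widehat{\rho^{(k+1)}}$, hence on $T_k$ and on the sets $S_a\subseteq[w_{k-1}]$), the string $\mathbf{Y}^{(k)}$ is a product distribution over $T_k$ whose per-coordinate bias is the one dictated by the parity of $k$ (each coordinate equal to $1$ with probability $t_k$ when the depth-$k$ gates are $\wedge$, and equal to $0$ with probability $t_k$ when they are $\vee$). The base case $k=1$ is exactly the definition of $\mathbf{Y}^{(1)}$. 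For the inductive step, fix $\rho^{(d)},\dots,\rho^{(k+1)}$ and sample $\rho^{(k)}\leftarrow\mathcal{R}(\widehat{\rho^{(k+1)}})$; this determines $\widehat{\rho^{(k)}}$, hence $T_{k-1}$, and by the inductive hypothesis (applied with $\rho^{(k)},\dots,\rho^{(d)}$ now fixed) the coordinates $\{\mathbf{Y}^{(k-1)}_a\}_{a\in T_{k-1}}$ are mutually independent with the level-$(k-1)$ bias, and moreover, for each live $a$, the bit $\mathbf{Y}^{(k-1)}_a$ is a fresh $t_{k-1}$-biased bit that does not depend on the precise content of block $a$ of $\rho^{(k)}$. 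Since $\mathbf{Y}^{(k)}$ is a deterministic function of $\rho^{(k)}$ and $\mathbf{Y}^{(k-1)}$, it remains only to average over these two sources of randomness.

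The key point is that everything decomposes over the blocks $a\in A_{k-1}$: the distribution $\mathcal{R}$ samples the blocks of $\rho^{(k)}$ independently, the bits $\mathbf{Y}^{(k-1)}_a$ are independent, and the coordinates $(\mathbf{Y}^{(k)}_{a,i})_{i\in S_a}$ depend only on block $a$ of $\rho^{(k)}$ together with the single bit $\mathbf{Y}^{(k-1)}_a$ (the latter used only when $a\in T_{k-1}$). Hence it suffices to verify, for each fixed $a$ with $S_a\neq\emptyset$, that $(\mathbf{Y}^{(k)}_{a,i})_{i\in S_a}$ is i.i.d.\ with the level-$k$ bias. Assuming the depth-$(k-1)$ gates are $\wedge$ (the $\vee$ case is the mirror image with $0$ and $1$ swapped), one splits into the three cases of \defn{subsequentproj}. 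If block $a$ of $\widehat{\rho^{(k+1)}}$ contains a $0$ or has $|S_a|\neq qw\pm w^{\beta(k-1,d)}$, then $\rho^{(k)}$ sets every coordinate of $S_a$ independently and $\mathbf{Y}^{(k)}$ on $S_a$ is literally the required i.i.d.\ string (and $a\notin T_{k-1}$, so no bit of $\mathbf{Y}^{(k-1)}$ is consumed). In the remaining ``good-block'' case, writing $s=|S_a|$ and computing the probability that $(\mathbf{Y}^{(k)}_{a,i})_{i\in S_a}$ equals a given string with $j$ zeros, one gets $\lambda+q_a t_{k-1}$ when $j=0$ and $\frac{t_k^{j}(1-t_k)^{s-j}}{1-(1-t_k)^{s}}\bigl(q_a(1-t_{k-1})+1-\lambda-q_a\bigr)$ when $j\ge 1$; substituting $q_a=\bigl((1-t_k)^{s}-\lambda\bigr)/t_{k-1}$ collapses these to $(1-t_k)^{s}$ and $t_k^{j}(1-t_k)^{s-j}$ respectively, i.e.\ to the product distribution in which each coordinate is $0$ with probability $t_k$. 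This is the level-$k$ target, closing the induction; the lemma is the case $k=d-1$ (combined with \lem{uniform} for the initial projection).

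The only delicate aspect is the bookkeeping: one must track the alternation between $\wedge$-levels and $\vee$-levels, which swaps the roles of $0$ and $1$ and thus which of $t_k$, $1-t_k$ is the per-coordinate bias, and organize the conditioning so that the independence statements invoked above — across blocks and across levels — genuinely hold. There is no real difficulty beyond this: the constant $q_a$ of \defn{subsequentproj}, and one level down the recursion defining $t_{k-1}$, were chosen precisely so that the per-block mixture telescopes. Since the restrictions $\mathcal{R}(\cdot)$ of \defn{subsequentproj} are taken verbatim from \cite{HRST17}, this lemma is exactly their Proposition~8.4 and one may alternatively invoke it directly.
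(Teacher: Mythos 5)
The paper does not reprove this lemma; it states it as Proposition 8.4 of \cite{HRST17} and invokes it as a black box (the surrounding text says explicitly that \cite{HRST17} establish this fact in the proof of their Proposition~8.4, and no proof appears in the paper). Your self-contained argument is a correct reconstruction of the computation that underlies that citation: the induction from $k=1$ up to $k=d-1$, the reduction to a per-block calculation using the independence of $\mathcal{R}(\cdot)$ across $a\in A_{k-1}$ and of the bits $\mathbf{Y}^{(k-1)}_a$ across live $a$, and the telescoping in which $q_a=((1-t_k)^{|S_a|}-\lambda)/t_{k-1}$ collapses the three-way mixture to the $t_k$-biased product on $S_a$ (the $j=0$ mass $\lambda+q_a t_{k-1}=(1-t_k)^{|S_a|}$ and the $j\ge1$ normalizer $q_a(1-t_{k-1})+1-\lambda-q_a=1-(1-t_k)^{|S_a|}$) all check out, as does the handling of the ``bad-block'' cases where $\rho^{(k)}$ fills $S_a$ directly with $t_k$-biased bits. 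Since \defn{subsequentproj} is taken verbatim from \cite{HRST17}, your closing observation that one may simply invoke Proposition~8.4 directly is exactly the route the paper takes; there is no discrepancy in substance.
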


For ease of writing, \cite{HRST17} give the following definition, as notation for the resulting function after applying all the random projections.
\begin{definition}[Definition 10 from \cite{HRST17}]\label{def:projnotation}
    Given a function $f:\{0,1\}^N \rightarrow \{0,1\}$, sample a sequence of $d$-1 random restrictions as follows:
    \begin{align*}
        \rho^{(d)} &\leftarrow \mathcal{R}_{\mathsf{init}} \\
        \rho^{(k)} &\leftarrow \mathcal{R}(\widehat{\rho^{(k+1)}}) &\forall 2 \leq k \leq d-1
    \end{align*}
    Then $\mathbf{\Psi}$ denotes the composition of random projections of the restrictions sampled above:
    \begin{align*}
        \mathbf{\Psi}(f) \equiv \mathsf{proj}_{\rho^{(2)}}\mathsf{proj}_{\rho^{(3)}}\cdots\mathsf{proj}_{\rho^{(d)}}f
    \end{align*}
\end{definition}

\begin{corollary}[Proposition 8.1 of \cite{HRST17}]\label{cor:uniformdist}
    Consider $C:\{0,1\}^N \rightarrow \{0,1\}$ be a circuit computing $\SIP'_d$ defined on $N$ variables. Let $\mathbf{X} \leftarrow \{0_{1/2},1_{1/2}\}^N$ and $\mathbf{Y} \leftarrow \{0_{1-t_1},1_{t_1}\}^{w_0}$ (we assume that $d$ is even). If $\mathbf{\Psi}$ is sampled according to \defn{projnotation},
    \begin{align*}
        \Pr_{\mathbf{X}}[\SIP'_d(\mathbf{X}) \neq C(\mathbf{X})] = \Pr_{\mathbf{\Psi},\mathbf{Y}}[(\mathbf{\Psi}(\SIP'_d))(\mathbf{Y}) \neq (\mathbf{\Psi}(C))(\mathbf{Y})]
    \end{align*}
\end{corollary}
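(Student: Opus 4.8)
The plan is to reduce the identity to two observations: (i) composing projection operators with a ``completion'' string evaluates \emph{any} function at a single derived input, and (ii) under the stated sampling that derived input is uniform on $\{0,1\}^N$. Neither observation uses anything about $\SIP'_d$ or $C$ beyond their being Boolean functions on $N$ bits; the circuit and formula structure play no role in this corollary (they matter only later, when one argues $\mathbf{\Psi}(\SIP'_d)$ stays hard while $\mathbf{\Psi}(C)$ simplifies).

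First I would unwind the composition. Fix an outcome of the restrictions $\rho^{(d)}=\rhoin,\rho^{(d-1)},\dots,\rho^{(2)}$ of \defn{projnotation} and a completion $\mathbf{Y}\in\{0,1\}^{w_0}$. From the definition of $\mathsf{proj}_{\rho}$, for any $g$ and any assignment $u$ to the projected block variables, $(\mathsf{proj}_{\rho}g)(u)=g(v)$ where $v_{a,i}=u_a$ if $\rho(a,i)=\ast$ and $v_{a,i}=\rho(a,i)$ otherwise. Applying this to $\mathsf{proj}_{\rho^{(2)}}$, then $\mathsf{proj}_{\rho^{(3)}}$, and so on down to $\mathsf{proj}_{\rho^{(d)}}=\mathsf{proj}_{\rhoin}$, we propagate $\mathbf{Y}$ downward through the $\rho^{(k)}$'s --- this is precisely the chain $\mathbf{Y}^{(1)}\mapsto\mathbf{Y}^{(2)}\mapsto\dots\mapsto\mathbf{Y}^{(d-1)}$ of \lem{uniformrst} followed by the step $\mathbf{Y}^{(d-1)}\mapsto\mathbf{X}$ of \lem{uniform} --- and obtain a single string $\mathbf{Z}\in\{0,1\}^{A_d}=\{0,1\}^N$ with $(\mathbf{\Psi}(f))(\mathbf{Y})=f(\mathbf{Z})$ for \emph{every} $f\colon\{0,1\}^N\to\{0,1\}$. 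Since $\mathbf{Z}$ depends only on $(\rho^{(2)},\dots,\rho^{(d)},\mathbf{Y})$ and not on $f$, the functions $\SIP'_d$ and $C$ are evaluated at the same $\mathbf{Z}$, so the event $(\mathbf{\Psi}(\SIP'_d))(\mathbf{Y})\ne(\mathbf{\Psi}(C))(\mathbf{Y})$ coincides with $\SIP'_d(\mathbf{Z})\ne C(\mathbf{Z})$. (One bookkeeping point: at each level a block whose lift is not $\ast$ in fact contains no $\ast$ at all --- immediate from \defn{subsequentproj} and the definition of $\mathcal{R}_{\mathsf{init}}$ --- so the projected function does not depend on that block's variable; hence only the surviving blocks $T_k$ matter, and the mismatch between $\mathbf{Y}$ being defined on all of $[w_0]$ while the completion of \lem{uniformrst} lives only on $T_1$ is harmless.)

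Next I would identify the law of $\mathbf{Z}$. Since $\mathbf{Y}\leftarrow\{0_{1-t_1},1_{t_1}\}^{w_0}$, its restriction to $T_1$ has exactly the distribution of $\mathbf{Y}^{(1)}$ in \lem{uniformrst}; hence propagating it through $\rho^{(2)},\dots,\rho^{(d-1)}$ yields a string whose restriction to $(\widehat{\rhoin})^{-1}(\ast)$ is distributed as $\{0_{1-t_{d-1}},1_{t_{d-1}}\}^{(\widehat{\rhoin})^{-1}(\ast)}$ --- the content of \lem{uniformrst} together with the discussion preceding it, which records that the $d-2$ unchanged \cite{HRST17} projections complete to this distribution on the $\ast$-positions of $\widehat{\rhoin}$. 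This is precisely the input distribution that \lem{uniform} requires, and since that distribution does not depend on $\rhoin$, the independence hypothesis of \lem{uniform} is met; applying \lem{uniform} with $\rho=\rhoin$ and this completion gives that $\mathbf{Z}$ has all bits independent and uniform, i.e.\ $\mathbf{Z}\sim\{0_{1/2},1_{1/2}\}^N$. Therefore
\[
\Pr_{\mathbf{\Psi},\mathbf{Y}}\bigl[(\mathbf{\Psi}(\SIP'_d))(\mathbf{Y})\ne(\mathbf{\Psi}(C))(\mathbf{Y})\bigr]
=\Pr_{\mathbf{Z}\sim\{0_{1/2},1_{1/2}\}^N}\bigl[\SIP'_d(\mathbf{Z})\ne C(\mathbf{Z})\bigr]
=\Pr_{\mathbf{X}}\bigl[\SIP'_d(\mathbf{X})\ne C(\mathbf{X})\bigr].
\]

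This is essentially the adaptation of \cite{HRST17}'s proof of their Proposition~8.1, so there is no deep obstacle; the one thing that needs care is confirming that the two halves meet at the seam --- that the output distribution of the $d-2$ unchanged \cite{HRST17} projections on the top-level completion is exactly the product distribution $\{0_{1-t_{d-1}},1_{t_{d-1}}\}$ on $(\widehat{\rhoin})^{-1}(\ast)$ consumed by \lem{uniform}. Since $\rhoin$ was defined (via the parameters $x,q',w_{d-1},p_1$) so that $\widehat{\rhoin}$ has the same $\ast$-structure as the lift of the first-stage restriction of \cite{HRST17}, this match holds by construction, and the remaining verification is pure bookkeeping.
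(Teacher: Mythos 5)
Your proposal is correct and matches the paper's intended argument exactly: the paper's proof is the single line "Follows from \lem{uniform} and \lem{uniformrst}," and you have simply filled in the details of how those two lemmas combine (the derived string $\mathbf{Z}$ is the same for both functions, \lem{uniformrst} makes its law $\{0_{1-t_{d-1}},1_{t_{d-1}}\}$ on $(\widehat{\rhoin})^{-1}(\ast)$, and \lem{uniform} upgrades that to uniform on $\{0,1\}^N$). The bookkeeping remark about blocks whose lift is not $\ast$ containing no $\ast$ is a correct and worthwhile observation that the paper leaves implicit.
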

\begin{proof}
    Follows from \lem{uniform}
    and \lem{uniformrst}.
\end{proof}

We now show that $\rhoin$ as sampled above, results in a typical $\rho$ with high probability. We establish the two conditions for typicality in the next two lemmas. We use $\mathbf{\tau}$ to denote $\hat{\rho} \in \{0,1,*\}^{A_{d-1}}$.

\begin{lemma}[Analogue of Lemma 10.3 of \cite{HRST17}]\label{lem:typical1}
    Fix $\alpha \in A_{d-2}$. Then
    \begin{align*}
        \Pr[|\mathbf{\tau}_\alpha^{-1}(*)| = qw \pm w^{1/3}] \geq 1-e^{-\Tilde{\Omega}(w^{1/6})}
    \end{align*}
\end{lemma}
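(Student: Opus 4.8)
The plan is to exploit the specific design of $\mathcal{R}_{\mathsf{init}}$: unlike the initial restriction of \cite{HRST17}, here the lift of each bottom $\wedge$-gate is completely determined by \emph{which of the three cases} is chosen when sampling that gate's block of $\rhoin$, so no delicate per-gate estimate (over the $w_{d-1}$ children) is required. First I would fix $\alpha\in A_{d-2}$; it has $w_{d-2}$ children $b\in A_{d-1}$, and $\rhoin$ samples the block $\rho(b)\in\{0,1,*\}^{w_{d-1}}$ independently for each such $b$. Comparing the three cases of $\mathcal{R}_{\mathsf{init}}$ against the definition of the lift: in case~1 the block equals $1^{w_{d-1}}$, so $\mathbf{\tau}_b=1$; in case~2 the block lies in $\{*_{1/2},1_{1/2}\}^{w_{d-1}}\setminus\{1^{w_{d-1}}\}$, hence contains a $*$ and no $0$, so $\mathbf{\tau}_b=*$; in case~3 the block lies in $\{0_{1/2},1_{1/2}\}^{w_{d-1}}\setminus\{1^{w_{d-1}}\}$, hence contains a $0$, so $\mathbf{\tau}_b=0$. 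Thus $\Pr[\mathbf{\tau}_b=*]=q'$ exactly, independently of $w_{d-1}$.

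Because the blocks of $\rhoin$ are sampled independently across $b\in A_{d-1}$, the random variable $|\mathbf{\tau}_\alpha^{-1}(*)|$ is a sum of $w_{d-2}$ independent $0/1$ random variables each equal to $1$ with probability $q'$; that is, it is binomially distributed with mean $\mu=w_{d-2}\,q'=(qwN^{5/7})\cdot N^{-5/7}=qw$. I would then apply the Chernoff bound \prop{chernoff} with deviation $w^{1/3}$, i.e.\ multiplicative slack $\delta=w^{1/3}/(qw)=w^{-2/3}/q$. Using $q=\Theta(\sqrt{\log w/w})$ one gets $\delta=\Theta(w^{-1/6}/\sqrt{\log w})=o_m(1)$, so in particular $\delta<1$ for large $m$ and both tail exponents are of order $\delta^2\mu=w^{-1/3}/q=\Theta(w^{1/6}/\sqrt{\log w})=\Tilde{\Omega}(w^{1/6})$. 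A union bound over the upper and lower tails (absorbing the factor $2$ into $\Tilde\Omega$) then gives $\Pr[|\mathbf{\tau}_\alpha^{-1}(*)|=qw\pm w^{1/3}]\ge 1-e^{-\Tilde{\Omega}(w^{1/6})}$, as claimed.

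I do not expect a genuine obstacle here: the statement reduces to a single Chernoff bound on a binomial, and the work is purely bookkeeping of parameter values --- confirming $\mu=qw$ from $w_{d-2}=qwN^{5/7}$ and $q'=N^{-5/7}$, confirming $\delta=o_m(1)$, and confirming that $\delta^2\mu$ collapses to $\Tilde{\Omega}(w^{1/6})$ via $1/q=\Theta(\sqrt{w/\log w})$. It is worth noting that $w^{1/3}$ is exactly $w^{\beta(d-1,d)}$ in the notation of \defn{typical}, since $\beta(d-1,d)=\tfrac13+\tfrac{d-(d-1)-1}{12d}=\tfrac13$, so this lemma establishes precisely the first typicality condition for the block $\alpha$ (the second condition being treated in the companion lemma that follows).
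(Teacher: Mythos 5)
Your proof is correct and matches the paper's argument: both identify that $\mathbf{\tau}_b$ is $*$, $1$, or $0$ exactly according to which of the three cases of $\mathcal{R}_{\mathsf{init}}$ was chosen for block $b$, so $|\mathbf{\tau}_\alpha^{-1}(*)|$ is a sum of $w_{d-2}$ i.i.d.\ Bernoulli($q'$) variables with mean $q'w_{d-2}=qw$, and both then apply a Chernoff bound with $\gamma\mu=w^{1/3}$ to get $\gamma^2\mu=\Tilde\Omega(w^{1/6})$. Your write-up merely spells out the case analysis of the lift and the $\delta^2\mu$ arithmetic a bit more explicitly than the paper does.
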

\begin{proof}
    $\mathbf{\tau}(a)$ for $a \in A_{d-1}$ tells the value taken by the gate at address $a$ after applying the initial random projection, and by definition of $\rhoin$, this value is $*$ with probability $q'$, 1 with probability $x$ and 0 otherwise. So for the fixed $\alpha$ in the statement
    \begin{align*}
        \E[|\mathbf{\tau}_\alpha^{-1}(*)|] &= q'.w_{d-2} = qw = \mu \\
        \mu &= \Theta((w\log w)^{1/2}) \\
        \gamma \mu &= w^{1/3} \\
        \gamma &= \Theta(w^{-1/6}(\log w)^{-1/2})
    \end{align*}
    Therefore on applying a Chernoff bound (\prop{chernoff}),
    \begin{align*}
        \Pr[||\mathbf{\tau}_\alpha^{-1}(*)| - qw| > w^{1/3}] &\leq \exp(-\Omega(\gamma^2\mu)) = \exp(-\Tilde{\Omega}(w^{1/6}))\qedhere
    \end{align*}
\end{proof}

\begin{lemma}[Analogue of Lemma 10.4 of \cite{HRST17}]\label{lem:typical2}
    Fix $\alpha \in A_{d-3}$. Then
    \begin{align*}
        \Pr[w - w^{4/5} \leq |\hat{\mathbf{\tau}}_\alpha^{-1}(*)| \leq w] \geq 1-e^{-\Omega(w^{4/5})}
    \end{align*}
\end{lemma}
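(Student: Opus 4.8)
The plan is to reduce this to a single Chernoff bound, just as in \lem{typical1}; the only genuine subtlety is that the gates at depth $d-2$ are $\vee$ gates, so the lift rule applied to $\mathbf{\tau}=\hat{\rho}$ has the roles of $0$ and $1$ swapped. Fix $\alpha\in A_{d-3}$; since $w_{d-3}=w$, the gate $\alpha$ has exactly $w$ children in $A_{d-2}$, all of which are $\vee$ gates. For $j\in[w]$ the value $\hat{\mathbf{\tau}}_{(\alpha,j)}$ is determined by the block $\mathbf{\tau}_{(\alpha,j)}:=(\mathbf{\tau}_{(\alpha,j),i})_{i\in[w_{d-2}]}$ of coordinates of $\mathbf{\tau}$ sitting below gate $(\alpha,j)$, and by the reversed lift rule $\hat{\mathbf{\tau}}_{(\alpha,j)}=*$ exactly when that block contains no $1$ and at least one $*$. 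Setting $Z:=|\{j\in[w]:\hat{\mathbf{\tau}}_{(\alpha,j)}\neq *\}|$ we get $|\hat{\mathbf{\tau}}_\alpha^{-1}(*)|=w-Z$, so the upper bound $|\hat{\mathbf{\tau}}_\alpha^{-1}(*)|\le w$ is automatic, and it remains to show $Z\le w^{4/5}$ with probability $1-e^{-\Omega(w^{4/5})}$.

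I would first compute the single-coordinate statistics. By the definition of $\mathcal{R}_{\mathsf{init}}$ together with the (unswapped) lift rule at the depth-$(d-1)$ $\wedge$ gates, each coordinate $\mathbf{\tau}_a=\hat{\rho}_a$ (for $a\in A_{d-1}$) equals $1$ with probability $x$ (this occurs only in case $1$ of $\mathcal{R}_{\mathsf{init}}$), equals $*$ with probability $q'$ (only in case $2$), and equals $0$ with probability $1-x-q'$ (case $3$); moreover these coordinates are mutually independent, since $\rho$ is sampled independently over the blocks $a\in A_{d-1}$. Hence $\Pr[\hat{\mathbf{\tau}}_{(\alpha,j)}=*]=(1-x)^{w_{d-2}}-(1-x-q')^{w_{d-2}}$, and so, using Bernoulli's inequality and $1-t\le e^{-t}$,
\[\Pr[\hat{\mathbf{\tau}}_{(\alpha,j)}\neq *]=1-(1-x)^{w_{d-2}}+(1-x-q')^{w_{d-2}}\le x\,w_{d-2}+e^{-q'w_{d-2}}=\frac{1}{w^{1/4}}+e^{-qw},\]
using $x\,w_{d-2}=1/w^{1/4}$ and $q'\,w_{d-2}=qw$. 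Since $qw=\Theta(\sqrt{w\log w})$ the term $e^{-qw}$ is negligible, so this probability is at most $2w^{-1/4}$ for $w$ large.

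Finally I would invoke \prop{chernoff}. Because distinct children $(\alpha,j)$ of $\alpha$ have disjoint sets of descendants in $A_{d-1}$, the events $\{\hat{\mathbf{\tau}}_{(\alpha,j)}\neq *\}$ are independent over $j$, so $Z$ is a sum of $w$ independent Bernoulli variables with $\mu:=\E[Z]\le 2w^{3/4}$. Writing $w^{4/5}=(1+\delta)\mu$ gives $\delta=w^{4/5}/\mu-1=\Omega(w^{1/20})\to\infty$, so $\tfrac{\delta^2}{2+\delta}\ge\tfrac{\delta}{2}$ for $w$ large, and \prop{chernoff} yields
\[\Pr[Z\ge w^{4/5}]\le\exp\!\left(-\tfrac{\delta^2}{2+\delta}\,\mu\right)\le\exp\!\left(-\tfrac{\delta\mu}{2}\right)=\exp\!\left(-\tfrac{w^{4/5}-\mu}{2}\right)\le e^{-\Omega(w^{4/5})}.\]
Thus $Z\le w^{4/5}$, equivalently $|\hat{\mathbf{\tau}}_\alpha^{-1}(*)|\ge w-w^{4/5}$, with probability at least $1-e^{-\Omega(w^{4/5})}$, which proves the lemma. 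I do not expect a real obstacle: the only points of care are tracking the alternating lift conventions correctly and checking that the ``all descendants $0$'' contribution $e^{-qw}$ is dominated by the ``some descendant $1$'' contribution $\approx w^{-1/4}$, which is what keeps the mean $\mu$ near $w^{3/4}$ rather than near $w$.
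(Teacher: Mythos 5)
Your proof is correct and takes essentially the same route as the paper: compute the distribution of each depth-$(d-1)$ coordinate of $\mathbf{\tau}$ (value $1$ w.p.\ $x$, $*$ w.p.\ $q'$, $0$ w.p.\ $1-x-q'$), bound the probability that a depth-$(d-2)$ $\vee$ gate becomes determined by $xw_{d-2}+e^{-q'w_{d-2}} = w^{-1/4}+e^{-qw}=O(w^{-1/4})$, and conclude with a Chernoff bound on $Z=|\hat{\mathbf{\tau}}_\alpha^{-1}(\{0,1\})|$ with mean $O(w^{3/4})$ against threshold $w^{4/5}$. The only cosmetic difference is that you bound $\Pr[\neq *]$ in a single expression while the paper bounds $\Pr[\text{determined to }1]$ and $\Pr[\text{determined to }0]$ separately, and you spell out the sign-swap in the lift rule for $\vee$ gates explicitly; both yield identical parameters and the same $e^{-\Omega(w^{4/5})}$ conclusion.
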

\begin{proof}
    We look at the OR gate at address $\alpha, i$ for $i \in [w_{d-3}]$. This gate is not determined to value 1 with probability  $(1-x)^{w_{d-2}} \geq 1-xw_{d-2} \geq 1-w^{-1/4}$. This gate is determined to 0 with probability $(1-x-q')^{w_{d-2}} \leq (1-q')^{w_{d-2}} \leq e^{-qw} \leq w^{-1/4}$. So independently for each $i \in [w_{d-3}]$,
    \begin{align*}
        \Pr[\hat{\mathbf{\tau}}_{\alpha, i} \in \{0,1\}] \leq O(w^{-1/4})
    \end{align*}
    Therefore,
    \begin{align*}
        \E[|\hat{\mathbf{\tau}}_\alpha^{-1}(\{0,1\})|] &\leq w_{d-3}O(w^{-1/4}) = O(w^{3/4}) \\
        \mu &= O(w^{3/4}) \\
        \gamma \mu &= \Omega(w^{4/5}) \\
        \gamma &= \Omega(w^{1/20})
    \end{align*}
    Therefore on applying a Chernoff bound (\prop{chernoff}),
    \begin{align*}
        \Pr[w - w^{4/5} \leq |\hat{\mathbf{\tau}}_\alpha^{-1}(*)| \leq w] &\geq 1-\exp(-\Omega(\gamma\mu)) = 1-\exp(-\Omega(w^{4/5}))\qedhere
    \end{align*}
\end{proof}

\begin{lemma}[Lemma 10.6 and Lemma 10.8 of \cite{HRST17}]\label{lem:typicalrst}
    For $2 \leq k \leq d-1$, let $\tau \in \{0,1,\ast\}^{A_{k+1}}$ be typical. Then $\rho \leftarrow \mathcal{R}(\hat{\tau})$ is typical with probability at least $1-e^{-\Tilde{\Omega}(w^{1/6})}$.
\end{lemma}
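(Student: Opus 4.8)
The plan is to obtain this as a direct consequence of the corresponding statements in \cite{HRST17}. The subsequent random projections $\mathcal{R}(\hat{\tau})$ for $2\le k\le d-1$ are defined in \defn{subsequentproj} verbatim as in their Definition~9, and the notion of typicality in \defn{typical} is chosen to coincide with theirs; the only discrepancy is cosmetic, in that \cite{HRST17} phrase typicality in terms of the restriction whose \emph{lift} equals the relevant string, whereas we phrase it directly in terms of $\tau$. Unwinding the definition of the lift shows the two formulations are identical, so the first step is simply to record this equivalence, after which Lemma~10.6 and Lemma~10.8 of \cite{HRST17} apply with no changes.

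Next I would split the claim into the two clauses of \defn{typical} and match each to one of the cited lemmas. Clause~1 (after projection the bottom fan-in is $qw\pm w^{\beta(k-1,d)}$) is exactly Lemma~10.6 of \cite{HRST17}: for a fixed gate $\alpha$ two levels above the bottom of the projected formula, the number of live blocks feeding into it is a sum of nearly independent indicators with the right mean, and a Chernoff bound gives the concentration with failure probability $e^{-\tOmega(w^{1/6})}$ — this is the analogue, one layer up, of the estimate already carried out in \lem{typical1}. Clause~2 (the fan-in of the next layer stays in $[w_{k-3}-w^{4/5},\,w_{k-3}]$) is Lemma~10.8 of \cite{HRST17}, proved by a Chernoff bound showing only an $O(w^{-1/4})$ fraction of gates two levels below get determined, paralleling \lem{typical2}. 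The crucial use of the hypothesis ``$\tau$ typical'' enters here: clause~1 of typicality for $\tau$ guarantees that the blocks $S_a$ have size $qw\pm w^{\beta(k,d)}$, which is precisely the regime in which the nontrivial third branch of $\mathcal{R}(\hat{\tau})$ is taken; without it one would fall into the trivial second branch and the structure would not be preserved.

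Finally I would pass from these per-gate bounds to the stated lemma by a union bound over all gate addresses involved in the two clauses, of which there are at most $|A_{d}| = N$. Since $d$ is a constant and $w$ is polynomially related to $N$, we have $\log N = O(d\log w) = o(w^{1/6})$, so $N\cdot e^{-\tOmega(w^{1/6})} = e^{-\tOmega(w^{1/6})}$, yielding the claimed failure probability. I do not expect a real obstacle here, since the lemma is a restatement of established results and the subsequent projections are taken unchanged from \cite{HRST17}; the only points needing genuine care are confirming that the reformulation of \defn{typical} in terms of $\tau$ rather than its lift is content-free, and that our hypothesis ``$\tau$ typical'' is exactly the inductive hypothesis under which \cite{HRST17} establish their Lemmas~10.6 and~10.8. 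Both checks are routine.
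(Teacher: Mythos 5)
Your proposal is correct and matches what the paper actually does: the paper does not reprove this statement at all, but simply cites it as Lemmas 10.6 and 10.8 of \cite{HRST17}, relying on the fact that \defn{subsequentproj} and \defn{typical} reproduce the corresponding definitions from that paper verbatim (modulo the cosmetic rephrasing of typicality in terms of $\tau$ rather than of a restriction whose lift is $\tau$, which you correctly identify as content-free). Your elaboration of why the reduction is valid — matching each clause of typicality to the corresponding lemma, noting where the ``$\tau$ typical'' hypothesis feeds into the analysis of the third branch of $\mathcal{R}(\hat{\tau})$, and closing with a union bound over gate addresses — is a faithful unpacking of the citation and raises no issues.
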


On application of the overall sequence of random projections $\mathbf{\Psi}$, the function $\SIP'_d$ becomes an OR of fan-in at most $w_0$ if $d$ is even, and an AND of fan-in at most $w_0$ otherwise. We will now assume that $\SIP'_d$ becomes an OR of fan-in at most $w_0$, the case for AND is analogous. Then we sample the first ($d$-2) restrictions and assume they are all typical, which happens with probability at least $1-de^{-\Tilde{\Omega}(w^{1/6})}$ using \lem{typical1}, \lem{typical2} and \lem{typicalrst} and a union bound. We know from \defn{typical} that the top OR gate of the function after applying the first ($d$-2) projections is undetermined and has fan-in at least $w_0 - w^{4/5}$. \cite{HRST17} then show that in expectation over the final random projection, the bias of the projected function is close to 1/2 under the distribution $\{0_{1-t_1}, 1_{t_1}\}^{w_0}$.

\begin{lemma}[Proposition 10.13 of \cite{HRST17}]\label{lem:unbiasedor}
    Let $\mathbf{Y} \leftarrow \{0_{1-t_1}, 1_{t_1}\}^{w_0}$ and $\mathbf{\Psi}(\SIP'_d)$ be the random projection of $\SIP'_d$ when $\mathbf{\Psi}$ is sampled according to \defn{projnotation}. Then
    \begin{align*}
        \E_{\mathbf{\Psi}}[\mathsf{bias}(\mathbf{\Psi}(\SIP'_d),\mathbf{Y})] \geq \frac{1}{2} - \Tilde{O}(w^{-1/12})
    \end{align*}
\end{lemma}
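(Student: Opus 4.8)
The plan is to deduce Lemma~\ref{lem:unbiasedor} from the corresponding computation in \cite{HRST17} (their Proposition~10.13), using the structural facts already assembled in \lem{typical1}, \lem{typical2} and \lem{typicalrst}. The point is that our modifications change only the bottom two fan-ins of the target and the initial projection, and these are entirely ``used up'' by $\mathsf{proj}_{\rhoin}$: by \lem{typical1} each level-$A_{d-2}$ gate of $\SIP'_d$ survives $\mathsf{proj}_{\rhoin}$ with live fan-in $qw \pm w^{1/3}$, and by \lem{typical2} the level-$A_{d-3}$ gates keep fan-in in $[w - w^{4/5}, w]$ — i.e.\ $\rhoin$ is typical with probability $1-e^{-\Tilde{\Omega}(w^{1/6})}$, producing exactly the residual structure (a depth-$(d-1)$ $\SIP$-type formula with bottom $\vee$ fan-in $\approx qw$, all higher fan-ins $w$ except $w_0$ at the top) that \cite{HRST17} obtains after \emph{their} initial projection. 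Since the subsequent projections $\mathcal{R}(\hat\tau)$ of \defn{subsequentproj} are verbatim those of \cite{HRST17} and, by \lem{typicalrst}, preserve typicality, the distribution of $\mathbf{\Psi}(\SIP'_d)$ as a function of $\mathbf{Y}$ coincides with the distribution of the residual of $\SIP_d$ under the projection sequence of \cite{HRST17}; the bias bound of their Proposition~10.13 then transfers directly.

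\textbf{What the underlying argument looks like.} For concreteness I would also recall the mechanism. Condition on the event $E$ that all $d-1$ projections are typical; by \lem{typical1}, \lem{typical2}, \lem{typicalrst} and a union bound, $\Pr[E] \ge 1 - d\,e^{-\Tilde{\Omega}(w^{1/6})}$. On $E$, $\mathbf{\Psi}(\SIP'_d)$ is the root gate (an $\mathsf{OR}$, since $d$ is even) over its $w_0$ children; structure preservation leaves the top $\mathsf{OR}$ undetermined with between $w_0 - w^{4/5}$ and $w_0$ children not fixed (and none fixed to $1$). In the final projection each such child (a $\wedge$ gate over a set $S_a$ with $|S_a| = qw \pm w^{\beta(2,d)}$) is left as a live input with probability $q_a = \bigl((1-t_2)^{|S_a|} - \lambda\bigr)/t_1$; using the defining recursion $(1-t_2)^{qw} = q\,t_1 + \lambda$ and $\lambda \ll q\,t_1$, one gets $q_a = q\,(1 \pm \Tilde{O}(w^{-1/12}))$, the error coming precisely from the $\pm w^{\beta(2,d)}$ slack since $t_2\,w^{\beta(2,d)} = \Tilde{O}(w^{-1/12})$ (as $\beta(2,d) < 5/12$ and $t_2 = \Theta(\sqrt{\log w/w})$). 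By a Chernoff bound the live fan-in $L$ of the residual $\mathsf{OR}$ is $L = q\,w_0\,(1 \pm \Tilde{O}(w^{-1/12}))$, independent of $\mathbf{Y}$, so $\Pr_{\mathbf{Y}}[\mathbf{\Psi}(\SIP'_d)(\mathbf{Y}) = 0] = (1-t_1)^{L}$. Since $w_0 = \min_i\{(1-t_1)^{qi} \le 1/2\}$ we have $(1-t_1)^{qw_0} = \tfrac12(1\pm o(1)) = \Theta(1)$, equivalently $t_1\,q\,w_0 = \Theta(1)$, so the slack in $L$ perturbs this by only a factor $(1-t_1)^{\pm \Tilde{O}(w^{-1/12})\,q w_0} = 1 \pm \Tilde{O}(w^{-1/12})$. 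Hence $\mathsf{bias}(\mathbf{\Psi}(\SIP'_d),\mathbf{Y}) \ge \tfrac12 - \Tilde{O}(w^{-1/12})$ on $E$; accounting for the $O(d\,e^{-\Tilde{\Omega}(w^{1/6})})$ probability of the complement of $E$ (on which we bound the bias below by $0$) gives $\E_{\mathbf{\Psi}}[\mathsf{bias}(\mathbf{\Psi}(\SIP'_d),\mathbf{Y})] \ge \tfrac12 - \Tilde{O}(w^{-1/12})$.

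\textbf{Main obstacle.} The delicate part is the bookkeeping in the second paragraph: one must verify that the $\pm w^{\beta(k,d)}$ fan-in slack at every level, together with the $\lambda$-rounding terms of \defn{subsequentproj}, accumulated over the $d-1$ steps, perturbs the final live fan-in $L$ — and hence the bias — by only an additive $\Tilde{O}(w^{-1/12})$. In our setting this is inherited from \cite{HRST17} unchanged, because the layers $A_0,\dots,A_{d-2}$ that govern this computation are untouched by the modifications to $\SIP'_d$ and $\rhoin$; the only genuinely new verification needed — and it is exactly what \lem{typical1} and \lem{typical2} supply — is that $\mathcal{R}_{\mathsf{init}}$ reproduces the post-initial-projection structure of \cite{HRST17} with the same high-probability and fan-in guarantees. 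Given those lemmas, Lemma~\ref{lem:unbiasedor} follows.
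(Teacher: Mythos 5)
Your argument matches the paper's: the paper likewise treats this lemma as an import from \cite{HRST17}, justified by the fact (established via \lem{typical1}, \lem{typical2}, and \lem{typicalrst}) that $\mathsf{proj}_{\rhoin}$ leaves $\SIP'_d$ with the same residual depth-$(d-1)$ structure and fan-ins that the initial projection of \cite{HRST17} leaves on $\SIP_d$, so that the remaining projections and the bias computation transfer verbatim. The paper does not reproduce the underlying \cite{HRST17} mechanism that you spell out in your second paragraph, but your reconstruction of it — including the observation that $t_2\,w^{\beta(2,d)} = \Tilde{O}(w^{-1/12})$ since $\beta(2,d) < 5/12$ — is consistent with the stated bound.
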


Now we state and use the projection switching lemma of \cite{HRST17} (statement taken from their paper), to obtain our final circuit lower bound.

\begin{theorem}[Proposition 9.2 of \cite{HRST17}]\label{thm:randomprojcnf}
Let $2 \leq k \leq d-1$ and $F:\{0,1\}^{A_k} \rightarrow \{0,1\}$ be a DNF/CNF of width $r$. Then for all $\tau \in \{0,1,*\}^{A_k}$ and $s \geq 1$,
    \begin{align*}
        \Pr_{\rho \leftarrow \mathcal{R}(\tau)}[\mathsf{DT}(\mathsf{proj}_{\rho}F) > s] = (O(re^{rt_k/(1-t_k)}w^{-1/4}))^s
    \end{align*}
\end{theorem}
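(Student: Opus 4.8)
The plan is to follow the standard encoding (``compression'') proof of H{\aa}stad's switching lemma, adapted to random \emph{projections}; this is precisely the argument of \cite{HRST17}, and since the theorem is invoked in the paper exactly as stated, I only sketch the proof. Fix the width-$r$ DNF $F = C_1 \vee C_2 \vee \cdots$ over the variables indexed by $A_k$ (grouped into blocks indexed by $A_{k-1}$) and fix $\tau$. First I would define the canonical decision tree $T$ for $\mathsf{proj}_\rho F$: scan the terms $C_1, C_2, \dots$ in order; for the first index $j$ such that the restricted-and-projected term depends on at least one variable, query (in a fixed order) all of the projected variables that $\mathsf{proj}_\rho(C_j)$ depends on, of which there are at most $r$ since each of the $\le r$ literals of $C_j$ sits in a single block and contributes at most the one projected variable of that block; then recurse on the subtree consistent with the answers (outputting $1$ whenever a branch of answers satisfies $C_j$). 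A standard argument gives $\mathsf{DT}(\mathsf{proj}_\rho F) \le \mathrm{depth}(T)$, so it suffices to bound the probability that $T$ has a root-to-leaf branch of length exceeding $s$.

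Next I would run the compression argument. Suppose $\rho \leftarrow \mathcal R(\tau)$ is such that $T$ has a branch $\pi$ of length $> s$; truncate $\pi$ after its first $s$ queries, passing through terms $C_{j_1}, C_{j_2}, \dots$ and touching a collection of blocks whose projected variables queried along $\pi$ number exactly $s$. From the pair $(\rho,\pi)$ produce a modified restriction $\rho'$: on each block touched by $\pi$, overwrite $\rho$ by the values that satisfy the corresponding literals of the relevant term, so that under $\rho'$ each of the terms $C_{j_1}, C_{j_2},\dots$ evaluates to true; together with a short advice string recording, for each touched block, which literal-positions inside it were used, which of the three cases of \defn{subsequentproj} that block fell into, and the bits of $\pi$ on the queried projected variables. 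The bookkeeping of \cite{HRST17} ensures that (i) the map $(\rho,\pi)\mapsto(\rho',\text{advice})$ is injective, since replaying the canonical-tree construction on $\rho'$ while consulting the advice recovers both $\rho$ and $\pi$; (ii) the advice ranges over at most $(O(r))^s$ values; and (iii) for each of the $s$ queried projected variables, passing from ``that block alive with that coordinate equal to $\ast$'' under $\mathcal R(\tau)$ to ``that block a fixed constant'' under $\mathcal R(\tau)$ multiplies the probability by a factor of order $w^{-1/4}$ (this is where the block-level weights $\lambda$ and $q_a$ enter), while correcting, within an alive block, for which of its $\le r$ relevant coordinates are $\ast$ versus $1$ contributes a factor $e^{O(t_k/(1-t_k))}$ per literal, hence at most $e^{rt_k/(1-t_k)}$ per term on $\pi$. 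Combining (i)--(iii) and summing over $\rho'$ and over advice strings yields $\Pr_{\rho\leftarrow\mathcal R(\tau)}[\mathsf{DT}(\mathsf{proj}_\rho F) > s] \le (O(re^{rt_k/(1-t_k)}w^{-1/4}))^s$.

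The main obstacle is getting the block/projection bookkeeping right: unlike in the Boolean-cube switching lemma, a term can place several literals in a single block (so that block contributes only one projected variable, yet the advice must still pin down all of those positions), and the three-way distribution $\mathcal R(\tau)$ --- all-ones with probability $\lambda$, a ``$\ast/1$'' block with probability $q_a$, and a ``$0/1$'' block otherwise --- must be handled uniformly over $|S_a|$ and over $\hat\tau(a)\in\{0,1,\ast\}$ so that the probability ratio in (iii) is $O(w^{-1/4})$ in every case. Pinning the exponent down to exactly $rt_k/(1-t_k)$ rather than merely $O(r)$ requires bounding, for an alive block conditioned on not being all-ones, the conditional probability that a prescribed coordinate is $\ast$ --- roughly $t_k/(1-(1-t_k)^{|S_a|})$ --- cleanly by $t_k/(1-t_k)$ per literal and then multiplying over the $\le r$ literals of a term; this is the only delicate estimate, and it is carried out in \cite{HRST17}.
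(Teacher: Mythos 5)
The paper does not prove this theorem: it is stated verbatim as Proposition~9.2 of \cite{HRST17} and taken as a black box, so there is no in-paper proof to compare against. Your sketch correctly outlines the encoding/compression argument that \cite{HRST17} use for their projection switching lemma --- the canonical decision tree for $\mathsf{proj}_\rho F$, the injective map from $(\rho, \pi)$ to $(\rho', \text{advice})$ obtained by overwriting the alive blocks touched by the long branch, and the per-query probability ratio obtained by comparing the $\mathcal{R}(\tau)$-mass of ``alive with a prescribed coordinate $\ast$'' against ``fixed'' --- and you correctly flag the genuinely projection-specific subtleties: several literals of a term may fall in the same block yet contribute only one projected variable (so the advice must record literal positions within blocks), the three-way block distribution in $\mathcal{R}(\tau)$ must be handled uniformly over $|S_a|$ and $\hat\tau(a)$, and the $e^{rt_k/(1-t_k)}$ factor arises from controlling the conditional probability of a coordinate being $\ast$ given that the block is alive. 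Since you defer those delicate estimates to \cite{HRST17} exactly as the paper does, your sketch is consistent with how the paper treats the statement, though of course neither is a self-contained proof.
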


We want to use \thm{randomprojcnf} ($d$-2) times on the circuit obtained after applying the first random projection and replacing the layer of quantum circuits in a $\QAC^0_{d-3}$ circuit with a small width DNF/CNF as per \cor{projdisagrdnf}, to conclude that this circuit becomes a decision tree with high probability.



\begin{definition}
    Given a function $f:\{0,1\}^{A_{d-1}} \rightarrow \{0,1\}$, and $\tau \in \{0,1,\ast\}^{A_d}$, sample a sequence of $d$-2 random restrictions $\rho^{(k)}$ for $2 \leq k \leq d-1$ as follows:
    \begin{align*}
        \rho^{(d)} &:= \tau \\
        \rho^{(k)} &\leftarrow \mathcal{R}(\widehat{\rho^{(k+1)}}) &\forall 2 \leq k \leq d-1
    \end{align*}
    Then $\mathbf{\Psi}_\tau$ denotes the composition of random projections, of the restrictions sampled above:
    \begin{align*}
        \mathbf{\Psi}_{\tau}(f) \equiv \mathsf{proj}_{\rho^{(2)}}\mathsf{proj}_{\rho^{(3)}}\cdots\mathsf{proj}_{\rho^{(d-1)}}f
    \end{align*}
\end{definition}

\begin{corollary}[Analogue of Proposition 9.13 of \cite{HRST17}]\label{cor:circuittocnf}
    For any constant $d \geq 2$, let $C: \{0,1\}^{A_{d-1}} \rightarrow \{0,1\}$ be a depth-$d$ circuit which has $\wedge$ $($or $\vee)$ as output gate, with bottom fan-in $w^{1/5}$ and size $S \leq 2^{w^{1/5}}$. Then for any $\tau \in \{0,1,*\}^{A_{d}}$
    \begin{align*}
        \Pr_{\mathbf{\Psi}_\tau}[\mathbf{\Psi_{\tau}}(C) \text{ is a CNF (or DNF) of width at most } w^{1/5}] \geq 1-e^{-\Omega(w^{1/5})}
    \end{align*}
\end{corollary}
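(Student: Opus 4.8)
The plan is to reduce the depth of $C$ from $d$ down to $2$ by invoking the projection switching lemma \thm{randomprojcnf} once for each of the $d-2$ random projections $\mathsf{proj}_{\rho^{(d-1)}},\dots,\mathsf{proj}_{\rho^{(2)}}$ composing $\mathbf{\Psi}_\tau$, using at the end that a depth-$2$ circuit with $\wedge$ (resp.\ $\vee$) output gate and bottom fan-in at most $w^{1/5}$ is by definition exactly a width-$w^{1/5}$ CNF (resp.\ DNF). First I would assume without loss of generality that the output gate of $C$ is $\wedge$, the $\vee$ case following from the symmetric argument that exchanges $\wedge\leftrightarrow\vee$ (equivalently $0\leftrightarrow 1$, and CNF $\leftrightarrow$ DNF) throughout.

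I would run an induction maintaining the invariant that after applying the first several projections the partially-projected circuit still has $\wedge$ output, bottom fan-in at most $w^{1/5}$, and size at most $2^{O(w^{1/5})}$ (the hidden constant depending on $d$). For the inductive step, group each second-to-bottom-layer gate $G$ with its bottom-layer children; since the bottom fan-in is at most $w^{1/5}$ this subcircuit computes a width-$w^{1/5}$ DNF or CNF $F_G$ over the current variable set $A_k$. Applying \thm{randomprojcnf} to each $F_G$ under the next projection $\mathsf{proj}_{\rho^{(k)}}$ with $r=s=w^{1/5}$, and using that $t_k=\Theta(\sqrt{\log w/w})$ forces $e^{w^{1/5}t_k/(1-t_k)}=\Theta(1)$, the failure probability per gate is $\bigl(O(w^{-1/20})\bigr)^{w^{1/5}}=2^{-\Omega(w^{1/5}\log w)}$; a union bound over the at most $2^{O(w^{1/5})}$ such gates shows that, except with probability $2^{-\Omega(w^{1/5}\log w)}$, every $\mathsf{proj}_{\rho^{(k)}}F_G$ has a decision tree of depth at most $w^{1/5}$. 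Each such decision tree can be rewritten either as a width-$w^{1/5}$ DNF (with at most $2^{w^{1/5}}$ terms) or a width-$w^{1/5}$ CNF; I would choose whichever connective matches the layer immediately above $G$, so that the rewritten formulas merge into that layer, collapsing two adjacent same-type layers, reducing the depth by one, keeping the bottom fan-in at most $w^{1/5}$, and multiplying the size by at most $2^{w^{1/5}}$. Since $\mathsf{proj}_\rho$ commutes with the pointwise Boolean operations at the gates, the output gate stays $\wedge$ and the analysis of the bottom two layers composes correctly across projections; and since each $\mathsf{proj}_{\rho^{(k)}}$ is drawn from a distribution of the form $\mathcal{R}(\cdot)$, \thm{randomprojcnf} applies at each step even after conditioning on the earlier randomness.

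Iterating this $d-2$ times, with $d$ constant the size stays $2^{O(w^{1/5})}$ and the total failure probability is at most $(d-2)\cdot 2^{-\Omega(w^{1/5}\log w)}=e^{-\Omega(w^{1/5})}$; in the surviving event $\mathbf{\Psi}_\tau(C)$ is a depth-$2$ circuit with $\wedge$ output and bottom fan-in at most $w^{1/5}$, i.e.\ a CNF of width at most $w^{1/5}$ (a DNF in the $\vee$ case).

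The main point requiring care is exactly this bookkeeping: each decision-tree-to-DNF/CNF conversion can inflate the number of bottom gates by a factor $2^{w^{1/5}}$, so one must confirm that the switching lemma's failure bound $2^{-\Omega(w^{1/5}\log w)}$ — crucially with the extra $\log w$ factor coming from the $w^{-1/4}$ in \thm{randomprojcnf} together with $s=w^{1/5}$ — still beats the union bound over $2^{O(w^{1/5})}$ gates at every one of the $O(1)$ rounds. There is no genuine obstacle beyond carefully transcribing the \cite{HRST17} argument, since $C$ at this stage is an ordinary $\AC^0$ circuit (the quantum bottom layer having already been replaced by a DNF).
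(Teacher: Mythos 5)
Your proof is correct and is essentially a careful writeout of the paper's own (very terse) argument: apply \thm{randomprojcnf} once per projection in $\mathbf{\Psi}_\tau$ with $r=s=w^{1/5}$, union-bound over bottom-layer DNFs/CNFs at each of the $d-2=O(1)$ rounds, and note that $e^{rt_k/(1-t_k)}=\Theta(1)$ so the per-gate failure probability $2^{-\Omega(w^{1/5}\log w)}$ dominates the $2^{O(w^{1/5})}$ size blow-up from decision-tree-to-DNF/CNF conversion. The only refinement you add over the paper's one-line proof is the explicit size-bookkeeping invariant, which is the standard and necessary step.
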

\begin{proof}
    Apply \thm{randomprojcnf} ($d$-2) times on $C$, with $r = s = w^{1/5}$, along with a union bound on the number of gates in the bottom layer (at most $S$) for each application of \thm{randomprojcnf}.
\end{proof}

\begin{theorem}\label{thm:randomlowerbound}
    For any even constant $d \geq 4$, let $\SIP'_{d}$ be defined on $N$ variables. Let $C:\{0,1\}^N \rightarrow \{0,1\}$ be any $\QAC^0_{d-2}$ circuit of size $S = \mathsf{quasipoly}(N) < 2^{N^{\frac{1}{6(d-1)}}}$, which has $\wedge$ as its output gate. Then for a uniformly random input $\mathbf{X}$, we have
    \begin{align*}
        \Pr_{\mathbf{X}}[\SIP'_{d}(\mathbf{X}) \neq C(\mathbf{X})] \geq \frac{1}{2} - \frac{1}{N^{\Omega(1/d)}}
    \end{align*}
\end{theorem}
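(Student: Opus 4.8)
The plan is to run the random-projection depth-hierarchy argument of \cite{HRST17}, using \cor{projdisagrdnf} to delete the bottom layer of quantum gates after the first projection and \cor{circuittocnf} to collapse the surviving $\AC^0$ structure. Fix $d\ge 4$ even and a $\QAC^0_{d-2}$ circuit $C$ of size $S$ with $\wedge$ output gate. First I would set $k:=\lceil 6(\ln\tfrac{2S}{3}+\tfrac1d\ln N)\rceil=\polylog(N)$, so that $e^{-k/6}\le\tfrac{3}{2S}N^{-1/d}$ and hence $S(\tfrac k6+2)e^{-k/6}=O(k)N^{-1/d}\le N^{-\Omega(1/d)}$. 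Next I would fix the block-product distributions $D_1,D_2$ feeding \cor{projdisagrdnf} so that they reproduce the $\SIP'_d$ machinery: take $D_{1,a}$ to output $1^{w_{d-1}}$ with probability $x/(1-q')$ and a uniformly random non-all-ones string otherwise, so that the $q'$-block-random restriction $\rhoin$ with underlying distribution $D_1$ is exactly distributed as $\mathcal{R}_{\mathsf{init}}$, and take $D_{2,a}=\{0_{1-t_{d-1}},1_{t_{d-1}}\}$, so that the distribution $D$ induced on the blocks left unrestricted by $\rhoin$ is $\{0_{1-t_{d-1}},1_{t_{d-1}}\}^{(\widehat{\rhoin})^{-1}(*)}$. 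The relevant parameters (each bottom quantum gate making $T=\polylog(N)$ queries, and $q'=N^{-5/7}$) are small enough for \cor{projdisagrdnf} to apply.

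By \cor{uniformdist} (which rests on \lem{uniform} and \lem{uniformrst}), writing $\mathbf{Y}\leftarrow\{0_{1-t_1},1_{t_1}\}^{w_0}$ and $\mathbf{\Psi}$ for the full sequence of $d-1$ projections, we have $\Pr_{\mathbf{X}}[\SIP'_d(\mathbf{X})\ne C(\mathbf{X})]=\Pr_{\mathbf{\Psi},\mathbf{Y}}[\mathbf{\Psi}(\SIP'_d)(\mathbf{Y})\ne\mathbf{\Psi}(C)(\mathbf{Y})]$, and \lem{uniformrst} guarantees that the last $d-2$ projections together with $\mathbf{Y}$ induce exactly $D$ on the blocks unrestricted by $\rhoin$. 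I would then apply the CNF form of \cor{projdisagrdnf} (built from $0$-certificates, the symmetric analogue of the stated DNF version) to each of the at most $S$ bottom quantum gates $f_v$ of $C$: after $\mathsf{proj}_{\rhoin}$, $f_v$ becomes a width-$k^2$ CNF $g_v$ depending on $\rhoin$, with $\E_{\rhoin}[\Pr_{z\sim D}[\mathsf{proj}_{\rhoin}(f_v)(z)\ne g_v(z)]]\le(\tfrac k6+2)e^{-k/6}$; a union bound over the gates yields a circuit $C'$ with $\E_{\rhoin}[\Pr_{z\sim D}[\mathsf{proj}_{\rhoin}(C)(z)\ne C'(z)]]\le S(\tfrac k6+2)e^{-k/6}$. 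The key point is that, because $C$ has $\wedge$ output and $d-2$ is even, the layer of $C$ immediately above the quantum gates consists of $\vee$ gates, so inserting a depth-$2$ ($\wedge$-of-$\vee$) CNF beneath it preserves the alternation: $C'$ is a genuine depth-$d$ $\AC^0$ circuit with $\wedge$ output gate, bottom fan-in $k^2\le w^{1/5}$, and quasipolynomial size $\le 2^{w^{1/5}}$ for $N$ large (since $d$ is constant, $w^{1/5}=N^{\Omega(1/d)}$ dominates $\polylog N$). Applying the remaining $d-2$ projections on both sides of the substitution gives
\[\Pr_{\mathbf{\Psi},\mathbf{Y}}[\mathbf{\Psi}(\SIP'_d)(\mathbf{Y})\ne\mathbf{\Psi}(C)(\mathbf{Y})]\ \ge\ \Pr_{\rho_d\cdots\rho_2,\mathbf{Y}}[\mathbf{\Psi}(\SIP'_d)(\mathbf{Y})\ne\mathbf{\Psi}_{\rhoin}(C')(\mathbf{Y})]-S\bigl(\tfrac k6+2\bigr)e^{-k/6}.\]

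To finish, I would invoke \cor{circuittocnf} on the depth-$d$ circuit $C'$: with probability $\ge 1-e^{-\Omega(w^{1/5})}$ over $\rho_d,\dots,\rho_2$, the projection $\mathbf{\Psi}_{\rhoin}(C')$ is a width-$w^{1/5}$ CNF. On the other side, since $d$ is even the full projection turns $\SIP'_d$ into a restricted OR gate $\mathsf{OR}_{\widehat{\rho}_2}$. Conditioning on $\mathbf{\Psi}_{\rhoin}(C')$ being a width-$w^{1/5}$ CNF, applying \lem{corcnfor} with $p=t_1$, and then using \lem{unbiasedor},
\[\Pr_{\rho_d\cdots\rho_2,\mathbf{Y}}[\mathbf{\Psi}(\SIP'_d)(\mathbf{Y})\ne\mathbf{\Psi}_{\rhoin}(C')(\mathbf{Y})]\ \ge\ \E_{\mathbf{\Psi}}\bigl[\mathsf{bias}(\mathsf{OR}_{\widehat{\rho}_2},\mathbf{Y})\bigr]-w^{1/5}t_1-e^{-\Omega(w^{1/5})}\ \ge\ \tfrac12-\tilde O(w^{-1/12})-w^{1/5}t_1-e^{-\Omega(w^{1/5})}.\]
Since $t_1=\Theta(\sqrt{\log w/w})$ (so $w^{1/5}t_1=\tilde O(w^{-3/10})$) and $N$ is polynomially related to $w$, each of $\tilde O(w^{-1/12})$, $w^{1/5}t_1$, $e^{-\Omega(w^{1/5})}$, and $S(\tfrac k6+2)e^{-k/6}$ is $N^{-\Omega(1/d)}$; chaining the three displays gives $\Pr_{\mathbf{X}}[\SIP'_d(\mathbf{X})\ne C(\mathbf{X})]\ge\tfrac12-N^{-\Omega(1/d)}$, which is the claim.

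I expect the middle step to be the main obstacle, for two reasons. First, \cor{projdisagrdnf} is applied with a genuine mismatch between the distribution $D_1$ used to sample the restriction $\rhoin$ and the distribution $D_2$ used for the coordinates it leaves free; this mismatch is exactly what forces the worst-case form of the quantum switching lemma, and one has to check that $D_1$ and $D_2$ compose, through the subsequent $\SIP'_d$ projections and via \lem{uniformrst}, to the uniform distribution, so that \cor{uniformdist} can be invoked. Second, one must verify that after substituting the width-$k^2$ CNFs the circuit $C'$ really meets the hypotheses of \cor{circuittocnf}: depth exactly $d$, output gate $\wedge$, bottom fan-in at most $w^{1/5}$, and size at most $2^{w^{1/5}}$; the clean alternation produced by the substitution (which relies on $C$'s output being $\wedge$ together with $d$ being even) is precisely what makes this go through. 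Everything else is routine bookkeeping around the \cite{HRST17} projection machinery recalled in the preliminaries.
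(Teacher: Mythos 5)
Your proposal follows essentially the same route as the paper's own proof: the same choice of $k$, the same block-product distributions $D_1,D_2$ feeding \cor{projdisagrdnf}, the same substitution of small-width DNF/CNF for the quantum bottom layer, the same chain through \cor{uniformdist}, \cor{circuittocnf}, \lem{corcnfor}, and \lem{unbiasedor}, and the same error accounting. The only cosmetic differences are that you make explicit why substituting a CNF preserves the alternation to give a genuine depth-$d$ circuit with $\wedge$ output (the paper simply pads $C'$ to bottom fan-in $1$ after a DNF substitution), and you take $r=w^{1/5}$ to match \cor{circuittocnf} directly rather than the paper's $r=N^{1/(4(d-1))}$; both lead to the same $N^{-\Omega(1/d)}$ bound.
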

\begin{proof}
    Set $k = \lceil 6(\ln \frac{2S}{3} + \frac{1}{d} \ln N)\rceil \leq \polylog(N)$. Let $D_1 = \otimes_{i=1}^{A_{d-1}} D_{1i}$ where for each $i \in A_{d-1}$, $D_{1i}$ samples $\{1\}^{w_{d-1}}$ with probability $\frac{x}{1-q'}$ and $\{0_{1/2},1_{1/2}\}^{w_{d-1}}\setminus\{1^{w_{d-1}}\}$ with probability $\frac{1-x-q'}{1-q'}$. Let $D_2 = \otimes_{i=1}^{A_{d-1}} D_{2i}$ where for each $i \in A_{d-1}$, $D_{2i}$ samples $1$ with probability $t_{d-1}$ and $0$ otherwise. Let $\rho_d = \rhoin$ be the $q'$-block-random restriction with underlying distribution $D_1$, where each block is sampled from $\{*_{1/2},1_{1/2}\}^l\setminus\{1^l\}$ with probability $q'$ and from the corresponding block of $D_1$ otherwise. Clearly, $\rhoin \leftarrow \mathcal{R}_{\mathsf{init}}$. Let $D$ be the distribution induced by $D_2$ on the indices set to $*$ by $\mathsf{proj}_{\rho}$, so $D = \{0_{1-t_{d-1}},1_{t_{d-1}}\}^{(\widehat{\rhoin})^{-1}(*)}$. Let $C'$ be the $\AC^0_{d-1}$ circuit obtained from $C$ after applying $\mathsf{proj}_{\rhoin}$, by replacing each quantum gate with a width-$k^2$ DNF/CNF as per \cor{projdisagrdnf}. By a union bound,
    \begin{align*}
        \Pr_{\rhoin, z \sim D}[\mathsf{proj}_{\rhoin}C(z) \neq C'(z)] \leq S(\frac{k}{6}+2)e^{-k/6}
    \end{align*}
    We will think of the circuit $C'$ above as an $\AC^0_{d}$ circuit of bottom fan-in 1. Let $D_3 = \{0_{1-t_1},1_{t_1}\}^{w_0}$, and $\mathsf{OR}$ be the $\vee$ function on $w_0$ bits. For any restriction $\rho$ applied to $\mathsf{OR}$ and a string $y$ in $\{0,1\}^{w_0}$, we use $\mathsf{OR}_\rho(y)$ to mean $\mathsf{OR}_\rho$ evaluated at input $y_{|F}$ where $F$ is the set of indices set to $\ast$ by $\rho$. Now our proof is similar to that of \cite{HRST17}. We set $r = N^{\frac{1}{4(d-1)}}$.
    \begin{align*}
        \Pr_{\mathbf{X}}[\SIP'_{d}(\mathbf{X})\! \neq \!C(\mathbf{X})] &= \Pr_{\mathbf{\Psi}, y \sim D_3}[\mathbf{\Psi}(\SIP'_d(y) \neq \mathbf{\Psi}(C(y))] &\tag*{(Using \cor{uniformdist})} \\ \displaybreak
        &\geq \Pr_{\rho_d \cdots \rho_2, y \sim D_3}[\mathbf{\Psi}(\SIP'_d)(y) \neq \mathbf{\Psi}_{\rho_d}(C')(y)] - \!\!\!\Pr_{\rho_d, z \sim D}[\mathsf{proj}_{\rho_d}(C)(z) \neq C'(z)]\!\!\!\! \\
        &\geq \Pr_{\rho_d \cdots \rho_2, y \sim D_3}[\mathbf{\Psi}(\SIP'_d)(y) \neq \mathbf{\Psi}_{\rho_d}(C')(y)] - S(\frac{k}{6}+2)e^{-k/6} \\
        &= \Pr_{\rho_d \cdots \rho_2, y \sim D_3}[(\mathsf{OR}_{\widehat{\rho}_2})(y) \neq \mathbf{\Psi}_{\rho_d}(C')(y)] - S(\frac{k}{6}+2)e^{-k/6} \\
        &\geq \Pr_{\rho_d \cdots \rho_2, y \sim D_3}[(\mathsf{OR}_{\widehat{\rho}_2})(y) \neq \mathbf{\Psi}_{\rho_d}(C')(y)| \mathbf{\Psi}_{\rho_d}(C') \text{ is a width-$r$ CNF}] \\ 
        &-\Pr_{\rho_d \cdots \rho_2}[\mathbf{\Psi}_{\rho_d}(C') \text{ is not a width-$r$ CNF}] - S(\frac{k}{6}+2)e^{-k/6} \\
        &\geq \E_{\rho_d \cdots \rho_2}[\Pr_{y \sim D_3}[(\mathsf{OR}_{\widehat{\rho}_2})(y) \neq \mathbf{\Psi}_{\rho_d}(C')(y)]| \mathbf{\Psi}_{\rho_d}(C') \text{ is a width-$r$ CNF}] \\
        &-e^{-\Omega(N^{\frac{1}{6(d-1)}})} - S(\frac{k}{6}+2)e^{-k/6} &\tag*{(Using \cor{circuittocnf})} \\
        &= \E_{\mathbf{\Psi}}[\mathsf{bias}(\mathsf{OR}_{\widehat{\rho}_2}, \mathbf{y}) - rt_1] -e^{-\Omega(N^{\frac{1}{6(d-1)}})} - S(\frac{k}{6}+2)e^{-k/6} &\tag*{(Using \lem{corcnfor})} \\
        &= \E_{\mathbf{\Psi}}[\mathsf{bias}(\mathsf{OR}_{\widehat{\rho}_2}, \mathbf{y})] - rt_1 -e^{-\Omega(N^{\frac{1}{6(d-1)}})} - S(\frac{k}{6}+2)e^{-k/6} \\
        &\geq \frac{1}{2} - \Tilde{O}(w^{-1/12}) - rt_1 - e^{-\Omega(N^{\frac{1}{6(d-1)}})} - S(\frac{k}{6}+2)e^{-k/6} &\tag*{(Using \lem{unbiasedor})} \\
        &\geq \frac{1}{2} - \frac{1}{N^{\Omega(1/d)}} \qedhere
    \end{align*}
\end{proof}

An analogous proof follows for the case when $d \geq 3$ is odd, by replacing $\mathsf{OR}$ with $\mathsf{AND}$ (and $\wedge$ with $\vee$) and flipping the values 0 and 1 in $D_3$. Note also that the function $\SIP'_d$ has bottom fan-in $\polylog(N)$. Therefore, using \thm{randomlowerbound}, for every $d \geq 2$, we can conclude that relative to a random oracle $O$, $\Sigma_{d}^O \not \subset \QCPi_{d-1}^O$ when $d$ is odd, and $\Pi_{d}^O \not \subset \QCSigma_{d-1}^O$ when $d$ is even. In particular, for every $d \geq 2$ and relative to a random oracle $O$, we have that $\Sigma_{d}^O \not \subset \QCPi_{d-1}^O$. In addition, we can conclude that relative to a random oracle, $\QCPH$ is infinite.

\begin{theorem}\label{thm:randomoracleseparation}
    With probability 1, a random oracle O satisfies $\Sigma_{d}^O \not \subset \QCPi_{d-1}^O$ for odd $d \geq 3$.
\end{theorem}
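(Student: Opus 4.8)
The plan is to deduce this directly from the circuit lower bound \thm{randomlowerbound} via the standard ``slow diagonalization'' passage from circuit lower bounds to random oracle separations (carried out in \app{diagonalization}); the only real work remaining is to check that the parameters and alternation parities line up. Fix an odd constant $d\ge 3$ and set $d_0:=d+1$, which is an even constant $\ge 4$, so that \thm{randomlowerbound} applies to $\SIP'_{d_0}$.

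First I would exhibit the hard language in $\Sigma_d^O$. The function $\SIP'_{d_0}$ is a depth-$d_0$ read-once formula whose root is an $\vee$ gate (since $d_0$ is even) and whose bottom fan-in $w_{d_0-1}$ is only $\polylog(N)$. Hence, identifying the oracle bits relevant to inputs of length $n$ with the $N=2^{\poly(n)}$ inputs of $\SIP'_{d_0}$, a $\Sigma_d$ machine can decide ``$\SIP'_{d_0}(O)=1$'': it uses its $d=d_0-1$ alternating quantifiers, starting with $\exists$ to match the $\vee$ root, to navigate the top $d_0-1$ layers of the formula down to a single bottom $\wedge$ gate, and then in polynomial time reads that gate's $\polylog(N)=\poly(n)$ many oracle bits and evaluates it. This membership relativizes, so the language $L$ so defined satisfies $L\in\Sigma_d^O$ for every oracle $O$; as in \app{diagonalization}, $L$ is padded over a sufficiently sparse sequence of input lengths with the relevant oracle bits for distinct lengths disjoint.

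Next I would show $L\notin\QCPi_{d-1}^O$ for a random $O$. Suppose $V$ is a $\QCPi_{d-1}$ verifier; by the $\QCPi$ analogue of \prop{qcphcirc}, for each input length $n$ the computation $V^O(0^n)$ equals $C_n(O)$ for a circuit $C_n$ of depth $d_0-1$ whose top $d-1=d_0-2$ layers are alternating $\wedge/\vee$ gates topped by a $\wedge$ gate (because the first quantifier of $\QCPi_{d-1}$ is $\forall$), followed by a layer of quantum query circuits of query complexity $\poly(n)=\polylog(|C_n|)$ — that is, $C_n$ is a $\QAC^0_{d_0-2}$ circuit with $\wedge$ output, of size $2^{\poly(n)}$, which is below $2^{N^{1/(6(d_0-1))}}$ once $n$ (hence $N$) is large. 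Applying \thm{randomlowerbound} with the even constant $d_0$ shows that, on a uniformly random setting of the $N$ oracle bits, $C_n$ disagrees with $\SIP'_{d_0}$ with probability at least $\tfrac12-o(1)$; since a random oracle induces exactly the uniform distribution on those bits, $\Pr_O[V^O(0^n)=L(0^n)]\le \tfrac12+o(1)<\tfrac23$ at each of the infinitely many diagonalization lengths, and these events are independent since distinct lengths use disjoint oracle blocks. Hence $\Pr_O[V^O\text{ decides }L]=0$, and a union bound over the countably many $\QCPi_{d-1}$ verifiers gives $\Pr_O[L\in\QCPi_{d-1}^O]=0$. Combined with $L\in\Sigma_d^O$ always, this yields $\Sigma_d^O\not\subset\QCPi_{d-1}^O$ with probability $1$.

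There is no conceptual obstacle here: all of the difficulty is already absorbed into \thm{randomlowerbound} (and ultimately the quantum switching lemma). The one point needing care is the alternation bookkeeping — verifying that a $\QCPi_{d-1}$ verifier produces precisely a $\QAC^0_{d_0-2}$ circuit with $\wedge$ as its output gate, which is exactly the form of circuit that \thm{randomlowerbound} rules out in its even-$d_0$ formulation, and that $\SIP'_{d_0}$ with its $\vee$ root and $\polylog$ bottom fan-in lands in $\Sigma_d^O$ rather than one level higher or lower. This parity matching is precisely why the theorem is stated for odd $d$: then $d_0=d+1$ is even, and we can invoke the cleanly proved even case of \thm{randomlowerbound} with $\wedge$-output circuits rather than the odd-case analogue that is only sketched as a remark.
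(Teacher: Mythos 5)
Your proof is correct and follows essentially the same route as the paper's Appendix~\ref{app:diagonalization}: define the unary language via $\SIP'_{d+1}$ on the oracle string, note that the $\polylog$ bottom fan-in together with the $\vee$-root parity places it in $\Sigma_d^O$, convert a $\QCPi_{d-1}$ oracle machine into a $\QAC^0_{d-1}$ circuit with $\wedge$ output via the relativizing analogue of \prop{qcphcirc}, invoke \thm{randomlowerbound}, and then diagonalize using the independence of disjoint oracle blocks across a sparse sequence of input lengths. Your explicit alternation bookkeeping (with $d_0=d+1$ even so that the $\wedge$-output case of \thm{randomlowerbound} applies) is a helpful unpacking of details the paper leaves implicit, but there is no substantive difference.
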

\begin{proof}
    Proven in Appendix \ref{app:diagonalization}.
\end{proof}

\begin{corollary}\label{cor:randomoracleseparationeven}
    With probability 1, a random oracle O satisfies $\Pi_{d}^O \not \subset \QCSigma_{d-1}^O$ for even $d \geq 2$.
\end{corollary}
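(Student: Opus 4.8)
The plan is to derive \cor{randomoracleseparationeven} from the \emph{even}-$d$ circuit lower bound \thm{randomlowerbound} by running exactly the ``slow diagonalization'' that establishes \thm{randomoracleseparation} in \app{diagonalization}: the two statements differ only by a parity flip in the relevant circuit and by one complementation. First I would reduce the claimed separation to a $\Sigma$-versus-$\QCPi$ one. Since $\QCPi_{d-1}$ is by definition the complement class of $\QCSigma_{d-1}$, and $\Pi_d$ is the complement class of $\Sigma_d$, for every oracle $O$ there is a (total) language in $\Pi_d^O\setminus\QCSigma_{d-1}^O$ if and only if its complement lies in $\Sigma_d^O\setminus\QCPi_{d-1}^O$. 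So it suffices to show that with probability $1$ a random oracle $O$ satisfies $\Sigma_d^O\not\subset\QCPi_{d-1}^O$ for \emph{even} $d\ge 2$.

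Next I would circuitize the upper class and plug in the even-$d$ lower bound. By \prop{qcphcirc}, any $\QCPi_{d-1}$ verifier with oracle access, restricted to inputs of a fixed length, is computed by a constant-depth circuit whose bottom layer is a family of $\polylog$-query quantum circuits and whose $d-2$ alternating $\wedge/\vee$ layers have, when $d$ is even, a $\wedge$ output gate --- precisely the $\QAC^0_{d-2}$ shape appearing in \thm{randomlowerbound}. That theorem says any such circuit agrees with $\SIP'_d$ on at most a $\tfrac{1}{2}+o(1)$ fraction of uniformly random inputs, hence no such circuit computes $\SIP'_d$; and for even $d$ the formula $\SIP'_d$ has a $\vee$ gate at its root and bottom fan-in $\polylog(N)$, so the oracle language whose membership is decided by the $\SIP'_d$-structured query algorithm sits in $\Sigma_d^O$.

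Finally I would invoke the diagonalization of \app{diagonalization} without change. Its scaffolding --- for each candidate $\QCPi_{d-1}$ verifier, reserve infinitely many disjoint blocks of oracle bits, and use the circuit lower bound to argue that on a random setting of each such block the verifier disagrees with the $\SIP'_d$-structured target on some input with probability bounded away from $0$, so that across the independent blocks the verifier succeeds with probability $0$; a union bound over the countably many verifiers then gives the separation with probability $1$ --- nowhere uses the parity of $d$. It consumes only a circuit lower bound of the above shape, which \thm{randomlowerbound} supplies for even $d$. This yields, with probability $1$, a language in $\Sigma_d^O$ that no $\QCPi_{d-1}$ verifier decides, which together with the first step completes the proof.

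I expect the only delicate point to be the parity and output-gate bookkeeping in the second step: one must verify that a $\QCPi_{d-1}$ verifier for \emph{even} $d$ really circuitizes to a $\QAC^0_{d-2}$ circuit with a $\wedge$ output gate --- the mirror image of the $\vee$-output situation used for odd $d$ in \thm{randomoracleseparation} --- so that \thm{randomlowerbound} applies verbatim, and that the even-$d$ $\SIP'_d$ is indeed the ``$\vee$-top'' formula. A secondary loose end is the base case $d=2$, where \thm{randomlowerbound} is stated only for $d\ge 4$; this should be settled by a direct argument at the bottom level. None of this interacts with the analytic core --- the quantum switching lemma, the \cite{HRST17} projection switching lemma, and the diagonalization --- which is already in place, so I do not anticipate a genuine obstacle.
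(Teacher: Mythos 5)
Your reduction to $\Sigma_d^O\not\subset\QCPi_{d-1}^O$ by complementation is fine in itself, but the rest of the plan contains two coupled off-by-one errors that, taken together, make you re-prove \thm{randomoracleseparation} for odd indices rather than the even-$d$ corollary. First, \prop{qcphcirc} says a $\QCPi_{d-1}$ (or $\QCSigma_{d-1}$) verifier circuitizes to a circuit with $d-1$ alternating $\wedge/\vee$ layers on top of the quantum layer, i.e.\ $\QAC^0_{d-1}$, not $\QAC^0_{d-2}$. Second, because the bottom layer of $\SIP'_e$ has fan-in $\polylog(N)$ and is evaluated by the verifier itself rather than a quantifier, $\SIP'_e$ defines a language in $\Sigma_{e-1}$ or $\Pi_{e-1}$, not $\Sigma_e$ or $\Pi_e$; so the target for separating level $d$ must be $\SIP'_{d+1}$, not $\SIP'_d$. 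With those two corrections, for even $d$ the target becomes $\SIP'_{d+1}$ with \emph{odd} depth $d+1$ (AND at the root, so in $\Pi_d^O$), and the verifier circuit is $\QAC^0_{d-1}$ — which is $\QAC^0_{(d+1)-2}$ — with $\vee$ output for $\QCSigma_{d-1}$ (or $\wedge$ output for $\QCPi_{d-1}$ after complementing). Either way, the circuit lower bound you need is the \emph{odd-depth} analogue of \thm{randomlowerbound} (the paper only sketches it: ``replacing $\mathsf{OR}$ with $\mathsf{AND}$, $\wedge$ with $\vee$, and flipping $0/1$ in $D_3$''), not the even-$d$ statement you cite. Your plan explicitly ``plugs in the even-$d$ lower bound,'' which is for even formula depth $d'\ge 4$; it never matches the odd formula depth $d+1$ that arises here, and no choice of complementation fixes that mismatch. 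Finally, the $d=2$ worry you raise disappears once the indexing is fixed: it corresponds to formula depth $d'=d+1=3$, which is exactly the base case of the odd-depth analogue and needs no separate argument.
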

\begin{proof}
    Similar to \thm{randomoracleseparation}, using the analogue of \thm{randomlowerbound} for $\SIP'_d$ when $d$ is odd.
\end{proof}

\begin{corollary}\label{cor:qcphinfinite}
    With probability 1, a random oracle $O$ satisfies $\QCSigma_{d+1}^O \not \subset \QCPi_{d}^O$ for all $d \geq 1$, and therefore $\QCPH$ is infinite relative to $O$.
\end{corollary}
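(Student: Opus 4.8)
The plan is to derive the corollary from \thm{randomoracleseparation} and \cor{randomoracleseparationeven}, combined with the trivial inclusions $\Sigma_{d+1}\subseteq\QCSigma_{d+1}$ and $\Pi_{d+1}\subseteq\QCPi_{d+1}$ (a deterministic polynomial-time oracle machine is a $\BQP$ verifier with perfect completeness and soundness, and this relativizes), and then to invoke the collapse theorem \prop{qcphcollapse}. Concretely, I would first fix $d\ge 1$ and show that with probability $1$ over the random oracle $O$ we have $\QCSigma_{d+1}^O\not\subset\QCPi_d^O$, splitting on the parity of $d$.

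If $d$ is even, then $d+1$ is odd and $d+1\ge 3$, so \thm{randomoracleseparation} (applied with its ``$d$'' set to our $d+1$) gives $\Sigma_{d+1}^O\not\subset\QCPi_d^O$ with probability $1$; since $\Sigma_{d+1}^O\subseteq\QCSigma_{d+1}^O$, this immediately yields $\QCSigma_{d+1}^O\not\subset\QCPi_d^O$. If $d$ is odd, then $d+1$ is even and $d+1\ge 2$, so \cor{randomoracleseparationeven} gives, with probability $1$, a language $A\in\Pi_{d+1}^O$ with $A\notin\QCSigma_d^O$. Passing to the complement $\bar A\in\Sigma_{d+1}^O$: if $\bar A\in\QCPi_d^O$ then, since $\QCPi_d$ is by definition the complement class of $\QCSigma_d$, we would get $A\in\QCSigma_d^O$, a contradiction; hence $\bar A\notin\QCPi_d^O$, so again $\Sigma_{d+1}^O\not\subset\QCPi_d^O$ and therefore $\QCSigma_{d+1}^O\not\subset\QCPi_d^O$.

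Since a countable intersection of probability-$1$ events has probability $1$, with probability $1$ the random oracle $O$ satisfies $\QCSigma_{d+1}^O\not\subset\QCPi_d^O$ for every $d\ge1$ simultaneously. To conclude that $\QCPH^O$ is infinite it suffices, by \prop{qcphcollapse} (in its relativized form), to show $\QCSigma_j^O\neq\QCPi_j^O$ for every $j\ge1$. Fix $j$; we have just shown $\QCSigma_{j+1}^O\not\subset\QCPi_j^O$, so in particular $\QCSigma_{j+1}^O\neq\QCPi_j^O$. If we had $\QCSigma_j^O=\QCPi_j^O$, then \prop{qcphcollapse} would give $\QCPH^O=\QCSigma_j^O$, and hence $\QCSigma_{j+1}^O\subseteq\QCPH^O=\QCSigma_j^O=\QCPi_j^O$, a contradiction. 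Thus $\QCSigma_j^O\neq\QCPi_j^O$ for all $j\ge1$, i.e.\ $\QCPH^O$ is infinite.

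I do not expect a substantive obstacle: all the real work lies in \thm{randomlowerbound} and the ``slow diagonalization'' that promotes it to the oracle separations of \thm{randomoracleseparation} and \cor{randomoracleseparationeven}. The only points needing care are the parity bookkeeping between the $\Sigma$- and $\Pi$-flavoured statements, the correct handling of promise-problem complementation when converting a $\QCSigma_d$ lower bound into a $\QCPi_d$ lower bound, and verifying that both the containment $\Sigma_{d+1}\subseteq\QCSigma_{d+1}$ and the collapse theorem relativize (they do, since both are purely syntactic quantifier manipulations).
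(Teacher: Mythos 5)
Your proposal is correct and follows the same route as the paper, which gives only the one-line justification ``Follows from \thm{randomoracleseparation}, \cor{randomoracleseparationeven} and \prop{qcphcollapse} (the proof of which relativizes).'' You have simply spelled out the parity bookkeeping, the complementation step $\Pi_{d+1}^O\not\subset\QCSigma_d^O \Leftrightarrow \Sigma_{d+1}^O\not\subset\QCPi_d^O$, the inclusion $\Sigma_{d+1}^O\subseteq\QCSigma_{d+1}^O$, the countable intersection, and the contradiction via the relativized collapse theorem, all of which the paper leaves implicit.
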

\begin{proof}
    Follows from \thm{randomoracleseparation}, \cor{randomoracleseparationeven} and \prop{qcphcollapse} (the proof of which relativizes).
\end{proof}

\paragraph{Remarks.} We now comment on how our switching lemma for quantum query algorithms (\thm{restbqpstronger}) compares to the switching lemma (Theorem 65) from \cite{AIK22}, and why the latter is not sufficient to establish our oracle separation result. In Theorem 65 of \cite{AIK22}, the authors show that after a uniformly random restriction, a $\QMA$-query algorithm is close in expectation to a DNF of small width, under the uniform distribution on the unrestricted bits. In fact, their proof also works if both the random restriction and the inputs for the unrestricted bits are sampled from a biased product distribution. However, this is not sufficient for our setting. From \defn{initdist}, we know that we choose the first random projection according to the distribution $\mathcal{R}_{\mathsf{init}}$. However, from \lem{uniformrst}, we know that the subsequent random projections complete to the $t_{d-1}$ biased product distribution. This happens because the projections are sampled from a biased underlying distribution, so that not all $\mathsf{AND}, \mathsf{OR}$ gates are set to a constant value $0$. Therefore, we need to choose the random projection for the quantum query algorithm from a different underlying distribution than the one for the inputs on unrestricted bits, using which we compare the closeness of the decision tree. Thus, in \thm{restbqpstronger}, we consider two inputs $x, y$, where $x$ is used to choose the restricted bits and $y$ is used to provide an input for the unrestricted bits, and these strings can be sampled from different distributions. A similar property is needed for proving the depth hierarchy theorem for the $\QCMA$ hierarchy in \sec{QCMAH}, where we repeatedly apply our projection switching lemma for quantum query algorithms with distributions of different biases on the restricted bits and the unrestricted input bits.

\section{Oracle Separation for \texorpdfstring{$\QCMA$}{QCMA} Hierarchy}\label{sec:QCMAH}
In this section, we show that there exists an oracle such that the $\QCMA$ hierarchy (called $\QCMAH$) is infinite, and no fixed level of $\QCMA$ hierarchy contains all of $\PH$. We first redefine the $\SIP$ function, which we will call $\SIP^{''}$. The function $\SIP^{''}_d$ will be an AND-OR tree of depth $2d+2$ (with AND gates at the bottom), and we will show that it can not be computed by a circuit corresponding to a $\QCMAH_d$ machine. Using the parameters below, we can define the fan-ins for the $\SIP^{''}_d$ function. The fan-in for each depth $j$ is denoted below by $w_j$. As before, the parameters $t_k$ correspond to the bias of the distribution for the random projection for depth $k$. The random projections are defined explicitly after \lem{niapprox}. The parameters $N_i$ roughly correspond to the number of gates (or variables) at depth $2i+2$, as shown in \lem{niapprox}.
\begin{align*}
    w &:= \lfloor m2^m/\log e \rfloor \\
    q &:= \sqrt{p} = 2^{-m/2} = \Theta \left(\sqrt{\frac{\log w}{w}}\right) \\
    N_1 &:= (w_0qw^3)^{\frac{7}{2}} \\
    N_i &:= (w_0qw^3)^{(\frac{7}{2})^i}(qw^2)^{\frac{7}{5}((\frac{7}{2})^{i-1}-1)} &\text{ for } 2\leq i \leq d-1 \\
    N_d &:= (w_0qw^3)^{(\frac{7}{2})^d}(qw^2)^{\frac{7}{5}((\frac{7}{2})^{d-1}-1)}(qw)^{\frac{7}{2}} \\
    q_i &:= \frac{1}{N_i^{\frac{5}{7}}} &\text{ for } 1 \leq i \leq d \\
    \lambda_i &:= \frac{q_i}{qw^{\frac{5}{4}}} &\text{ for } 1 \leq i \leq d \\
    \lambda &:= \frac{(\log w)^{3/2}}{w^{5/4}} \\
    t_{2d+1} &:= \frac{p-\lambda}{q} = q \pm q^{1.1} = \Theta \left(\sqrt{\frac{\log w}{w}}\right) \\ \displaybreak
    t_{k-1} &:= \frac{(1-t_k)^{qw}-\lambda}{q} = q \pm q^{1.1} = \Theta \left(\sqrt{\frac{\log w}{w}}\right) &\text{ for } 2\leq k \leq 2d+1 \\
    f_i &:= \frac{\log(\lambda_i + q_it_{2i+1})}{q\log(1-t_{2i+2})} &\text{ for } 1\leq i \leq d-1 \\
    f_d &:= \log(\lambda_d + q_dt_{2d+1}) = \polylog(N_d) \\ 
    \beta(k,2d+2) &:= \frac{1}{3} + \frac{2d-k+1}{24(d+1)} < \frac{5}{12} &\text{ for } 1\leq k \leq 2d+1 \\
    w_0 &:= \min_{i \in \mathbb{N}}\{(1-t_1)^{qi} \leq \frac{1}{2}\} \\
    w_1 &:= w \\
    w_{2i} &:= qwN_i^{\frac{5}{7}} &\text{ for } 1\leq i \leq d \\
    w_{2i+1} &:= f_i &\text{ for } 1\leq i \leq d \\
    N &= \Pi_{i=0}^{2d+1} w_i
\end{align*}
We now show that $f_i = \Tilde{\Theta}(w)$ for all $1 \leq i \leq d-1$, and therefore $N = \Tilde{\Theta}(N_d)$.
\begin{lemma}\label{lem:niapprox}
    For all $1 \leq i \leq d-1$, $f_i = \Tilde{\Theta}(w)$. Therefore, $N = \Tilde{\Theta}(N_d)$.
\end{lemma}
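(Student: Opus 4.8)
The plan is to prove both assertions by direct asymptotic estimates, using only the parameter relations already recorded above. I will repeatedly use the following facts: every $t_k$ appearing here satisfies $t_k = q \pm q^{1.1} = q(1 \pm o(1))$; from $w = \Theta(m 2^m/\log e)$ we get $\log w = \Theta(m)$, i.e. $m = \Theta(\log w)$; from $q = \Theta(\sqrt{\log w/w})$ we get $q = \Tilde{\Theta}(w^{-1/2})$ and $p = q^2 = \Tilde{\Theta}(w^{-1})$; and $w_0 = 2^m\ln 2\,(1 \pm o(1)) = \Tilde{\Theta}(w)$. I also note once that each $N_i$ $(1 \le i \le d-1)$ and $N_d$ is a product of a bounded number of positive powers of $w_0, q, w$ whose exponents are $O_d(1)$ (since $d$ is a fixed constant), so $N_i = \Tilde{\Theta}(w^{c_i})$ with $c_i$ a positive constant, and in particular $\log N_i = \Theta(\log w)$; the same reasoning gives $N = \prod_{i=0}^{2d+1} w_i = \Tilde{\Theta}(w^{O_d(1)})$, so $\polylog(N) = \polylog(w)$.

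\emph{First I would establish $f_i = \Tilde{\Theta}(w)$ for $1 \le i \le d-1$.} The crucial point is that inside the numerator of $f_i$, the term $\lambda_i$ is negligible compared to $q_i t_{2i+1}$: since $\lambda_i = q_i/(q w^{5/4})$ and $t_{2i+1} = \Theta(q)$, we have $\lambda_i/(q_i t_{2i+1}) = \Theta(1/(q^2 w^{5/4})) = \Theta(1/(p\,w^{5/4})) = \Tilde{\Theta}(w^{-1/4}) = o(1)$, so $\lambda_i + q_i t_{2i+1} = (1 + o(1))\,q_i t_{2i+1} = \Theta(q_i q)$. Taking logarithms, $\log(\lambda_i + q_i t_{2i+1}) = \log q_i + \log q + O(1) = -\tfrac{5}{7}\log N_i - \Theta(m) + O(1) = -\Theta(\log w)$; in particular this quantity is negative, as it must be. For the denominator, the Taylor expansion gives $\log(1 - t_{2i+2}) = -t_{2i+2}(1 + o(1)) = -\Theta(q)$, hence $q\log(1 - t_{2i+2}) = -\Theta(q^2) = -\Theta(p) = -\Theta(\log w/w)$. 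Dividing the two estimates, $f_i = \Theta(\log w)\,/\,\Theta(\log w/w) = \Theta(w)$, which is $\Tilde{\Theta}(w)$, with implied constants depending only on $i$ and $d$.

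\emph{Then $N = \Tilde{\Theta}(N_d)$ follows by substitution.} Write $N = \prod_{i=0}^{2d+1} w_i = w_0\, w_1 \cdot \prod_{i=1}^{d}(qw\, N_i^{5/7}) \cdot \prod_{i=1}^{d-1} f_i \cdot f_d$. By the estimates above and the previous step, each of $w_0, w_1, f_1, \dots, f_{d-1}$ is $\Tilde{\Theta}(w)$ and $f_d = \polylog(N_d) = \polylog(w)$; being $O_d(1)$ in number, these factors contribute only a fixed power of $w$ (up to polylog). The remaining factor is $(qw)^d \prod_{i=1}^d N_i^{5/7}$, and expanding each $N_i$ in terms of the exponents of $w_0, q, w$ and summing the finite geometric sums in $(7/2)^i$ shows it agrees with $N_d$ up to a fixed power of $w$; thus $N$ and $N_d$ are polynomially related with the same leading exponent, which is the content of the stated $N = \Tilde{\Theta}(N_d)$.

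There is no conceptually difficult step; the entire argument is bookkeeping over the roughly twenty interdependent parameters. The single place that genuinely requires attention is the estimate of $f_i$ --- recognizing that $\lambda_i \ll q_i t_{2i+1}$ and that every $t_k = q(1 \pm o(1))$, so that the numerator of $f_i$ collapses to $-\Theta(\log N_i) = -\Theta(\log w)$ while its denominator collapses to $-\Theta(p) = -\Theta(\log w/w)$, producing the clean cancellation $f_i = \Theta(w)$. The rest is substitution and summing geometric series.
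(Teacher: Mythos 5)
Your estimate $f_i = \tilde{\Theta}(w)$ is essentially the paper's argument, point for point: isolate $\lambda_i$ as negligible relative to $q_i t_{2i+1}$ (via $\lambda_i/(q_i t_{2i+1}) = \Theta(1/(q^2 w^{5/4})) = o(1)$), note the numerator $\log(\lambda_i + q_i t_{2i+1}) = \log q_i + \log q + O(1) = -\Theta(\log w)$ since $\log N_i = \Theta(\log w)$, estimate the denominator $q\log(1-t_{2i+2}) = -\Theta(q^2) = -\Theta(\log w/w)$, and divide. This is correct and matches the paper.

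For $N = \tilde{\Theta}(N_d)$ you take a mildly different route: you expand $N = \prod_{i=0}^{2d+1} w_i$ in one shot and appeal to a geometric-sum computation, whereas the paper organizes the same bookkeeping inductively via the partial products $N_i' := \prod_{j \leq 2i+1} w_j$ and claims $N_i' = \tilde{\Theta}(N_i)$ for each $i$. These are equivalent decompositions, and neither version actually carries the computation through. The one place your wording is genuinely imprecise, and should be flagged, is the sentence ``\ldots shows it agrees with $N_d$ up to a fixed power of $w$; thus $N$ and $N_d$ are polynomially related with the same leading exponent, which is the content of the stated $N = \tilde{\Theta}(N_d)$.'' Agreement \emph{up to a fixed power of $w$} is \emph{not} what $\tilde{\Theta}$ means; you need the bounded powers of $w$ contributed by $w_0 w_1 \prod f_i$, by $(qw)^d$, and by the geometric sums to cancel \emph{exactly} up to polylog, and the clause ``with the same leading exponent'' only asserts this without checking it. That exact cancellation is the sole nontrivial content of the second claim; it is precisely what the paper's inductive identity $N_i' = \tilde{\Theta}(N_i)$ (equivalently $N_i^{2/7} = \tilde{\Theta}(N_{i-1}\,qw^2)$ for $i < d$, with the $i=d$ step handled separately since $f_d$ is only polylogarithmic) packages, and it is worth writing out explicitly rather than gesturing at.
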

\begin{proof}
    We first establish the estimate on $f_i$:
    \begin{align*}
        \lambda_i + q_it_{2i+1} &= \Theta(\frac{q}{N_i^{5/7}}) \\
        q &= \Theta \left(\sqrt{\frac{\log w}{w}}\right) \\
        \Rightarrow N_i &= \Tilde{\Theta}(q^{\Theta(1)}) \\ 
        \Rightarrow \frac{q}{N_i^{5/7}} &= \Tilde{\Theta}(q^{-\Theta(1)}) \\
        \Rightarrow \lambda_i + q_it_{2i+1} &= -\Theta(\log(q))
    \end{align*}
    Further, since $t_{2i+2} \in (0, 1/2)$
    \begin{align*}
        -2t_{2i+2} \leq \log(1-t_{2i+2}) \leq -t_{2i+2} \\
        \Rightarrow f_i = \frac{1}{q} \Theta(\frac{\log(q)}{q}) = \Tilde{\Theta}(w)
    \end{align*}
    Let $N'_i = \Pi_{j=0}^{2i+1}w_j$ for $1 \leq i \leq d$. Note that $N = N'_d$. Using the estimate of $f_i$ above, we can conclude that $N'_i = \Tilde{\Theta}(N_i)$ for all $1 \leq i \leq d$.
\end{proof}
We will then apply $2d-1$ random projections, corresponding to the following random restrictions. For $1 \leq i \leq d$, we sample random restrictions $\rho_{i,1} \in \{0,1,\ast\}^{A_{2i+2}}$ which acts on quantum query algorithms and $\rho_{i,2} \in \{0,1,\ast\}^{A_{2i+1}}$ which acts on DNFs as follows:
\begin{itemize}
    \item The first restriction $\rho_{d,1}$ is sampled by sampling independently for each $a \in A_{2d+1}$:
    \begin{align*}
    \rho(a) \leftarrow \begin{cases}
            \{1\}^{f_d} & \text{with probability $\lambda_d$} \\
            \{*_{1/2},1_{1/2}\}^{f_d}\setminus\{1^{f_d}\} & \text{with probability $q_d$} \\
            \{0_{1/2},1_{1/2}\}^{f_d}\setminus\{1^{f_d}\} & \text{with probability $1-\lambda_d-q_d$}
        \end{cases}
\end{align*}
    \item For $1 \leq i \leq d-1$, given $\tau \in \{0,1,\ast\}^{A_{2i+2}}$, the restriction $\rho_{i,1}$ is sampled by sampling independently for each $a \in A_{2i+1}$:
    \begin{itemize}
        \item For all $(a, j) \in A_{2i+1} \times [w_{2i+1}]$, if $\tau_{a,j} \neq \ast$, then $\rho(a,j) = \tau_{a,j}$. In particular, if $\hat{\tau}(a) = 1$ then $\rho(a,j) = \tau_{a,j} = 1$ for all $j \in [w_{2i+1}]$.
        \item If $\hat{\tau}(a) = 0$ or if $|S_a| \neq qw_{2i+1} \pm w^{\beta(2i+2,2d+2)}$, then each bit of $\rho$ in $S_a$ is set independently to $0$ with probability $t_{2i+2}$ and $1$ with probability $1-t_{2i+2}$.
        \item Otherwise,
        \begin{align*}
    \rho(S_a) \leftarrow \begin{cases}
            \{1\}^{S_a} & \text{with probability $\lambda_i$} \\
            \{*_{t_{2i+2}},1_{1-t_{2i+2}}\}^{S_a}\setminus\{1^{S_a}\} & \text{with probability $q_{i,a}$} \\
            \{0_{t_{2i+2}},1_{1-t_{2i+2}}\}^{S_a}\setminus\{1^{S_a}\} & \text{with probability $1-\lambda_i-q_{i,a}$}
        \end{cases}
\end{align*}
    where
    \begin{align*}
        q_{i,a} := \frac{(1-t_{2i+2})^{|S_a|}-\lambda_i}{t_{2i+1}}
    \end{align*}
    \end{itemize}
    \item For $1 \leq i \leq d$, given $\tau \in \{0,1,\ast\}^{A_{2i+1}}$, the restriction $\rho_{i,2}$ is sampled by sampling independently for each $a \in A_{2i}$:
    \begin{itemize}
        \item For all $(a, j) \in A_{2i} \times [w_{2i}]$, if $\tau_{a,j} \neq \ast$, then $\rho(a,j) = \tau_{a,j}$. In particular, if $\hat{\tau}(a) = 0$ then $\rho(a,j) = \tau_{a,j} = 0$ for all $j \in [w_{2i}]$.
        \item If $\hat{\tau}(a) = 1$ or if $|S_a| \neq qw \pm w^{\beta(2i+1,2d+2)}$, then each bit of $\rho$ in $S_a$ is set independently to $0$ with probability $1-t_{2i+1}$ and $1$ with probability $t_{2i+1}$.
        \item Otherwise,
        \begin{align*}
    \rho(S_a) \leftarrow \begin{cases}
            \{1\}^{S_a} & \text{with probability $\lambda$} \\
            \{*_{1-t_{2i+1}},1_{t_{2i+1}}\}^{S_a}\setminus\{1^{S_a}\} & \text{with probability $q_{a}$} \\
            \{0_{1-t_{2i+1}},1_{t_{2i+1}}\}^{S_a}\setminus\{1^{S_a}\} & \text{with probability $1-\lambda-q_{a}$}
        \end{cases}
\end{align*}
    where
    \begin{align*}
        q_{a} := \frac{(1-t_{2i+1})^{|S_a|}-\lambda}{t_{2i}}
    \end{align*}
    \end{itemize}
    Note that these are the same restrictions as those sampled in \defn{subsequentproj}, we have restated them here to make the parameters explicit.
    \item The final random restriction acting on $A_2$ is also sampled according to \defn{subsequentproj}.
\end{itemize}
We now show that $q_{i,a}$ as defined for the restriction $\rho_{i,1}$ is well-defined and is indeed very close to $q_i$, for $1 \leq i \leq d-1$.

\begin{lemma}[Analogue of Lemma 10.5 of \cite{HRST17}]\label{lem:qiestimate}
    For $1 \leq i \leq d-1$, let $S_a$ be a set such that $|S_a| = qw_{2i+1} \pm w^{\beta(2i+2,2d+2)}$, and define
    \begin{align*}
        q_{i,a} := \frac{(1-t_{2i+2})^{|S_a|}-\lambda_i}{t_{2i+1}}
    \end{align*}
    Then $q_{i,a} = q_i(1 \pm 2t_{2i+2}w^{\beta(2i+2,2d+2)})$.
\end{lemma}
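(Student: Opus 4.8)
\medskip

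\noindent\textbf{Proof proposal.} The plan is to exploit the fact that the fan-in $w_{2i+1}=f_i$ was chosen precisely so that $q_{i,a}$ collapses to $q_i$ when $|S_a|$ equals its ``target'' value $qw_{2i+1}$, and then to treat the actual deviation $|S_a|=qw_{2i+1}\pm w^{\beta}$ (write $\beta:=\beta(2i+2,2d+2)$) as a small multiplicative error in the exponent. Concretely, I would first rearrange the definition $f_i=\log(\lambda_i+q_it_{2i+1})/(q\log(1-t_{2i+2}))$ and use $w_{2i+1}=f_i$ to record the key identity
\[(1-t_{2i+2})^{qw_{2i+1}}=\lambda_i+q_it_{2i+1},\qquad\text{equivalently}\qquad q_i=\frac{(1-t_{2i+2})^{qw_{2i+1}}-\lambda_i}{t_{2i+1}},\]
which is exactly the formula defining $q_{i,a}$ with $|S_a|$ replaced by $qw_{2i+1}$. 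This already shows $q_{i,a}$ is a well-defined real; that it is a legitimate probability (i.e.\ $0<q_{i,a}$ and $\lambda_i+q_{i,a}\le1$) will follow from the final estimate, since $q_i=1/N_i^{5/7}$ is tiny and $\lambda_i<q_i$.

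Next I would set $e:=|S_a|-qw_{2i+1}$, so $|e|\le w^{\beta}$, factor $(1-t_{2i+2})^{|S_a|}=(1-t_{2i+2})^{qw_{2i+1}}\cdot(1-t_{2i+2})^{e}$, and linearize $(1-t_{2i+2})^{e}=\exp(e\log(1-t_{2i+2}))=1\pm|e|\,t_{2i+2}(1+o_m(1))$ — valid provided $|e|\,t_{2i+2}=o_m(1)$. Plugging this into the definition of $q_{i,a}$ and substituting the key identity, the $\lambda_i$ terms cancel and one obtains $q_{i,a}=q_i\pm\bigl(q_i+\lambda_i/t_{2i+1}\bigr)|e|\,t_{2i+2}(1+o_m(1))$. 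To finish, I would show $\lambda_i/t_{2i+1}=o_m(q_i)$, so the prefactor is $q_i(1+o_m(1))$; then since $|e|\le w^{\beta}$, absorbing the $(1+o_m(1))$ factors for large $m$ yields $q_{i,a}=q_i(1\pm 2t_{2i+2}w^{\beta})$.

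Both auxiliary estimates are short. For $|e|\,t_{2i+2}=o_m(1)$: the definition gives $\beta=\tfrac13+\tfrac{2d-(2i+2)+1}{24(d+1)}<\tfrac{5}{12}$, and $t_{2i+2}=\Theta(\sqrt{\log w/w})=\tTheta(w^{-1/2})$, so $|e|\,t_{2i+2}\le w^{\beta}t_{2i+2}=\tO(w^{5/12-1/2})=o_m(1)$. For $\lambda_i/t_{2i+1}=o_m(q_i)$: $\lambda_i=q_i/(qw^{5/4})$ and $t_{2i+1}=q\pm q^{1.1}=q(1\pm o_m(1))$, so $\lambda_i/t_{2i+1}=(q_i/q^2w^{5/4})(1+o_m(1))$, and $q^2=p=2^{-m}$, $w=\Theta(m2^m)$ give $q^2w^{5/4}=\Theta(m\,w^{1/4})\to\infty$.

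The main (really the only) obstacle is the bookkeeping of these asymptotics — in particular confirming $\beta<1/2$, so that the exponent perturbation $|e|\,t_{2i+2}$ is $o_m(1)$ and the linearization of $(1-t_{2i+2})^{e}$ is legitimate. There is no conceptual difficulty: this is the exact analogue of Lemma~10.5 of \cite{HRST17}, with $(q_i,w_{2i+1},t_{2i+1},t_{2i+2},\lambda_i)$ playing the roles of their $(q,w,t_{k-1},t_k,\lambda)$.
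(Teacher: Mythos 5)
Your proposal is correct and follows essentially the same route as the paper: both exploit the identity $(1-t_{2i+2})^{qf_i}=\lambda_i+q_it_{2i+1}$ (baked into the definition of $f_i$), linearize $(1-t_{2i+2})^{\pm w^{\beta}}$, and separately bound the $\lambda_i/t_{2i+1}$ error term, with the same numerical checks that $\beta<5/12<1/2$ and that $q^2w^{5/4}\to\infty$. The only difference is expository: you make the key identity and the decomposition $|S_a|=qw_{2i+1}+e$ explicit, whereas the paper substitutes directly into chained upper and lower bounds on $q_{i,a}$.
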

\begin{proof}
    We know from the statement that $|S_a| = qf_{i} \pm w^{\beta(2i+2,2d+2)}$. Therefore,
    \begin{align*}
        q_{i,a} &\leq \frac{(1-t_{2i+2})^{qf_i - w^{\beta(2i+2,2d+2)}} - \lambda_i}{t_{2i+1}} \\
        &= \frac{(1-t_{2i+2})^{qf_i} - \lambda_i(1-t_{2i+2})^{w^{\beta(2i+2,2d+2)}}}{t_{2i+1}(1-t_{2i+2})^{w^{\beta(2i+2,2d+2)}}} \\
        &\leq \frac{\lambda_i + q_it_{2i+1} - \lambda_i(1-t_{2i+2}w^{\beta(2i+2,2d+2)})}{t_{2i+1}(1-t_{2i+2}w^{\beta(2i+2,2d+2)})} \\
        &= \frac{q_i}{1-t_{2i+2}w^{\beta(2i+2,2d+2)}} + \frac{\lambda_it_{2i+2}w^{\beta(2i+2,2d+2)}}{t_{2i+1}(1-t_{2i+2}w^{\beta(2i+2,2d+2)})} \\
        &\leq \frac{q_i}{1-t_{2i+2}w^{\beta(2i+2,2d+2)}} + \frac{q_i}{qw^{5/4}}\frac{(1+3q^{0.1})w^{\beta(2i+2,2d+2)}}{(1-t_{2i+2}w^{\beta(2i+2,2d+2)})} \\
        &\leq q_i(1 + 2t_{2i+2}w^{\beta(2i+2,2d+2)})
    \end{align*}
    Here the second last inequality follows using the fact that $t_{2i+2}, t_{2i+1} = q \pm q^{1.1}$ The last inequality follows because $\frac{1+3q^{0.1}}{qw^{5/4}} = o_w(1)$. Further,
    \begin{align*}
        q_{i,a} &\geq \frac{(1-t_{2i+2})^{qf_i + w^{\beta(2i+2,2d+2)}} - \lambda_i}{t_{2i+1}} \\
        &\geq \frac{(1-t_{2i+2})^{qf_i}(1-t_{2i+2}w^{\beta(2i+2,2d+2)}) - \lambda_i}{t_{2i+1}} \\
        &\geq q_i(1-t_{2i+2}w^{\beta(2i+2,2d+2)}) - \frac{q_i}{qw^{5/4}t_{2i+1}} \\
        &\geq q_i(1-2t_{2i+2}w^{\beta(2i+2,2d+2)})
    \end{align*}
    Therefore the claim follows.
\end{proof}

A proof very similar to \lem{uniformrst} (from \cite{HRST17}) and \lem{uniform} can be used to conclude that the sequence of random projections sampled above completes to the uniform distribution. More strongly, the proof of \lem{uniformrst} tells us that for $1 \leq i \leq d$ projections upto $\rho_{i,2}$ complete to the $t_{2i+1}$ biased product distribution and projections upto $\rho_{i,1}$ complete to the $t_{2i+2}$ biased product distribution. This fact will be useful later to be able to apply the projection switching lemma for quantum algorithms repeatedly. 

In order to show that the $\SIP^{''}_d$ formula retains structure after applying each random projection, we modify the definition of typical restrictions (\defn{typical}) slightly, to account for the new parameters.

\begin{definition}[Typical random restriction]\label{def:typicalqcma}
    Let $1 \leq i \leq d$.
    \begin{itemize}
        \item The restriction $\rho_{i,1}$ is typical if
        \begin{enumerate}
        \item (Bottom fan-in after projection $\simeq qw$) For all $a \in A_{2i}$
        \begin{align*}
            |\hat{\tau}_a^{-1}(\ast)| = qw \pm w^{\beta(2i+1, 2d+2)}
        \end{align*}
        \item (Preserves rest of the structure) For all $a \in A_{2i-1}$
        \begin{align*}
            |(\hat{\hat{\tau}}_a)^{-1}(\ast)| \geq w_{2i-1} - w^{4/5}
        \end{align*}
    \end{enumerate}
        \item The restriction $\rho_{i,2}$ is typical if
        \begin{enumerate}
        \item (Bottom fan-in after projection $\simeq qw_{2i-1}$) For all $a \in A_{2i-1}$
        \begin{align*}
            |\hat{\tau}_a^{-1}(\ast)| = qf_{i-1} \pm w^{\beta(2i, 2d+2)}
        \end{align*}
        \item (Preserves rest of the structure) For all $a \in A_{2i-2}$
        \begin{align*}
            |(\hat{\hat{\tau}}_a)^{-1}(\ast)| \geq w_{2i-2}(1-w^{-1/5})
        \end{align*}
        \end{enumerate}
    \end{itemize}
\end{definition}
These conditions also imply that if a restriction $\rho$ acts on $A_k$, then all the gates in $A_{k-3}$ remain undetermined. This is because suppose $\rho$ is applied to a layer of $\wedge$ gates. Then by condition 1, the only values that $\vee$ gates in $A_{k-2}$ can get are $\ast$ or $1$ (so $\wedge$ gates in $A_{k-3}$ have inputs from $\ast$ and $1$). By condition 2, $\wedge$ gates in $A_{k-3}$ have at least one $\ast$ input, and since none of their inputs is $0$, they remain undetermined.

It follows from \lem{typical1} and \lem{typical2} that $\rho_{d,1}$ is typical with probability at least $1-e^{-\Omega(w^{1/6})}$. It follows from \lem{typicalrst} that $\rho_{i,2}$ is typical, for $1 \leq i \leq d$ (note that we have changed the fan-ins of the function and hence the definition of typical restriction, but the proof for \lem{typicalrst} works to establish this as well). We now show that $\rho_{i,1}$ is typical with  high probability, for $1 \leq i \leq d-1$. This proof is very similar to that of \lem{typicalrst} and can be adapted to show typicality for $\rho_{i,2}$ as well.

\begin{lemma}[Analogue of Lemma 10.6 of \cite{HRST17}]\label{lem:typicalcond1}
    For $1 \leq i \leq d-1$, given $\tau \in \{0,1,\ast\}^{A_{2i+2}}$ which is the lift of a typical restriction, the restriction $\rho_{i,1}$ is sampled. Then given $a \in A_{2i}$,
    \begin{align*}
        \Pr[|\hat{\rho_{i,1}}(a)^{-1}(\ast)| = qw \pm w^{\beta(2i+1, 2d+2)}] \geq 1-\exp(-\Tilde{\Omega}(w^{2\beta(2i+1,2d+2)-\frac{1}{2}})) \geq 1-e^{-\Omega(w^{1/6})}
    \end{align*}
\end{lemma}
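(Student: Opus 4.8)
The plan is to follow the template of \lem{typical1}: write $Z := |\hat{\rho_{i,1}}(a)^{-1}(\ast)|$ as a sum of independent indicator variables, pin down its expectation with \lem{qiestimate}, and finish with a Chernoff bound.

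First I would fix $a\in A_{2i}$ and, for each child $(a,j)$ with $j\in[w_{2i}]$, set $X_j=\mathbf{1}[\,\hat{\rho_{i,1}}(a,j)=\ast\,]$, so that $Z=\sum_{j=1}^{w_{2i}}X_j$. Since $\rho_{i,1}$ is sampled independently across the addresses of $A_{2i+1}$ and the children of $a$ are distinct elements of $A_{2i+1}$, the $X_j$ are independent. I would then read off $\Pr[X_j=1]$ from the case split defining $\rho_{i,1}$, using that the depth-$(2i+1)$ gates of $\SIP^{''}_d$ are $\wedge$ gates and that $\tau$ is the lift $\hat{\rho_{i+1,2}}$ of the typical restriction $\rho_{i+1,2}$. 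If $\hat\tau(a,j)\in\{0,1\}$, then $\rho_{i,1}$ forces the block at $(a,j)$ to contain a fixed $0$ or to be all $1$'s, so $\hat{\rho_{i,1}}(a,j)\in\{0,1\}$ and $X_j=0$. If $\hat\tau(a,j)=\ast$, then condition~1 of typicality of $\rho_{i+1,2}$ (\defn{typicalqcma}) gives $|S_{(a,j)}|=qw_{2i+1}\pm w^{\beta(2i+2,2d+2)}$, so the ``wrong-size'' branch of the definition is never taken, and among the three remaining sub-cases only the middle one can leave a $\ast$ in the block; hence $\Pr[X_j=1]=q_{i,(a,j)}$, which by \lem{qiestimate} equals $q_i\bigl(1\pm 2t_{2i+2}w^{\beta(2i+2,2d+2)}\bigr)$.

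Next I would control $\E Z$. Condition~2 of typicality of $\rho_{i+1,2}$ says the number $M$ of indices $j\in[w_{2i}]$ with $\hat\tau(a,j)=\ast$ lies in $[\,w_{2i}(1-w^{-1/5}),\,w_{2i}\,]$. Combining this with the per-$j$ probabilities and the identity $w_{2i}q_i=(qwN_i^{5/7})\cdot N_i^{-5/7}=qw$ yields $\E Z = qw\bigl(1\pm O(w^{-1/5}+t_{2i+2}w^{\beta(2i+2,2d+2)})\bigr)=qw\pm O(w^{\beta(2i+2,2d+2)}\log w)$, using $q,t_{2i+2}=\Theta(\sqrt{\log w/w})$ and $\beta(2i+2,2d+2)>1/3>3/10$. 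Since the exponents decrease by exactly $\tfrac{1}{24(d+1)}$ between consecutive levels, $\beta(2i+1,2d+2)=\beta(2i+2,2d+2)+\tfrac{1}{24(d+1)}$, so for $w$ large this bias is below $\tfrac12 w^{\beta(2i+1,2d+2)}$ and it suffices to bound $\Pr[\,|Z-\E Z|>\tfrac12 w^{\beta(2i+1,2d+2)}\,]$. Applying the Chernoff bound (\prop{chernoff}) with mean $\mu=\E Z=\Theta((w\log w)^{1/2})$ and relative deviation $\delta=\Theta\bigl(w^{\beta(2i+1,2d+2)-1/2}(\log w)^{-1/2}\bigr)$, which is $o(1)$ since $\beta(2i+1,2d+2)<5/12<1/2$, gives failure probability $\exp\bigl(-\Omega(\delta^2\mu)\bigr)=\exp\bigl(-\tOmega(w^{2\beta(2i+1,2d+2)-1/2})\bigr)$; and since $\beta(2i+1,2d+2)>1/3$ forces $2\beta(2i+1,2d+2)-1/2>1/6$, this is at most $e^{-\Omega(w^{1/6})}$. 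The analogous statement for $\rho_{i,2}$ follows from the same computation with the roles of $0$ and $1$ interchanged.

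The only genuinely non-mechanical point is checking that the additive bias $O(w^{\beta(2i+2,2d+2)}\log w)$ of $\E Z$ away from $qw$ is in fact smaller than the target slack $w^{\beta(2i+1,2d+2)}$; this is not automatic, and it is precisely why the exponents $\beta(k,2d+2)$ were chosen to decrease strictly with depth, and why \lem{qiestimate} was set up with a $w^{\beta(2i+2,2d+2)}$-size relative error rather than a coarser one. Once this inequality is secured, everything else is a direct transcription of the Chernoff computation in \lem{typical1}, with $w^{\beta(2i+1,2d+2)}$ playing the role of $w^{1/3}$ and the short reduction above absorbing the small bias in the mean.
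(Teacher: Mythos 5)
Your proof is correct and follows essentially the same approach as the paper's: both read off the per-child probability $\Pr[\hat{\rho_{i,1}}(a,j)=\ast]=q_{i,(a,j)}$ from the case split and \lem{qiestimate}, use the two typicality conditions of $\rho_{i+1,2}$ to bound $\E\,|\hat{\rho_{i,1}}(a)^{-1}(\ast)|$, and finish with \prop{chernoff}. Your write-up is somewhat more explicit than the paper's — in particular you spell out the check (left implicit in the paper) that the bias $|\E Z - qw| = \tilde O(w^{\beta(2i+2,2d+2)})$ is dominated by the target slack $w^{\beta(2i+1,2d+2)}$ because the $\beta$ exponents decrease strictly with depth — but this is a clarification rather than a different route.
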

\begin{proof}
    Since $\tau$ is the lift of a typical restriction,
    \begin{align*}
        |\hat{\tau}_a^{-1}(*)| \geq q(w-w^{\frac{4}{5}})N_i^{\frac{5}{7}}
    \end{align*}
    Further for all $j \in [w_{2i}]$ such that $\hat{\tau}_{a,j} = \ast$,
    \begin{align*}
        |S_{a,j}| = |(\tau_{a,j})^{-1}(*)| = qf_i \pm w^{\beta(2i+2, 2d+2)}
    \end{align*}
    Therefore using \lem{qiestimate}, we know that
    \begin{align*}
        q_{i,(a,j)} = q_i(1 \pm 2t_{2i+2}w^{\beta(2i+2,2d+2)})
    \end{align*}
    Therefore,
    \begin{align*}
        \E[|\hat{\rho_{i,1}}(a)^{-1}(\ast)|] &\leq qwN_i^{\frac{5}{7}}.q_i(1 \pm 2t_{2i+2}w^{\beta(2i+2,2d+2)}) \\
        &= qw(1 \pm 2t_{2i+2}w^{\beta(2i+2,2d+2)}) \\
        &\leq qw + \Tilde{O}(w^{\beta(2i+2, 2d+2)})
    \end{align*}
    where the last inequality holds because $qwt_{2i+2} = \Theta(\log(w))$. Further,
    \begin{align*}
        \E[|\hat{\rho_{i,1}}(a)^{-1}(\ast)|] &\geq q(w-w^{\frac{4}{5}})N_i^{\frac{5}{7}}.q_i(1 \pm 2t_{2i+2}w^{\beta(2i+2,2d+2)}) \\
        &\geq qw - \Tilde{O}(w^{\beta(2i+2, 2d+2)})
    \end{align*}
    where the last inequality follows because $w^{\frac{4}{5}}q = o(w^{\beta(2i+2, 2d+2)}$. Therefore on applying a Chernoff bound (\prop{chernoff}),
    \begin{align*}
        \Pr[|\hat{\rho_{i,1}}(a)^{-1}(\ast)| = qw \pm w^{\beta(2i+1, 2d+2)}] &\geq 1-\exp(-\Omega(w^{2\beta(2i+1, 2d+2)/qw})) \\
        &\geq 1-\exp(-\Tilde{\Omega}(w^{2\beta(2i+1,2d+2)-\frac{1}{2}}))
    \end{align*}
    Therefore, the claim follows.
\end{proof}

\begin{lemma}[Analogue of Lemma 10.8 of \cite{HRST17}]\label{lem:typicalcond2}
    For $1 \leq i \leq d-1$, given $\tau \in \{0,1,\ast\}^{A_{2i+2}}$ which is the lift of a typical restriction, the restriction $\rho_{i,1}$ is sampled. Then given $a \in A_{2i-1}$,
    \begin{align*}
        \Pr[|(\hat{\hat{\rho}}_a)^{-1}(\ast)| \geq w_{2i-1} - w^{4/5}] \geq 1-e^{-\Tilde{\Omega}(w^{4/5})}
    \end{align*}
\end{lemma}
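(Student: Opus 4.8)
The plan is to mirror the proof of \lem{typicalcond1}, invoking \lem{qiestimate} in the same way, but carrying out the ``preserves structure'' estimate in the style of \lem{typical2}. Fix $a \in A_{2i-1}$, the address of an $\wedge$-gate at depth $2i-1$; it has $w_{2i-1}$ children $c_1,\dots,c_{w_{2i-1}} \in A_{2i}$, each a $\vee$-gate at depth $2i$ with $w_{2i} = qwN_i^{5/7}$ grandchildren, which are $\wedge$-gates at depth $2i+1$ of fan-in $w_{2i+1} = f_i$. We will show each $c_j$ is left undetermined by the double lift of $\rho_{i,1}$ with probability $1 - O(w^{-1/4})$ and then take a Chernoff bound over $j \in [w_{2i-1}]$. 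Since $\tau$ is the lift of a typical restriction (which acts on $A_{2i+3}$), the remark following \defn{typicalqcma} implies every gate with address in $A_{2i}$ is undetermined under $\tau$; hence the sub-formula rooted at $a$ is, after fixing the bits determined by $\tau$, an alternating $\wedge$-$\vee$-$\wedge$ read-once formula with no constant gates, in which each grandchild $g$ of some $c_j$ is alive with $\tau_g$ equal to $1$ outside a set $S_g$ satisfying $|S_g| = qf_i \pm w^{\beta(2i+2,2d+2)}$, so that $\rho_{i,1}$ samples $S_g$ through the ``otherwise'' branch of its definition.

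For a fixed grandchild $g$: $\rho_{i,1}$ turns $g$ into the constant $1$ only in the $\{1\}^{S_g}$ sub-case (probability $\lambda_i$), leaves it alive in the $\{*_{t_{2i+2}},1_{1-t_{2i+2}}\}^{S_g}\setminus\{1^{S_g}\}$ sub-case, and turns it into the constant $0$ in the $\{0_{t_{2i+2}},1_{1-t_{2i+2}}\}^{S_g}\setminus\{1^{S_g}\}$ sub-case, which necessarily introduces a $0$ into $S_g$ (probability $1 - \lambda_i - q_{i,g}$); by \lem{qiestimate}, $q_{i,g} = q_i(1 \pm 2t_{2i+2}w^{\beta(2i+2,2d+2)}) = q_i(1 \pm o_w(1))$. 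A union bound over the $w_{2i}$ grandchildren of $c_j$ gives
\[
\Pr[c_j \text{ becomes } 1] \;\le\; w_{2i}\lambda_i \;=\; qwN_i^{5/7}\cdot\frac{q_i}{qw^{5/4}} \;=\; w^{-1/4},
\]
using $q_i = N_i^{-5/7}$, while $c_j$ becomes $0$ only if all of its $w_{2i}$ grandchildren are killed, so
\[
\Pr[c_j \text{ becomes } 0] \;\le\; (1 - q_{i,g})^{w_{2i}} \;\le\; e^{-q_i(1-o_w(1))\,w_{2i}} \;=\; e^{-qw(1-o_w(1))} \;\le\; w^{-1/4},
\]
the last step using $qw = \Theta(m2^{m/2}) = \Tilde{\Omega}(w^{1/2})$, which dominates $\log w$. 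Hence $\Pr[c_j \text{ is determined}] \le O(w^{-1/4})$.

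Since $\rho_{i,1}$ is sampled independently across the $\wedge$-gates in $A_{2i+1}$ and distinct $c_j$ have disjoint grandchildren, the events ``$c_j$ is determined'' ($j \in [w_{2i-1}]$) are independent. Let $X$ count the determined $c_j$; then $|(\hat{\hat\rho}_{i,1})_a^{-1}(\ast)| = w_{2i-1} - X$. With $w_{2i-1} = \Tilde{\Theta}(w)$ (it equals $w$ if $i=1$, and $f_{i-1} = \Tilde{\Theta}(w)$ otherwise, by \lem{niapprox}) we get $\mu := \E[X] \le w_{2i-1}\cdot O(w^{-1/4}) = \Tilde{O}(w^{3/4})$. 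Writing $w^{4/5} = (1+\delta)\mu$ with $\delta = \Tilde{\Theta}(w^{1/20})$ and applying the upper-tail bound of \prop{chernoff} yields $\Pr[X > w^{4/5}] \le e^{-\Omega(\delta\mu)} = e^{-\Tilde{\Omega}(w^{4/5})}$, which is exactly the assertion $\Pr[|(\hat{\hat\rho}_{i,1})_a^{-1}(\ast)| < w_{2i-1} - w^{4/5}] \le e^{-\Tilde{\Omega}(w^{4/5})}$. I expect the main obstacle to be the structural bookkeeping of the first paragraph — verifying that typicality of the previous restriction genuinely forces every grandchild of an alive $c_j$ to be live with $|S_g|$ inside the window $qf_i \pm w^{\beta(2i+2,2d+2)}$, so that the three-case analysis of $\rho_{i,1}$ applies uniformly — together with keeping the identities among $\lambda_i$, $q_i$, $w_{2i}$, and $w_{2i-1}$ consistent; the Chernoff step itself is routine.
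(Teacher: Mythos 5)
Your approach matches the paper's, but the structural claim you flag as a worry is indeed where the argument is not quite right. It is \emph{not} the case that typicality of $\rho_{i+1,2}$ forces every grandchild of a live $c_j$ to be alive: a grandchild $g \in A_{2i+1}$ may have $\hat\tau(g)=0$ (some coordinate of $\tau_g$ is $0$), and this is perfectly compatible with $c_j$ being undetermined, since an $\vee$-gate only dies when \emph{all} children die. Typicality condition~2 of $\rho_{i+1,2}$ (which is exactly the remark following \defn{typicalqcma} applied with $k=2i+3$) guarantees that at least $w_{2i}(1-w^{-1/5}) = q(w-w^{4/5})N_i^{5/7}$ of the $w_{2i}$ grandchildren are alive under $\hat\tau$, not all of them; the remaining $\le qw^{4/5}N_i^{5/7}$ may already be fixed to $0$, in which case $\rho_{i,1}$ keeps them at $0$ via its first bullet and the ``otherwise'' branch never fires for them.

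The concrete consequence is in your ``becomes $0$'' bound. Only $\ast$-grandchildren contribute independent $1-\lambda_i - q_{i,g}$ factors, so
\[
\Pr[c_j \text{ becomes } 0] \;=\; \prod_{g \text{ alive}} (1-\lambda_i - q_{i,g}) \;\le\; \Bigl(1-\tfrac{q_i}{2}\Bigr)^{k_j},
\]
where $k_j$ is the number of live grandchildren of $c_j$. Since $k_j \le w_{2i}$ and $(1-\epsilon)^k$ is \emph{decreasing} in $k$, your expression $(1-q_{i,g})^{w_{2i}}$ is a \emph{lower} bound on the probability, not an upper bound, so the step as written does not follow. The fix is to plug in the typicality lower bound $k_j \ge q(w-w^{4/5})N_i^{5/7}$ (this is exactly the paper's $|\hat\tau_a^{-1}(\ast)| \ge q(w-w^{4/5})N_i^{5/7}$), yielding $\Pr[c_j \text{ becomes } 0] \le (1-q_i/2)^{q(w-w^{4/5})N_i^{5/7}} \le e^{-qw/4} \le w^{-1/4}$. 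The asymptotics are the same as what you compute, because the two exponents differ only by the $(1-w^{-1/5})$ factor, and your ``becomes $1$'' bound by union bound over all $w_{2i}$ grandchildren remains valid since only $\ast$-grandchildren can flip to $1$. With this correction, the rest of your argument — independence of the events across $j$, $\mu = \Tilde O(w^{3/4})$, and the Chernoff step — is exactly the paper's proof.
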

\begin{proof}
    Since $\tau$ is the lift of a typical restriction,
    \begin{align*}
        |\hat{\tau}_a^{-1}(*)| \geq q(w-w^{\frac{4}{5}})N_i^{\frac{5}{7}}
    \end{align*}
    Further for all $j \in [w_{2i}]$ such that $\hat{\tau}_{a,j} = \ast$,
    \begin{align*}
        |S_{a,j}| = |(\tau_{a,j})^{-1}(*)| = qf_i \pm w^{\beta(2i+2, 2d+2)}
    \end{align*}
    Therefore using \lem{qiestimate}, we know that
    \begin{align*}
        q_{i,(a,j)} = q_i(1 \pm 2t_{2i+2}w^{\beta(2i+2,2d+2)}) \geq \frac{q_i}{2}
    \end{align*}
    Now look at an OR gate at address $(a,j)$ for $j \in [w_{2i-1}]$. This gate is not determined to value 1 with probability $(1-\lambda_i)^{w_{2i}} \geq (1-\lambda_iw_{2i}) \geq 1-w^{-1/4}$. This gate is determined to 0 with probability at most $(1-\lambda_i - q_{i,(a,j)})^{q(w-w^{\frac{4}{5}})N_i^{\frac{5}{7}}} \leq (1-\frac{q_i}{2})^{q(w-w^{\frac{4}{5}})N_i^{\frac{5}{7}}} \leq e^{-qw/4} \leq w^{-1/4}$. Therefore,
    \begin{align*}
        \E[|(\hat{\hat{\rho}}_a)^{-1}({0,1})|] \leq w_{2i-1}O(w^{-1/4}) \leq \Tilde{O}(w^{3/4})
    \end{align*}
    Therefore on applying a Chernoff bound (\prop{chernoff})
    \begin{align*}
        \Pr[|(\hat{\hat{\rho}}_a)^{-1}(\ast)| \geq w_{2i-1} - w^{4/5}] \geq 1-e^{-\Tilde{\Omega}(w^{4/5})}
    \end{align*}
    Therefore the claim follows.
\end{proof}

Now we show an analogue of \cite{furstParityCircuitsPolynomialtime1984} for the $\QCMA$ hierarchy, to use a circuit lower bound to get an oracle separation result. Note that \cite{AIK22} show a similar result for $\QMA$-hierarchy.
\begin{proposition}[Analogue of \cite{furstParityCircuitsPolynomialtime1984}]\label{prop:qcmahcirc}
    Let $L \subseteq \{0\}^{*}$ be some language decided by a $\QCMAH_i$ verifier $V = \langle V_1, \ldots, V_i \rangle$ with oracle access to $O$, and where each $V_j$ has size at most $p(n)$ for inputs of length $n$. Let $q_i(n) = p(p(\ldots p(n)\ldots )$ where $p$ is composed $i$ times. Then for every $n \in \mathbb{N}$, there is a circuit $C$ of size at most $2^{\poly (n)}$ and depth $2i$, where the top layer is an OR gate and the layers alternate between OR gates of width $2^{q_i(n)}$, and quantum circuits of size $q_i(n)$ with query complexity at most $q_i(n)$ such that $\forall~x \in \{0,1\}^n$
    \begin{align*}
        V^O(x) = C(O_{[\leq q_i(n)]})
    \end{align*}
    where $O_{[\leq q_i(n)]}$ denotes the concatenation of bits of $O$ on all strings of length at most $q_i(n)$.
\end{proposition}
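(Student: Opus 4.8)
The plan is to mimic the proof of \prop{qcphcirc} and unfold the recursion $\QCMAH_i = \QCMA^{\QCMAH_{i-1}}$ by induction on $i$, converting each $\QCMA$ level into a single OR gate (over proofs) followed by a single layer of quantum query circuits (the verifier circuit for that level).

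For the base case $i=1$ (so $\QCMAH_1 = \QCMA$) I would observe that on the fixed input $0^n$ we have $V^O(0^n) = \bigvee_{y \in \{0,1\}^{p(n)}}[\,V_1 \text{ accepts } (0^n,y)\,]$, where each verifier circuit $V_1$ with $0^n$ and $y$ hardcoded is a quantum query circuit of size at most $p(n) = q_1(n)$; it makes at most $q_1(n)$ queries, and since it has size at most $q_1(n)$ it can only query bits of $O$ on strings of length at most $q_1(n)$. This is a depth-$2$ circuit: a top OR gate of fan-in $2^{p(n)} \le 2^{q_1(n)}$ with $2^{p(n)}$ quantum circuits beneath it, each of size and query complexity at most $q_1(n)$, acting on $O_{[\le q_1(n)]}$; its total size is $2^{\poly(n)}$.

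For the inductive step I would write $V = \langle V_1,\dots,V_i\rangle$, let $\mathcal{O}_{i-1}$ be the $\QCMAH_{i-1}$ oracle computed by $\langle V_2,\dots,V_i\rangle$ with base oracle $O$, and use $V^O(0^n) = \bigvee_{y_1 \in \{0,1\}^{p(n)}}[\,V_1 \text{ accepts }(0^n,y_1)\text{ with oracle }\mathcal{O}_{i-1}\,]$. The circuit $V_1$ (with $0^n,y_1$ hardcoded) is a quantum query algorithm of size at most $p(n)$ making at most $p(n)$ queries to $\mathcal{O}_{i-1}$, so it queries only strings $z$ with $|z| \le p(n)$. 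For each such $z$, the value $\mathcal{O}_{i-1}(z)$ is the output of the $\QCMAH_{i-1}$ verifier $\langle V_2,\dots,V_i\rangle$ on input $z$; since $|z| \le p(n)$ we have $q_{i-1}(|z|) \le q_{i-1}(p(n)) = q_i(n)$, so the inductive hypothesis yields a depth-$2(i-1)$ circuit $C_{i-1,z}$ over $O_{[\le q_i(n)]}$ whose top gate is an OR, whose classical layers are OR gates of fan-in at most $2^{q_i(n)}$, and whose bottom layer consists of quantum circuits of size and query complexity at most $q_i(n)$, with $C_{i-1,z}(O_{[\le q_i(n)]}) = \mathcal{O}_{i-1}(z)$. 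I would then build $C$ by placing a top OR over the $2^{p(n)} \le 2^{q_i(n)}$ choices of $y_1$, routing each wire into the quantum circuit $V_1$, and attaching $C_{i-1,z}$ below the input position of $V_1$ corresponding to query string $z$. The result has depth $2 + 2(i-1) = 2i$, its top gate is an OR, its layers alternate between OR gates of fan-in at most $2^{q_i(n)}$ and quantum circuits of size and query complexity at most $q_i(n)$, and its bottom-layer quantum circuits (the copies of $V_i$) query $O$ directly; since $i$ is a constant, the total size is a product of $O(i)$ factors each at most $2^{q_i(n)}$, hence $2^{\poly(n)}$. (As usual I would assume without loss of generality that $p$ is monotone and $p(n) \ge n$, so that $q_j(n) \le q_i(n)$ whenever $j \le i$.)

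The part needing the most care is the bookkeeping of promise violations: the circuit $C$ evaluates $V$ under the particular choice of oracle responses in which a query $z$ to a level-$j$ oracle is answered by the value ``some proof is accepted by the level-$(j-1)$ verifier on $z$'', with the deeper levels resolved recursively in the same way. This is a legitimate assignment of $0/1$ oracle responses, so because the behaviour of $V$ on $0^n$ is well-defined (as $V$ decides $L$), the output of $C$ equals $V^O(0^n)$ regardless of this choice. Strengthening the inductive statement to produce, for every oracle, a circuit computing such a canonically-resolved value (and equal to the verifier's output whenever that output is well-defined) makes this go through cleanly; the remainder is the parameter chase above, ensuring every query string produced at level $j$ has length at most $q_{j-1}(n)$ so that every OR gate and quantum circuit occurring in $C$ respects the claimed bounds.
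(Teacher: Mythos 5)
Your proof is correct and follows essentially the same inductive unfolding as the paper: base case $\QCMAH_1 = \QCMA$ gives a top OR over proofs feeding into quantum verifier circuits, and the inductive step plugs the depth-$2(i-1)$ circuits from the hypothesis into the query positions of $V_1$ beneath a new OR gate, with the same parameter chase $q_{i-1}(p(n)) = q_i(n)$. Your additional remarks on monotonicity of $p$ and on canonically resolving promise violations are reasonable clarifications that the paper handles implicitly (the latter via its earlier discussion of well-defined behaviour for $\QCMAH$).
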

\begin{proof}
    This proof is analogous to that of \cite{AIK22} for the $\QMA$-hierarchy. We proceed by induction. For $i = 1$, $V$ is just a $\QCMA$ verifier and hence can be represented by a circuit $C$ of size at most $2^{\poly (n)}$ and depth $2$, where the top layer is an OR gate of width $2^{p(n)}$ and the bottom (second layer) is a quantum circuit of size $p(n)$ with query complexity $p(n)$, the output of which depends only on the bits of $O$ on strings of length at most $p(n)$. This is similar to \prop{qcphcirc}. \\
    For $i \geq 2$, the circuit $C$ consists of an OR gate of width $p(n)$ followed by a quantum circuit of size $p(n)$ with query complexity $p(n)$, which by the induction hypothesis, queries functions computable by circuits of depth $2i-2$ with alternating layers of OR gates of width $2^{q_{i-1}(p(n))} = 2^{q_i(n)}$ and quantum query gates of size at most $q_{i-1}(p(n)) = q_i(n)$ and query complexity at most $q_{i-1}(p(n)) = q_i(n)$. The size of this circuit will be $2^{\poly(n)}$ for a reasonably large polynomial, since the circuit has constant depth.
\end{proof}
The proof for the depth-hierarchy theorem now follows very similarly to the proof for $\QCPH$ (or $\QAC^0$ circuits) in \thm{randomlowerbound}. We noted earlier that the sequence of random projections completes to an appropriately biased product distribution at every level of the AND-OR tree, hence we can apply the projection switching lemma for quantum algorithms which works for arbitrary inner and outer product distributions. Since we know that $q_{i,a}$ is very close to $q_i$ and that $N'_i = \Tilde{\Theta}(N_i)$ (from \lem{niapprox}), we can use the projection switching lemma for quantum algorithms querying $N'_i$ bits on setting the probability of leaving a block unrestricted to $q_{i,a}$. This switching lemma and the projection switching lemma for DNFs \thm{randomprojcnf} by \cite{HRST17} are applied in an alternating manner to get a depth-hierarchy theorem (the projection switching lemma for quantum algorithms is applied for $\rho_{i,1}$ and the projection switching lemma for DNFs is applied for $\rho_{i,2}$). Then oracle separations analogous to the case for $\QCPH$ follow for $\QCMAH$ using a standard diagonalization argument. Note however that now we will get an oracle separation between $\Sigma_{2d+1}$ and $\QCMAH_d$ relative to a random oracle $O$ (whereas for $\QCPH$ we got an oracle separation between $\Sigma_{d+1}$ and $\QCPi_d$).

\if\anon1
\else
\phantomsection\addcontentsline{toc}{section}{Acknowledgements}
\section*{Acknowledgements}
This research is supported in part by the Natural Sciences and
Engineering Research Council of Canada (NSERC), DGECR-2019-00027 and
RGPIN-2019-04804.\footnote{Cette recherche a été financée par le
Conseil de recherches en sciences naturelles et en génie du Canada
(CRSNG), DGECR-2019-00027 et RGPIN-2019-04804.}
\fi


\appendix

\section{The diagonalization argument}\label{app:diagonalization}
We now show how to use \thm{randomlowerbound} to show that $\Sigma_{d+1}^O \not \subset \QCSigma_d^O$ relative to a random oracle. The proof follows the standard argument, which has appeared before in \cite{furstParityCircuitsPolynomialtime1984}, \cite{DBLP:conf/stoc/Hastad86}, \cite{rstoverview}, \cite{razOracleSeparationBQP2019}, \cite{AIK22}.
\begin{theorem}
    With probability 1, a random oracle O satisfies $\Sigma_{d}^O \not \subset \QCPi_{d-1}^O$ for odd $d \geq 3$.
\end{theorem}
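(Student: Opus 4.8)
The plan is to derive this random-oracle separation from the circuit lower bound \thm{randomlowerbound} by the standard ``slow diagonalization'' argument (cf.\ \cite{furstParityCircuitsPolynomialtime1984, DBLP:conf/stoc/Hastad86, rstoverview, razOracleSeparationBQP2019, AIK22}); essentially all the content is in \thm{randomlowerbound}, and the rest is bookkeeping plus a Borel--Cantelli / union-bound step.

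Fix an odd $d \geq 3$ and apply \thm{randomlowerbound} with the even depth $d+1 \geq 4$. First I would build one hard unary language $L^O \subseteq \{0\}^*$. Pick a fast-growing sequence of input lengths $n_1 < n_2 < \cdots$ and attach to each $n_i$ a fresh block of oracle bits --- e.g.\ all $2^{\ell_i}$ bits $O(z)$ with $|z| = \ell_i$, where $\ell_i$ is larger than everything previously used but still $\poly(n_i)$ --- and identify (the largest legal number $N(n_i) \le 2^{\ell_i}$ of $\SIP'_{d+1}$ variables) of these bits with an input $\mathbf{X}_{n_i}\in\{0,1\}^{N(n_i)}$ to $\SIP'_{d+1}$. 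Declare $0^{n_i}\in L^O$ iff $\SIP'_{d+1}(\mathbf{X}_{n_i})=1$, and $0^n\notin L^O$ for $n$ outside the sequence. Because $d+1$ is even, $\SIP'_{d+1}$ is a depth-$(d+1)$ monotone read-once formula with an OR root and a bottom layer of AND gates of fan-in only $\polylog N(n_i)=\poly(n_i)$; folding that bottom layer into a polynomial-time verifier leaves the $d$ alternating layers above, which realize exactly the $d$ quantifiers $\exists\forall\cdots\exists$ of a $\Sigma_d$ machine. Hence $L^O\in\Sigma_d^O$ for every oracle $O$.

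Next I would show $L^O\notin\QCPi_{d-1}^O$ with probability $1$. If a $\QCPi_{d-1}$ verifier $V$ with polynomial time bound $p$ decided $L^O$, then after the usual within-level completeness/soundness amplification we may assume a $2/3$--$1/3$ gap, so thresholding at $1/2$ makes its decision Boolean, and by \prop{qcphcirc} (in its analogous $\QCPi$ form) the value $V^O(0^{n_i})$ is computed by a $\QAC^0_{d-1}$ circuit $C_{n_i}$ with $\wedge$ output gate, of size $2^{\poly(n_i)}$, on the oracle bits of length $\le p(n_i)$. Spacing the sequence so that $\ell_i > p(n_{i-1})$ makes the block $\mathbf{X}_{n_i}$ disjoint from all oracle bits read by $V$ at earlier lengths; fixing every oracle bit of length $\neq\ell_i$ arbitrarily turns $C_{n_i}$ into a $\QAC^0_{(d+1)-2}=\QAC^0_{d-1}$ circuit on the $N(n_i)$ variables $\mathbf{X}_{n_i}$, whose size $2^{\poly(n_i)}=2^{\poly(\log N(n_i))}$ is quasipolynomial and below $2^{N(n_i)^{1/(6d)}}$, so \thm{randomlowerbound} applies and gives $\Pr_{\mathbf{X}_{n_i}}[C_{n_i}\neq\SIP'_{d+1}(\mathbf{X}_{n_i})]\ge \tfrac12 - N(n_i)^{-\Omega(1/d)}$ for \emph{every} such fixing. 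Since a uniform oracle induces a uniform $\mathbf{X}_{n_i}$, the events $A_i := \{V^O(0^{n_i})=L^O(0^{n_i})\}$ --- each measurable with respect to oracle bits of length $\le p(n_i) < \ell_{i+1}$, hence with respect to the sigma-algebra on which $A_1,\dots,A_{i-1}$ all live --- satisfy $\Pr[A_i \mid A_1\cap\cdots\cap A_{i-1}] \le \tfrac12 + N(n_i)^{-\Omega(1/d)}$. Thus $\Pr_O[V \text{ decides } L^O \text{ on all of } \{0\}^*] \le \prod_i(\tfrac12 + N(n_i)^{-\Omega(1/d)})=0$, and a countable union bound over all clocked $\QCPi_{d-1}$ machines finishes it. Together with $L^O\in\Sigma_d^O$, this yields $\Sigma_d^O\not\subset\QCPi_{d-1}^O$ with probability $1$.

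The step I expect to be fiddliest is simultaneously meeting all constraints on $(n_i,\ell_i,N(n_i))$: $\ell_i\le p(n_i)$ (so $V$ at length $n_i$ can read the block), $\ell_i>p(n_{i-1})$ (freshness), $N(n_i)\approx 2^{\ell_i}$ a legal $\SIP'_{d+1}$ size with $\polylog N(n_i)\le\poly(n_i)$ (so the folded verifier stays polynomial-time and $L^O\in\Sigma_d^O$), and $2^{\poly(n_i)}<2^{N(n_i)^{1/(6d)}}$ (so \thm{randomlowerbound} applies). Taking $\ell_i=p(n_{i-1})+1$ and $n_i$ of order $p(n_{i-1})$ satisfies all of these for $i$ large; since the construction depends only on the polynomial $p$, which ranges over a countable set, a fresh such sequence is used for each machine in the union bound, and everything else is routine.
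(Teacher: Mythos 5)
Your overall strategy (slow diagonalization from \thm{randomlowerbound} via the circuit translation of \prop{qcphcirc}, with a product-of-conditional-probabilities argument and a countable union over machines) is the same as the paper's proof in \app{diagonalization}, and the circuit/depth bookkeeping — applying \thm{randomlowerbound} at even depth $d+1$, folding the bottom $\wedge$ layer of $\SIP'_{d+1}$ into the verifier to land in $\Sigma_d$, and producing a $\QAC^0_{d-1}$ circuit with $\wedge$ output from a $\QCPi_{d-1}$ verifier — is all correct.

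However, there is a genuine gap in your diagonalization step, and it sits exactly where you flagged the ``fiddliest'' part. You take $\ell_i = p(n_{i-1})+1$ and $n_i \approx p(n_{i-1})$, so the sequence $\{n_i\}$ — and therefore the language $L^O$, which is $\emptyset$ off the sequence — depends on the polynomial $p$ of the verifier $V$ under consideration. You then conclude by a union bound over machines, using ``a fresh such sequence for each machine.'' But this produces a different candidate hard language $L^O_{p}$ for each time bound $p$; the union bound only tells you that, almost surely, each machine $M$ fails to decide \emph{its own} tailored language $L^O_{p_M}$. That does not yield a single $L\in\Sigma_d^O$ that escapes \emph{every} $\QCPi_{d-1}^O$ machine: a priori, $M_1$ might correctly decide $L^O_{p_{M_2}}$ and vice versa, leaving every language in your family inside $\QCPi_{d-1}^O$. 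The fix — which is what the paper does — is to keep $L^O$ machine-\emph{independent}: define $L^O(1^n)=\SIP'_{d+1}(O_{[n]})$ for all $n$, where $O_{[n]}$ is the block of all length-$n$ oracle bits (so every $n$ carries its own fresh block of size $2^n$, not just the $n_i$'s). Then, for each fixed machine $M$ with time bound $p$, choose a machine-dependent \emph{subsequence} $n_1<n_2<\cdots$ with $p(n_i)<n_{i+1}$; the freshness you want is achieved by restricting to this subsequence in the analysis, not by redefining the language. Everything else in your argument — the conditional-probability bound via \thm{randomlowerbound} applied to the circuit with all oracle bits off length $n_i$ fixed, and the product going to $0$ — then goes through as you wrote it.
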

\begin{proof}
    Relative to any oracle $O$, define the unary language
    \begin{align*}
        L^O = \{1^n | \SIP'_{d+1}(O_{[n]}) = 1\}
    \end{align*}
    where $O_{[n]}$ denotes the concatenation of outputs of $O$ on all inputs of length $n$. Since the bottom fan-in of $\SIP'_{d+1}$ on $N = 2^n$ variables is $\polylog(N) = \poly(n)$, we can conclude that $L^O \in \Sigma_{d}^O$. We now want to show that with probability 1, for a random oracle $O$, $L^O \notin \QCPi_{d-1}^O$. Let $M^O$ be a $\QCPi_{d-1}^O$ oracle machine. We will show that
    \begin{align*}
        \Pr_O[M^O \text{ decides } L^O] = 0
    \end{align*}
    Let $\{n_i\}_{i=1}^\infty$ be an increasing sequence of natural numbers such that there exists $j \in \bN$ such that $M^O(1^{n_i})$ can not query $O$ on inputs of length $\geq n_{i+1}$ for $i \geq j$. Note that such a sequence can always be constructed using the upper bound on running time of $M^O$. Then let
    \begin{align*}
       p(M^O, i) := \Pr_O[M^O \text{ decides } 1^{n_i}|M^O \text{ decides } 1^{n_k}~\forall 1 \leq k \leq i-1]
    \end{align*}
    Therefore,
    \begin{align*}
        \Pr_O[M^O \text{ decides } L^O] \leq \prod_{i=1}^\infty p(M^O, i)
    \end{align*}
    So we need to show that $p(M,i) \leq c$ for some constant $c < 1$ for all sufficiently large $i \in \bN$. We know from \thm{randomlowerbound} that for sufficiently large $l \in \bN$
    \begin{align*}
        \Pr_O[M^O \text{ decides } 1^{n_i}] \leq 0.6~~~~~~~~\forall~i\geq l
    \end{align*}
    By choice of $\{n_i\}_{i=1}^\infty$,
    \begin{align*}
        p(M^O, i) = \Pr_O[M^O \text{ decides } 1^{n_i}] \leq 0.6
    \end{align*}
    for $i \geq \max\{j,l\}$, since $O_{[n]}$ is sampled independently for every $n \in \bN$ and $M^O$ can not make queries of length $n_i$ or greater, on inputs $1^{n_k}$ for $k \leq i-1$. Thus we can conclude that
    \begin{align*}
        \Pr_O[M^O \text{ decides } L^O] \leq \prod_{i=1}^\infty p(M^O, i) = 0
    \end{align*}
    Since there are countably many such $\QCPi_{d-1}^O$ oracle machines, we are done.
\end{proof}

\renewcommand{\UrlFont}{\ttfamily\small}
\let\oldpath\path
\renewcommand{\path}[1]{\small\oldpath{#1}}
\emergencystretch=2em 

\if\bibnonempty1{ 
\phantomsection\addcontentsline{toc}{section}{References} 
\printbibliography 
}\else{
\nocite{AA05} 
}
\fi
\end{document}